\documentclass[a4paper,UKenglish,cleveref, autoref, thm-restate]{lipics-v2021}

\usepackage[linesnumbered,titlenotnumbered,ruled]{algorithm2e}
\usepackage{tikz,pgffor}
\usetikzlibrary{arrows,backgrounds}
\usepackage{tkz-euclide}
\usetikzlibrary{arrows}
\usetikzlibrary{shapes}
\usetikzlibrary{calc}
\usetikzlibrary{automata}
\usetikzlibrary{positioning}
\usepackage{listings}
\usepackage{transparent}
\usepackage{todonotes}
\usepackage{enumitem}

\nolinenumbers

\lstset{language=Python,basicstyle=\small,keywordstyle=\bfseries,
    backgroundcolor=\color{white},commentstyle=\ttfamily,morekeywords={while,if,then,else,do,done,choose}}

\usepackage{calrsfs}
\DeclareMathAlphabet{\pazocal}{OMS}{zplm}{m}{n}

\newtheorem*{lemma*}{Lemma}
\newtheorem*{disclaimer*}{Disclaimer}

\newcommand{\N}{\mathbb{N}}
\newcommand{\R}{\mathbb{R}}
\newcommand{\Z}{\mathbb{Z}}
\newcommand{\A}{\pazocal{A}}
\newcommand{\C}{\pazocal{C}}

\newcommand{\B}{\mathbb{B}}
\newcommand{\F}{\pazocal{F}}
\newcommand{\T}{\pazocal{T}}

\newcommand{\V}{\pazocal{V}}

\newcommand{\calC}{\pazocal{C}}
\newcommand{\calT}{\pazocal{T}}
\newcommand{\calL}{\pazocal{L}}

\newcommand{\calO}{\pazocal{O}}

\newcommand{\mecs}{\mathit{mecs}}
\newcommand{\MD}{\mathrm{MD}}

\newcommand{\ce}[1]{\left(#1 \right)}

\newcommand{\size}[1]{|\!|#1|\!|}

\renewcommand{\next}{\mathit{next}}
\newcommand{\from}{\mathit{from}}
\newcommand{\effect}{\mathit{effect}}
\newcommand{\RankEff}{\mathit{RankEff}}
\newcommand{\length}{\mathit{len}}

\newcommand{\Decompose}{Decomp}

\newcommand{\prob}{\ensuremath{\mathbb{P}}}

\newcommand{\update}{\ensuremath{\mathbb{U}}}

\newcommand{\conSysA}{\ensuremath{A}}
\newcommand{\conSysB}{\ensuremath{B}}
\newcommand{\conSysC}{\ensuremath{C}}
\newcommand{\conSysD}{\ensuremath{D}}

\newcommand{\vass}{\ensuremath{\mathit{B}}}
\newcommand{\counters}{\ensuremath{x}}
\newcommand{\flowMatrix}{\ensuremath{F}}
\newcommand{\rankCoeff}{\ensuremath{y}}

\newcommand{\vars}{\ensuremath{d}}
\newcommand{\offsets}{\ensuremath{z}}
\newcommand{\states}{\ensuremath{Q}}
\newcommand{\updates}{\ensuremath{U}}

\newcommand{\var}{\ensuremath{c}}

\newcommand{\character}{\ensuremath{\mathtt{char}}}
\newcommand{\identity}{\ensuremath{Id}}

\newcommand{\oneVec}{\ensuremath{\mathbf{1}}}

\newcommand{\transition}{\ensuremath{a}}

\newcommand{\bx}{\mathbf{x}}
\newcommand{\by}{\mathbf{y}}
\newcommand{\bz}{\mathbf{z}}
\newcommand{\bu}{\mathbf{u}}
\newcommand{\bv}{\mathbf{v}}
\newcommand{\bw}{\mathbf{w}}
\newcommand{\bn}{\mathbf{n}}

\newcommand{\bj}{\mathbf{j}}
\newcommand{\bi}{\mathbf{i}}

\newcommand{\Term}{\mathit{Term}}

\newcommand{\Q}{\ensuremath{\mathbb{Q}}}
\newcommand{\Nset}{\ensuremath{\mathbb{N}}}

\newcommand{\PTIME}{\mathbf{P}}

\newcommand{\NP}{\mathbf{NP}}
\newcommand{\coNP}{\mathbf{coNP}}

\newcommand{\Exp}{\ensuremath{\mathbb{E}}}
\newcommand{\conf}{\mathit{C}}
\newcommand{\eexp}{\mathrm{exp}}
\newcommand{\Reach}{\mathrm{Reach}}
\newcommand{\weigth}{\mathit{weight}}
\newcommand{\tin}{\mathit{In}}
\newcommand{\tout}{\mathit{Out}}

\newcommand{\M}{\ensuremath{\mathcal{M}}}

\tikzstyle{loop above}=[tran, to path={.. controls +(60:.5) 
	and +(120:.5) .. (\tikztotarget) \tikztonodes}]
\tikzstyle{loop below}=[tran, to path={.. controls +(240:.5) 
	and +(300:.5) .. (\tikztotarget) \tikztonodes}]
\tikzstyle{loop left}=[tran,  to path={.. controls +(150:.5) 
	and +(210:.5) .. (\tikztotarget) \tikztonodes}]
\tikzstyle{loop right}=[tran,  to path={.. controls +(330:.5) 
	and +(30:.5) .. (\tikztotarget) \tikztonodes}]
\tikzstyle{bigstoch}=[circle,draw,minimum size=8ex,inner sep=0pt,font=\Large, very thick,text centered, fill=blue!20, fill opacity=0.2, draw=black!80, text opacity=1]
\tikzstyle{stoch}=[circle,draw,minimum size=4ex,inner sep=0pt,font=\large, very thick,text centered, fill=blue!20, fill opacity=0.2, draw=black!80, text opacity=1]
\tikzstyle{bigstoch2}=[circle,draw,minimum size=8ex,inner sep=0pt,font=\Large, very thick,text centered, fill=blue!20, fill opacity=0.2, draw=black!80, text opacity=1, double]
\tikzstyle{bigmin}=[diamond,draw,minimum size=8ex,inner sep=0pt,font=\Large, very thick,text centered, fill=blue!20, fill opacity=0.2, draw=black!80, text opacity=1]
\tikzstyle{max}=[rectangle,draw,minimum size=4ex,inner sep=0pt,font=\large, very thick,text centered, fill=blue!20, fill opacity=0.2, draw=black!80, text opacity=1]
\tikzstyle{bigtran}=[very thick,draw,-angle 60,font=\scriptsize, inner sep = 6pt]
\tikzstyle{tran}=[thick,draw,font=\scriptsize, inner sep = 6pt,-stealth]

\bibliographystyle{plainurl}

\title{Asymptotic Complexity Estimates for Probabilistic Programs and their VASS Abstractions} 

\titlerunning{Asymptotic Estimates for VASS MDPs} 

\author{Michal Ajdar{\'{o}}w}{Masaryk University, Czechia \and \url{https://www.muni.cz/lide/422654-michal-ajdarow}}{xajdarow@fi.muni.cz}{https://orcid.org/0000-0003-0694-0944}{}

\author{Anton\'{\i}n Ku\v{c}era}{Masaryk University, Czechia \and \url{https://www.fi.muni.cz/usr/kucera/}}{tony@fi.muni.cz}{https://orcid.org/0000-0002-6602-8028}{}

\authorrunning{M.~Ajdar{\'{o}}w and A.~Ku\v{c}era} 

\Copyright{Michal Ajdar{\'{o}}w and Anton\'{\i}n Ku\v{c}era} 

\ccsdesc[100]{Theory of computation~Models of computation} 

\keywords{Probabilistic programs, asymptotic complexity, vector addition systems} 

\category{} 



\funding{The work is supported by the Czech Science Foundation, Grant No.~21-24711S.}



\hideLIPIcs  

\EventNoEds{2}
\EventLongTitle{34th International Conference on Concurrency Theory (CONCUR 2023)}
\EventShortTitle{CONCUR 2023}
\EventAcronym{CONCUR}
\EventYear{2023}
\EventDate{September 18--23, 2023}
\EventLocation{Antwerp, Belgium}
\EventLogo{}

\begin{document}

\maketitle

\begin{abstract}
The standard approach to analyzing the asymptotic complexity of probabilistic programs is based on studying the asymptotic growth of certain expected values (such as the expected termination time) for increasing input size. We argue that this approach is not sufficiently robust, especially in situations when the expectations are infinite. We propose new estimates for the asymptotic analysis of probabilistic programs with non-deterministic choice that overcome this deficiency. Furthermore, we show how to efficiently compute/analyze these estimates for selected classes of programs represented as Markov decision processes over vector addition systems with states.
\end{abstract}

\section{Introduction}
\label{sec-intro}

				
				 
	

Vector Addition Systems with States (VASS) \cite{HP:VASS-reachability-TCS} are a model for discrete systems with multiple unbounded resources expressively equivalent to Petri nets \cite{Petri:first-paper}. Intuitively, a VASS with $d \geq 1$ counters is a finite directed graph where the transitions are labeled by $d$-dimensional vectors of integers representing \emph{counter updates}. A computation starts in some state for some initial vector of non-negative counter values and proceeds by selecting transitions non-deterministically and performing the associated counter updates. Since the counters cannot assume negative values, transitions that would decrease some counter below zero are disabled.

In program analysis, VASS are used as abstractions for programs operating over unbounded integer variables. Input parameters are represented by initial counter values, and more complicated arithmetical functions, such as multiplication, are modeled by VASS gadgets computing these functions in a weak sense (see, e.g., \cite{LS:Petri-computer}). Branching constructs, such as \textbf{if-then-else}, are usually replaced with non-deterministic choice. VASS are particularly useful for evaluating the \emph{asymptotic complexity} of infinite-state programs, i.e., the dependency of the running time (and other complexity measures)  on the size of the program input \cite{SZV:amortized,SZV:difference-constraints}. Traditional VASS decision problems such as reachability, liveness, or boundedness are computationally hard \cite{CLLLM:VASS-reach-nonelem,Lipton:PN-Reachability,MM:containment-Petri}, and other verification problems such as equivalence-checking \cite{Jancar:PN-bisimilarity-TCS} or model-checking \cite{Esparza:ModelChecking-AI} are even undecidable. In contrast to this, decision problems related to the asymptotic growth of VASS complexity measures are solvable with low complexity and sometimes even in \emph{polynomial time} \cite{BCKNVZ:VASS-linear-termination,Zuleger:VASS-polynomial,KLV:VASS-Grzegorczyk,Leroux:Polynomial-termination-VASS,AK:VASS-polynomial-termination}; see \cite{Kucera:Asymptotic-VASS-Analysis-SIGLOG} for a recent overview.

The existing results about VASS asymptotic analysis are applicable to programs with non-determinism (in \emph{demonic} or \emph{angelic} form, see \cite{BW:nondet-languages}), but cannot be used to analyze the complexity of \emph{probabilistic programs}. This motivates the study of Markov decision process over VASS (VASS MDPs) with both non-deterministic and probabilistic states, where transitions in probabilistic states are selected according to fixed probability distributions. Here, the problems of asymptotic complexity analysis become even more challenging because VASS MDPs subsume infinite-state stochastic models that are notoriously hard to analyze. So far, the only existing result about asymptotic VASS MDP analysis is \cite{BCKNV:probVASS-linear-termination} where the linearity of expected termination time is shown decidable in polynomial time for VASS MDPs with DAG-like MEC decomposition.

\textbf{Our Contribution:} We study the problems of asymptotic complexity analysis for probabilistic programs and their VASS abstractions. 

For non-deterministic programs, termination complexity is a function $\calL_{\max}$ assigning to every $n \in \Nset$ the length of the longest computation initiated in a configuration with each counter set to~$n$. A natural way of generalizing this concept to probabilistic programs is to define a function $\calL_{\eexp}$ such that $\calL_{\eexp}(n)$ is the maximal \emph{expected length} of a computation initiated in a configuration of size~$n$, where the maximum is taken over all strategies resolving non-determinism. The same approach is applicable to other complexity measures. We show that this natural idea is generally \emph{inappropriate}, especially in situations when $\calL_{\eexp}(n)$ is \emph{infinite} for a sufficiently large~$n$. By ``inappropriate'' we mean that this form of asymptotic analysis can be misleading. For example, if $\calL_{\eexp}(n) = \infty$ for all $n \geq 1$, one may conclude that the computation takes a very long time independently of~$n$. However, this is not necessarily the case, as demonstrated in a simple example of Fig.~\ref{fig-prob-prg} (we refer to Section~\ref{sec-asymptotic-measures} for a detailed discussion). Therefore, we propose new notions of \emph{lower/upper/tight complexity estimates} and demonstrate their advantages over the expected values. These notions can be adapted to other models of probabilistic programs, and constitute the main conceptual contribution of our work.

Then, we concentrate on algorithmic properties of the complexity estimates in the setting of VASS MDPs. Our first result concerns \emph{counter complexity}. We show that for every VASS MDP with DAG-like MEC decomposition and every counter~$c$, there are only two possibilities:
\begin{itemize}
	\item The function $n$ is a \emph{tight estimate} of the asymptotic growth of the maximal $c$-counter value assumed along a computation initiated in a configuration of size~$n$.
	\item The function $n^2$ is a \emph{lower estimate} of the asymptotic growth of the maximal $c$-counter value assumed along a computation initiated in a configuration of size~$n$.
\end{itemize}
Furthermore, it is decidable in \emph{polynomial time} which of these alternatives holds. 

Since the termination and transition complexities can be easily encoded as the counter complexity for a fresh ``step counter'', the above result immediately extends also to these complexities. To some extent, this result can be seen as a generalization of the result about termination complexity presented in \cite{BCKNV:probVASS-linear-termination}. See Section~\ref{section-linquad} for more details.

Our next result is a full classification of asymptotic complexity for one-dimensional VASS MDPs. We show that for every one-dimensional VASS MDP 
\begin{itemize}
	\item the counter complexity is either unbounded or $n$ is a tight estimate;
	\item termination complexity is either unbounded or one of the functions $n$, $n^2$ is a tight estimate.
	\item transition complexity is either unbounded, or bounded by a constant, or one of the functions $n$, $n^2$ is a tight estimate.
\end{itemize}
Furthermore, it is decidable in \emph{polynomial time} which of the above cases hold.

Since the complexity of the considered problems remains low, the results are encouraging. On the other hand, they require non-trivial insights, indicating that establishing a full and effective classification of the asymptotic complexity of multi-dimensional VASS MDPs is a challenging problem.

\section{Preliminaries}
\label{sec-prelim}


We use $\N$, $\Z$, $\Q$, and $\R$ to denote the sets of non-negative integers,
integers, rational numbers, and real numbers. 
%
Given a function $f \colon \N \rightarrow \N$, we use  $O(f)$ and $\Omega(f)$ to denote the sets of all $g \colon \N \rightarrow \N$ such that $g(n) \leq a \cdot f(n)$ and $g(n) \geq b \cdot f(n)$ for all sufficiently large $n \in \N$, where $a,b$ are some positive constants. If $h \in O(f)$ and $h \in \Omega(f)$, we write $h \in \Theta(f)$.

Let $A$ be a finite index set. The vectors of $\R^A$ are denoted by bold letters such as $\bu,\bv,\bz,\ldots$. The component of $\bv$ of index $i\in A$ is denoted by $\bv(i)$. 
If the index set is of the form $A=\{1,2,\dots,d\}$ for some positive integer $d$, we write $\R^d$ instead of $\R^A$. For every $n \in \N$, we use $\bn$ to denote the constant vector where all
components are equal to~$n$.
 The other
standard operations and relations on $\R$ such as
$+$, $\leq$, or $<$ are extended to $\R^d$ in the component-wise way. In particular,  $\bv < \bu$ if $\bv(i) < \bu(i)$ for every index $i$.

A \emph{probability distribution} over a finite set $A$ is a vector $\nu \in [0,1]^A$ such that $\sum_{a \in A}\nu(a) = 1$. We say that $\nu$ is \emph{rational} if every $\nu(a)$ is rational, and $\emph{Dirac}$ if $\nu(a) =1$ for some $a \in A$.
\vspace{-0.2cm}

\subsection{VASS Markov Decision Processes}

\begin{definition}
	\label{def-VASS} 
	
	Let $d \geq 1$. A \emph{$d$-dimensional VASS MDP} is a tuple $\A = \ce{Q, (Q_n,Q_p),T,P}$, where 
	\begin{itemize}
		\item $Q \neq \emptyset$ is a finite set of \emph{states} split into two disjoint subsets $Q_n$ and $Q_p$ of \emph{nondeterministic} and \emph{probabilistic} states,
		\item $T \subseteq Q \times \Z^d\times Q$ is a finite set of \emph{transitions} such that, for every $p \in Q$, the set $\tout(p) \subseteq T$ of all transitions of the form $(p,\bu,q)$ is non-empty.
		\item $P$ is a function assigning to each $t \in \tout(p)$ where $p \in Q_p$ a positive rational probability so that  $\sum_{t \in T(p)} P(t) =1$. 
	\end{itemize}
\end{definition}
The encoding size of $\A$ is denoted by $\size{\A}$, where the integers representing counter updates are written in binary and probability values are written as fractions of binary numbers. For every $p \in Q$, we use $\tin(p) \subseteq T$ to denote the set of all transitions of the form $(q,\bu,p)$. The update vector $\bu$ of a transition $t = (p,\bu,q)$ is also denoted by $\bu_t$.

A \emph{finite path} in $\A$ of length~$n \geq 0$ is a finite sequence of the form
$p_0,\bu_1,p_1,\bu_2,\ldots,\bu_{n},p_n$ where $(p_i,\bu_{i+1},p_{i+1}) \in T$  for all $i<n$. We use $\length(\alpha)$ to denote the length of $\alpha$. If there is a finite path from $p$ to $q$, we say that $q$ is \emph{reachable} from~$p$.
An \emph{infinite path} in $\A$ is an infinite sequence $\pi = p_0,\bu_1,p_1,\bu_2,\ldots$ such that every finite prefix of $\pi$ ending in a state is a finite path in~$\A$. 


A \emph{strategy} is a function $\sigma$ assigning to every finite path $p_0,\bu_1,\ldots,p_n$ such that $p_n \in Q_n$ a probability distribution over~$\tout(p_n)$. A strategy is \emph{Markovian (M)} if it depends only on the last state $p_n$, and \emph{deterministic (D)} if it always returns a Dirac distribution. The set of all strategies is denoted by $\Sigma_{\A}$, or just $\Sigma$ when $\A$ is understood. Every initial state $p \in Q$ and every strategy $\sigma$ determine the probability space over infinite paths initiated in $p$ in the standard way. We use $\prob^\sigma_{p}$ to denote the associated probability measure.

A \emph{configuration} of $\A$ is a pair $p\bv$, where $p \in Q$ and $\bv \in \Z^d$. If some component of $\bv$ is negative, then $p\bv$ is \emph{terminal}. The set of all configurations of $\A$ is denoted by $\conf(\A)$. 

Every infinite path $p_0,\bu_1,p_1,\bu_2,\ldots$ and every initial vector $\bv \in \Z^d$ determine the corresponding \emph{computation} of $\V$, i.e., the sequence of configurations $p_0\bv_0, p_1 \bv_1, p_2 \bv_2,\ldots$ such that $\bv_0 =\bv$ and $\bv_{i+1} = \bv_i + \bu_{i+1}$. Let $\Term(\pi)$ be the least $j$ such that $p_j\bv_j$ is terminal. If there is no such $j$, we put  $\Term(\pi) = \infty$ . 

Note that every computation uniquely determines its underlying infinite path. We define the probability space over all computations initiated in a given $p \bv$, where the underlying probability measure  $\prob^\sigma_{p \bv}$ is obtained from $\prob_p^\sigma$ in an obvious way. For a measurable function $X$ over computations, we use $\Exp^\sigma_{p \bv}[X]$ to denote the expected value of~$X$.

\section{Asymptotic Complexity Measures for VASS MDPs}
\label{sec-asymptotic-measures}

In this section, we introduce asymptotic complexity estimates applicable to probabilistic programs with non-determinism and their abstract models (such as VASS MDPs). We also explain their relationship to the standard measures based on the expected values of relevant random variables.

\begin{figure}
\parbox[c]{.5\textwidth}{
\begin{tabbing}
	\hspace*{1em} \= \hspace*{2em} \= \hspace*{4em} \= \hspace*{4em} \= \kill
	\> \textbf{input} $N$\\
	\> \textbf{repeat}\\
	\>\> \textbf{random choice:}\\
	\>\>\> $0.5:\ \ N:=N+1;$\\
	\>\>\> $0.5:\ \ N:=N-1;$\\
	\> \textbf{until} $N=0$
\end{tabbing}}\hspace*{4em}
\parbox[c]{.5\textwidth}{
	\begin{tikzpicture}[x=2cm, y=2cm]
		\node [stoch,label={[shift={(0,-1)}]$\A$}] (r) at (3,0)  {$p$};
		\draw [tran,loop right] (r) to node[right] {$0.5, +1$} (r); 
		\draw [tran,loop left]  (r) to node[left] {$0.5, -1$} (r); 
	\end{tikzpicture}
}
\caption{A probabilistic program with infinite expected running time for every $N\geq 1$, and its $1$-dimensional VASS MDP model~$\A$.}
\label{fig-prob-prg}
\end{figure}

Let us start with a simple motivating example. Consider the simple probabilistic program of Fig.~\ref{fig-prob-prg}. The program inputs a positive integer $N$ and then repeatedly increments/decrements~$N$ with probability $0.5$ until $N=0$. One can easily show that for every $N \geq 1$, the program terminates with probability one, and the expected termination time is \emph{infinite}. Based on this, one may conclude that the execution takes a very long time, independently of the initial value of~$N$. However, this conclusion is \emph{not} consistent with practical experience gained from trial runs\footnote{For $N=1$, about $95\%$ of trial runs terminate after at most $1000$ iterations of the \textbf{repeat-until} loop.  For $N=10$, only about $75\%$ of all runs terminate after at most $1000$ iterations, but about 
$90\%$ of them terminate after at most $10000$ iterations.}. The program tends to terminate ``relatively quickly'' for small~$N$, and the termination time \emph{does} depend on~$N$. 
Hence, the function assigning $\infty$ to every $N \geq 1$ is \emph{not} a faithful characterization of the asymptotic growth of termination time.  We propose an alternative characterization  based on the following observations\footnote{Formal proofs of these observations are simple; in Section~\ref{sec-one-dim}, we give a full classification of the asymptotic behaviour of one-dimensional VASS MDPs subsuming the trivial example of Fig.~\ref{fig-prob-prg}.}:
\begin{itemize}
		\item For every $\varepsilon >0$, the probability of all runs terminating after more than $n^{2+\varepsilon}$ steps (where $n$ is the initial value of $N$) approaches \emph{zero} as $n \to \infty$.
		\item For every $\varepsilon >0$, the probability of all runs terminating after more than $n^{2-\varepsilon}$ steps (where $n$ is the initial value of $N$) approaches \emph{one} as $n \to \infty$.
\end{itemize}
Since the execution time is ``squeezed'' between $n^{2-\varepsilon}$ and $n^{2+\varepsilon}$ for an arbitrarily small $\varepsilon > 0$ as $n \to \infty$, it can be characterized as ``asymptotically quadratic''. This analysis is in accordance with experimental outcomes.

\subsection{Complexity of VASS Runs}
\label{sec-comp-VASS-runs}
We recall the complexity measures for VASS runs used in previous works \cite{BCKNVZ:VASS-linear-termination,Zuleger:VASS-polynomial,KLV:VASS-Grzegorczyk,Leroux:Polynomial-termination-VASS,AK:VASS-polynomial-termination}. These functions can be seen as variants of the standard time/space complexities for Turing machines.  

Let $\A = \ce{Q, (Q_n,Q_p),T,P}$ be a $d$-dimensional VASS MDP, $c \in \{1,\ldots,d\}$, and $t \in T$. For every computation  $\pi = p_0 \bv_0,p_1 \bv_1,p_2 \bv_2,\ldots$, we put
\begin{eqnarray*}
    \calL(\pi)      & = & \Term(\pi)\\
	\calC[c](\pi) & = & \sup \{\bv_i(c) \mid 0 \leq i < \Term(\pi)\}\\
    \calT[t](\pi) & = & \mbox{the total number of all $0 \leq i < \Term(\pi)$ such that $(p_i,\bv_{i{+}1}{-}\bv_i,p_{i+1}) = t$}
\end{eqnarray*}
We refer to the functions $\calL$, $\calC[c]$, and $\calT[t]$ as \emph{termination}, \emph{$c$-counter}, and \emph{$t$-transition complexity}, respectively.

Let $\F$ be one of the complexity functions defined above. In VASS abstractions of computer programs, the input is represented by initial counter values, and the input size corresponds to the maximal initial counter value. The existing works on \emph{non-probabilistic} VASS concentrate on analyzing the asymptotic growth of the functions $\F_{\max} : \Nset \to \Nset_\infty$ where
\begin{eqnarray*}
    \F_{\max}(n)     & = & \max\{\F(\pi) \mid \pi \mbox { is a computation initiated in $p\bn$ where $p \in Q$}\}
\end{eqnarray*}
For VASS MDP, we can generalize $\F_{\max}$ into $\F_{\eexp}$ as follows:
\begin{eqnarray*}
    \F_{\eexp}(n)     & = & \max\{\Exp_{p\bn}^{\sigma}[\F] \mid \sigma \in \Sigma_{\A}, p \in Q\}
\end{eqnarray*}
Note that for non-probabilistic VASS, the values of $\F_{\max}(n)$ and $\F_{\eexp}(n)$ are the same. However, the function $\F_{\eexp}$ suffers from the deficiency illustrated in the motivating example at the beginning of Section~\ref{sec-asymptotic-measures}. To see this, consider the one-dimensional VASS MDP $\A$ modeling the simple probabilistic program (see Fig.~\ref{fig-prob-prg}). For every $n \geq 1$ and the only (trivial) strategy $\sigma$, we have that $\prob^\sigma_{p \bn}[\Term < \infty] = 1$ and $\calL_{\eexp}(n) = \infty$. However, the practical experience with trial runs of $\A$ is the same as with the original probabilistic program (see above).

\subsection{Asymptotic Complexity Estimates}
\label{sec-estimates}

In this section, we introduce asymptotic complexity estimates allowing for a precise analysis of the asymptotic growth of the termination, $c$-counter, and $t$-transition complexity, especially when their expected values are infinite for a sufficiently large input. For the sake of readability, we first present a simplified variant applicable to \emph{strongly connected} VASS MDPs.

Let $\F$ be one of the complexity functions for VASS computations defined in Section~\ref{sec-comp-VASS-runs}, and let $f : \Nset \to \Nset$. We say that $f$ is a \emph{tight estimate of $\F$} if, for arbitrarily small $\varepsilon > 0$, the value of $\F(n)$ is ``squeezed'' between $f^{1-\varepsilon}(n)$ and $f^{1+\varepsilon}(n)$ as $n \to \infty$. More precisely, for every $\varepsilon > 0$,

\begin{itemize}
	\item there exist $p \in Q$ and strategies $\sigma_1,\sigma_2,\ldots$ such that $\liminf_{n \to \infty} 	\ \prob^{\sigma_n}_{p \bn}[\F \geq (f(n))^{1-\varepsilon}] \ = \ 1$;
	\item for all $p \in Q$ and strategies $\sigma_1,\sigma_2,\ldots$ we have that $\limsup_{n \to \infty} 	\ \prob^{\sigma_n}_{p \bn}[\F \geq (f(n))^{1+\varepsilon}] \ = \ 0$.
\end{itemize}

The above definition is adequate for strongly connected VASS MDPs because tight estimates tend to exist in this subclass. Despite some effort, we have not managed to construct an example of a strongly connected VASS MDP where an $\F$ with some upper polynomial estimate does \emph{not} have a tight estimate (see Conjecture~\ref{con-estim}).  However, if the underlying graph of $\A$ is \emph{not} strongly connected, then the asymptotic growth of $\F$ can differ for computations visiting a different sequence of maximal end components (MECs) of $\A$, and the asymptotic growth of $\F$ can be ``squeezed'' between $f^{1-\varepsilon}(n)$ and $f^{1+\varepsilon}(n)$ only for the subset of computations visiting the same sequence of MECs. This explains why we need a more general definition of complexity estimates presented below.

An \emph{end component (EC)} of $\A$ is a pair $(C,L)$ where $C \subseteq Q$ and  $L \subseteq T$ such that the following conditions are satisfied:
\begin{itemize}
	\item $C \neq \emptyset$;
	\item if $p \in C \cap Q_n$, then at least one outgoing transition of $p$ belongs to $L$;
	\item if $p \in C \cap Q_p$, then all outgoing transitions of $p$ belong to $L$;
	\item if $(p,\bu,q) \in L$, then $p,q \in C$;
	\item for all $p,q \in C$ we have that $q$ is reachable from $p$ and vice versa.
\end{itemize} 
Note that if $(C,L)$ and $(C',L')$ are ECs such that $C \cap C' \neq \emptyset$, then $(C\cup C', L \cup L')$ is also an EC. Hence, every $p\in Q$ either belongs to a unique \emph{maximal end component} (MEC), or does not belong to any EC. Also observe that each MEC can be seen as a strongly connected VASS MDP. We say that $\A$ has \emph{DAG-like MEC decomposition} if for every pair $M,M'$ of different MECs such that the states of $M'$ are reachable from the states of $M$ we have that the states of $M$ are not reachable from the states of~$M'$.  

For every infinite path $\pi$ of $\A$, let $\mecs(\pi)$ be the unique sequence of MECs visited by~$\pi$. Observe that $\mecs(\pi)$ disregards the states that do not belong to any EC; intuitively, this is because the transitions executed in such states do not influence the asymptotic growth of~$\F$. Observe that the length of $\mecs(\pi)$, denoted by $\mathit{len}(\mecs(\pi))$, can be finite or infinite. The first possibility corresponds to the situation when an infinite suffix of $\pi$ stays within the same MEC. Furthermore, for all $\sigma \in \Sigma$ and $p \in Q$, we have that $\prob^\sigma_{p}[\mathit{len}(\mecs) = \infty] =0$, and the probability $\prob^\sigma_{p}[\mathit{len}(\mecs) \geq k]$ decays exponentially in~$k$ (these folklore results are easy to prove). All of these notions are lifted to computations in an obvious way.

Observe that if a strategy $\sigma$ aims at maximizing the growth of $\F$, we can safely assume that $\sigma$ eventually stays in a \emph{bottom} MEC that cannot be exited (intuitively, $\sigma$ can always move from a non-bottom MEC to a bottom MEC by executing a few extra transitions that do not influence the asymptotic growth of~$\F$, and the bottom MEC may allow increasing $\F$ even further). On the other hand, the maximal asymptotic growth of $\F$ may be achievable along some ``minimal'' sequence of MECs, and this information is certainly relevant for understanding the behaviour of a given probabilistic program. This leads to the following definition: 


\begin{definition}
\label{def-estimates}
	A \emph{type} is a finite sequence $\beta$ of MECs such that $\mecs(\pi) = \beta$ for some infinite path~$\pi$.

	We say that $f$ is a \emph{lower estimate of $\F$ for a type $\beta$} if for every $\varepsilon > 0$ there exist $p \in Q$ and a sequence of strategies $\sigma_1,\sigma_2,\ldots$ such that $\prob^{\sigma_n}_{p\bn}[\mecs = \beta] > 0$ for all $n \geq 1$ and
	\[
   		\liminf_{n \to \infty} \ \prob^{\sigma_n}_{p \bn}[\F \geq (f(n))^{1-\varepsilon} \mid \mecs{=}\beta] \ = \ 1\,.
	\]
	Similarly, we say that $f$ is an \emph{upper estimate of $\F$ for a type $\beta$} if for every $\varepsilon>0$, every $p \in Q$, and every sequence of strategies $\sigma_1,\sigma_2,\ldots$ such that $\prob^{\sigma_n}_{p\bn}[\mecs = \beta] > 0$ for all $n \geq 1$ we have that
	\[
	\limsup_{n \to \infty} 	\ \prob^{\sigma_n}_{p \bn}[\F \geq (f(n))^{1+\varepsilon} \mid \mecs{=}\beta] \ = \ 0
	\]	
	If there is no upper estimate of $\F$ for a type $\beta$, we say that $\F$ is \emph{unbounded for $\beta$}. Finally, we say that $f$ is a \emph{tight estimate of $\F$ for $\beta$} if it is both a lower estimate and an upper estimate of $\F$ for $\beta$.
\end{definition}


Let us note that in the subclass of \emph{non-probabilistic} VASS, MECs become strongly connected components (SCCs), and types correspond to paths in the directed acyclic graph of SCCs. Each such path determines the corresponding asymptotic increase of $\F$, as demonstrated in \cite{AK:VASS-polynomial-termination}. We conjecture that types play a similar role for VASS MDPs. More precisely, we conjecture the following:

\begin{conjecture}
\label{con-estim}
   If some polynomial is an upper estimate of $\F$ for $\beta$, then there exists a tight estimate $f$ of $\F$ for~$\beta$. 
\end{conjecture}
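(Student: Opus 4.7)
The plan is to attack the conjecture by reducing from a general type $\beta = M_1 M_2 \cdots M_k$ to the strongly connected case (a single MEC), and then proving a polynomial-gap dichotomy inside a single MEC. First, I would condition every strategy on the event $\mecs = \beta$ and decompose each computation into $k$ consecutive phases, where phase $i$ lives entirely in $M_i$ and ends with the single exit transition leaving $M_i$. The counter vector entering $M_{i+1}$ equals the entry vector of $M_i$ plus the net update of phase $i$ plus a bounded exit offset, so under the assumption that some polynomial $n^{\alpha}$ is an upper estimate of $\F$ for $\beta$, all phase-boundary counters stay polynomially bounded in~$n$ with conditional probability tending to one. This lets me treat each phase as a strongly connected VASS MDP started from a configuration of size polynomial in~$n$, and reduces the conjecture to the statement that a strongly connected VASS MDP with a polynomial upper estimate admits a tight estimate.

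For the strongly connected case, the goal is a \emph{polynomial-gap theorem}: the set of real exponents $\alpha$ for which $n^{\alpha}$ is an upper estimate of $\F$ on the single MEC is a closed upper set $[\alpha^{*},\infty)$, and $n^{\alpha^{*}}$ is a tight estimate. The plan is to lift the rank/linear-form machinery of~\cite{AK:VASS-polynomial-termination} from non-deterministic VASS to VASS MDPs: one searches, within the cone of non-negative combinations of transition effects that respect the fixed probabilistic distributions at states in $Q_p$, for a deterministic Markovian strategy $\sigma$ together with a tower of linear functionals $\ell_1,\ldots,\ell_r$ on $\R^d$ whose expected one-step increments under $\sigma$ exhibit an iterated rank structure. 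The number $r$ of iterated ranks should match $\alpha^{*}$, and the strategy extracted from a dual LP certificate should push the relevant counters to scale $n^{r}$ with high probability starting from $p\bn$.

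The hard part, and the real content of the conjecture, is exactly this rank characterization in the probabilistic setting. In the non-probabilistic case one relies on Parikh-style flow arguments that interchange freely with suprema over strategies, but in an MDP the probabilistic states average out transition effects, so the realized trajectory of $\sigma$ can deviate substantially from the deterministic flow that certifies the upper bound. The technical obstacle is to show that every polynomial upper estimate admits a dual witness that can be ``rounded'' into a Markovian strategy together with a concentration guarantee transferring the expected polynomial growth into almost-sure growth of $\F$ of the same exponent. A plausible route is to combine the MEC-based analysis of~\cite{BCKNV:probVASS-linear-termination} with the rank machinery of~\cite{AK:VASS-polynomial-termination}, and then invoke Azuma/Hoeffding concentration applied to the martingales obtained by centring each $\ell_i$ along the Markov chain induced by $\sigma$, in order to convert the expectation bounds into the almost-sure bounds demanded by Definition~\ref{def-estimates} after conditioning on $\mecs = \beta$.
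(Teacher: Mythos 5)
The paper does not prove Conjecture~\ref{con-estim}: it is stated and explicitly left open, and the authors even hedge with ``Even if Conjecture~\ref{con-estim} is proven wrong, there are interesting subclasses of VASS MDPs where it holds.'' So there is no proof in the paper to compare your attempt against, and the only results the paper actually establishes in this direction are the much weaker linear/quadratic dichotomy for DAG-like MEC decompositions (Theorem~\ref{thm-dichotomy}) and the full one-dimensional classification (Theorem~\ref{thm-onedim}), both of which are proved by different, more hands-on means than a general rank machinery.

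Your proposal is an honest research plan rather than a proof, and you yourself flag that the ``rank characterization in the probabilistic setting'' is ``exactly the real content of the conjecture'' --- which means the sketch does not close the conjecture but restates it at a more refined level (a polynomial-gap theorem for a single MEC). Beyond that acknowledged gap, two further steps are not as clean as described. First, the claim that a polynomial upper estimate of $\F$ for $\beta$ forces all phase-boundary counters to stay polynomially bounded is not a general consequence: if $\F = \calC[c]$ is a single-counter complexity, a polynomial bound on counter $c$ says nothing about the other counters, which can grow faster and materially affect what the next MEC can do; only for $\F = \calL$ does such a uniform bound follow for free. Second, the reduction to strongly connected pieces needs uniformity over \emph{all} strategies and sequences of strategies conditioned on $\mecs = \beta$, and the conditioning couples the entry distribution into $M_{i+1}$ with the strategy used in $M_i$ in a way that the per-MEC argument has to handle explicitly; Definition~\ref{def-estimates} quantifies over strategy sequences in a fashion that does not obviously factor over phases. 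Finally, even granting a polynomial-gap theorem for a single MEC of the form ``the set of admissible exponents is a ray $[\alpha^*,\infty)$ and $n^{\alpha^*}$ is tight,'' one still has to rule out the pathology that the conditional complexity oscillates between two different exponents along different subsequences of $n$, in which case no single $f$ is both a lower and an upper estimate; the $\liminf$/$\limsup$ asymmetry in Definition~\ref{def-estimates} makes that a real possibility that a proof must exclude. The martingale-concentration idea (centring linear functionals $\ell_i$ along the induced chain and invoking Azuma/Hoeffding) is a reasonable ingredient, but it addresses the deviation between realized and expected increments, not the existence of the right tower of functionals with matching ranks in the probabilistic setting, which remains the open core.
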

Even if Conjecture~\ref{con-estim} is proven wrong, there are interesting subclasses of VASS MDPs where it holds, as demonstrated in subsequent sections.

\begin{figure}\centering
\begin{tikzpicture}[x=3cm, y=3cm]
	\node [max]   (a) at (0,0)   {};
	\node [stoch] (b) at (1,0)   {};
	\node [max]   (c) at (-2,-1) {};
	\node [stoch] (d) at (0,-1)  {$q$};
	\node [max]   (e) at (-1,-1)  {};
	\node [stoch] (f) at (1,-1)  {};
	\draw[tran] (b) to [bend right=50] node[above=-1ex] {$\frac{1}{2}, (-1,+1,0,)$} (a);
	\draw[tran] (b) to [bend left=50]  node[below=-1ex] {$\frac{1}{2}, (+1,+1,0,)$} (a);
	\draw[tran] (a) to  node[above=-1ex]  {$\vec{0}$} (b);
	\draw[tran,rounded corners] (a) -- node[above=-1ex] {$\vec{0}$} +(-2,0)  -- (c);
	\draw[tran] (a) to  node[left=-1ex, near end]  {$\vec{0}$} (d);
	\draw[tran] (d) to  node[above=-1ex]  {$\frac{1}{2}, \vec{0}$} (e);
	\draw[tran] (d) to  node[above=-1ex]  {$\frac{1}{2}, \vec{0}$} (f);
	\draw[tran,loop below]  (e) to  node[below=-1ex]  {$(0,-1,0)$} (e);
	\draw[tran,loop below]  (c) to  node[below=-1ex]  {$(0,-1,+1)$} (c);
	\draw[tran,loop below]  (f) to  node[below=-1ex]  {$\frac{1}{2},(0,-1,+1)$} (f);
	\draw[tran,loop right]  (f) to  node[right=-1ex]  {$\frac{1}{2},(0,+1,+1)$} (f);
	\draw[dotted,thick, rounded corners] (-.2,-.55) rectangle (1.4,0.55);
	\draw[dotted,thick, rounded corners] (-1.3,-1.55) rectangle (-0.7,-.55);
	\draw[dotted,thick, rounded corners] (-2.3,-1.55) rectangle (-1.7,-.55);
	\draw[dotted,thick, rounded corners] (.65,-1.55) rectangle (2,-.65);
    \node at (1.3,0.45) {$M_1$};
    \node at (-0.8,-0.65) {$M_3$};
    \node at (-1.8,-0.65) {$M_2$};
    \node at (1.9,-0.75) {$M_4$};
\end{tikzpicture}
\caption{A VASS MDP $\A$ with four MECs and seven types.}
\label{fig-VASS-types}
\end{figure}

For every pair of MECs $M,M'$, let $P(M,M')$ be the maximal probability (achievable by some strategy) of reaching a state of $M'$ from a state of $M$ in $\A$ without passing through a state of some other MEC $M''$. Note that $P(M,M')$ is efficiently computable by standard methods for finite-state MDPs. The \emph{weight} of a given type $\beta = M_1,\ldots,M_k$ is defined as $\weigth(\beta) = \prod_{i=1}^{k-1} P(M_i,M_{i+1})$. Intuitively, $\weigth(\beta)$ corresponds to the maximal probability of ``enforcing'' the asymptotic growth of $\F$ according to the tight estimate $f$ of $\F$ for $\beta$ achievable by some strategy.   

Generally, higher asymptotic growth of $\F$ may be achievable for types with smaller weights. Consider the following example to understand better the types, their weights, and the associated tight estimates. 

\begin{example}
Let $\A$ be the VASS~MDP of Fig.~\ref{fig-VASS-types}. There are four MECs $M_1,M_2,M_3,M_4$ where $M_2,M_3,M_4$ are bottom MECs. Hence, there are four types of length one and three types of length two. Let us examine the types of length two initiated in $M_1$
for $\F \equiv \calC[c]$ where $c$ is the third counter.    

Note that in $M_1$, the first counter is repeatedly incremented/decremented with the same probability~$\frac{1}{2}$. The second counter ``counts'' these transitions and thus it is ``pumped'' to a \emph{quadratic} value (cf.{} the VASS MDP of Fig.~\ref{fig-prob-prg}). Then, a strategy may decide to move to $M_2$, where the value of the second counter is transferred to the third counter. Hence, $n^2$ is the tight estimate of $\calC[c]$ for the type $M_1,M_2$, and $\weigth(M_1,M_2) = 1$. Alternatively, a strategy may decide to move to the probabilistic state~$q$. Then, either $M_3$ or $M_4$ is entered with the same probability~$\frac{1}{2}$, which implies $\weigth(M_1,M_3) = \weigth(M_1,M_4) = \frac{1}{2}$. In $M_3$, the third counter is unchanged, and hence $n$ is the tight estimate of $\calC[c]$ for the type $M_1,M_3$. However, in $M_4$, the second counter previously pumped to a quadratic value is repeatedly incremented/decremented with the same probability $\frac{1}{2}$, and the third counter ``counts'' these transitions. This means that $n^4$ is a tight estimate of $\calC[c]$ for the type~$M_1,M_4$.

This analysis provides detailed information about the asymptotic growth of $\calC[c]$ in~$\A$. Every type shows ``how'' the growth specified by the corresponding tight estimate is achievable, and its weight corresponds to the ``maximal achievable probability of this growth''. This information is completely lost when analyzing the maximal expected value of $\calC[c]$ for computations initiated in configurations $p\bn$ where $p$ is a state of $M_1$, because these expectations are \emph{infinite} for all $n \geq 1$.
\end{example}

Finally, let us clarify the relationship between the lower/upper estimates of $\F$ and the asymptotic growth of $\F_{\eexp}$. The following observation is easy to prove.

\begin{observation}\label{observation-upper-expected-implies-upper-estimate}
	If $\F_{\eexp} \in O(f)$ where $f : \Nset \to \Nset$ is an unbounded function, then $f$ is an upper estimate of $\F$ for every type. Furthermore, if $f : \Nset \to \Nset$ is a lower estimate of $\F$ for some type, then  $\F_{\eexp} \in \Omega(f^{1-\epsilon})$ for each \(\epsilon>0 \). However, if $\F_{\eexp} \in \Omega(f)$ where $f : \Nset \to \Nset$, then $f$ is \emph{not} necessarily a lower estimate of $\F$ for some type.
\end{observation}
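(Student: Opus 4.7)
The plan is to treat the three assertions separately, using conditional Markov-type inequalities for the first two and a small concrete example drawn from Figure~\ref{fig-prob-prg} for the third.

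\textbf{Upper-estimate implication.} Suppose $\F_{\eexp}(n) \leq a\,f(n)$ for all sufficiently large $n$. Fix a type $\beta$, $p \in Q$, strategies $\sigma_n$ with $q_n := \prob^{\sigma_n}_{p\bn}[\mecs = \beta] > 0$, and $\varepsilon > 0$. From the identity $q_n \cdot \Exp^{\sigma_n}_{p\bn}[\F \mid \mecs = \beta] \leq \Exp^{\sigma_n}_{p\bn}[\F] \leq \F_{\eexp}(n)$ and Markov's inequality applied to the conditional distribution, I obtain
\[
  \prob^{\sigma_n}_{p\bn}[\F \geq f(n)^{1+\varepsilon} \mid \mecs = \beta]
  \;\leq\; \frac{\Exp^{\sigma_n}_{p\bn}[\F \mid \mecs = \beta]}{f(n)^{1+\varepsilon}}
  \;\leq\; \frac{a}{q_n\, f(n)^{\varepsilon}}.
\]
Since $f$ is unbounded, this tends to $0$ provided $q_n$ is bounded below by a positive constant, which holds for $n$ large because the inter-MEC transitions that realize $\beta$ depend only on the finite control of $\A$ and become enabled once $n$ exceeds the largest absolute value of any negative coordinate of an update vector of $\A$.

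\textbf{Lower bound on $\F_{\eexp}$.} Given $\epsilon > 0$, apply Definition~\ref{def-estimates} with $\varepsilon = \epsilon$ to obtain $p$ and $\sigma_n$ such that $\prob^{\sigma_n}_{p\bn}[\F \geq f(n)^{1-\epsilon} \mid \mecs = \beta] \to 1$. The chain
\[
  \F_{\eexp}(n) \;\geq\; \Exp^{\sigma_n}_{p\bn}[\F]
  \;\geq\; q_n\, f(n)^{1-\epsilon}\, \prob^{\sigma_n}_{p\bn}[\F \geq f(n)^{1-\epsilon} \mid \mecs = \beta]
  \;\geq\; \tfrac{q_n}{2}\, f(n)^{1-\epsilon}
\]
holds for all sufficiently large $n$, and combined with the uniform lower bound $q_n \geq c > 0$ discussed above yields $\F_{\eexp} \in \Omega(f^{1-\epsilon})$.

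\textbf{Counterexample.} For the last assertion I would take the VASS MDP $\A$ of Figure~\ref{fig-prob-prg} with $\F = \calL$ and $f(n) = n^3$. Since the symmetric random walk started at $n \geq 1$ has infinite expected hitting time of $0$, one has $\F_{\eexp}(n) = \infty \in \Omega(n^3)$. However, $\A$ has a single MEC and hence a single type $\beta$, and the classical tail estimate $\prob^{\sigma}_{p\bn}[\Term \geq n^c] \to 0$ for every $c > 2$ shows that for any $\varepsilon < 1/3$ the probability $\prob^{\sigma}_{p\bn}[\F \geq f(n)^{1-\varepsilon} \mid \mecs = \beta]$ tends to $0$ rather than to $1$. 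Consequently $n^3$ is not a lower estimate of $\calL$ for $\beta$, establishing the last part of the observation. The main technical point in the first two parts is the uniform lower bound on $q_n$; my plan is to reduce it to a purely finite-state reachability argument on the MEC decomposition of $\A$, exploiting that the counter-enabling constraints along any realizing path of $\beta$ become slack once $n$ is sufficiently large.
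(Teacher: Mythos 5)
Your third item — taking the VASS MDP of Figure~\ref{fig-prob-prg}, $\F=\calL$, $f(n)=n^3$, using that the symmetric walk has infinite expected hitting time and $\prob^\sigma_{p\bn}[\Term\geq n^c]\to 0$ for $c>2$ — is correct and matches what the paper presumably has in mind; there is only one type, so the conditional probability equals the unconditional one, and $n^3$ indeed fails to be a lower estimate for any $\varepsilon<1/3$.

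The first two parts, however, rest on a claim that you assert but do not actually establish, namely that $q_n:=\prob^{\sigma_n}_{p\bn}[\mecs=\beta]$ is bounded below by a positive constant for all large $n$. The justification you offer — that the inter-MEC transitions realizing $\beta$ ``become enabled'' once $n$ exceeds the largest update magnitude — only shows that \emph{some} strategy achieves $q_n\geq\weigth(\beta)-o(1)$. It says nothing about the sequence $\sigma_n$ you are actually handed. For the upper-estimate direction, Definition~\ref{def-estimates} quantifies universally over all sequences $\sigma_1,\sigma_2,\dots$ with $q_n>0$; a strategy is free to randomize at a nondeterministic state and assign the $\beta$-realizing transition probability, say, $q_n=f(n)^{-2\varepsilon}$, in which case your bound
\[
\prob^{\sigma_n}_{p\bn}[\F\geq f(n)^{1+\varepsilon}\mid\mecs=\beta]\ \leq\ \frac{a}{q_n f(n)^\varepsilon}
\]
is vacuous because the right-hand side tends to infinity. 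The Markov inequality alone therefore cannot close the argument; one has to somehow control the conditional expectation $\Exp^{\sigma_n}_{p\bn}[\F\mid\mecs=\beta]$ without dividing by $q_n$, e.g.\ by exhibiting a \emph{different} strategy whose unconditional expectation dominates the conditional one of $\sigma_n$ — and since conditioning on $\mecs=\beta$ reweights the probabilistic transitions, the conditional process is not itself realizable as a strategy, so this step is genuinely nontrivial.

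The same gap infects the second part. The lower-estimate definition is existential: it hands you \emph{some} $p$ and $\sigma_n$ with $q_n>0$ and conditional probability tending to $1$, but gives no uniform lower bound on $q_n$. Your chain $\F_{\eexp}(n)\geq \tfrac{q_n}{2}f(n)^{1-\epsilon}$ only yields $\F_{\eexp}\in\Omega(f^{1-\epsilon})$ if you can replace the witnessing strategies by ones with $q_n$ bounded below while preserving the conditional tail bound, which is exactly the step your proof is missing. In short, the conflation of ``the realizing transitions are enabled'' with ``the given strategies take them with non-negligible probability'' is the precise point at which the argument breaks; either you need to prove a normalization lemma (a lower estimate witnessed by arbitrary $\sigma_n$ with $q_n>0$ is also witnessed by $\sigma_n'$ with $q_n'\geq\weigth(\beta)/2$), or you need an argument that sidesteps the conditional Markov inequality entirely.
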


Observation~\ref{observation-upper-expected-implies-upper-estimate} shows that complexity estimates are generally more informative than the asymptotics of $\F_{\eexp}$ even if $\F_{\eexp} \in \Theta(f)$ for some ``reasonable'' function~$f$. For example, it may happen that there are only two types $\beta_1$ and $\beta_2$ where $n$ and $n^3$ are tight estimates of $\calL$ for $\beta_1$ and $\beta_2$ with weights $0.99$ and $0.01$, respectively. In this case, $\calL_{\eexp} \in \Theta(n^3)$, although the termination time is linear for $99\%$ of computations.

%
%
%
%
%
%
%

\section{A Dichotomy between Linear and Quadratic Estimates}
\label{section-linquad}

In this section, we prove the following result:

\begin{theorem}
\label{thm-dichotomy}
	Let $\A$ be a VASS MDP with DAG-like MEC decomposition and $\F$ one of the complexity functions $\calL$, $\calC[c]$, or $\calT[t]$. For every type $\beta$, we have that either $n$ is a tight estimate of $\F$ for $\beta$, or $n^2$ is a lower estimate of $\F$ for $\beta$. It is decidable in polynomial time which of the two cases holds.
\end{theorem}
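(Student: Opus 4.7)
The plan is to reduce all three complexities to counter complexity and then perform a per-MEC analysis via linear programming. First, since $\calL$ and $\calT[t]$ can be encoded as $\calC[c_0]$ for a fresh ``step counter'' $c_0$ that is incremented on every transition (respectively, only on the designated transition $t$), and this encoding preserves the DAG-like MEC structure, it suffices to prove the dichotomy for $\calC[c]$. Second, the DAG-like MEC decomposition means a type $\beta = M_1,\ldots,M_k$ is a linear chain, and the maximal asymptotic growth of $c$ along $\beta$ is dominated by the growth achievable within a single $M_i$ (up to linear additive perturbations accumulated on the bridging paths between successive MECs). Hence, I reduce to the core question: for a MEC $M$ and counter $c$, starting from configurations $p\bn$, is the supremum of $c$ within $M$ bounded by $O(n)$, or does it reach $\Omega(n^2)$?

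The core technical step is a linear-programming characterization. For $M=(C,L)$, introduce flow variables $\mathbf{x}_t \geq 0$ for each $t \in L$ subject to: Kirchhoff conservation $\sum_{t \in \tin(q)\cap L} \mathbf{x}_t = \sum_{t \in \tout(q)\cap L} \mathbf{x}_t$ at every $q \in C$; probabilistic consistency $\mathbf{x}_t = P(t)\cdot\sum_{s \in \tin(q)\cap L}\mathbf{x}_s$ at every $q \in C\cap Q_p$; and non-depletion $\sum_{t \in L} \mathbf{x}_t\cdot \mathbf{u}_t(c') \geq 0$ for each counter $c' \neq c$. The objective is to maximize the $c$-drift $\sum_{t \in L} \mathbf{x}_t\cdot \mathbf{u}_t(c)$. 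The intended dichotomy is: if the LP admits a solution with strictly positive optimum, then $n^2$ is a lower estimate of $\calC[c]$ for any type $\beta$ visiting $M$; otherwise, $n$ is a tight estimate.

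For the quadratic lower bound, given a feasible flow $\mathbf{x}^*$ with positive $c$-drift, I build a strategy by decomposing $\mathbf{x}^*$ into a convex combination of simple cycles in $M$ and executing them in a scheduled randomized fashion whose empirical transition frequencies converge to those prescribed by $\mathbf{x}^*$. Counters $c' \neq c$ with zero drift then perform zero-mean random walks with strictly positive variance inherited from the probabilistic splits, and a standard martingale / first-passage argument shows they stay non-negative for $\Theta(n^2)$ steps with probability tending to~$1$; meanwhile $c$ accumulates its positive drift over these steps, reaching $\Omega(n^2)$. Counters with strictly positive drift survive trivially. For the linear upper bound, when the LP optimum equals $0$, LP duality / Farkas' lemma produces dual multipliers $y_{c'}$ such that $\Phi = v_c + \sum_{c'} y_{c'} v_{c'}$ is a supermartingale along any run in $M$; hence $v_c \leq \Phi \leq \Phi_0 = O(n)$ throughout, matched from below by a linear lower bound obtained from any single $c$-increasing transition along $\beta$.

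Polynomial-time decidability is immediate: the LP above has size polynomial in $\size{\A}$, and we need only check its sign in each of the polynomially many MECs of $\beta$. The main obstacle I expect is the quadratic lower bound: translating the steady-state LP flow into an honest VASS-MDP strategy and quantifying the $\Theta(n^2)$ survival time of zero-drift counters requires care, since under scheduled randomized play the joint counter trajectory is non-Markovian and one must control the interaction between cycles that affect several counters simultaneously. I expect the technical heart of the argument to be a functional central-limit-style bound showing that the vector of zero-drift counters, suitably rescaled, is well-approximated by a nondegenerate diffusion in the positive orthant whose exit time is $\Theta(n^2)$ with high probability.
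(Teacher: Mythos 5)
Your reduction of $\calL$ and $\calT[t]$ to $\calC[c]$ via a fresh step counter is exactly the paper's move, and your per-MEC LP (flow conservation, probabilistic consistency, non-negative drift on the other counters, maximize the $c$-drift) is equivalent to the paper's constraint system~(I) together with the duality of Lemma~\ref{lemma:dichotomy}: the LP optimum is positive iff system~(I) admits $\bx$ with $\sum_t \bx(t)\bu_t \geq \vec{0}$ and $\sum_t \bx(t)\bu_t(c) > 0$, which by Lemma~\ref{lemma:dichotomy} is iff every dual ranking function has $\by(c)=0$. The Farkas/supermartingale argument you sketch for the linear upper bound is also the paper's argument (Lemma~\ref{lemma-linear-upper-bound}). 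Your quadratic lower bound is sketchier than the paper's --- the paper offloads the ``steady-state flow $\Rightarrow$ honest strategy with $\Theta(n^2)$ survival'' step to a cited result of \cite{BCKNV:probVASS-linear-termination} rather than proving a diffusion-approximation from scratch --- but that is a presentation difference, not a gap.

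There is, however, a genuine gap in how you compose MECs along a type. You claim that the growth of $\calC[c]$ along $\beta = M_1,\ldots,M_k$ is dominated by the growth achievable within a single $M_i$ started from $p\bn$, ``up to linear additive perturbations.'' This is false, and the paper's own Figure~\ref{fig-VASS-types} is a counterexample for the $n$-vs-$n^2$ dichotomy itself. Take $c$ to be the third counter and the type $M_1,M_2$. In $M_1$, $c$ is never touched, so your LP for $M_1$ reports zero optimum ($n$ tight). In $M_2$, the only loop has update $(0,-1,+1)$; your non-depletion constraint on the second counter forces $\bx_t=0$, so the LP for $M_2$ also reports zero optimum ($n$ tight). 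Your procedure therefore concludes $n$ is tight for $\calC[c]$ along $M_1,M_2$, but in fact $n^2$ is the tight estimate: $M_1$ pumps the second counter to $\Theta(n^2)$, and $M_2$ then transfers it to the third. The perturbation passed across the MEC boundary is quadratic, not linear, so a per-MEC LP that always starts from $p\bn$ cannot see it. The paper handles this with an explicit induction over the type (Lemma~\ref{lemma-linquad-for-DAG} in Appendix~\ref{app-DAG-like}): it tracks, per counter, whether it is still ``linear'' or already ``quadratic,'' and when analyzing $M_{i+1}$ it zeroes out the updates of all quadratic counters in the update vectors (treating them as $\infty$), which removes exactly the spurious non-depletion constraint that kills your $M_2$ LP. It also needs the simultaneity built into the objective of system~(I): all counters with $\by(c)=0$ must be pumpable to $n^{2-\eps}$ by a single strategy (Lemma~\ref{lemma-quadratic-lower-bound}), because that pumping is what justifies treating them all as infinite together in the next MEC; a one-counter-at-a-time LP does not immediately give you that joint guarantee. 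Without these two ingredients your per-MEC reduction produces wrong answers.
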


Theorem~\ref{thm-dichotomy} can be seen as a generalization of the linear/quadratic dichotomy results previously achieved for non-deterministic VASS \cite{BCKNVZ:VASS-linear-termination} and for the termination complexity in VASS MDPs \cite{BCKNV:probVASS-linear-termination}. 

It suffices to prove Theorem~\ref{thm-dichotomy} for the \emph{counter complexity}. The corresponding results for the termination and transition complexities then follow as simple consequences. To see this, observe that we can extend a given VASS MDP with a fresh ``step counter'' $sc$ that is incremented by every transition (in the case of $\calL$) or the transition $t$ (in the case of $\T[t]$) and thus ``emulate''  $\calL$ and  $\T[t]$ as $\calC[sc]$.

We first consider the case when $\A$ is strongly connected and then generalize the obtained results to VASS MDPs with DAG-like MEC decomposition. So, let $\A$ be a strongly connected $d$-dimensional VASS MDP and~$c$ a counter of~$\A$. The starting point of our analysis is the dual constraint system designed in \cite{Zuleger:VASS-polynomial} for non-probabilistic strongly connected VASS. We generalize this system to strongly connected VASS MDPs in the way shown in Figure~\ref{fig-systems} (the original system of \cite{Zuleger:VASS-polynomial} can be recovered by disregarding the probabilistic states). 


\begin{figure}[h]
\vspace{0.7cm}
\begin{tabular}{|c|c|}
	\hline
	{\begin{minipage}[c]{0.38\textwidth}\small
			\vspace{0.2cm}
			Constraint system~(I):
			
			\vspace{0.2cm}
			Find $\bx \in \mathbb{Z}^{T}$ such that
			\begin{align}
				\sum_{t \in T} \bx(t) \bu_t  & \ge \vec{0} \nonumber\\
				\bx & \ge \vec{0} \nonumber
			\end{align}
			
		 and for each $p\in Q$
			\begin{align}
				\sum_{t\in \tout(p)}\bx(t)&=\sum_{t \in \tin(p)} \bx(t)\nonumber
		\end{align}
%
%
%
			and for all $p\in Q_p$, $t\in \tout(p)$
			\begin{align}
				\bx(t) & =P(t) \cdot \sum_{t'\in \tout(p)} \bx(t') \nonumber
			\end{align}
			\vspace*{1em}

			\textbf{Objective:} \emph{Maximize}\\
			\begin{itemize}
				\item the number of valid inequalities of the form 
				\[\sum_{t \in T} \bx(t)\bu_t(c) > 0,\]
				\item the number of valid inequalities of the form $\bx(t) > 0$.
			\end{itemize}
			\vspace*{8em}
	\end{minipage}}
	&
	{\begin{minipage}[c]{0.55\textwidth}
			\vspace{0.2cm}
			Constraint system (II):
			
			\vspace{0.2cm}
			Find $\by \in \mathbb{Z}^d,\bz \in \mathbb{Z}^{Q}$ such that
			\begin{align}
				\by & \ge  \vec{0} \nonumber\\
				\bz & \ge  \vec{0} \nonumber
			\end{align}
			and for each $(p,\bu,q)\in T$ where $p\in Q_n$ 			
			   \[\bz(q)-\bz(p)+\sum_{i=1}^{d} \bu(i)\by(i) \leq 0\]
			and for each $p\in Q_p$  
			\[\sum_{t= (p,\bu,q) \in \tout(p)}P(t)\big(\bz(q)-\bz(p)+\sum_{i=1}^{d}  \bu_t(i)\by(i)\big)\leq 0 \]
			\vspace*{1em}
			
			\textbf{Objective:}
			\emph{Maximize}\\
			\begin{itemize}
				\item the number of valid inequalities of the form $\by(c) > 0$,
				\item the number of transitions $t=(p,\bu,q)$ such that $p\in Q_n$ and \[\bz(q)-\bz(p)+\sum_{i=1}^{d} \bu(i)\by(i)<0,\]
				\item the number of states $p\in Q_p$ such that 
				\[\sum_{t = (p,\bu,q) \in \tout(p)}P(t)\big(\bz(q)-\bz(p)+\sum_{i=1}^{d}  \bu(i)\by(i)\big)< 0\,.\]
			\end{itemize}
			\vspace*{1em}
	\end{minipage}}\\
	\hline
\end{tabular}
\caption{Constraint systems for strongly connected VASS MDPs.}
\label{fig-systems}
\end{figure}

%
%
%
%
%
%


Note that solutions of both~(I) and~(II) are closed under addition. Therefore, both~(I) and~(II) have solutions maximizing the specified objectives, computable in polynomial time.
%
%
For clarity, let us first discuss an intuitive interpretation of these solutions, starting with simplified variants obtained for non-probabilistic VASS in \cite{Zuleger:VASS-polynomial}. 

In the non-probabilistic case, a solution of~(I) can be interpreted as a \emph{weighted multicycle}, i.e., as a collection of cycles $M_1,\dots, M_k$ together with weights $a_1,\ldots ,a_k$ such that the total effect of the multicycle, defined by $\sum_{i=1}^k a_i \cdot \mathit{effect}(M_i)$, is non-negative for every counter.  Here, $\mathit{effect}(M_i)$ is the effect of $M_i$ on the counters. The objective of~(I) ensures that the multicycle includes as many transitions as possible, and the total effect of the multicycle is positive on as many counters as possible. For VASS MDPs, the $M_1,\dots, M_k$ should not be interpreted as cycles but as Markovian strategies for some ECs, and $\mathit{effect}(M_i)$ corresponds to the vector of expected counter changes per transition in~$M_i$. The objective of~(I) then maximizes the number of transitions used in the strategies $M_1,\dots, M_k$, and the number of counters where the expected effect of the ``multicycle'' is positive.


A solution of~(II) for non-probabilistic VASS can be interpreted as a ranking function for configurations defined by $\mathit{rank}(p\bv)=\bz(p)+\sum_{i=1}^{d} \by(i)\bv(i)$, such that the value of $\mathit{rank}$ cannot increase when moving from a configuration $p\bv$ to a configuration $q\bu$ using a transition $t=(p,\bu-\bv,q)$. The objective of~(II) ensures that as many transitions as possible decrease the value of $\mathit{rank}$, and $\mathit{rank}$  depends on as many counters as possible. For VASS MDPs, this interpretation changes only for the outgoing transitions $t=(p,\bu,q)$ of probabilistic states. Instead of considering the change of $\mathit{rank}$ caused by such $t$,  we now consider the expected change of $\mathit{rank}$ caused by executing a step from $p$. The objective ensures that $\mathit{rank}$ depends on as many counters as possible, the value of $\mathit{rank}$ is decreased by as many outgoing transitions of non-deterministic states as possible, and the expected change of $\mathit{rank}$ caused by performing an step is negative in as many probabilistic states as possible.

The key tool for our analysis is the following dichotomy (a proof is in Appendix~\ref{app-lem:optimization-duality-proof}).

\begin{lemma}\label{lemma:dichotomy}
	Let $\bx$ be a (maximal) solution to the constraint system~(I) and $\by,\bz$ be a (maximal) solution to the constraint system~(II). Then, for each counter $c$ we have that either $\by(c)>0$ or $\sum_{t \in T} \bx(t)\bu_t(c)>0$, and for each transition $t = (p,\bu,q)\in T$ we have that
	\begin{itemize}
		\item if $p\in Q_n$ then either $\bz(q)-\bz(p)+\sum_{i=1}^{d} \bu(i)\by(i)<0$ or $\bx(t)>0$;
		\item if $p\in Q_p$ then either \[\sum_{t'=(p,\bu',q') \in \tout(p)}P(t')\big(\bz(q')-\bz(p)+\sum_{i=1}^{d}  \bu'(i)\by(i)\big)< 0\] or $\bx(t)>0$.		
	\end{itemize}
	
	
\end{lemma}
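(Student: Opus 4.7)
The plan is to derive both parts of the dichotomy from linear programming duality, viewing systems (I) and (II) as a matched primal–dual pair. Both systems are homogeneous and closed under non-negative rational combinations, so any rational solution scales to an integer one, and maximal solutions (in the combinatorial sense of realising as many strict inequalities as possible) are obtained by summing a representative witness for each slack that can be turned strict.

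First I would set up the LP pair precisely. Treat the variables $\bx(t)$ in (I) as primal, non-negative variables; the dual variables are $\by(i)$ attached to the counter inequalities $\sum_t \bx(t)\bu_t(i) \geq 0$, variables $\bz(p)$ attached to the flow-balance equations at state $p$, and auxiliary multipliers $\lambda^p_t$ attached to the probabilistic proportionality equations $\bx(t) = P(t)\sum_{t'\in \tout(p)} \bx(t')$ for $p \in Q_p$. Writing down the transition-wise dual constraint gives exactly $\bz(q) - \bz(p) + \sum_i \bu_t(i)\by(i) \leq 0$ for $t = (p,\bu,q)$ with $p \in Q_n$, whereas for $p \in Q_p$ the constraints for $t \in \tout(p)$ all contain the multipliers $\lambda^p_t$. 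The key bookkeeping step is to show that because the proportionality equations at $p$ are linearly dependent (they sum to the flow balance already captured by $\bz(p)$), the $\lambda^p_t$ can be eliminated by taking the $P(t)$-weighted sum of the dual transition inequalities over $t \in \tout(p)$, which collapses them into the single weighted inequality in (II).

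With this dual pair established, the counter part of the dichotomy is obtained by Farkas' lemma applied to the LP ``does (I) admit a solution with $\sum_t \bx(t)\bu_t(c) > 0$?''. If feasible, we add such a witness to a maximal $\bx$ and the first alternative holds. If infeasible, Farkas yields non-negative multipliers that, after the elimination step above, satisfy all constraints of (II) together with $\by(c) > 0$; adding this witness to a maximal $(\by,\bz)$ gives the second alternative. The transition part is handled symmetrically: for each $t \in T$, ask whether (I) admits a solution with $\bx(t) > 0$. If yes, maximality propagates $\bx(t)>0$ into the chosen solution; if no, Farkas delivers $\by,\bz$ satisfying (II) whose slack at $t$ (or at the source $p \in Q_p$, depending on the case) is strictly negative, which after absorption into a maximal $(\by,\bz)$ gives the desired strict inequality.

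The main obstacle I anticipate is the probabilistic dualisation: the proportionality block at each $p \in Q_p$ contributes one dual constraint per outgoing transition, yet the lemma asserts only a single aggregated inequality in (II). Verifying that the $P(t)$-weighted combination of the individual dual constraints is both necessary (i.e.\ always derivable from the others after eliminating $\lambda^p_t$) and sufficient (i.e.\ implies enough to apply Farkas back) is the delicate part; once this reduction is nailed down, the two Farkas applications produce both halves of the dichotomy, and polynomial-time computability of maximal solutions is standard LP.
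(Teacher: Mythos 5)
Your proposal is correct in spirit but takes a genuinely different route than the paper. The paper does not dualise system~(I) directly; instead it first \emph{pre-aggregates}: for each $p\in Q_p$ the per-transition variables $\bx(t)$, $t\in\tout(p)$, are replaced by a single variable $\bx'(p)$ (with effect vector $\effect_p = \sum_{t\in\tout(p)}P(t)\bu_t$), which eliminates the proportionality constraints entirely and reduces (I)/(II) to equivalent systems~(I')/(II') of exactly the form treated by Zuleger for non-probabilistic VASS. The dichotomy is then imported wholesale from that earlier Farkas argument, using only that the aggregated flow matrix has columns summing to zero. You instead keep the raw system, dualise it with extra multipliers $\lambda^p_t$ attached to the proportionality equations $\bx(t)=P(t)\sum_{t'\in\tout(p)}\bx(t')$, and eliminate them afterwards by taking the $P(t)$-weighted combination of the dual constraints at $p$. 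That elimination does work: writing $C^p = \sum_{t\in\tout(p)}P(t)\lambda^p_t$, each dual constraint at $s\in\tout(p)$ has the form $\bz(q_s)-\bz(p)+\sum_i\bu_s(i)\by(i)+\lambda^p_s - C^p \le 0$, and the $P$-weighted sum cancels $\lambda^p_s$ against $C^p$ to yield exactly the aggregated inequality of~(II); conversely, choosing $\lambda^p_t = -(\bz(q_t)-\bz(p)+\sum_i\bu_t(i)\by(i))$ recovers the per-transition constraints from~(II). One small inaccuracy: the proportionality equations at $p$ are indeed linearly dependent, but they sum to the trivial identity $0=0$, not to the flow-balance equation at $p$ as you assert --- the flow balance also involves $\tin(p)$. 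This slip does not affect the argument, since what you actually need (and use) is only that $\sum_{t\in\tout(p)}P(t)=1$. Both routes work; the paper's has the advantage of reusing the prior non-probabilistic result as a black box, whereas yours carries out the dual bookkeeping by hand and so would need the two Farkas applications (per counter and per element of $T_n\cup Q_p$) and the non-negativity shift of $\bz$ to be spelled out from scratch. Note also that for $t\in\tout(p)$ with $p\in Q_p$ the proportionality constraints force $\bx(t)>0$ iff $\bx(t')>0$ for all $t'\in\tout(p)$, so asking for $\bx(t)\geq1$ for a fixed $t$ is equivalent to the aggregated query the paper makes with index $p$.
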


For the rest of this section, we fix a maximal solution $\bx$ of~(I) and a maximal solution $\by,\bz$ of~(II), such that the smallest non-zero element of $\by,\bz$ is at least~$1$. We define a ranking function $\mathit{rank}: \conf(\A)\rightarrow \Nset$ as $\mathit{rank}(s\bv)=\bz(s)+\sum_{i=1}^{d} \bv(i)\by(i)$. Now we prove the following theorem:

\begin{theorem}\label{lemma-C-limPlinear-if-no-+}
	For each counter $c$, if $\by(c)>0$ then $n$ is a tight estimate of $\calC[c]$ (for the only type of~$\A$).  Otherwise, i.e., when $\by(c)=0$, the function $n^2$ is a lower estimate of $\calC[c]$.
\end{theorem}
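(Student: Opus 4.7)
The proof splits into two cases according to the sign of $\by(c)$.

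\emph{Case 1: $\by(c) > 0$.} The lower estimate is immediate because $\bv_0 = \bn$ forces $\calC[c](\pi) \geq n$ for every run, so $n$ is a (trivial) lower estimate. For the upper estimate, introduce the rank $\mathit{rank}(s\bv) = \bz(s) + \sum_{i=1}^d \by(i)\bv(i)$, which is non-negative for every non-terminal configuration since $\bz,\by \geq 0$. The two families of inequalities of system~(II) say exactly that under any strategy the conditional expected change in rank is non-positive: at non-deterministic states along every chosen transition, and at probabilistic states in expectation. Hence $\mathit{rank}(p_i\bv_i)$ stopped at $\Term$ is a non-negative supermartingale under every strategy. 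Since $\by(c) \geq 1$ and the remaining summands are non-negative, $\bv_i(c) \leq \mathit{rank}(p_i\bv_i)$ for all $i < \Term$. Doob's maximal inequality for non-negative supermartingales yields
\[
\prob^{\sigma}_{p\bn}\!\left[\calC[c] \geq n^{1+\varepsilon}\right] \;\leq\; \frac{\mathit{rank}(p\bn)}{n^{1+\varepsilon}} \;=\; O\!\left(n^{-\varepsilon}\right) \;\to\; 0,
\]
so $n$ is also an upper estimate.

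\emph{Case 2: $\by(c) = 0$.} By Lemma~\ref{lemma:dichotomy}, $\sum_t \bx(t)\bu_t(c) > 0$. Let $L = \{t : \bx(t) > 0\}$ and let $S$ consist of the endpoints of transitions in $L$. Using the balance condition and the probabilistic constraint of system~(I), one checks that $(S,L)$ is an end component: at probabilistic $p \in S$ all outgoing transitions belong to $L$, each transition in $L$ stays inside $S$, and strong connectivity of $S$ follows from a flow decomposition of $\bx$, restricting if necessary to a single SCC of the support (the restriction remains a solution with objective at least as large, so maximality of $\bx$ precludes multiple components). Define the Markovian strategy $\sigma^*$ on $S$ by $\sigma^*(p)(t) = \bx(t)/\sum_{t' \in \tout(p)}\bx(t')$ for $p \in S \cap Q_n$. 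The induced chain on $S$ is irreducible with stationary distribution proportional to $\bx$, and its per-step stationary expected change in counter $i$ equals $\alpha_i = (\sum_t \bx(t)\bu_t(i))/\sum_t \bx(t)$, which is $\geq 0$ for every $i$ and equals some $\alpha > 0$ for $i = c$.

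The strategy $\sigma_n$ first drives the computation from the starting state into $S$ (possible in $o(n)$ steps with probability $1-o(1)$ by strong connectivity of $\A$, changing each counter by only $o(n)$), and then follows $\sigma^*$ for $T = \lceil n^{2-\varepsilon}/\alpha\rceil$ further steps. Using the Poisson equation for the ergodic chain, for each counter $i$ there is a bounded function $h_i \colon S \to \R$ such that
\[
\bv_{n}(i) - \bv_{0}(i) - \alpha_i \cdot n \;=\; h_i(p_0) - h_i(p_n) + Z^{(i)}_n,
\]
where $Z^{(i)}_n$ is a martingale with bounded increments. Azuma's inequality combined with Doob's maximal inequality gives $\max_{n \leq T}|Z^{(i)}_n| = O(\sqrt{T}\log T)$ with probability $1-o(1)$; a union bound over the $d$ counters is free. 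Consequently, (i) counter $c$ satisfies $\bv_T(c) \geq \Omega(\alpha T) = \Omega(n^{2-\varepsilon})$ w.h.p., so $\calC[c] \geq (n^2)^{1-\varepsilon'}$ for any $\varepsilon' > \varepsilon/2$; (ii) for every other counter $c'$, $\min_{n \leq T}\bv_n(c') \geq n - O(\sqrt{T}\log T) > 0$ for large $n$, so termination does not occur before time $T$. Since $\varepsilon>0$ was arbitrary, $n^2$ is a lower estimate for $\calC[c]$.

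\emph{Main obstacle.} Case 1 is routine once the rank function is set up. The substantive work is in Case 2: verifying that the support of $\bx$ is really an ergodic end component, controlling the initial transient without destroying the $\Omega(n)$ starting values of the counters, and running the Azuma/Doob concentration simultaneously for all counters on the scale $T = \Theta(n^{2-\varepsilon})$ so that the probability of premature termination is $o(1)$.
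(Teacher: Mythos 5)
Your Case 1 is essentially the paper's argument: the paper's Lemma~\ref{lemma-linear-upper-bound} constructs the same rank supermartingale and applies optional stopping (you use Doob's maximal inequality, a cosmetic difference), and the trivial $\calC[c]\geq n$ lower bound from the initial counter vector closes the case.

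Case 2 has a genuine gap. You write that one can ``restrict if necessary to a single SCC of the support (the restriction remains a solution with objective at least as large, so maximality of $\bx$ precludes multiple components).'' This is false on two counts. First, restricting $\bx$ to one SCC of its support strictly decreases the second part of the objective of~(I) (fewer transitions $t$ with $\bx(t)>0$), so maximality does not force a single SCC. Second, and more seriously, the restriction need not be a solution of~(I) at all: the non-negativity constraint $\sum_t\bx(t)\bu_t\geq\vec{0}$ may hold only for the total flow and fail for each SCC-restriction individually. A concrete example: two nondeterministic states $p,q$ with self-loops $t_1=(p,(1,-1,1),p)$, $t_3=(q,(-1,1,1),q)$, and connecting transitions $t_2=(p,(0,-1,0),q)$, $t_4=(q,(-1,0,0),p)$. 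Kirchhoff forces $\bx(t_2)=\bx(t_4)$, and the constraints of~(I) then force $\bx(t_2)=\bx(t_4)=0$ and $\bx(t_1)=\bx(t_3)$; so the unique (up to scaling) maximal solution has support $\{t_1,t_3\}$, two disjoint SCCs, neither of which has all-nonnegative drift. Consequently the stationary-drift vector $\alpha_i$ you compute from the full $\bx$ is \emph{not} the drift of any single irreducible chain, and the Poisson-equation/Azuma argument has no chain to run on. The paper's proof of Lemma~\ref{lemma-quadratic-lower-bound} handles exactly this: it decomposes $\A_\bx$ into several MECs $B_1,\dots,B_k$, builds one Markovian strategy per MEC with per-step drifts $\bj_1,\dots,\bj_k$, and invokes the interleaving result of~\cite{BCKNV:probVASS-linear-termination} which only requires that a positive combination $\sum_i a_i\bj_i\geq\vec{0}$ exists. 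That interleaving of several chains is the missing idea in your proposal; without it, the concentration step cannot be justified.
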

Note that Theorem~\ref{lemma-C-limPlinear-if-no-+} implies Theorem~\ref{thm-dichotomy} for strongly connected VASS MDPs. A proof is obtained by combining the following lemmata.  

\begin{lemma}\label{lemma-linear-upper-bound}
	For every counter $c$ such that $\by(c)>0$, every $\varepsilon >0$, every $p\in Q$, and every $\sigma \in \Sigma$,  there exists $n_0$ such that for all $n\geq n_0$ we have that $\prob^{\sigma}_{p \bn}(\calC[c] \geq n^{1+\varepsilon} )\leq kn^{-\varepsilon}$ where~$k$ is a constant depending only on~$\A$. 
\end{lemma}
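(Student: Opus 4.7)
\medskip

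\noindent\textbf{Proof plan.} The strategy is to show that $\mathit{rank}(p_i\bv_i)$ is a nonnegative supermartingale along every run of every strategy, apply Doob's maximal inequality to bound the probability that the rank ever exceeds $n^{1+\varepsilon}$, and then translate this into a bound on the probability that the $c$-counter ever reaches $n^{1+\varepsilon}$.

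First I would fix an arbitrary strategy $\sigma$ and initial state $p$, consider the sequence of configurations $p_0\bv_0,p_1\bv_1,\ldots$ along a computation, and let $\tau = \Term$ be the first index at which a terminal configuration is reached (possibly $\tau = \infty$). Set $R_i = \mathit{rank}(p_{i\wedge\tau}\bv_{i\wedge\tau})$ for $i < \tau$. Because the solution $\by,\bz$ of system~(II) is nonnegative and $\bv_i \geq \vec{0}$ for $i < \tau$, we have $R_i \geq 0$. Moreover, the two families of constraints in~(II) yield, respectively, that for every transition out of a nondeterministic state the rank does not increase, and that for every probabilistic state the conditional expectation of the rank after one step is at most the current rank. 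Hence $\Exp^{\sigma}_{p\bn}[R_{i+1} \mid \calF_i] \le R_i$ on $\{i < \tau\}$, so $(R_i)_{i < \tau}$ is a nonnegative supermartingale.

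Second, the initial value satisfies
\[
R_0 \;=\; \bz(p) + n\sum_{i=1}^{d}\by(i) \;\le\; K\cdot n
\]
for all $n \ge 1$, where $K := \max_{q\in Q}\bz(q) + \sum_{i=1}^{d}\by(i)$ is a constant depending only on~$\A$. Doob's maximal inequality for nonnegative supermartingales then gives, for every $\lambda > 0$,
\[
\prob^{\sigma}_{p\bn}\Bigl[\,\sup_{i<\tau} R_i \;\ge\; \lambda\,\Bigr] \;\le\; \frac{R_0}{\lambda} \;\le\; \frac{Kn}{\lambda}.
\]
Setting $\lambda = n^{1+\varepsilon}$ yields an upper bound of $K\cdot n^{-\varepsilon}$.

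Finally, I would connect this to $\calC[c]$. By our normalization, the nonzero entries of $\by$ are at least~$1$, so $\by(c)\ge 1$. For every $i<\tau$, since all counters are nonnegative and $\by,\bz\ge\vec{0}$,
\[
R_i \;\ge\; \by(c)\cdot \bv_i(c) \;\ge\; \bv_i(c).
\]
Therefore the event $\{\calC[c] \ge n^{1+\varepsilon}\}$ is contained in the event $\{\sup_{i<\tau} R_i \ge n^{1+\varepsilon}\}$, and the previous display gives $\prob^{\sigma}_{p\bn}[\calC[c]\ge n^{1+\varepsilon}] \le K n^{-\varepsilon}$, which is the claimed inequality with $k := K$, valid already from $n_0 = 1$.

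The only delicate point is the supermartingale property at the boundary: once a transition crosses into a terminal configuration, a counter becomes negative and the naive rank can drop below zero, which would break both the nonnegativity and the one-step inequality. I handle this by stopping at $\tau$, so the supermartingale/Doob argument is applied strictly before termination, which is exactly the range over which $\calC[c]$ is defined. Everything else is routine bookkeeping.
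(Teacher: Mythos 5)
Your proof uses the same supermartingale as the paper (the rank function derived from a maximal solution of system~(II)) but replaces the paper's optional-stopping-plus-elementary-manipulation step with Doob's maximal inequality for nonnegative supermartingales; the closing observations ($R_0 \le Kn$ and $R_i \ge \by(c)\,\bv_i(c) \ge \bv_i(c)$ for $i<\tau$ by the normalization of $\by$) are correct and make the finish cleaner than the paper's chain of inequalities. However, there is a real gap at exactly the spot you yourself flag. You say you ``handle this by stopping at $\tau$, so the supermartingale/Doob argument is applied strictly before termination,'' but these two ideas are incompatible: the stopped process $R_{\min(i,\tau)}$ does include the value $R_\tau$, which can be negative (the transition that exits the nonnegative orthant can push the rank below zero), so the stopped process is \emph{not} nonnegative; and a process that truly stops ``strictly before termination'' would need the random time $\tau-1$, which is not a stopping time with respect to the configuration filtration (at time $i$ you cannot yet tell whether the next step terminates). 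Consequently ``$(R_i)_{i<\tau}$ is a nonnegative supermartingale'' is not a well-defined statement, and the bound $\prob^{\sigma}_{p\bn}[\sup_{i<\tau}R_i\ge\lambda]\le R_0/\lambda$ does not follow from Doob's inequality as written.

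The gap is easy to close and worth making explicit. The one-step change of $\mathit{rank}$ is bounded in absolute value by a constant depending only on $\A$, so $R_{\min(i,\tau)}\ge -L$ for some constant $L$; hence $\tilde R_i := R_{\min(i,\tau)}+L$ is a genuine nonnegative supermartingale, and Doob gives $\prob^{\sigma}_{p\bn}[\sup_i R_{\min(i,\tau)}\ge n^{1+\varepsilon}]\le (R_0+L)/(n^{1+\varepsilon}+L)\le (K+L)\,n^{-\varepsilon}$ for $n\ge 1$, which still proves the lemma with a constant depending only on $\A$. Two smaller remarks: the supermartingale inequality itself is a property of the \emph{transitions} (it comes from the system-(II) constraints) and therefore does not break at termination, contrary to what you suggest --- only nonnegativity is at stake; and your route does have a genuine advantage over the paper's, because the maximal inequality needs no side conditions on $\tau$, whereas the optional-stopping argument in the paper silently requires $\tau$ to be a.s.\ finite (or a truncation/limit argument) that is never spelled out.
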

A proof is in Appendix~\ref{app-lem:linear-upper-bound-proof}.
 
For $\mathit{Targets} \subseteq \conf(\A)$ and $m \in \Nset$, we use $\Reach^{\leq m}(\mathit{Targets})$ to denote the set of all computations $\pi = p_0\bv_0, p_1\bv_1,\ldots$ such that $p_i\bv_i \in \mathit{Targets}$ for some $i \leq m$.

\begin{lemma}
	\label{lemma-quadratic-lower-bound}
	For each counter $c$ such that $\by(c)=0$ we have that $\calC_{\eexp}[c] \in \Omega(n^2)$ and $n^2$ is a lower estimate of $\calC[c]$. Furthermore, for every $\varepsilon>0$ there exist a sequence of strategies $\sigma_1,\sigma_2,\dots $, a constant $k$, and  $p\in Q$ such that for every $0<\varepsilon'<\varepsilon$, we have that 
	\[
		\lim_{n \to \infty} \prob_{p\bn}^{\sigma_n}(\Reach^{\leq kn^{2-\varepsilon'}}(\mathit{Targets}_n)) = 1
	\]
	where $\mathit{Targets}_n = \{q\bv \in \conf(\A) \mid \bv(c)\geq n^{2-\varepsilon} \mbox{ for every counter $c$ such that $\by(c)=0$}\}$.
%
\end{lemma}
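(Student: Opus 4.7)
The plan is to exploit the maximal solution $\bx$ of the constraint system~(I) to construct a Markovian strategy $\sigma^*$ whose induced random walk has non-negative drift on every counter and strictly positive drift on every ``unbounded'' counter in $U := \{c' : \by(c')=0\}$. Concretely, for every $p \in Q_n$ lying in $\mathrm{supp}(\bx)$, let $\sigma^*$ pick transition $t \in \tout(p)$ with probability $\bx(t)/\sum_{t' \in \tout(p)} \bx(t')$; on states outside $\mathrm{supp}(\bx)$, fix an arbitrary deterministic routing that enters $\mathrm{supp}(\bx)$ within at most $|Q|$ steps (feasible because $\A$ is strongly connected). By the flow-conservation and probabilistic constraints of~(I), the support is closed under this dynamics, and the induced Markov chain is irreducible with stationary distribution proportional to $\pi(p) := \sum_{t \in \tout(p)} \bx(t)$. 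A direct computation then shows that the stationary per-step drift of counter $c'$ equals $\mu_{c'} := \bigl(\sum_{t} \bx(t)\bu_t(c')\bigr) / \sum_t \bx(t)$, which is $\geq 0$ by~(I) and, by Lemma~\ref{lemma:dichotomy}, strictly positive on every $c' \in U$.

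Fix $\eps > 0$ and choose the constant $k$ so that $\mu_{c'} k \geq 2$ for every $c' \in U$. For any $\eps' \in (0, \eps)$, set $T_{n,\eps'} := k n^{2-\eps'}$ and start from $p^*\bn$ for an arbitrary $p^* \in \mathrm{supp}(\bx)$. Each counter trajectory decomposes as $X^{c'}_t = n + \mu_{c'} t + M^{c'}_t + E^{c'}_t$, where $M^{c'}$ is a martingale with bounded increments and $E^{c'}$ is an $O(1)$ correction reflecting the deviation of the state distribution from stationarity. A Markov-chain concentration bound (obtained either via Lezaud-type Hoeffding inequalities or via i.i.d.\ excursion decomposition at a recurrent state), combined with Doob's maximal inequality applied to $M^{c'}$, yields $\sup_{t \leq T_{n,\eps'}} |M^{c'}_t| = o(n)$ with probability tending to $1$. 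Two consequences follow: every counter in $B := \{c' : \by(c') > 0\}$ stays above $n/2 > 0$ on the whole interval $[0, T_{n,\eps'}]$ (since $\mu_{c'} T_{n,\eps'} \geq 0$ and the deviation is $o(n)$), while for every $c' \in U$ the linear term dominates, $\mu_{c'} T_{n,\eps'} \geq 2 n^{2-\eps'} \geq 2 n^{2-\eps}$, so $X^{c'}_{T_{n,\eps'}} \geq n^{2-\eps}$. A union bound over the finitely many counters proves the probabilistic statement of the lemma.

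For the expectation bound $\calC_{\eexp}[c] \in \Omega(n^2)$, rerun $\sigma^*$ with the larger window $T_n := c_0 n^2$ for a small constant $c_0 > 0$. By the Donsker invariance principle, each rescaled trajectory $X^{c'}_{\lfloor c_0 n^2 s\rfloor}/n$ converges to a Brownian motion with drift $\mu_{c'} \geq 0$ starting at~$1$, so the probability that every $B$-counter stays strictly positive on $[0, c_0 n^2]$ is bounded below by a positive constant $\delta$ independent of $n$. Conditioned on this event, $X^{c}_{T_n}$ concentrates around $n + \mu_c c_0 n^2$, giving $\calC[c] \geq \mu_c c_0 n^2 / 2$ and hence $\Exp^{\sigma^*}_{p^* \bn}[\calC[c]] \geq \delta \mu_c c_0 n^2 / 2 = \Omega(n^2)$. \textbf{The main technical obstacle} is the concentration analysis for a Markov chain started from a fixed state rather than from its stationary distribution: this requires either a Lezaud-type Markov-chain Hoeffding inequality or a careful i.i.d.\ excursion decomposition, and, for the expectation bound, a Donsker-scaling argument to guarantee that the bounded counters survive a quadratic-length window with positive probability uniformly in $n$.
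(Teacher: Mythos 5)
Your key structural assumption is wrong: the support of a maximal solution $\bx$ to constraint system~(I) does \emph{not} in general induce a single irreducible Markov chain. Solutions of~(I) are closed under addition, and the flow-conservation constraints only force the support to decompose into a disjoint union of several end components $B_1,\dots,B_m$ (the paper records exactly this: in $\A_\bx$, any two states are either in the same MEC of $\A_\bx$ or mutually unreachable there). Once your Markovian $\sigma^*$ enters one $B_i$ it is trapped, so the drift it realizes on counter $c$ is the drift \emph{of that $B_i$ alone}, which may well be negative. The quantity you write as $\mu_{c} = \bigl(\sum_t \bx(t)\bu_t(c)\bigr)/\sum_t\bx(t)$ is a weighted average of the per-MEC drifts $\bj_1,\dots,\bj_m$, and Lemma~\ref{lemma:dichotomy} only guarantees positivity of the \emph{combined} vector $\sum_i a_i\bj_i$ on the coordinate $c$; no single $B_i$ need have positive drift there, and no single Markovian strategy can simulate the convex combination. (Concretely: two disjoint self-loops with effects $(+1,-1)$ and $(-1,+1)$ give such an $\bx$; neither loop alone pumps both counters.) The downstream analysis — stationary distribution $\pi$, Lezaud bounds, Doob, Donsker — is all conditioned on this false premise.

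The paper handles this by building a separate Markovian strategy and invariant $m_i$ on each $B_i$, computing its per-step drift $\bj_i=\sum_t \overline{\bx}_i(t)\bu_t$, observing that $\sum_i a_i\bj_i \ge \vec 0$ with strict inequality on $c$ for $\by(c)=0$, and then invoking the interleaving result of \cite{BCKNV:probVASS-linear-termination}: given finitely many chains with drift vectors $\bj_i$ whose positive combination is nonnegative, one can schedule an $n$-dependent alternation between the chains (time spent in each roughly proportional to $a_i$) so that, with probability tending to $1$, no counter falls below $0$ during $\Theta(n^{2-\varepsilon'})$ steps while the strictly positive coordinates are pumped to $\Omega(n^{2-\varepsilon})$. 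This scheduling — and controlling the counters during the transitions between $B_i$'s — is precisely the missing piece in your argument; it is also why the lemma's statement uses a \emph{sequence} of strategies $\sigma_1,\sigma_2,\dots$ rather than one fixed $\sigma^*$. To repair your route you would either need to prove such an interleaving lemma from scratch, or restrict attention to the degenerate case where $\mathrm{supp}(\bx)$ is a single MEC, which does not suffice for the general statement.
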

A proof is in Appendix~\ref{app-lem:-quadratic-lower-bound-proof}.


It remains to prove Theorem~\ref{thm-dichotomy} for VASS MDPs with DAG-like MEC decomposition.
Here, we proceed by analyzing the individual MECs one by one, transferring the output of the previous MEC to the next one. We start in a top MEC with all counters initialized to~$n$. Here we can directly apply Theorem~\ref{lemma-C-limPlinear-if-no-+} to determine which of the $\calC[c]$ have a tight estimate~$n$ and a lower estimate $n^2$, respectively. It follows from Lemma~\ref{lemma-quadratic-lower-bound} that all counters $c$ such that $n^2$ is a lover estimate of $\calC[c]$ can be simultaneously pumped to $n^{2-\varepsilon}$ with very high probability. However, this computation may decrease the counters $c$ such that $n$ is a tight estimate for $\calC[c]$. To ensure that the value of these counters is still $\Omega(n)$ when entering the next MEC, we first divide the initial counter vector $\bn$ into two halves, each of size $\lfloor\frac{\bn}{2}\rfloor$, and then pump the counters $c$ such that $n^2$ is a lower estimate for $\calC[c]$ to the value $(\lfloor\frac{n}{2}\rfloor)^{2-\varepsilon}$. We show that the length of this computation is at most quadratic. The value of the other counters stays at least $\lfloor\frac{n}{2}\rfloor$. When analyzing the next MEC, we treat the counters previously pumped to quadratic values as ``infinite'' because they are sufficiently large so that they cannot prevent pumping additional counters to asymptotically quadratic values. Technically, this is implemented by modifying every counter update vector $\bu$ so that $\bu[c] = 0$ for every ``quadratic'' counter $c$. A precise formulation of these observations and the corresponding proofs are given in Appendix~\ref{app-DAG-like}. 

We conjecture that the dichotomy of Theorem~\ref{thm-dichotomy} holds for \emph{all} VASS MDPs, but we do not have a complete proof. If the MEC decomposition is not DAG-like, a careful analysis of computations revisiting the same MECs is required; such repeated visits may but do not have to enable additional asymptotic growth of~$\calC[c]$.

\section{One-Dimensional VASS MDPs}
\label{sec-one-dim}

In this section, we give a full and effective classification of tight estimates of $\calL$, $\calC[c]$, and $\calT[t]$ for one-dimensional VASS MDPs. More precisely, we prove the following theorem:

\begin{theorem}
	\label{thm-onedim}
	Let $\A$ be a one-dimensional VASS MDP. We have the following:
	\begin{itemize}
		\item Let $c$ be the only counter of $\A$. Then one of the following possibilities holds:
		\begin{itemize}
			\item There exists a type $\beta=M$ such that $\calC[c]$ is unbounded for $\beta$.
			\item $n$ is a tight estimate of $\calC[c]$ for every type.
		\end{itemize}
		\item Let $t$ be a transition of $\A$. Then one of the following possibilities holds:
		\begin{itemize}
			\item There exists a type $\beta=M$ such that $\calT[t]$ is unbounded for $\beta$.
			\item There exists a type $\beta$ such that $\weigth(\beta)>0$ and $\calT[t]$ is unbounded for $\beta$.
			\item There exists a type $\beta=M$ such that $n^2$ is a tight estimate of $\calT[t]$ for $\beta$.
			\item The transition $t$ occurs in some MEC $M$, $n$ is a tight estimate of $\calT[t]$ for every type $\beta$ containing the MEC~$M$, and $0$ is a tight estimate of $\calT[t]$ for every type $\beta$ not containing the MEC~$M$.
			\item The transition $t$ does not occur in any MEC, and for every type $\beta$ of length $k$ we have that $k$ is an upper estimate of $\calT[t]$ for $\beta$.
		\end{itemize}
		\item One of the following possibilities holds:
		\begin{itemize}
			\item There exists a type $\beta=M$ such that $\calL$ is unbounded for $\beta$.
			\item There exists a type $\beta=M$ such that $n^2$ is a tight estimate of $\calL$ for $\beta$.
			\item $n$ is a tight estimate of $\calL$ for every type. 
		\end{itemize}
		
	\end{itemize}
	It is decidable in polynomial time which of the above cases hold.
\end{theorem}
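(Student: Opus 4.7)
The plan is to reduce the classification to the analysis of each MEC of \(\A\) in isolation and then compose the results according to the sequence of MECs traversed by each type. Because there is only one counter, every MEC \(M\) can be tested by a polynomial-size LP (a dimension-one restriction of the constraint systems~(I) and~(II) of Section~\ref{section-linquad}) to decide which of three mutually exclusive cases holds: \textbf{(a)} some Markovian strategy in \(M\) achieves strictly positive expected counter drift, or keeps the counter bounded away from \(0\) indefinitely; \textbf{(b)} the maximum achievable expected drift is exactly \(0\) and every strategy that remains in \(M\) forever imposes a strictly positive lower bound on the per-step conditional variance of the counter; or \textbf{(c)} every strategy in \(M\) has strictly negative expected drift. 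Each case is detectable in polynomial time.

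Case~(a) gives the ``unbounded'' conclusions of the theorem for the type \(\beta=M\): the witnessing strategy either sends the counter to infinity or keeps it positive forever, so \(\prob[\Term=\infty]>0\) and no upper estimate of \(\calL\), \(\calC[c]\), or \(\calT[t]\) with \(t\in M\) can exist. Case~(c) yields linear tight estimates: the LP witness of negative drift is a Foster--Lyapunov supermartingale certifying that the counter stays in \(\Theta(n)\) and that the MEC is exited (or the computation terminates) in \(\Theta(n)\) steps with probability tending to~\(1\). Case~(b) is the quadratic regime: the counter behaves as a bounded-increment, zero-drift random walk on~\(\N\) absorbed at~\(0\), starting at a value of order~\(n\). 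Applying optional stopping to the martingale \(X_t^2-\sigma^2 t\) gives \(\E[T]=\Theta(n^2)\); a maximal inequality upgrades this to \(\prob[T\ge n^{2+\varepsilon}]\to 0\), while Azuma--Hoeffding applied to \(X_t\) bounds the running maximum by \(O(\sqrt{T}\log T)\) and forces \(\prob[T\le n^{2-\varepsilon}]\to 0\), since the walk must descend by \(\Theta(n)\) before absorption. This yields \(n^2\) as tight estimate for \(\calL\) along \(\beta=M\), \(n\) as tight estimate for \(\calC[c]\), and (via the ergodic theorem for the chain restricted to \(M\)) \(n^2\) as tight estimate of \(\calT[t]\) for \(t\in M\) used with positive stationary frequency.

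For a general type \(\beta=M_1,\ldots,M_k\) I propagate this analysis along the sequence. Entering \(M_{i+1}\) inductively with a counter of order \(n\) (inherited when every prior MEC is of Case~(c)), I reapply the single-MEC classification at \(M_{i+1}\). A single Case~(a) MEC anywhere along a \(\beta\) with \(\weigth(\beta)>0\) supplies the ``unbounded for some \(\beta\) with \(\weigth(\beta)>0\)'' and analogous \(\calL\) sub-cases, while a Case~(b) MEC supplies the \(n^2\) tight estimate sub-cases. For \(\calT[t]\) with \(t\) outside every MEC, the source state of \(t\) is transient and is visited at most once per boundary crossing in \(\mecs\), yielding the length-\(\beta\) upper estimate of the final sub-case. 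Transitions inside a Case~(c) MEC inherit the linear tight estimate of \(\calC[c]\) by ergodicity. Polynomial-time decidability follows from the polynomial MEC decomposition combined with the polynomial-size LPs invoked at each step.

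The main obstacle lies in Case~(b): the matching \(n^{2\pm\varepsilon}\) tails must hold \emph{uniformly across strategies}, not only for a fixed Markovian one. For the upper tail, different strategies induce different martingales, and for the lower tail, an adversary might delay termination by steering toward states where the counter changes least. Both difficulties are resolved by extracting from the Case~(b) LP a strategy-independent positive lower bound on the per-step conditional variance of the counter, and then applying the maximal inequality and Azuma--Hoeffding with this variance floor. A secondary technicality is the propagation through the MEC decomposition: the events ``the counter is still \(\Theta(n)\) at the entry of \(M_{i+1}\)'' and ``the type equals \(\beta\)'' must be separated carefully to keep the conditional probabilities \(\prob[\cdot \mid \mecs=\beta]\) well-behaved.
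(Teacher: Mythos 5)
Your proposal has the right overall flavor — classify per-MEC behavior by drift and variance, then propagate along the type — but it has three genuine gaps that the paper's argument specifically exists to address.

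First, your Case~(a) conflates two regimes that the paper keeps separate and must keep separate. The paper classifies BSCCs of $\A_\sigma$ for MD strategies $\sigma$ into \emph{increasing}, \emph{decreasing}, \emph{bounded-zero}, and \emph{unbounded-zero}. Your Case~(a) includes both ``strictly positive drift'' and ``keeps the counter bounded away from $0$ indefinitely'', and you then conclude that no upper estimate of $\calC[c]$ exists. For the second sub-case this is false: a bounded-zero BSCC has the counter confined to a window of constant width around its starting value, so $\calC[c]$ \emph{is} linearly bounded (and the theorem statement has no ``unbounded $\calC[c]$'' alternative triggered by this case). The paper correctly lets a bounded-zero BSCC make $\calL$ and $\calT[t]$ unbounded while leaving $\calC[c]$ in the ``$n$ tight'' branch; you would prove the wrong dichotomy for $\calC[c]$.

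Second, the ``strategy-independent positive lower bound on the per-step conditional variance'' that you invoke to handle the uniformity-over-strategies difficulty does not exist. A general (history-dependent, non-Markovian) strategy can idle arbitrarily long on zero-effect transitions that happen to lie inside a zero-drift MEC, so there is no per-step variance floor. This is precisely the obstacle the paper overcomes with the MD-decomposition machinery (Lemma~\ref{lemma-MD-decompose}) and the extended VASS MDP / pointing strategy construction: it shows that any strategy's behavior can be replayed as an interleaving of a \emph{fixed finite} collection of MD strategies, and then the optional-stopping argument (Lemma~\ref{lemma-randomwalkin-On2}) is run at the level of \emph{completed cycles} in the relevant BSCCs, where variance per cycle is bounded below, not per step. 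You acknowledge the difficulty but wave at it with a tool that does not apply; this is the technical heart of the theorem and cannot be omitted.

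Third, your claim of polynomial-time decidability via a single LP per MEC glosses over an essential ordering constraint. The paper proves (Lemma~\ref{lemma-np-hard-0d-gen}) that deciding existence of a bounded-zero BSCC is \textbf{NP}-complete for general one-dimensional VASS MDPs. It becomes polynomial only \emph{after} one has ruled out increasing BSCCs (Lemma~\ref{lemma-0d-decidable-in-P-when-no-+}), because then the ranking function from system~(II) forces any bounded-zero BSCC to use only rank-preserving transitions, which can be found by a fixed-point pruning argument. Your proposed LP test for your Case~(a) (which includes the bounded-zero condition) would have to solve that NP-hard subproblem up front, so the polynomial bound does not follow from what you wrote.
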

Note that some cases are mutually exclusive and some may hold simultaneously. Also recall that $\weigth(\beta)=1$ for every type $\beta$ of length one, and  $\weigth(\beta)$ decays exponentially in the length of $\beta$. Hence, if a transition $t$ does not occur in any MEC, there is a constant $\kappa< 1$ depending only on $\A$ such that $\prob_{p\bv}^\sigma[\calT[t] \geq i] \leq \kappa^i$ for every $\sigma \in \Sigma$ and $p\bv \in \conf(\A)$.

For the rest of this section, we fix a one-dimensional VASS MDP $\A = \ce{Q, (Q_n,Q_p),T,P}$ and some linear ordering $\sqsubseteq$ on~$Q$.
A proof of Theorem~\ref{thm-onedim} is obtained by analyzing bottom strongly connected components (BSCCs) in a Markov chain obtained from $\A$ by ``applying'' some MD strategy~$\sigma$ (we use $\Sigma_{\MD}$ to denote the class of all MD strategies for $\A$). Recall that $\sigma$ selects the same outgoing transition in every $p \in Q_n$ whenever $p$ is revisited, and hence we can ``apply'' $\sigma$ to $\A$ by removing the other outgoing transitions. The resulting Markov chain is denoted by $\A_\sigma$. Note that every BSCC $\B$ of $\A_\sigma$ can also be seen as an end component of $\A$. For a MEC $M$ of $\A$, we write $\B \subseteq M$ if all states and transitions of $\B$ are included in~$M$.

For every BSCC $\B$ of $\A_\sigma$, let $p_{\B}$ be the least state of $\B$ with respect to $\sqsubseteq$. Let $\update_\mathbb{B}$ be a function assigning to every infinite path $\pi = p_0,\bu_1,p_1,\bu_2,\ldots$ the sum $\sum_{i=1}^{\ell} \bu_i$ if $p_0 = p_{\B}$ and $\ell\geq 1$ is the least index such that $p_\ell = p_{\B}$, otherwise $\update_\mathbb{B}(\pi) = 0$. Hence, $\update_\mathbb{B}(\pi)$ is the change of the (only) counter~$c$ along $\pi$ until $p_{\B}$ is revisited.



\begin{definition}
Let $\B$ be a BSCC of $\A_\sigma$. We say that $\B$ is
\begin{itemize}
	\item \emph{increasing} if $\Exp^\sigma_{p_\mathbb{B}}(\update_{B})>0$,
	\item \emph{decreasing} if $\Exp^\sigma_{p_\mathbb{B}}(\update_{B})<0$,
	\item \emph{bounded-zero} if $\Exp^\sigma_{p_\mathbb{B}}(\update_{B})=0$ and $\prob_{p_\mathbb{B}}^\sigma[\update_{\B}{=}0] =1$,
	\item \emph{unbounded-zero} if $\Exp^\sigma_{p_\mathbb{B}}(\update_{B})=0$ and $\prob_{p_\mathbb{B}}^\sigma[\update_{\B}{=}0] <1$.
\end{itemize}
\end{definition}
Note that the above definition does not depend on the concrete choice of $\sqsubseteq$. We prove the following results relating the existence of upper/lower estimates of $\calL$, $\calC[c]$, and $\calT[t]$ to the existence of BSCCs with certain properties. More concretely,
\begin{itemize}
	\item for $\calC[c]$, we show that
	\begin{itemize}
		\item $\calC[c]$ is unbounded for some type $\beta=M$ if there exists an increasing BSCC $\B$ of $\A_\sigma$ for some $\sigma \in \Sigma_{\MD}$ such that $\B \subseteq M$  (Lemma~\ref{lemma-+-exists});
		\item otherwise, $n$ is a tight estimate of $\calC[c]$ for every type (Lemma~\ref{lemma-no-+-effect-on-C}) 
	\end{itemize}
	\item for $\calL$, we show that
	\begin{itemize}
		\item $\calL$ is unbounded for some type $\beta=M$ if there exists an increasing or bounded-zero BSCC $\B$ of $\A_\sigma$ for some $\sigma \in \Sigma_{\MD}$ such that $\B \subseteq M$ (Lemma~\ref{lemma-+-exists}, Lemma~\ref{lemma-existence-of-0d});
		\item otherwise, $n^2$ is an upper estimate of $\calL$ for every type $\beta$ (Lemma~\ref{lemma-no-+-or-0d-effect-on-L});
		\item if there exists an unbounded-zero BSCC $\B$ of $\A_\sigma$ for some $\sigma \in \Sigma_{\MD}$, then $n^2$ is a lower estimate of $\calL$ for $\beta=M$ where $\B \subseteq M$ (Lemma~\ref{lemma-existence-of-0r});
		\item if every BSCC $\B$ of every $\A_\sigma$ is decreasing, then $\calL_{\eexp}(n)\in \Theta(n)$ (this follows from \cite{BCKNV:probVASS-linear-termination}), and hence $n$ is a tight estimate of $\calL$ for every type (Observation~\ref{observation-upper-expected-implies-upper-estimate});		
	\end{itemize}
	\item for $\calT[t]$, we distinguish two cases:
	\begin{itemize}
		\item If $t$ is not contained in any MEC of $\A$, then for every type $\beta$ of length~$k$, the transition~$t$ cannot be executed more than $k$ times along a arbitrary computation $\pi$ where $\mecs(\pi) = \beta$.
		\item If $t$ is contained in a MEC $M$ of $\A$, then
		\begin{itemize}
			\item $\calT[t]$ is unbounded for $\beta=M$ if there exist an increasing BSCC $\B$ of $\A_\sigma$ for some $\sigma \in \Sigma_{\MD}$ such that $\B \subseteq M$ (Lemma~\ref{lemma-+-exists}), or bounded-zero BSCC $\B$ of $\A_\sigma$ for some $\sigma \in \Sigma_{\MD}$ such that $\B$ contains~$t$ (Lemma~\ref{lemma-existence-of-0d});
			\item $\calT[t]$ is unbounded for every $\beta=M_1,\dots, M_k$ such that $M = M_i$  for some $i$ and there exists an increasing BSCC $\B$ of $\A_\sigma$ for some $\sigma \in \Sigma_{\MD}$ such that $\B \subseteq M_j$ for some $j \leq i$ (Lemma~\ref{lemma-+-exists});
			\item otherwise, $n^2$ is an upper estimate of $\calT[t]$ for every type  (Lemma~\ref{lemma-no-+-or-0d-effect-on-L});
			\item if there is an unbounded-zero BSCC $\B$ of $\A_\sigma$ for some $\sigma \in \Sigma_{\MD}$ such that $\B$ contains~$t$, then $n^2$ is a lower estimate of $\calT[t]$ for $\beta=M$ (Lemma~\ref{lemma-existence-of-0r});
			\item if every BSCC $\B$ of every $\A_\sigma$ is decreasing, then $\calT[t]_{\eexp}(n)\in \Theta(n)$ (this follows from \cite{BCKNV:probVASS-linear-termination}), and hence $n$ is an upper estimate of $\calT[t]$ for every type (Observation~\ref{observation-upper-expected-implies-upper-estimate}).
		\end{itemize}
	\end{itemize}
\end{itemize}

The polynomial time bound of Theorem~\ref{thm-onedim} is then obtained by realizing the following: First, we need to decide the existence of an increasing BSCC of $\A_\sigma$ for some $\sigma \in \Sigma_{\MD}$. This can be done in polynomial time using the constraint system~(I) of Fig.~\ref{fig-systems} (Lemma~\ref{decide-+-in-P}). If no such increasing BSCC exists, we need to decide the existence of a bounded-zero BSCC, which can be achieved in polynomial time for a subclass of one-dimensional VASS MDPs where no increasing BSCC exists (Lemma~\ref{lemma-0d-decidable-in-P-when-no-+}). Then, if no bounded-zero BSCC exists, we need to decide the existence of an unbounded-zero BSCC, which can again be done in polynomial time using the constraint system~(I) of Fig.~\ref{fig-systems} (realize that any solution $\bx$ of (I) implies the existence of a BSCC that is either increasing, bounded-zero, or unbounded-zero).

Hence, the ``algorithmic part'' of Theorem~\ref{thm-onedim} is an easy consequence of the above observations, but there is one remarkable subtlety. Note that we need to decide the existence of a bounded-zero BSCC only for a subclass of one-dimensional VASS MDPs where no increasing BSCCs exist. This is actually crucial, because deciding the existence of a bounded-zero BSCC in \emph{general} one-dimensional VASS MDPs is \textbf{NP}-complete (Lemma~\ref{lemma-np-hard-0d-gen}). 

The main difficulties requiring novel insights are related to proving the observation about $\C[c]$, stating that if there is no increasing BSCC of $\A_\sigma$ for any $\sigma \in \Sigma_{\MD}$, then $n$ is an upper estimate of $\C[c]$ for every type. A comparably difficult (and in fact closely related) task is to show that if there is no increasing or bounded-zero BSCC, then $n^2$ is an upper estimate of $\calL$ for every type. Note that here we need to analyze the behaviour of $\A$ under \emph{all} strategies (not just MD), and consider the notoriously difficult case when the long-run average change of the counter caused by applying the strategy is zero. Here we need to devise a suitable decomposition technique allowing for interpreting general strategies as ``interleavings'' of MD strategies and lifting the properties of MD strategies to general strategies. Furthermore, we need to devise techniques for reducing the problems of our interest to analyzing certain types of random walks that have already been studied in stochastic process theory. We discuss this more in the following subsection, and we refer to Appendix~\ref{app-one-dim} for a complete exposition of these results.

\subsection{MD decomposition}


As we already noted, one crucial observation behind Theorem~\ref{thm-onedim} is that if there is no increasing BSCC of $\A_\sigma$ for any $\sigma \in \Sigma_{\MD}$, then $n$ is an upper estimate of $\C[c]$ for every type. In this section, we sketch the main steps towards this result.

First, we show that every path in $\A$ can be decomposed into ``interweavings'' of paths generated by MD strategies. 

Let \(\alpha=p_0,\bv_1,\dots,p_k \) be a path. For every $i \leq k$, we use \(\alpha_{..i}=p_0,\bv_1, \ldots,p_i \) to denote the prefix of \(\alpha \) of length \(i\). We say that $\alpha$ is \emph{compatible} with a MD strategy $\sigma$ if $\sigma(\alpha_{..i}) = (p_i,\bv_{i+1},p_{i+1})$ for all $i<k$ such that $p_{i} \in Q_n$. Furthermore, for every path \(\beta=q_0,\bu_1,q_1,\dots,q_\ell \) such that $p_k=q_0$, we define a path \(\alpha \circ \beta= p_0,\bv_1,p_1,\dots,p_k,\bu_1,q_1,\dots,q_\ell \).

\begin{definition}
	Let $\A$ be a VASS MDP, $\pi_1,\dots, \pi_k \in \Sigma_{\MD}$, and $p_1,\dots,p_k \in Q$. An \mbox{\emph{MD-decomposition}} of a path $\alpha = s_1,\ldots,s_m$ under  
	$\pi_1,\dots, \pi_k$ and $p_1,\dots,p_k$ is a decomposition of $\alpha$ into finitely many paths $\alpha = \gamma_1^1 \circ \cdots\circ\gamma_1^k \; \circ\; \gamma_2^1\circ\cdots\circ\gamma_2^k \; \circ \; \cdots \; \circ \; \gamma_\ell^1\circ\cdots\circ\gamma_\ell^k$ satisfying the following conditions:
	\begin{itemize}
		\item for all $i < \ell$ and $j\leq k$, the last state of $\gamma_i^j$ is the same as the first state of $\gamma_{i+1}^j$;
		\item for every \(j\leq k \), $\gamma_1^j \circ \cdots \circ \gamma_\ell^j$ is a path that begins with \(p_j \) and is compatible with $\pi_j$.   
	\end{itemize} 
\end{definition}
Note that $\pi_1,\dots, \pi_k$ and $p_1,\dots,p_k$ are not necessarily pairwise different, and the length of $\gamma_{i}^j$ can be zero. Also note that the same $\alpha$ may have several MD-decompositions. 

Intuitively, an MD decomposition of $\alpha$ shows how to obtain $\alpha$ by repeatedly selecting zero or more transitions by $\pi_1,\ldots,\pi_k$. The next lemma shows that for every VASS MDP $\A$, one can \emph{fix} MD strategies $\pi_1,\ldots,\pi_k$ and states $p_1,\ldots,p_k$ such that \emph{every} path $\alpha$ in $\A$ has an MD-decomposition under $\pi_1,\ldots,\pi_k$ and $p_1,\ldots,p_k$. Furthermore, such a decomposition is constructible \emph{online} as $\alpha$ is read from left to right.

\begin{lemma}\label{lemma-MD-decompose}
	For every VASS MDP \(\A\), there exist \(\pi_1,\dots,\pi_k \in \Sigma_{\MD}\),  \(p_1,\dots,p_k \in Q\), and a function \(\Decompose_\A \) such that the following conditions are satisfied for every finite path $\alpha$:
	\begin{itemize}
		\item \(\Decompose_\A(\alpha) \) returns an MD-decomposition of $\alpha$ under $\pi_1,\dots,\pi_k$ and $p_1,\ldots,p_k$.
		\item $\Decompose_\A(\alpha)=\Decompose_\A(\alpha_{..\length(\alpha)-1})\; \circ\; \gamma^1\circ\cdots\circ\gamma^k$, where exactly one of 
		$\gamma^i$ has positive length (the $i$ is called the \emph{mode} of $\alpha$).
		\item If the last state of \(\alpha_{..\length(\alpha)-1}\) is probabilistic, then the mode of $\alpha$ does not depend on the last transition of $\alpha$. 
	\end{itemize}
\end{lemma}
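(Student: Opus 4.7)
The plan is to take $k = |Q|\cdot|\Sigma_{\MD}|$ (finite since $\A$ has finitely many states and outgoing transitions) and enumerate $(p_1,\pi_1),\ldots,(p_k,\pi_k)$ as all pairs $(p,\sigma)\in Q\times\Sigma_{\MD}$ in a fixed total order. The function $\Decompose_\A$ is then built by induction on $\length(\alpha)$, carrying as an invariant that, for every $j$, the concatenation $\gamma_1^j\circ\cdots\circ\gamma_\ell^j$ is a path starting at $p_j$ that is compatible with $\pi_j$.

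For the base case $\length(\alpha)=0$, I set each $\gamma_1^j$ to be the zero-length segment anchored at $p_j$; the invariant holds trivially. For the inductive step, write $\alpha=\alpha'\circ t$ with $t=(s,\bu,s')$ and let $q_j$ denote the final state of $\gamma_1^j\circ\cdots\circ\gamma_\ell^j$ in $\Decompose_\A(\alpha')$. I append a new block in which exactly one track $i$ (the mode) receives the single-transition segment $(s,\bu,s')$ while every other track $j\neq i$ receives the zero-length segment at $q_j$. The mode $i$ is chosen deterministically as the least index $j$ in the fixed order such that $q_j=s$, with the additional requirement $\pi_j(s)=t$ whenever $s\in Q_n$. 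If $s\in Q_p$ the rule depends only on $(\Decompose_\A(\alpha'),s)$ and not on $t$, which is exactly the third bullet of the lemma; the first two bullets follow directly from the construction and the induction hypothesis, as long as a valid mode exists at each step.

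The main obstacle is establishing that a valid mode always exists. This is non-trivial because a naive pigeonhole on the $|\Sigma_{\MD}|$ initially-unused tracks anchored at $s$ is insufficient: $\alpha$ may visit $s$ more than $|\Sigma_{\MD}|$ times with the same required next transition, so tracks of the form $(s,\sigma)$ must be \emph{reused}. I would prove reusability by strengthening the induction hypothesis into a routing invariant on the multiset of current positions $\{q_j\}$: using the least-$j$ rule together with the crucial fact that transitions leaving probabilistic states are compatible with every MD strategy, the transitions of $\alpha$ occurring between two uses of an $(s,\sigma)$-track can always be distributed so as to bring that track back to $s$ before it is next needed, while preserving the availability of the remaining tracks. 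Making this routing/bookkeeping argument precise is the technical heart of the proof; once it is in place, the online structure and the mode-independence-at-probabilistic-states condition of the third bullet are immediate from the way the selection rule was designed.
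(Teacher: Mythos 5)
There is a genuine gap. You correctly identify the crux of your approach---showing that a valid mode (a track currently at $s$ whose strategy agrees with the next transition) always exists---but you do not prove it; you only assert that a ``routing invariant'' can be maintained and call it ``the technical heart of the proof.'' As stated, the claim that the transitions between two uses of an $(s,\sigma)$-track ``can always be distributed so as to bring that track back to $s$'' is not an argument: $\Decompose_\A$ is a deterministic function driven by your least-index rule, so you cannot ``distribute'' transitions at will; you must show that the greedy rule \emph{itself} keeps a suitable track available. This is far from obvious. A track $(s,\sigma)$ that has wandered to some state $v$ can only return to $s$ if $\alpha$ happens to contain, in the right order, a sequence of transitions forming a $\sigma$-compatible path from $v$ back to $s$, and nothing forces an arbitrary path $\alpha$ to provide one. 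Your observation that transitions out of probabilistic states are compatible with every MD strategy gives some slack, but it does not help at all if the path between visits to $s$ passes only through nondeterministic states that $\sigma$ exits the ``wrong'' way. Without a precise invariant and a proof that the rule preserves it, the induction does not close.

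It is also worth noting that the paper's proof avoids this problem entirely by a different induction: it proceeds on the number of outgoing transitions at nondeterministic states, fixing a state $p$ with two transitions $t_r,t_g$, splitting any path $\alpha$ into alternating ``red/green'' segments according to which of $t_r,t_g$ is used at $p$, and recursively decomposing each colour class in the smaller VASS MDPs $\A_r,\A_g$ obtained by deleting $t_g$ and $t_r$ respectively. The base case (every nondeterministic state has one outgoing transition, hence a unique strategy) is trivial. This combinatorial segmentation is well-defined by construction and needs no argument about tracks ``coming back.'' If you want to pursue your fixed-track approach, you would need to either strengthen the track set (e.g.\ more tracks per $(p,\sigma)$ pair) or replace the least-index rule by one for which a concrete counting or flow-type invariant can be verified; the present sketch does not yet yield a proof.
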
	
A proof of Lemma~\ref{lemma-MD-decompose} is in Appendix~\ref{app-one-dim}.

According to Lemma~\ref{lemma-MD-decompose}, \emph{every} strategy $\sigma$ for $\A$ just performs a certain ``interleaving'' of the MD strategies  $\pi_1,\dots,\pi_k$ initiated in the states $p_1,\ldots,p_k$. We aim to show that if every BSCC of every $\A_{\pi_j}$ is non-increasing, then $n$ is an upper estimate of $\C[c]$ for every type. Since we do not have any control over the length of the individual $\gamma_{i}^j$ occurring in MD-decompositions, we need to introduce another concept of \emph{extended VASS MDPs} where the strategies $\pi_1,\dots,\pi_k$ can be interleaved in ``longer chunks''. Intuitively, an extended VASS MDP is obtained from $\A$ by taking $k$ copies of $\A$ sharing the same counter. The $j$-th copy selects transitions according to $\pi_j$. At each round, only one $\pi_j$ makes a move, where the $j$ is selected by a special type of ``pointing'' strategy defined especially for extended MDPs.
Note that $\sigma$ can be faithfully simulated in the extended VASS MDP by a pointing strategy that selects the indexes consistently with $\Decompose_\A$. However, we can also construct another pointing strategy that simulates each $\pi_j$ longer (i.e., ``precomputes'' the steps executed by $\pi_j$ in the future) and thus ``close cycles'' in the BSCC visited by $\pi_j$.  This computation can be seen as an interleaving of a finite number of independent random walks with non-positive expectations. Then, we use the optional stopping theorem to get an upper bound on the total expected number of ``cycles'', which can then be used to obtain the desired upper estimate. We refer to Appendix~\ref{app-one-dim} for details.

\subsection{A Note about Energy Games}

One-dimensional VASS MDPs are closely related to energy games/MDPs \cite{CHD:energy-games,CHD:energy-MDPs,CHKN:energy-games-polynomial,JLS:eVASS-games-ICALP}. An important open problem for energy games is the complexity of deciding the existence of a \emph{safe} configuration where, for a sufficiently high energy amount, the responsible player can avoid decreasing the energy resource (counter) below zero. This problem is known to be in $\NP\cap \coNP$, and a pseudopolynomial algorithm for the problem exists; however, it is still open whether the problem is in $\PTIME$ when the counter updates are encoded in binary. Our analysis shows that this problem is solvable in polynomial time for energy (i.e., one-dimensional VASS) MDPs $\A$ such that there is no increasing SCC of $\A_\sigma$ for any $\sigma \in \Sigma_{\MD}$.


We say that a SCC $\B$ of $\A_\sigma$ is \emph{non-decreasing} if $\B$ does not contain any negative cycles. Note that every bounded-zero SCC is non-decreasing, and a increasing SCC may but does not have to be non-decreasing. 

\begin{lemma}
	\label{lem-energy}
	An energy MDP has a safe configuration iff there exists a non-decreasing SCC $\B$ of $\A_\sigma$ for some $\sigma \in \Sigma_{\MD}$. 
\end{lemma}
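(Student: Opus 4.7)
For the forward direction, suppose a safe configuration $p\bn$ exists, witnessed by some strategy $\tau$. I would first invoke the classical memoryless-determinacy result for almost-sure safety objectives in finite-state MDPs to replace $\tau$ by some $\sigma \in \Sigma_{\MD}$ that preserves the safety guarantee from $p\bn$. The Markov chain $\A_\sigma$ started at $p$ almost surely enters a BSCC, and some BSCC $\B$ is reached with positive probability from~$p$. If $\B$ contained a negative cycle, then recurrence of $\B$ together with the strong law of large numbers, applied to the counter viewed as an additive functional of the ergodic Markov chain on~$\B$, would force the counter to drift to $-\infty$ almost surely on trajectories entering~$\B$, contradicting safety. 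Hence $\B$ has no negative cycles, and since BSCCs are SCCs, $\B$ is a non-decreasing SCC of~$\A_\sigma$.

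For the reverse direction, I start from a non-decreasing SCC $\B$ of $\A_\sigma$ and handle first the case when $\B$ is also a BSCC of $\A_\sigma$ via a potential-function argument. The absence of negative cycles in $\B$, combined with a Bellman--Ford-style shortest-path computation on the weighted digraph formed by the edges of $\A_\sigma$ internal to $\B$ (with weights equal to counter updates), yields a potential $\phi : \B \to \Z$ whose range has size at most some $K$ depending only on $\A$, satisfying $\phi(q) - \phi(p) \leq u$ for every internal edge $(p,u,q)$. Then for any $q \in \B$ and initial counter value $n \geq K$, the quantity $v + \phi(q) - \phi(p)$ is monotone non-decreasing along every computation staying inside~$\B$, so the counter remains $\geq n - K \geq 0$. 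Since $\B$ is a BSCC, the walk under $\sigma$ stays in $\B$ forever almost surely, making $q\bn$ a safe configuration.

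The main obstacle is the case where $\B$ is a non-bottom SCC, because probabilistic states in $\B$ may leak outside $\B$ and no MD strategy can redirect probabilistic transitions. Here I would argue via the MEC decomposition of~$\A$: the non-decreasing SCC $\B$ should be shown to contain (or be contained in) some MEC $M$ of~$\A$, and since MECs are closed under probabilistic transitions, one can modify $\sigma$ inside~$M$ to obtain an MD strategy $\sigma'$ whose induced chain $\A_{\sigma'}$ has a non-decreasing BSCC lying inside~$M$, reducing back to the previous case. Verifying that the no-negative-cycle property transfers to such a BSCC of $\A_{\sigma'}$ is the most delicate step, relying on combining the potential-function analysis above with end-component theory to rule out the creation of new negative cycles under the modification.
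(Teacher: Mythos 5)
Your forward-direction argument has a genuine gap in its last step. A negative cycle in a BSCC $\B$ does \emph{not} imply that the counter has negative expected drift, so SLLN does not force $-\infty$: the paper explicitly notes that an increasing BSCC may still contain a negative cycle, in which case the counter tends to $+\infty$ almost surely, and your contradiction evaporates. What actually kills almost-sure safety is weaker and needs no law of large numbers: since $\B$ is recurrent and the negative cycle is taken with positive probability from some state of $\B$, every configuration inside $\B$ admits a finite positive-probability path that drives the counter below zero, so safety cannot hold with probability~$1$. Separately, your first step cites memoryless determinacy ``for finite-state MDPs,'' but the energy MDP's configuration space is infinite (the counter is unbounded); what you are really invoking is the memoryless-determinacy theorem for energy MDPs, which is itself a substantial external result. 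The paper avoids it entirely: its $\Rightarrow$ proof applies the MD-decomposition machinery of Lemma~\ref{lemma-MD-decompose} directly to the arbitrary safe strategy, viewing the play as an interleaving of the Markov chains $\M_1,\dots,\M_k$, and argues that if every BSCC of every $\A_\sigma$ with $\sigma\in\Sigma_{\MD}$ contains a negative cycle then the interleaved computation almost surely drives the counter below any bound.

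Your reverse-direction Case~2 is chasing a false generalization. Read literally, ``non-decreasing SCC'' includes non-bottom SCCs, and for those the $\Leftarrow$ direction is simply \emph{false}: a non-bottom SCC with no negative cycle may leak through a probabilistic transition to a sink whose only cycle is negative (e.g.\ $a$ nondeterministic with the single transition $(a,0,b)$; $b$ probabilistic with $(b,+1,a)$ and $(b,-1,c)$ each with probability $\frac12$; $c$ probabilistic with a $-1$ self-loop; then $\{a,b\}$ has no negative cycle, yet $c$ is reached a.s.\ and no configuration is safe). The wording ``SCC'' is a slip: the appendix proof of this very lemma writes ``non-decreasing BSCC,'' and that is what is meant. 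Once $\B$ is a bottom SCC, your Case~1 (the Bellman--Ford potential) is exactly the paper's ``trivial'' $\Leftarrow$ argument, and Case~2 never arises; the plan of modifying $\sigma$ inside a MEC to manufacture a non-decreasing BSCC is both unnecessary and, in general, impossible.
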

The ``$\Leftarrow$'' direction of Lemma~\ref{lem-energy} is immediate, and the other direction can be proven using our MD decomposition technique, see Appendix~\ref{app-lem-energy-proof}. 

Note that if there is no increasing SCC $\B$ of $\A_{\sigma}$ for any $\sigma \in \Sigma_{\MD}$, then the existence of a non-decreasing SCC is equivalent to the existence of a bounded-zero SCC, and hence it can be decided in polynomial time (see the results presented above). However, for general energy MDPs, the best upper complexity bound for the existence of a non-decreasing SCC is $\NP\cap \coNP$. Interestingly, a small modification of this problem already leads to $\NP$-completeness, as demonstrated by the following lemma.

\begin{lemma}\label{lemma-type-NONNEGATIVE-NP-complete}
    The problem whether there exists a non-decreasing SCC $\B$ of $\A_{\sigma}$ for some $\sigma \in \Sigma_{\MD}$ such that $\B$ contains a given state $p \in Q$ is $\NP$-complete.
\end{lemma}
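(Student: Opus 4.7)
\textbf{Plan for Lemma~\ref{lemma-type-NONNEGATIVE-NP-complete}.} Two directions must be established. For $\NP$-membership, an MD strategy $\sigma$ can be guessed in a number of bits polynomial in $\size{\A}$ by listing, for each $q \in Q_n$, the chosen outgoing transition. Given this certificate, $\A_\sigma$ is constructed in polynomial time by pruning non-selected transitions; the SCC $\B$ containing $p$ is computed by Tarjan's algorithm; and the presence of a negative cycle inside $\B$ is decided in polynomial time by a Bellman--Ford-style detection (weights are integer and written in binary, so arithmetic stays polynomial). The certificate is accepted iff no negative cycle is found.

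For $\NP$-hardness, I plan a reduction from $3$-SAT. Given $\phi = C_1 \wedge \cdots \wedge C_m$ over $x_1, \ldots, x_n$, I will build a one-dimensional VASS MDP $\A$ with a distinguished state $p$ such that $\phi$ is satisfiable iff some $\sigma \in \Sigma_{\MD}$ induces a non-decreasing SCC containing $p$. The construction uses three kinds of building blocks:
\begin{itemize}
\item a hub state $p$, intended to lie in the target SCC;
\item for each variable $x_i$, a \emph{variable gadget} consisting of a nondeterministic state $v_i$ reachable from $p$ with two zero-weight branches through ``true'' and ``false'' side-nodes $T_i, F_i$ that return to $p$ via zero-weight paths; the MD choice at $v_i$ fixes the value of $x_i$;
\item for each clause $C_j = l_{j1} \vee l_{j2} \vee l_{j3}$, a \emph{clause gadget} implementing a cycle whose total counter effect is $-1$ exactly when all three literals are falsified by the chosen assignment and $\geq 0$ otherwise; the conditional cycle closure is achieved by routing through per-clause copies of the side-nodes and gating them with probabilistic states that fire unconditionally regardless of $\sigma$.
\end{itemize}

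The main technical obstacle is the \emph{reachability interference} between the clause gadgets and the variable gadgets: naively adding edges into $T_i, F_i$ to wire up clauses destroys the intended correspondence between MD strategies and truth assignments, because $T_i$ (resp.\ $F_i$) may become reachable from (or co-reachable with) $p$ through a clause path even when $\sigma$ selects the opposite branch at $v_i$. I intend to resolve this by keeping the per-variable nodes $T_i, F_i$ as pure ``shunts'' with no clause edges, introducing instead fresh per-clause copies $T_i^{(j)}, F_i^{(j)}$ that appear only inside the clause cycle of $C_j$, and using short probabilistic gates to enforce that $T_i^{(j)}$ (resp.\ $F_i^{(j)}$) participates in the SCC containing $p$ only when $\sigma$ has chosen the matching branch at $v_i$. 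Once this is in place, correctness will follow from two claims proved separately: (i) if $\phi$ is satisfiable, the strategy mirroring a satisfying assignment yields an SCC in which every elementary cycle decomposes as a non-negative combination of zero-weight connector cycles and clause cycles whose total effect is $\geq 0$; (ii) if $\phi$ is unsatisfiable, then for every MD strategy some clause $C_j$ is falsified, its full cycle lies inside the SCC of $p$ and has total effect $-1$, witnessing a negative cycle. Polynomial-time computability of the whole construction is immediate from the gadget sizes, which are linear in $n + m$.
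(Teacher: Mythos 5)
Your $\NP$-membership argument is fine and matches the paper's (guess the MD strategy, extract the SCC of $p$ in $\A_\sigma$, run negative-cycle detection). The $\NP$-hardness direction, however, is not a proof but a plan, and the plan has a genuine gap that you yourself flag without resolving. The ``reachability interference'' problem is real, and your proposed fix does not actually close it: you want ``short probabilistic gates'' to ensure that the per-clause copy $T_i^{(j)}$ (resp.\ $F_i^{(j)}$) participates in the SCC of $p$ only when $\sigma$ selects the matching branch at $v_i$, but probabilistic states keep \emph{all} of their outgoing transitions in $\A_\sigma$ regardless of $\sigma$. Connectivity in $\A_\sigma$ changes only through pruned edges at nondeterministic states, so a probabilistic gate cannot make membership of $T_i^{(j)}$ in the SCC of $p$ conditional on the MD choice at $v_i$. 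To propagate the assignment from $v_i$ into clause $C_j$ you would need a nondeterministic mechanism that an MD strategy is \emph{forced} to resolve consistently with $v_i$; you do not exhibit one, and it is not obvious how to build one while also keeping the clause cycle's counter effect equal to $-1$ exactly on an all-false clause. So claims (i) and (ii) in your plan remain unproven, and the construction as sketched does not establish the ``iff.''

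The paper avoids all of this by reducing from Hamiltonian cycle rather than 3-SAT, exploiting a structural fact about MD strategies that makes a consistency mechanism unnecessary: with every state nondeterministic, $\A_\sigma$ has at most one outgoing transition per state, so any SCC $\B$ of $\A_\sigma$ has at most $|Q|$ transitions. Giving $p$'s outgoing edges weight $-|V|{+}1$ and every other edge weight $+1$ then forces any non-decreasing SCC through $p$ to contain at least $|V|$ transitions, hence all of $V$, hence a Hamiltonian cycle; conversely a Hamiltonian cycle induces an MD strategy whose unique cycle through $p$ has effect $0$. This is a single gadget with a one-line counting argument, whereas your 3-SAT route requires a consistency-enforcement mechanism you have not constructed. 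If you want to pursue 3-SAT, you must replace the probabilistic gates with nondeterministic ones and argue that any MD strategy inducing a non-decreasing SCC through $p$ is forced to resolve those gates coherently with the variable choices --- that is the missing lemma, and it is not trivial.
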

A proof of Lemma~\ref{lemma-type-NONNEGATIVE-NP-complete} is in Appendix~\ref{app-lemma-type-NONNEGATIVE-NP-complete-proof}.

\section{Conclusions}
\label{sec-concl}

We introduced new estimates for measuring the asymptotic complexity of probabilistic programs and their VASS abstractions. We demonstrated the advantages of these measures over the asymptotic analysis of expected values, and we have also shown that tight complexity estimates can be computed efficiently for certain subclasses of VASS MDPs.

A natural continuation of our work is extending the results achieved for one-dimensional VASS MDPs to the multi-dimensional case. In particular, an interesting open question is whether the polynomial asymptotic analysis for non-deterministic VASS presented in \cite{Zuleger:VASS-polynomial} can be generalized to VASS MDPs. Since the study of multi-dimensional VASS MDPs is notoriously difficult, a good starting point would be a complete understanding of VASS MDPs with two counters. 





\bibliography{str-short,concur}

\appendix
\newpage

\section{Proofs for Section~\ref{section-linquad}}
\label{app-linquad}
\subsection{Proof of Lemma~\ref{lemma:dichotomy}}\label{app-lem:optimization-duality-proof}

\begin{lemma*}[\textbf{\ref{lemma:dichotomy}}]
Let $\bx$ be a (maximal) solution to the constraint system~(I) and $\by,\bz$ be a (maximal) solution to the constraint system~(II). Then, for each counter $c$ we have that either $\by(c)>0$ or $\sum_{t \in T} \bx(t)\bu_t(c)>0$, and for each transition $t = (p,\bu,q)\in T$ we have that
\begin{itemize}
	\item if $p\in Q_n$ then either $\bz(q)-\bz(p)+\sum_{i=1}^{d} \bu(i)\by(i)<0$ or $\bx(t)>0$;
	\item if $p\in Q_p$ then either \[\sum_{t'=(p,\bu',q') \in \tout(p)}P(t')\big(\bz(q')-\bz(p)+\sum_{i=1}^{d}  \bu'(i)\by(i)\big)< 0\] or $\bx(t)>0$.		
\end{itemize}


\end{lemma*}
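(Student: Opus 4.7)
The plan is to split the proof into two parts: first, establish that the two alternatives in each dichotomy cannot both hold strictly (a direct algebraic consequence of the constraint systems); second, use maximality together with a Farkas-style duality argument to show that at least one alternative always holds.

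For the first part, I would compute the bilinear quantity
\[
  S \;:=\; \sum_{t = (p, \bu, q) \in T} \bx(t) \bigl[\bz(q) - \bz(p) + \bu \cdot \by\bigr].
\]
Flow conservation in~(I) makes the $\bz$-contributions telescope, giving $\sum_t \bx(t)[\bz(q) - \bz(p)] = 0$, whence $S = \sum_t \bx(t)(\bu_t \cdot \by)$. Splitting $S$ according to whether the source of $t$ lies in $Q_n$ or $Q_p$, and using the constraints of~(II) together with $\bx \ge 0$, every summand arising from a $Q_n$-outgoing transition is non-positive; for each $p \in Q_p$, the probabilistic constraint $\bx(t) = P(t)\sum_{t' \in \tout(p)}\bx(t')$ lets me factor the $\tout(p)$-sum as $\bigl(\sum_{t' \in \tout(p)} \bx(t')\bigr)\cdot\sum_{t \in \tout(p)} P(t)\bigl[\bz(q)-\bz(p)+\bu\cdot\by\bigr]$, again non-positive. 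Hence $S \le 0$. But $S = \sum_i \by(i)\sum_t \bx(t)\bu_t(i) \ge 0$ since $\by \ge 0$ and $\sum_t \bx(t)\bu_t \ge 0$, so $S = 0$ and every summand is individually zero. This yields: for each counter $c$, $\by(c)\cdot\sum_t\bx(t)\bu_t(c) = 0$; for each $t = (p,\bu,q)$ with $p \in Q_n$, $\bx(t)\cdot[\bz(q)-\bz(p)+\bu\cdot\by] = 0$; and for each $p \in Q_p$, either $\bx(t) = 0$ for all $t \in \tout(p)$ or the expected rank change at $p$ is zero (not strictly negative). In each case, the two alternatives of the dichotomy cannot both be strict.

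For the second part, I would exploit closure of the solution sets of~(I) and~(II) under addition together with a Farkas-type theorem of alternatives. For the counter dichotomy, suppose $\by(c) = 0$ in the maximal $(\by,\bz)$. By closure and maximality of the ``number of $\by(c)>0$'' objective, every solution $(\by',\bz')$ of~(II) must satisfy $\by'(c) = 0$ (otherwise $(\by+\by',\bz+\bz')$ would violate maximality). The systems~(I) and~(II) form an LP primal-dual pair: $\by$, $\bz$, and the probabilistic auxiliary variables correspond exactly to the dual variables of the counter-growth inequalities, the flow-conservation equalities, and the probabilistic-balance equalities of~(I). The non-existence of a solution of~(II) that is strictly positive on coordinate $c$ therefore forces, via Farkas' lemma, the existence of a flow $\bx'$ satisfying the constraints of~(I) with $\sum_t \bx'(t)\bu_t(c) > 0$. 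Then $\bx + \bx'$ is a solution of~(I), so by maximality of $\bx$ we conclude $\sum_t \bx(t)\bu_t(c) > 0$. The arguments for the transition dichotomies are symmetric: if $\bx(t) = 0$ in the maximal $\bx$, then by closure and maximality $\bx'(t) = 0$ in every solution of~(I), and Farkas produces a solution of~(II) with strict inequality at $t$ (or, for $p \in Q_p$, strict inequality in expectation at $p$), which by maximality must then hold in $(\by, \bz)$.

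The main technical obstacle is the careful set-up of the Farkas/LP-duality step: one must verify that~(I) and~(II) are genuine primal-dual pairs despite the mix of equalities (flow conservation, probabilistic balance) and inequalities (counter growth, non-negativity), and one must work with the correct cone of linear combinations corresponding to the three simultaneous maximisation objectives. Once this dual correspondence is made explicit, the theorem of alternatives applies in a standard way, and the first part reduces to a routine complementary-slackness-style calculation.
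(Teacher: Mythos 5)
Your proposal is correct and follows essentially the same route as the paper: reformulate (I) and (II) as a primal--dual pair so that a Farkas-type theorem of alternatives gives, for each counter $c$ and each transition $t$, that a flow witnessing one side exists precisely when no ranking witnessing the other side exists, and then invoke closure under addition to conclude that the maximal solutions realise every strict inequality that is realisable at all. Your first part (the telescoping/complementary-slackness computation showing $S=0$) is a clean observation but is not needed for the lemma as stated, which only asserts ``at least one'' of the two alternatives; the paper proves the result purely via the reformulation into systems (I$'$)/(II$'$), the matrix form $D\bx'\ge 0,\ F\bx'=0$ versus $F^T\bz+D^T\by\le 0$, and the two Farkas applications of Zuleger's Lemma~4.
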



\noindent
\textbf{Proof:} 
Let \(A=T_n\cup Q_p\), where \(T_n=\bigcup_{p\in Q_n}\tout(p) \). For each \(a\in A\), let $\next_a$ be a probability distribution on \(Q\) such that
\begin{itemize}
	\item \(\next_{(p,\bu,q)}(q)=1 \) for \(a=(p,\bu,q)\in T_n \),
	\item \(\next_p(q)=\sum_{(p,\bu,q)\in \tout(p)\cap \tin(q)} P((p,\bu,q)) \) for \(a=p\in Q_p \),
	\item and \(\next_a(p)=0 \) else,
\end{itemize}   
let \(\from_a\) be a probability distribution on \(Q\) such that
\begin{itemize}
	\item \(\from_{(p,\bu,q)}(p)=1 \) for \(a=(p,\bu,q)\in T_n \),
	\item \(\from_p(p)=1 \) for \(a=p\in Q_p \),
	\item and \(\from_a(p)=0 \) else,
\end{itemize}   
 and let \(\effect_a\) be defined as \begin{itemize}
 	\item \(\effect_{(p,\bu,q)}=\bu \) for \(a=(p,\bu,q)\in T_n \),
 	\item  	and \(\effect_p=\sum_{(p,\bu,q)\in \tout(p)} P((p,\bu,q))\bu \) for \(a=p\in Q_p \). 
 \end{itemize}  

 \noindent
Then we can rewrite the constraint systems as
\smallskip

\noindent
	\vspace{0.7cm}
	\begin{tabular}{|c|c|}
		\hline
		{\begin{minipage}[c]{0.35\textwidth}\small
				\vspace{0.2cm}
				Constraint system~(I'):
				
				\vspace{0.2cm}
				Find $\bx' \in \mathbb{Z}^{A}$ such that
				\begin{align}
					\sum_{a \in A} \bx'(a) \effect_a  & \ge \vec{0} \nonumber\\
					\bx' & \ge \vec{0} \nonumber
				\end{align}
				
				and for each $p\in Q$
				\begin{align}
					\sum_{a\in A}(\bx'(a)\next_a(p)-\bx'(a)\from_a(p))&=0\nonumber
				\end{align}
				%
				%
				%
%
				\vspace*{1em}
		\end{minipage}}
		&
		{\begin{minipage}[c]{0.6\textwidth}
				\vspace{0.2cm}
				Constraint system (II'):
				
				\vspace{0.2cm}
				Find $\by \in \mathbb{Z}^d,\bz \in \mathbb{Z}^{Q}$ such that
				\begin{align}
					\by & \ge  \vec{0} \nonumber\\
					\bz & \ge  \vec{0} \nonumber
				\end{align}
				and for each $a\in A$  				\[
					\sum_{p\in Q} (\bz(p)\next_a(p) - \bz(p)(\from_a(p))) + \sum_{i=1}^d \effect_a(i)\by(i)\leq 0
				\]
%
				\vspace*{1em}
		\end{minipage}}\\
		\hline
	\end{tabular}

We recognize systems (I) and (I') as equivalent, and systems (II) and (II') as equivalent as per the following lemma.

\begin{lemma}
If  \(\bx',\by,\bz\) is a solution to the rewritten constraint systems (I') and (II'), then \(\bx,\by,\bz \) is a solution to the original constraint systems (I) and (II), where \(\bx(t)=\bx'(t) \) for \(t\in T_n \), and \(\bx((p,\bu,q))=P((p,\bu,q))\bx'(p) \) for \((p,\bu,q)\in T\setminus T_n \). Similarly, if \(\bx,\by,\bz \) is a solution to the original constraint systems (I) and (II), then \(\bx',\by,\bz\) is a solution to the rewritten constraint systems (I') and (II'), where \(\bx'(t)=\bx(t) \) for \(t\in T_n \), and \(\bx'(p)=\sum_{t\in \tout(p)} \bx(t) \) for \(p\in Q_p\).
	\end{lemma}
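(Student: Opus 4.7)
The plan is a direct constraint-by-constraint verification of the biconditional: with $\by,\bz$ unchanged, one has to show that the two conversions
\[
\bx \longleftrightarrow \bx' \qquad\text{given by}\qquad \bx'(t)=\bx(t)\text{ for }t\in T_n,\quad \bx'(p)=\sum_{t\in\tout(p)}\bx(t)\text{ for }p\in Q_p,
\]
are mutually inverse on the solution sets of (I) and (I'), and that they match each inequality of (II) with the one indexed by the corresponding $a\in A$ in (II'). I would organize the proof in three steps.

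First I would check that the conversion is a well-defined bijection. In the forward direction ($\bx\mapsto\bx'$), non-negativity of $\bx'$ is immediate. In the reverse direction ($\bx'\mapsto\bx$), the definition $\bx(t)=P(t)\bx'(p)$ for $t\in\tout(p)$, $p\in Q_p$, is non-negative since $P(t)>0$, and the distributional constraint $\bx(t)=P(t)\sum_{t'\in\tout(p)}\bx(t')$ of (I) is built into the definition. The key sanity check is that, starting from an $\bx$ that already satisfies this distributional constraint, the round-trip $\bx\mapsto\bx'\mapsto\bx$ is the identity.

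Second I would verify that the $\bx$-constraints of (I) correspond to the $\bx'$-constraints of (I'). For the effect constraint, splitting $T=T_n\cup(T\setminus T_n)$ and using $\effect_p=\sum_{t\in\tout(p)}P(t)\bu_t$ together with $\bx(t)=P(t)\bx'(p)$ gives $\sum_{t\in T}\bx(t)\bu_t=\sum_{a\in A}\bx'(a)\effect_a$. For the flow constraint at a state $p\in Q$, the outgoing-flow sum $\sum_{a\in A}\bx'(a)\from_a(p)$ collects contributions from transitions in $T_n$ sourced at $p$ (each contributing $\bx(t)$) and, if $p\in Q_p$, from the element $a=p$ itself (contributing $\bx'(p)=\sum_{t\in\tout(p)}\bx(t)$); in either case this equals $\sum_{t\in\tout(p)}\bx(t)$. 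A symmetric calculation shows $\sum_{a\in A}\bx'(a)\next_a(p)=\sum_{t\in\tin(p)}\bx(t)$, using $P(t)\bx'(p')=\bx(t)$ for $t\in\tout(p')\cap\tin(p)$, $p'\in Q_p$.

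Third I would verify the (II) vs.~(II') correspondence, which is simpler since $\by,\bz$ are identical. Each inequality of (II') is indexed by $a\in A$, and splitting into the two cases gives exactly the two families of inequalities in (II): for $a=(p,\bu,q)\in T_n$, plugging in $\from_a=\mathbf{1}_p$, $\next_a=\mathbf{1}_q$, $\effect_a=\bu$ reduces (II') to the per-transition inequality of (II); for $a=p\in Q_p$, expanding $\next_p$ and $\effect_p$ and using $\sum_{t\in\tout(p)}P(t)=1$ produces the expected-change inequality of (II) verbatim. The main obstacle is essentially bookkeeping: keeping track of which contributions of (I) belong to $T_n$ versus $T\setminus T_n$, and ensuring that aggregated contributions in (I') coming from $p\in Q_p$ correctly expand into the per-transition contributions required by (I). Once this splitting is organized, both directions of the biconditional follow by reading the same calculations forward and backward.
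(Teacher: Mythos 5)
Your proposal is correct and follows essentially the same route as the paper: a direct constraint-by-constraint verification that the given $\bx\leftrightarrow\bx'$ conversions (with $\by,\bz$ fixed) translate the effect constraint, the flow constraints, and the per-$a$ inequalities of (II') into those of (I) and (II) and back. The paper does not explicitly package the two conversions as mutually inverse maps, but your bijection framing is just a tidy reorganization of the same calculations, not a different argument.
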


\begin{proof}
	\textbf{The first half (I):} Let \(\bx' \) be a solution of (I'), we will show that \(\bx\) is a solution to (I), where \(\bx(t)=\bx'(t) \) for \(t\in T_n \), and \(\bx((p,\bu,q))=P((p,\bu,q))\bx'(p) \) for \((p,\bu,q)\in T\setminus T_n \).
	
	It holds from (I') that
	 \begin{align*}
		&\sum_{a \in A} \bx'(a) \effect_a 
		=
		 \sum_{t \in T_n} \bx'(t) \effect_t + \sum_{p \in Q_p} \bx'(p) \effect_p 
		 =\\&=
		  \sum_{t \in T_n} \bx'(t) \bu_t + \sum_{p \in Q_p} \bx'(p) (\sum_{(p,\bu,q)\in \tout(p)} P((p,\bu,q))\bu)
		  \\&=
		  \sum_{t \in T_n} \bx(t) \bu_t + \sum_{p \in Q_p} \sum_{(p,\bu,q)\in \tout(p)}\bx'(p) P((p,\bu,q))\bu
		  \\&=
		  \sum_{t \in T_n} \bx(t) \bu_t + \sum_{p \in Q_p} \sum_{(p,\bu,q)\in \tout(p)}\bx((p,\bu,q))\bu
		  \\&=
		  \sum_{t \in T} \bx(t) \bu_t \geq \vec{0}
	\end{align*}

	\(\bx\geq 0 \) holds from both \(\bx'\geq 0 \) and \(P(t)\geq 0 \) for each \(t\in T\setminus T_n\).
	
	For each \(p\in Q\) it holds from (I')
	
	\begin{align*}
		&\sum_{a\in A}(\bx'(a)\next_a(p)-\bx'(a)\from_a(p))
		=\\&=
		 \sum_{a\in T_n}(\bx'(a)\next_a(p)-\bx'(a)\from_a(p)) +
		 \sum_{a\in Q_r}(\bx'(a)\next_a(p)-\bx'(a)\from_a(p))
		 \\&=
		 \sum_{t\in \tin(p)\cap T_n}\bx'(t) - \sum_{t\in \tout(p)\cap T_n}\bx'(t) +
		 \sum_{a\in Q_r}\bx'(a)(\sum_{t\in \tout(a)\cap \tin(p)} P(t))- \bx'(p)
		 \\&=
		 \sum_{t\in \tin(p)\cap T_n}\bx(t) - \sum_{t\in \tout(p)\cap T_n}\bx(t) +
		 \sum_{a\in Q_r}\sum_{t\in \tout(a)\cap \tin(p)} \bx'(a)P(t)- \sum_{t\in \tout(p) }P(t)\bx'(p)
		 \\&=
		 \sum_{t\in \tin(p)\cap T_n}\bx(t) - \sum_{t\in \tout(p)\cap T_n}\bx(t) +
		\sum_{a\in Q_r}\sum_{t\in \tout(a)\cap \tin(p)} \bx(t)- \sum_{t\in \tout(p) }\bx(t)	
		\\&=
		 \sum_{t\in \tin(p)}\bx(t) - \sum_{t\in \tout(p)}\bx(t) =0			 
	\end{align*}
	
	And for each \(p\in Q_p, t\in \tout(p) \) it holds  \(\sum_{t'\in\tout(p)}\bx(t')=\sum_{t'\in\tout(p)}P(t')\bx'(p)=\bx'(p) \), therefore it holds \(\bx(t)=P(t)\bx'(p)= P(t)\sum_{t'\in\tout(p)}x(t') \).
	
	Thus \(\bx \) is a solution to (I).
	
	\textbf{The first half (II):} Let \(\by,\bz \) be a solution of (II'), we will show it is also a solution of (II).
	
	For each \(a=(p,\bu,q)\in T_n\) it holds from (II') 
	\begin{align*}
		\sum_{p'\in Q} (\bz(p')(\next_a(p') - \bz(p')(\from_a(p'))) + \sum_{i=1}^d \effect_a(i)\by(i)
		=
		\bz(q) - \bz(p) + \sum_{i=1}^d \bu(i)\by(i)		
		\leq
		 0
	\end{align*}
And for each \(a=p\in Q_p \) it holds from (II')
\begin{align*}
	&\sum_{q\in Q} (\bz(q)\next_a(q) - \bz(q)\from_a(q))) + \sum_{i=1}^d \effect_a(i)\by(i)
	=\\&=
	\sum_{q\in Q} (\bz(q)(\sum_{t\in \tout(p)\cap \tin(q)} P(t))) - \bz(p) + \sum_{i=1}^d \effect_a(i)\by(i)
	\\&=
	\sum_{t\in \tout(p)} \bz(q)P(t) - \sum_{t\in \tout(p)} \bz(p)P(t) + \sum_{i=1}^d \effect_a(i)\by(i)
	\\&=
	\sum_{t\in \tout(p)} P(t)(\bz(q) - \bz(p)) + \sum_{i=1}^d \sum_{t\in \tout(p)} P(t)\bu_t(i)\by(i)
	\\&=
	\sum_{t\in \tout(p)} P(t)(\bz(q) - \bz(p)) + \sum_{t\in \tout(p)}\sum_{i=1}^d  P(t)\bu_t(i)\by(i)
	\\&=
	\sum_{t\in \tout(p)} P(t)\big( \bz(q) - \bz(p) + \sum_{i=1}^d \bu_t(i)\by(i) \big)	
		\leq
	0
\end{align*}
Therefore \(\by,\bz \) is a solution of (II).

	\textbf{The second half (I'):} Let \(\bx\) be a solution of (I), we will show that \(\bx' \) is a solution of (I'), where \(\bx'(t)=\bx(t) \) for \(t\in T_n \), and \(\bx'(p)=\sum_{t\in \tout(p)} \bx(t) \) for \(p\in Q_p\).
	
	From (I) it holds  for \(T_p=\bigcup_{p\in Q_p}\tout(p) \)	\begin{align*}
		&\sum_{t \in T} \bx(t) \bu_t 
		=\\&=
		\sum_{t \in T_n} \bx(t) \bu_t + \sum_{t \in T_p} \bx(t) \bu_t 
		\\&=
		\sum_{t \in T_n} \bx'(t) \bu_t + \sum_{(p,\bu,q) \in T_p} \bu P((p,\bu,q))\cdot \big(\sum_{t'\in \tout(p)} \bx(t')\big) 
		\\&=
		\sum_{t \in T_n} \bx'(t) \bu_t + \sum_{(p,\bu,q) \in T_p} \bu P((p,\bu,q)) \bx'(p) 
		\\&=
		\sum_{t \in T_n} \bx'(t) \bu_t + \sum_{p\in Q_p} \sum_{(p,\bu,q) \in \tout(p)} \bu P((p,\bu,q)) \bx'(p)
		\\&=
		\sum_{t \in T_n} \bx'(t)\cdot \effect_t + \sum_{p\in Q_p} \bx'(p)\cdot \effect_p
		\\&=
		\sum_{a \in A} \bx'(a)\cdot \effect_a
		\geq 
		\vec{0}		 		 
	\end{align*}
	We get \(\bx'\geq 0 \) trivially from (I).

It also holds
\begin{align*}
	&\sum_{a\in A}(\bx'(a)\next_a(p)-\bx'(a)\from_a(p))
	=\\&=
	\sum_{t\in T_n}(\bx'(t)\next_t(p)-\bx'(t)\from_t(p))+
	\sum_{q\in Q_p}(\bx'(q)\next_q(p)-\bx'(q)\from_q(p))
	\\&=
	\sum_{t\in \tin(p)\cap T_n}\bx'(t)
	-\sum_{t\in \tout(p)\cap T_n}\bx'(t)
	+\sum_{q\in Q_p} \bx'(q)(\sum_{t\in \tout(q)\cap \tin(p)} P(t))
	-\sum_{q\in Q_p\cap \{p \}} \bx'(q)
	\\&=
	\sum_{t\in \tin(p)\cap T_n}\bx'(t)
	-\sum_{t\in \tout(p)\cap T_n}\bx'(t)
	+\sum_{q\in Q_p} \sum_{t\in \tout(q)\cap \tin(p)} P(t)\bx'(q)
	-\sum_{q\in Q_p\cap \{p \}} \bx'(q)
	\\&=
	\sum_{t\in \tin(p)\cap T_n}\bx(t)
	-\sum_{t\in \tout(p)\cap T_n}\bx(t)
	+\sum_{q\in Q_p} \sum_{t\in \tout(q)\cap \tin(p)} P(t)\cdot (\sum_{t'\in \tout(q)} \bx(t'))
	-\\&-\sum_{q\in Q_p\cap \{p \}} \sum_{t\in \tout(q)} \bx(t)
	\\&=
	\sum_{t\in \tin(p)\cap T_n}\bx(t)
	-\sum_{t\in \tout(p)\cap T_n}\bx(t)
	+\sum_{q\in Q_p} \sum_{t\in \tout(q)\cap \tin(p)} \bx(t)
	-\sum_{q\in Q_p\cap \{p \}} \sum_{t\in \tout(q)} \bx(t)
	\\&=
	\sum_{t\in \tin(p)\cap T_n}\bx(t)
	-\sum_{t\in \tout(p)\cap T_n}\bx(t)
	+\sum_{t\in T_p\cap \tin(p)} \bx(t)
	-\sum_{q\in Q_p\cap \{p \}} \sum_{t\in \tout(q)} \bx(t)
\end{align*}

If \(p\in Q_n\), then this becomes 
\begin{align*}
	&\sum_{t\in \tin(p)\cap T_n}\bx(t)
	-\sum_{t\in \tout(p)\cap T_n}\bx(t)
	+\sum_{t\in T_p\cap \tin(p)} \bx(t)
	-\sum_{q\in Q_p\cap \{p \}} \sum_{t\in \tout(q)} \bx(t)
	=\\&=
	\sum_{t\in \tin(p)\cap T_n}\bx(t)
	-\sum_{t\in \tout(p)\cap T_n}\bx(t)
	+\sum_{t\in T_p\cap \tin(p)} \bx(t)
	-0
	=\\&=
	\sum_{t\in \tin(p)}\bx(t)
	-\sum_{t\in \tout(p)\cap T_n}\bx(t)
	=
	\sum_{t\in \tin(p)}\bx(t)
	-\sum_{t\in \tout(p)}\bx(t)
	=
	0
\end{align*}
with the last line being from (I). 
And if \(p\in Q_p\), then it becomes 
\begin{align*}
	&\sum_{t\in \tin(p)\cap T_n}\bx(t)
	-0
	+\sum_{t\in T_p\cap \tin(p)} \bx(t)
	-\sum_{q\in Q_p\cap \{p \}} \sum_{t\in \tout(q)} \bx(t)
	=\\&=
	\sum_{t\in \tin(p)\cap T_n}\bx(t)
	+\sum_{t\in T_p\cap \tin(p)} \bx(t)
	-\sum_{t\in \tout(p)} \bx(t)
	=\\&=
	\sum_{t\in \tin(p)}\bx(t)
	-\sum_{t\in \tout(p)} \bx(t)
	=
	0	
\end{align*}
with the last line being from (I). Therefore \(\bx' \) is a solution of (I')
	
%
	
	\textbf{The second half (II'):} Let \(\by,\bz \) be a solution of (II) we will show that \(\by,\bz \) is also a solution of (II').

	From (II) we have for each \(t=(p,\bu,q)\in T_n \) that 
	\begin{align*}
		&\bz(q)-\bz(p)+\sum_{i=1}^{d} \bu(i)\by(i)
		=\\&=
		\bz(q)\cdot \next_t(q)-\bz(p)\cdot \from_t(p)+\sum_{i=1}^{d} \effect_t(i)\by(i)
		=\\&= \sum_{r\in Q }\bz(r)\cdot \next_t(r)-\bz(r)\cdot \from_t(r))+\sum_{i=1}^{d} \effect_t(i)\by(i)
		\leq 0
	\end{align*}
	Where we used that \(\next_t(r)=0 \) for every \(r\neq q \), and \(\from_t(r)=0 \) for every \(r\neq p \).
	
	Additionally, From (II) we also have for each \(p\in Q_p \) that 
	\begin{align*}
		&\sum_{t= (p,\bu,q) \in \tout(p)}P(t)\big(\bz(q)-\bz(p)+\sum_{i=1}^{d}  \bu_t(i)\by(i)\big)
		=\\&=
		-\bz(p)+\sum_{t= (p,\bu,q) \in \tout(p)}P(t)\big(\bz(q)+\sum_{i=1}^{d}  \bu_t(i)\by(i)\big)
		=\\&=
		-\bz(p)+\sum_{q\in Q} \sum_{t\in \tout(p)\cap \tin(q)}P(t)\big(\bz(q)+\sum_{i=1}^{d}  \bu_t(i)\by(i)\big)
		=\\&=
		-\bz(p)+\sum_{q\in Q} \sum_{t \in \tout(p)\cap \tin(q)}P(t)\bz(q)+ \sum_{t\in \tout(p)}P(t) \big( \sum_{i=1}^{d}  \bu_t(i)\by(i)\big)
		=\\&=
		-\bz(p)+\sum_{q\in Q}\bz(q) \sum_{t\in \tout(p)\cap \tin(q)}P(t) + \sum_{i=1}^{d} \by(i)\cdot(\sum_{t \in \tout(p)}P(t) \bu_t(i))
		=\\&=
		-\bz(p)\cdot \from_p(p)+\sum_{q\in Q}\bz(q)\cdot \next_p(q)  + \sum_{i=1}^{d} \by(i)\cdot \effect_p (i)
		=\\&=
		\sum_{q\in Q}(\bz(q)\cdot \next_p(q)-\bz(q)\cdot \from_p(q))  + \sum_{i=1}^{d} \by(i)\cdot \effect_p (i)
		\leq 0
	\end{align*}
	Therefore \(\by,\bz\) is also a solution to (II')
	
\end{proof}

We will now rewrite the constraint systems (I') and (II') into matrix form. Let \(D\) be a \(A\times \{1,\dots, d \} \) matrix whose columns are indexed by elements of \(A \), and rows indexed by counters \(c\in \{1,\dots,d\} \), such that the column \(D(a)=\effect_a \). And let \(F \) be a \(A\times Q \) matrix, whose columns are indexed by elements of \(A \), and rows are indexed by states \(p\in Q \), such that the column \(F(a) \) is equal to the vector \(\bw \) such that \(\bw(p)=\next_a(p)-\from_a(p) \) for each \(p\in Q\).

Then we can further rewrite the systems (I') and (II') as follows:

%
%

\vspace{0.7cm}
\begin{tabular}{|c|c|}
	\hline
	{\begin{minipage}[c]{0.45\linewidth}
			\vspace{0.2cm}
			constraint system (I'):
			
			\vspace{0.2cm}
			Find $\bx' \in \mathbb{Z}^{A}$ such that
			\begin{align}
				D \bx' & \ge \vec{0} \nonumber\\
				\bx' & \ge \vec{0} \nonumber\\
				F \bx' & = \vec{0} \nonumber
			\end{align}
			
			\vspace{0.2cm}
	\end{minipage}}
	&
	{\begin{minipage}[c]{0.45\linewidth}
			\vspace{0.2cm}
			constraint system (II'):
			
			\vspace{0.2cm}
			Find $\by \in \mathbb{Z}^d,\bz \in \mathbb{Z}^{Q}$ with
			\begin{align}
				\by & \ge  \vec{0} \nonumber\\
				\bz & \ge  \vec{0} \nonumber\\
				F^T \bz+ D^T \by  & \le \vec{0} \nonumber
			\end{align}

			\vspace{0.2cm}
	\end{minipage}}\\
	\hline
\end{tabular}
\vspace{0.7cm}

The rest then follows exactly the same as the the proof of the dichotomy on non-stochastic VASS in \cite{Zuleger:VASS-polynomial} (Lemma~4), as  the only difference between our systems and the ones used in \cite{Zuleger:VASS-polynomial} is that the matrix \(F \) now also may contain rational numbers other than \(-1,0,1 \). The proof in \cite{Zuleger:VASS-polynomial} is already made over \(\mathbb{Z} \), and the only additional requirement it needs is that each column of \(F \) sums up to \(0 \), which is satisfied also by our \(F \). 

\subsection{The proof from \cite{Zuleger:VASS-polynomial} (Lemma~4)}
\begin{disclaimer*}
For the sake of completeness we include a copy of the proof from \cite{Zuleger:VASS-polynomial} (Lemma~4). All credit for the proof in this subsection goes to the author of \cite{Zuleger:VASS-polynomial}. The only changes we made was to rename some variables.
	\end{disclaimer*}

The proof will be obtained by two applications of Farkas' Lemma.
We will employ the following version of Farkas' Lemma, which states that for matrices $A$,$C$ and vectors $b$,$d$, exactly one of the following statements is true:

\vspace{0.3cm}
\begin{tabular}{|c|c|}
	\hline
	\begin{minipage}[c]{0.4\linewidth}
			
			\vspace{0.2cm}
			there exists $x$ with
			
			\vspace{0.2cm}
			$\begin{array}{rcr}
					Ax & \ge & b \\
					Cx & = & d
				\end{array}$
			
		\end{minipage}
	&
	\begin{minipage}[c]{0.5\linewidth}
			\vspace{0.2cm}
			
			\vspace{0.2cm}
			there exist $y,z$ with
			
			\vspace{0.2cm}
			$\begin{array}{rcr}
					y & \ge & 0 \\
					A^T y + C^T z & = & 0 \\
					b^T y + d^T z & > & 0
				\end{array}$
			\vspace{0.2cm}
		\end{minipage} \\
	
	\hline
\end{tabular}
\vspace{0.2cm}

We now consider the constraint systems~($\conSysA_\transition$) and~($\conSysB_\transition$) stated below.
Both constraint systems are parameterized by $a \in A$ (we note that only Equations (\ref{multcycle:eq4}) and (\ref{multcycle:eq5}) are parameterized by $a$).

\vspace{0.3cm}
\begin{tabular}{|c|c|}
	\hline
	{\begin{minipage}[c]{0.4\linewidth}
					\vspace{0.2cm}
					constraint system ($\conSysA_\transition$):
					
					\vspace{0.2cm}
					there exists $\counters \in \mathbb{Z}^{A}$ with
					\begin{align}
							U \counters & \ge 0 \nonumber\\
							\counters & \ge 0 \nonumber\\
							\flowMatrix \counters & = 0 \nonumber\\
							\counters(\transition) & \ge 1 \label{multcycle:eq4}
						\end{align}
					\vspace{-0.5cm}
			\end{minipage}}
	&
	{\begin{minipage}[c]{0.52\linewidth}
					constraint system ($\conSysB_\transition$):
					
					\vspace{0.2cm}
					there exist\\
					$\rankCoeff \in \mathbb{Z}^\vars,\offsets \in \mathbb{Z}^{\states(\vass)}$ with
					\begin{align}
							\rankCoeff & \ge  0 \nonumber\\
							\offsets & \ge  0 \nonumber\\
							\updates^T \rankCoeff + \flowMatrix^T \offsets & \le 0 \text{ with } < 0 \text{ in line } \transition \label{multcycle:eq5}
						\end{align}
			\end{minipage}}\\
	\hline
\end{tabular}
\vspace{0.3cm}

We recognize constraint system~($\conSysA_\transition$) as the dual of constraint system~($\conSysB_\transition$)
in the following Lemma:

\begin{lemma}
	\label{lem:ranking-or-witness}
	Exactly one of the constraint systems ($\conSysA_\transition$) and ($\conSysB_\transition$) has a solution.
\end{lemma}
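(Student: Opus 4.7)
The plan is to derive this lemma from one application of the stated form of Farkas' Lemma, after putting $(\conSysA_\transition)$ into the canonical shape $\tilde{A}\counters \geq \tilde{b},\ \flowMatrix \counters = \vec{0}$. I collect the inequalities $\updates \counters \geq \vec{0}$, $\counters \geq \vec{0}$, and $\counters(\transition) \geq 1$ into one block inequality by stacking $\updates$, the identity on $\mathbb{Q}^A$, and the row vector $e_\transition^T$, with right-hand side whose only nonzero coordinate is a $1$ in the row indexed by $\transition$; the equality block is $\flowMatrix \counters = \vec{0}$. Applying Farkas, infeasibility of $(\conSysA_\transition)$ delivers dual multipliers $\rankCoeff \geq \vec{0}$ for $\updates \counters \geq \vec{0}$, a nonnegative vector $\mathbf{s} \in \mathbb{Q}^A$ for $\counters \geq \vec{0}$, a scalar $\lambda \geq 0$ for $\counters(\transition) \geq 1$, and an unconstrained $z \in \mathbb{Q}^{Q}$ for the equality, satisfying $\updates^T \rankCoeff + \mathbf{s} + e_\transition \lambda + \flowMatrix^T z = \vec{0}$ together with $\lambda > 0$. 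Eliminating $\mathbf{s} = -\updates^T \rankCoeff - e_\transition \lambda - \flowMatrix^T z \geq \vec{0}$ and reading coordinate-wise yields $\updates^T \rankCoeff + \flowMatrix^T z \leq \vec{0}$ with strict inequality in the $\transition$-component, i.e.\ exactly the inequalities of $(\conSysB_\transition)$ apart from the sign of $z$.

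The main obstacle is recovering the constraint $\offsets \geq \vec{0}$ demanded by $(\conSysB_\transition)$, since the Farkas dual variable attached to an equality is free. I will exploit the structural fact that every column of $\flowMatrix$ sums to zero: the $a$-th column is $\next_a - \from_a$ where both $\next_a$ and $\from_a$ are probability distributions on $Q$, so $\flowMatrix^T \oneVec = \vec{0}$. Hence $z$ may be replaced by $z + c\,\oneVec$ for any $c \in \mathbb{Q}$ without altering $\flowMatrix^T z$, and choosing $c$ sufficiently large makes $\offsets := z + c\,\oneVec$ nonnegative. The pair $(\rankCoeff, \offsets)$ is then a solution of $(\conSysB_\transition)$.

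Two minor points remain. First, Farkas is applied over $\mathbb{Q}$, whereas both systems ask for integer solutions; this is harmless because clearing denominators in a rational solution of $(\conSysA_\transition)$ by multiplying by a positive integer only enlarges $\counters(\transition)$, so $\counters(\transition) \geq 1$ is preserved, and similarly for $(\conSysB_\transition)$, where scaling preserves nonnegativity and the strict inequality in row $\transition$. Second, the ``at most one'' direction is immediate: given solutions of both systems, the quantity $\counters^T(\updates^T \rankCoeff + \flowMatrix^T \offsets) = \rankCoeff^T \updates \counters + \offsets^T \flowMatrix \counters$ is nonnegative, since $\updates \counters \geq \vec{0}$, $\rankCoeff \geq \vec{0}$, and $\flowMatrix \counters = \vec{0}$, yet it is strictly negative because $\counters \geq \vec{0}$, $\counters(\transition) \geq 1$, and the row-$\transition$ entry of $\updates^T \rankCoeff + \flowMatrix^T \offsets$ is strictly negative. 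This contradiction closes the argument.
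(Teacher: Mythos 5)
Your proof is correct and follows essentially the same route as the paper: a single application of Farkas' Lemma to $(\conSysA_\transition)$ rewritten in block form, followed by the shift $\offsets \mapsto \offsets + c\,\oneVec$ (valid because every column of $\flowMatrix$ sums to zero, i.e.\ $\flowMatrix^T\oneVec = \vec{0}$) to make the free equality-multiplier nonnegative. The paper merges $\counters \geq \vec{0}$ and $\counters(\transition) \geq 1$ into the single block $\counters \geq \character_\transition$ rather than stacking a separate row $e_\transition^T$, but this is only cosmetic; your explicit handling of rational-to-integer rescaling and of the ``at most one'' direction supplies details the paper leaves implicit in the Farkas dichotomy.
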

\begin{proof}
	We fix some $a\in A$.
	We denote by $\character_a \in \mathbb{Z}^{A}$ the vector with $\character_a(a') = 1$, if $a' = a$, and $\character_a(a') = 0$, otherwise.
	Using this notation we rewrite ($\conSysA_\transition$) to the equivalent constraint system ($\conSysA_\transition'$):
	
	\vspace{0.2cm}
	\begin{tabular}{|lc|}
			\hline
			{\begin{minipage}[r]{0.3\linewidth}
							constraint system ($\conSysA_\transition'$):
							\vspace{0.8cm}
					\end{minipage}}
			&
			{\begin{minipage}[r]{0.3\linewidth}
							\vspace{-0.2cm}
							\begin{eqnarray*}
									\begin{pmatrix}
											\updates \\
											\identity
										\end{pmatrix}
									\counters & \ge &
									\begin{pmatrix}
											0 \\
											\character_\transition
										\end{pmatrix}\\
									\flowMatrix \counters & = & 0
								\end{eqnarray*}
							\vspace{-0.6cm}
					\end{minipage}}\\
			\hline
		\end{tabular}
	\vspace{0.2cm}
	
	Using Farkas' Lemma, we see that either ($\conSysA_\transition'$) is satisfiable or the following constraint system ($\conSysB_\transition'$) is satisfiable:
	
	\vspace{0.3cm}
	\begin{tabular}{|r|c|}
			\hline
			{\begin{minipage}[r]{0.47\linewidth}
							\vspace{0.2cm}
							\hspace{-0.15cm}
							constraint system ($\conSysB_\transition'$):
							
							\vspace{-0.7cm}
							\begin{eqnarray*}
									\begin{pmatrix}
											\rankCoeff \\
											k
										\end{pmatrix} & \ge & 0 \\
									\begin{pmatrix}
											\updates \\
											\identity
										\end{pmatrix}^T
									\begin{pmatrix}
											\rankCoeff \\
											k
										\end{pmatrix} + \flowMatrix^T \offsets & = & 0 \\
									\begin{pmatrix}
											0 \\
											\character_\transition
										\end{pmatrix}^T
									\begin{pmatrix}
											\rankCoeff \\
											k
										\end{pmatrix} + 0^T \offsets & > & 0
								\end{eqnarray*}
							\vspace{-0.3cm}
					\end{minipage}}
			&
			{\begin{minipage}[c]{0.45\linewidth}
							constraint system ($\conSysB_\transition'$) simplified:
							\begin{eqnarray*}
									\rankCoeff & \ge & 0 \\
									k & \ge & 0 \\
									\updates^T \rankCoeff + k + \flowMatrix^T \offsets & = & 0 \\
									k(\transition) & > & 0
								\end{eqnarray*}
					\end{minipage}}\\
			\hline
		\end{tabular}
	\vspace{0.3cm}
	
	We observe that solutions of constraint system ($\conSysB_\transition'$) are invariant under shifts of $\offsets$, i.e, if $\rankCoeff$, $k$, $\offsets$ is a solution, then $\rankCoeff$, $k$, $\offsets + c \cdot \oneVec$ is also a solution for all $c \in \mathbb{Z}$  (because elements of every row of $\flowMatrix^T$ sum up to \(0\)).
	Hence, we can force $\offsets$ to be non-negative.
	We recognize that constraint systems ($\conSysB_\transition'$) and ($\conSysB_\transition$) are equivalent. \qed
\end{proof}

We now consider the constraint systems ($\conSysC_\var$) and ($\conSysD_\var$) stated below.
Both constraint systems are parameterized by a counter $\var $ (we note that only Equations (\ref{multcycle:eq6}) and (\ref{multcycle:eq7}) are parameterized by $\var$).

\vspace{0.3cm}
\begin{tabular}{|c|c|}
	\hline
	{\begin{minipage}[c]{0.42\linewidth}
					\vspace{0.2cm}
					constraint system ($\conSysC_\var$):
					
					\vspace{0.2cm}
					there exists $\counters \in \mathbb{Z}^{A}$ with
					\begin{align}
							\updates \counters & \ge 0 \text{ with } \ge 1 \text{ in line } \var \label{multcycle:eq6}\\
							\counters & \ge 0 \nonumber\\
							\flowMatrix \counters & = 0 \nonumber
						\end{align}
					\vspace{-0.5cm}
			\end{minipage}}
	&
	{\begin{minipage}[c]{0.5\linewidth}
					\vspace{0.2cm}
					constraint system ($\conSysD_\var$):
					
					\vspace{0.2cm}
					there exist
					$\rankCoeff \in \mathbb{Z}^\vars,\offsets \in \mathbb{Z}^{\states(\vass)}$ with
					\begin{align}
							\rankCoeff & \ge  0 \nonumber\\
							\offsets & \ge  0 \nonumber\\
							\updates^T \rankCoeff + \flowMatrix^T \offsets & \le 0 \nonumber\\
							\rankCoeff(\var) & > 0 \label{multcycle:eq7}
						\end{align}
					\vspace{-0.5cm}
			\end{minipage}}\\
	\hline
\end{tabular}
\vspace{0.3cm}

We recognize constraint system ($\conSysC_\var$) as the dual of constraint system ($\conSysD_\var$) in the following Lemma:

\begin{lemma}
	\label{lem:ranking-or-witness-two}
	Exactly one of the constraint systems ($\conSysC_\var$) and ($\conSysD_\var$) has a solution.
\end{lemma}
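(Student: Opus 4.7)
The plan is to mirror the proof of Lemma~\ref{lem:ranking-or-witness} essentially verbatim, with the characteristic vector moved from the transition-indexed side to the counter-indexed side. The only real change is that the ``special'' $\geq 1$ slot now appears inside the $\updates\counters \geq 0$ block instead of inside the $\counters \geq 0$ block.

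First I would rewrite $(\conSysC_\var)$ in an equivalent matrix form using the characteristic vector $\character_\var \in \mathbb{Z}^{\vars}$ with $\character_\var(\var)=1$ and $\character_\var(\var')=0$ for $\var'\neq \var$, namely
\[
\begin{pmatrix} \updates \\ \identity \end{pmatrix} \counters \;\geq\; \begin{pmatrix} \character_\var \\ 0 \end{pmatrix}, \qquad \flowMatrix \counters \;=\; 0.
\]
Then I would apply the stated version of Farkas' Lemma with $A = \bigl(\begin{smallmatrix}\updates\\ \identity\end{smallmatrix}\bigr)$, $b = \bigl(\begin{smallmatrix}\character_\var\\ 0\end{smallmatrix}\bigr)$, $C = \flowMatrix$, $d = 0$. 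This yields that $(\conSysC_\var)$ is unsatisfiable iff there exist vectors $\rankCoeff \geq 0$, $k \geq 0$ and $\offsets$ (unrestricted in sign) such that
\[
\updates^T \rankCoeff + k + \flowMatrix^T \offsets \;=\; 0, \qquad \character_\var^T \rankCoeff \;>\; 0,
\]
i.e., such that $\updates^T \rankCoeff + \flowMatrix^T \offsets = -k \leq 0$ and $\rankCoeff(\var) > 0$.

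The remaining step is to argue that $\offsets$ may be taken non-negative. As noted in the proof of Lemma~\ref{lem:ranking-or-witness}, every column of $\flowMatrix$ sums to zero (each $\next_a$ and $\from_a$ is a probability distribution, so each column of $\flowMatrix$ has coordinate sum $0$), hence every row of $\flowMatrix^T$ sums to zero. Therefore the system above is invariant under the shift $\offsets \mapsto \offsets + c \cdot \oneVec$ for any $c \in \mathbb{Z}$, and we may choose $c$ large enough so that $\offsets \geq 0$. After this shift, the system coincides exactly with $(\conSysD_\var)$, establishing that $(\conSysC_\var)$ is unsatisfiable iff $(\conSysD_\var)$ is satisfiable.

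There is no real obstacle here beyond bookkeeping: the argument is an almost direct transcription of the proof of Lemma~\ref{lem:ranking-or-witness}, the only subtlety being to place the characteristic vector on the $\updates$ block rather than on the identity block, and to reuse the column-sum-zero property of $\flowMatrix$ for the shift argument that eliminates the sign constraint on $\offsets$.
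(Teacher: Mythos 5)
Your proposal is correct and mirrors the paper's own proof exactly: rewrite $(\conSysC_\var)$ in matrix form with the characteristic vector on the $\updates$ block, apply the stated Farkas alternative, and use the column-sum-zero property of $\flowMatrix$ to shift $\offsets$ into the non-negative orthant. The paper's proof is indeed a near-verbatim adaptation of the proof of Lemma~\ref{lem:ranking-or-witness} with precisely the change you identify, so there is nothing to add.
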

\begin{proof}
	We fix some counter~$\var$.
	We denote by $\character_\var \in \mathbb{Z}^{\vars}$ the vector with $\character_\var(\var') = 1$, if $\var' = \var$, and $\character_\var(\var') = 0$, otherwise.
	Using this notation we rewrite ($\conSysA_\var$) to the equivalent constraint system ($\conSysA_\var'$):
	
	\vspace{0.2cm}
	\begin{tabular}{|lc|}
			\hline
			{\begin{minipage}[r]{0.3\linewidth}
							constraint system ($\conSysC_\var'$):
							\vspace{0.8cm}
					\end{minipage}}
			&
			{\begin{minipage}[r]{0.3\linewidth}
							\vspace{-0.2cm}
							\begin{eqnarray*}
									\begin{pmatrix}
											\updates \\
											\identity
										\end{pmatrix}
									\counters & \ge &
									\begin{pmatrix}
											\character_\var \\
											0
										\end{pmatrix}\\
									\flowMatrix \counters & = & 0
								\end{eqnarray*}
							\vspace{-0.6cm}
					\end{minipage}}\\
			\hline
		\end{tabular}
	\vspace{0.2cm}
	
	Using Farkas' Lemma, we see that either ($\conSysC_\var'$) is satisfiable or the following constraint system ($\conSysD_\var'$) is satisfiable:
	
	\vspace{0.3cm}
	\begin{tabular}{|r|c|}
			\hline
			{\begin{minipage}[r]{0.47\linewidth}
							\vspace{0.2cm}
							\hspace{-0.15cm}
							constraint system ($\conSysD_\var'$):
							
							\vspace{-0.7cm}
							\begin{eqnarray*}
									\begin{pmatrix}
											\rankCoeff \\
											k
										\end{pmatrix} & \ge & 0 \\
									\begin{pmatrix}
											\updates \\
											\identity
										\end{pmatrix}^T
									\begin{pmatrix}
											\rankCoeff \\
											k
										\end{pmatrix} + \flowMatrix^T \offsets & = & 0 \\
									\begin{pmatrix}
											\character_\var \\
											0
										\end{pmatrix}^T
									\begin{pmatrix}
											\rankCoeff \\
											k
										\end{pmatrix} + 0^T \offsets & > & 0
								\end{eqnarray*}
							\vspace{-0.3cm}
					\end{minipage}}
			&
			{\begin{minipage}[c]{0.45\linewidth}
							constraint system ($\conSysB_\transition'$) simplified:
							\begin{eqnarray*}
									\rankCoeff & \ge & 0 \\
									k & \ge & 0 \\
									\updates^T \rankCoeff + k + \flowMatrix^T \offsets & = & 0 \\
									\rankCoeff(\var) & > & 0
								\end{eqnarray*}
					\end{minipage}}\\
			\hline
		\end{tabular}
	\vspace{0.3cm}
	
	We observe that solutions of constraint system ($\conSysD_\var'$) are invariant under shifts of $\offsets$, i.e, if $\rankCoeff$, $k$, $\offsets$ is a solution, then $\rankCoeff$, $k$, $\offsets + c \cdot \oneVec$ is also a solution for all $c \in \mathbb{Z}$  (because elements of every row of $\flowMatrix^T$ sum up to \(0\)).
	Hence, we can force $\offsets$ to be non-negative.
	We recognize that constraint systems ($\conSysD_\var'$) and ($\conSysD_\var$) are equivalent. \qed
\end{proof}

\subsection{Proof of Lemma~\ref{lemma-linear-upper-bound}}
\label{app-lem:linear-upper-bound-proof}

\begin{lemma*}[\textbf{\ref{lemma-linear-upper-bound}}]
	For every counter $c$ such that $\by(c)>0$, every $\varepsilon >0$, every $p\in Q$, and every $\sigma \in \Sigma$,  there exists $n_0$ such that for all $n\geq n_0$ we have that $\prob^{\sigma}_{p \bn}(\calC[c] \geq n^{1+\varepsilon} )\leq kn^{-\varepsilon}$ where~$k$ is a constant depending only on~$\A$. 
\end{lemma*}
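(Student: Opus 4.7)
The plan is to interpret the solution $\by,\bz$ of system~(II) as a ranking function that behaves as a supermartingale under any strategy, and then invoke the maximal inequality for non-negative supermartingales. Define $\mathit{rank}(p\bv) = \bz(p) + \sum_{j=1}^{d}\bv(j)\by(j)$ and, for a computation $\pi = p_0\bv_0, p_1\bv_1,\dots$ produced under $\sigma$ starting at $p\bn$, let $X_i = \mathit{rank}(p_i\bv_i)$. First I would verify the supermartingale property. At a non-deterministic state $p_i \in Q_n$, the first constraint of~(II) guarantees $\bz(q)-\bz(p_i)+\sum_j \bu(j)\by(j)\leq 0$ for every outgoing transition $(p_i,\bu,q)$, so $X_{i+1}\leq X_i$ holds deterministically regardless of how $\sigma$ randomizes its choice. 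At a probabilistic state $p_i \in Q_p$, the second constraint of~(II) directly gives $\mathbb{E}[X_{i+1}\mid \mathcal{F}_i]\leq X_i$. Hence the stopped process $X_{i\wedge \tau}$, where $\tau = \Term(\pi)$, is a supermartingale.

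The main technical obstacle is that Doob's maximal inequality requires a non-negative process, yet $X_\tau$ can be negative because one coordinate of $\bv_\tau$ becomes negative at termination. To resolve this, let $B$ be the maximum of $|\bu_t(j)|$ over all transitions $t$ and coordinates $j$; since $\bv_{\tau-1}\geq \vec{0}$, every coordinate satisfies $\bv_\tau(j)\geq -B$, so $X_\tau \geq -C$ where $C = B\cdot \sum_j \by(j)$ is a constant depending only on $\A$. Consequently the shifted stopped process $\tilde X_i := X_{i\wedge \tau}+C$ is a non-negative supermartingale, with initial value
\[
\tilde X_0 = \bz(p) + n\sum_j\by(j) + C \;\leq\; Kn
\]
for some constant $K$ depending only on $\A$ and all $n\geq 1$.

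Finally, I would relate $\calC[c]$ to $\sup_i \tilde X_i$. Because the smallest non-zero entry of $\by$ is at least $1$ and $\by(c) > 0$, for every index $i < \tau$ with $\bv_i(c)\geq n^{1+\varepsilon}$ we have $X_i \geq \by(c)\,\bv_i(c) \geq n^{1+\varepsilon}$ (using $\bv_i\geq \vec{0}$ and $\bz\geq \vec{0}$), hence $\tilde X_i \geq n^{1+\varepsilon}$. Therefore
\[
\{\calC[c]\geq n^{1+\varepsilon}\} \;\subseteq\; \{\sup_{i}\tilde X_i \geq n^{1+\varepsilon}\},
\]
and the maximal inequality for non-negative supermartingales yields
\[
\prob^{\sigma}_{p\bn}\!\big[\calC[c]\geq n^{1+\varepsilon}\big] \;\leq\; \frac{\mathbb{E}[\tilde X_0]}{n^{1+\varepsilon}} \;\leq\; \frac{Kn}{n^{1+\varepsilon}} \;=\; K\,n^{-\varepsilon},
\]
establishing the claim with $k = K$ and any $n_0 \geq 1$. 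The whole argument is uniform in $\sigma$ and in the starting state $p$, which is precisely what the lemma asserts.
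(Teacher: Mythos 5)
Your proof is correct, and it rests on the same key insight as the paper's: the function $\mathit{rank}(p\bv)=\bz(p)+\sum_j\bv(j)\by(j)$ built from a solution of system~(II) is a supermartingale along any computation, and the target event $\{\calC[c]\geq n^{1+\varepsilon}\}$ forces $\mathit{rank}$ to exceed $n^{1+\varepsilon}$ while it starts at a value linear in $n$. Where you diverge is in the final step: the paper applies the optional stopping theorem to the stopped process at the two-sided boundary (counter hits $n^{1+\varepsilon}$ or goes negative) and then grinds through several lines of algebra, introducing the ``sufficiently large $n$'' caveat along the way; you instead shift by a constant $C$ to make the stopped supermartingale non-negative and invoke Ville's/Doob's maximal inequality directly, obtaining $\prob^\sigma_{p\bn}[\calC[c]\geq n^{1+\varepsilon}]\leq Kn^{-\varepsilon}$ in one line, for all $n\geq 1$ (so $n_0=1$). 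Both derivations are standard applications of the supermartingale property; yours is cleaner, avoids the overshoot bookkeeping, and slightly strengthens the statement by eliminating the $n_0$ threshold, at the modest cost of the ``shift by $C$'' lemma to restore non-negativity. Either tool is adequate here; there is nothing lost or gained beyond conciseness.
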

	Let \(P_1V_1,P_2V_2,\dots \) be the random variables encoding the computation under \(\sigma \) from \(p\bn\) (i.e. \(P_iV_i \) represents the configuration at \(i\)-th step of the computation). And let \(R_1,R_2,\dots \) represent the value of \(rank \) at \(i-\)th step (i.e. \(R_i=rank(P_iV_i) \)). Then \(R_1,R_2,\dots \) is a supermartingale.
	
	\begin{lemma}
		\(R_1,R_2,\dots \)  is a supermartingale.
	\end{lemma}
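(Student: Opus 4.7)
The approach is to verify the supermartingale property directly, exploiting the fact that constraint system~(II) was designed precisely to make $\mathit{rank}$ a supermartingale under every strategy. Let $\calF_i$ denote the $\sigma$-algebra generated by the prefix $P_1V_1,\ldots,P_iV_i$. First I would check integrability: every counter update is bounded by $B = \max_{t \in T}\|\bu_t\|_\infty$, so $\|V_i\|_\infty \leq n + (i-1)B$ deterministically; hence $R_i$ is a bounded, and therefore integrable, random variable for each fixed~$i$, and it is $\calF_i$-measurable by construction.

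For the supermartingale inequality, I would condition on a history ending in $P_iV_i = p\bv$ and compute $\E[R_{i+1} - R_i \mid \calF_i]$ by a case split on the type of~$p$. If $p \in Q_n$, the strategy $\sigma$ picks some distribution $\mu$ over $\tout(p)$, and the conditional expected increment equals $\sum_{t=(p,\bu,q)\in\tout(p)} \mu(t)\bigl(\bz(q) - \bz(p) + \sum_{j=1}^d \bu(j)\by(j)\bigr)$; by the first family of inequalities of system~(II) every bracketed term is $\leq 0$, so the convex combination is also $\leq 0$. If $p \in Q_p$, the transition is drawn according to $P$, and the conditional expected increment equals $\sum_{t=(p,\bu,q)\in\tout(p)} P(t)\bigl(\bz(q) - \bz(p) + \sum_{j=1}^d \bu(j)\by(j)\bigr)$, which is precisely the left-hand side of the probabilistic constraint in~(II) and is therefore $\leq 0$.

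Combining both cases yields $\E[R_{i+1} \mid \calF_i] \leq R_i$ almost surely, establishing the supermartingale property. There is no real obstacle here: system~(II) is engineered exactly so that this becomes a one-line verification on each step. The only mild subtlety worth flagging is that the argument remains valid past a termination step (when some $V_i(j)$ becomes negative): the computation is defined as an infinite sequence over $\Z^d$ and $\mathit{rank}$ is linear in~$\bv$, so the step-wise conditional expectation depends only on the distribution over the next transition and not on the sign of the current counter values. Consequently the supermartingale property holds for the full infinite sequence $R_1,R_2,\ldots$ without needing to introduce a stopping time at this stage (the optional stopping argument that ultimately drives the tail bound in Lemma~\ref{lemma-linear-upper-bound} can then be applied on top of this supermartingale).
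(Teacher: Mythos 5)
Your proof is correct and follows essentially the same route as the paper: condition on the history, compute the expected one-step change of $\mathit{rank}$ by splitting on whether the current state is non-deterministic or probabilistic, and invoke the two families of inequalities from constraint system~(II) to conclude the expected increment is non-positive. You are somewhat more careful than the paper's own argument, which skips the integrability check and conditions informally on $R_i$ rather than on the full filtration $\calF_i$, but the substance of the argument is identical.
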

	\begin{proof}
		One can express \(R_{i+1}=R_{i}+X_{i+1} \), where \(X_{i+1}=R_{i+1}-R_{i} \) is the change of \(rank \) in the \((i+1)\)-st step. Then it holds \(\Exp^\sigma_{p\vec{n}}(R_{i+1}|R_i)=\Exp^\sigma_{p\vec{n}}(R_{i}|R_i)+\Exp^\sigma_{p\vec{n}}(X_{i+1}|R_i)=R_i+\Exp^\sigma_{p\vec{n}}(X_{i+1}|R_i) \). We want to show that \(\Exp^\sigma_{p\vec{n}}(X_{i+1}|R_i)\leq 0 \). Let \(T_{i+1} \) be random variable representing the transition taken at \((i+1)\)-st step. Then \(\Exp^\sigma_{p\vec{n}}(X_{i+1}|R_i)= \sum_{t\in T} \prob_{p\bn}^\sigma(T_{i+1}=t|R_i)\cdot \RankEff(t) \) where \(\RankEff(t) \) represents the change of \(rank\) under transition \(t \).

		Let \(T_n=\bigcup_{p\in Q_n} \tout(p)\) and \(T_p=\bigcup_{p\in Q_p} \tout(p) \) , then we can write		
		\(\Exp^\sigma_{p\vec{n}}(X_{i+1}|R_i)=
		\sum_{t\in T_p} \prob_{p\bn}^\sigma(T_{i+1}=t|R_i)\cdot \RankEff(t)+\sum_{t\in T_n} \prob_{p\bn}^\sigma(T_{i+1}=t|R_i)\cdot \RankEff(t)\).

		Since for each \(t=(p,\bu,q)\in T \) it holds \(\RankEff(t)=\bz(q)-\bz(p)+\sum_{i=1}^d \by(i)\bu(i)\), for each \(t\in T_n \) it holds \(\RankEff(t)\leq 0 \),  and for each \(p\in Q_n \), it holds \(\sum_{t\in \tout(p)} P(t)\RankEff(t) \leq 0 \). Therefore we can write 
		\[ \sum_{t\in T_p} \prob_{p\bn}^\sigma(T_{i+1}=t|R_i)\cdot \RankEff(t)
		=
		\sum_{p\in Q_p} (\prob_{p\bn}^\sigma(P_{i}=p) \sum_{t\in \tout(p)} P(t)\cdot \RankEff(t))
		\leq 0		
		 \]
		and 
		\[\sum_{t\in T_n} \prob_{p\bn}^\sigma(T_{i+1}=t|R_i)\cdot \RankEff(t)\leq 0
		 \]
		 Thus \(E(X_{i+1}|R_i)\leq 0\)

	\end{proof}

	Now let us consider the stopping rule \(\tau \) that stops when either any counter reaches \(0 \), or any counter \(c\) with \(\by(c)>0 \) becomes larger then \(n^{1+\epsilon} \) for the first time.	(i.e. either \(V_\tau(c')<0 \) for any \(c'\in \{1,\dots,d \} \), or \(V_\tau(c)\geq n^{1+\epsilon} \) for \(c \) with \(\by(c)>0 \)). Then for all \(i\), it holds that \(R_{min(i,\tau)}\leq max_{p\in Q} \bz(p) + max_{c\in\{1,\dots,d\}} \by(c)\cdot d \cdot (n^{1+\epsilon}+u)  \), where \(u \) is the maximal increase of a counter in a single transition. Therefore we can apply optional stopping theorem to obtain:
	\[ max_{p\in Q} \bz(p) + max_{c\in\{1,\dots,d\}} \by(c)\cdot d\cdot  n \geq E(R_1)\geq E(R_\tau)\geq   pX_{n^{1+\epsilon}}+(1-p)X_0 \]
	where \(X_{n^{1+\epsilon}} \) represents the minimal possible value of \(R_\tau \) if any counter \(c\) with \(\by(c)>0 \) has  \(R_\tau(c)\geq n^{1+\epsilon} \), \(p \) is the probability of any such counter being at least \(n^{1+\epsilon} \) upon stopping, and \(X_0 \) represents the minimal value of \(R_\tau \) if no such counter reached \(n^{1+\epsilon} \). We can simplify this as
	\[max_{p\in Q} \bz(p) + max_{c\in\{1,\dots,d\}} \by(c)\cdot d\cdot  n \geq   pX_{n^{1+\epsilon}}+(1-p)X_0 \]
	\[max_{p\in Q} \bz(p) + max_{c\in\{1,\dots,d\}} \by(c)\cdot d\cdot  n-(1-p)\cdot X_0 \geq   pX_{n^{1+\epsilon}} \]
	\[max_{p\in Q} \bz(p) + max_{c\in\{1,\dots,d\}} \by(c)\cdot d\cdot  n-(1-p)max_{c\in\{1,\dots,d\}} \by(c) \cdot d\cdot u \geq   pX_{n^{1+\epsilon}} \]
	\[max_{p\in Q} \bz(p) + max_{c\in\{1,\dots,d\}} \by(c)\cdot d\cdot  n-max_{c\in\{1,\dots,d\}} \by(c) \cdot d\cdot u +p\cdot  max_{c\in\{1,\dots,d\}} \by(c) \cdot d\cdot u \geq   pX_{n^{1+\epsilon}} \]
	\[max_{p\in Q} \bz(p) + max_{c\in\{1,\dots,d\}} \by(c)\cdot d\cdot  n-max_{c\in\{1,\dots,d\}} \by(c) \cdot d\cdot u \geq   pX_{n^{1+\epsilon}}-p \cdot max_{c\in\{1,\dots,d\}} \by(c) \cdot d\cdot u \]	
	\[max_{p\in Q} \bz(p) + max_{c\in\{1,\dots,d\}} \by(c)\cdot d\cdot  n-max_{c\in\{1,\dots,d\}} \by(c) \cdot d\cdot u \geq   p(X_{n^{1+\epsilon}}- max_{c\in\{1,\dots,d\}} \by(c) \cdot d\cdot u) \]	\[\frac{max_{p\in Q} \bz(p) + max_{c\in\{1,\dots,d\}} \by(c)\cdot d\cdot  n-max_{c\in\{1,\dots,d\}} \by(c) \cdot d\cdot u }{X_{n^{1+\epsilon}}- max_{c\in\{1,\dots,d\}} \by(c) \cdot d\cdot u} \geq   p \]	
	\[\frac{max_{p\in Q} \bz(p) + max_{c\in\{1,\dots,d\}} \by(c)\cdot d\cdot  n-max_{c\in\{1,\dots,d\}} \by(c) \cdot d\cdot u }{n^{1+\epsilon}- max_{c\in\{1,\dots,d\}} \by(c) \cdot d\cdot u} \geq   p \]
	
		As for all sufficiently large \(n\) it holds \(0.5\cdot n^{1+\epsilon}\leq n^{1+\epsilon}- max_{c\in\{1,\dots,d\}} \by(c) \cdot d\cdot u \)  we have
	
	\[\frac{max_{p\in Q} \bz(p) + max_{c\in\{1,\dots,d\}} \by(c)\cdot d\cdot  n-max_{c\in\{1,\dots,d\}} \by(c) \cdot d\cdot u }{0.5\cdot n^{1+\epsilon}} \geq   p \]
	\[\frac{max_{p\in Q} \bz(p)-max_{c\in\{1,\dots,d\}} \by(c) \cdot d\cdot u}{0.5\cdot n^{1+\epsilon}} + \frac{max_{c\in\{1,\dots,d\}} \by(c)\cdot d\cdot n}{0.5\cdot n^{1+\epsilon}} \geq p \]
	\[\frac{max_{p\in Q} \bz(p)-max_{c\in\{1,\dots,d\}} \by(c) \cdot d\cdot u}{0.5\cdot n^{1+\epsilon}} + \frac{max_{c\in\{1,\dots,d\}} \by(c)\cdot d}{0.5\cdot n^{\epsilon}} \geq p \]
	
	Also as \(n^{1+\epsilon}\geq n^{\epsilon} \) we  have

	\[\frac{max_{p\in Q} \bz(p)-max_{c\in\{1,\dots,d\}} \by(c) \cdot d\cdot u}{0.5\cdot n^{\epsilon}} + \frac{max_{c\in\{1,\dots,d\}} \by(c)\cdot d}{0.5\cdot n^{\epsilon}} \geq p \]
\[\frac{max_{p\in Q} 2\cdot\bz(p)-max_{c\in\{1,\dots,d\}} 2\cdot\by(c) \cdot d\cdot u + max_{c\in\{1,\dots,d\}} 2\cdot\by(c)\cdot d}{ n^{\epsilon}} \geq p \]
	
	

	
	
	
	As \(k=max_{p\in Q} 2\cdot\bz(p)-max_{c\in\{1,\dots,d\}} 2\cdot\by(c) \cdot d\cdot u + max_{c\in\{1,\dots,d\}} 2\cdot\by(c)\cdot d  \) is a constant dependent only on the VASS MDP, it holds for each counter \(c\) with \( \by(c)>0\) and for all sufficiently large \(n\) that \(\prob^{\sigma}_{p \bn}(\calC[c] \geq n^{1+\epsilon} )\leq p \leq kn^{-\epsilon}\).

\subsection{Proof of Lemma~\ref{lemma-quadratic-lower-bound}}
\label{app-lem:-quadratic-lower-bound-proof}

\begin{lemma*}[\textbf{\ref{lemma-quadratic-lower-bound}}]
	For each counter $c$ such that $\by(c)=0$ we have that $\calC_{\eexp}[c] \in \Omega(n^2)$ and $n^2$ is a lower estimate of $\calC[c]$. Furthermore, for every $\varepsilon>0$ there exist a sequence of strategies $\sigma_1,\sigma_2,\dots $, a constant $k$, and  $p\in Q$ such that for every $0<\varepsilon'<\varepsilon$, we have that 
	\[
	\lim_{n \to \infty} \prob_{p\bn}^{\sigma_n}(\Reach^{\leq kn^{2-\varepsilon'}}(\mathit{Target}_n)) = 1
	\]
	where $\mathit{Target}_n = \{q\bv \in \conf(\A) \mid \bv(c)\geq n^{2-\varepsilon} \mbox{ for every counter $c$ such that $\by(c)=0$}\}$.
	%
\end{lemma*}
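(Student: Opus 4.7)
The plan is to construct a single Markovian strategy $\sigma$ (the same for every $n$) from the maximal solution $\bx$ of system~(I), show that it induces a non-negative drift on every counter (strictly positive on the $\by(c)=0$ counters), and then use martingale concentration to push those counters above $n^{2-\varepsilon}$ without any other counter reaching zero. By Lemma~\ref{lemma:dichotomy} combined with $\sum_t \bx(t)\bu_t \ge \vec{0}$ from~(I), the quantity $\alpha_{c'} := \bigl(\sum_t \bx(t)\bu_t(c')\bigr)/\bigl(\sum_t \bx(t)\bigr)$ is non-negative for every counter $c'$ and strictly positive whenever $\by(c')=0$. I define $\sigma$ to select, at every $p \in Q_n$ with $\sum_{t\in \tout(p)}\bx(t) > 0$, each transition $t \in \tout(p)$ with probability $\bx(t)/\sum_{t'\in \tout(p)}\bx(t')$ (choosing arbitrarily elsewhere). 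The flow constraints of~(I) and strong connectedness of $\A$ make $\A_\sigma$ irreducible on the support~$S$ of~$\bx$, with the long-run frequency of each transition $t$ equal to $\bx(t)/\sum_{t'}\bx(t')$; in particular the asymptotic per-step drift on counter $c'$ is exactly $\alpha_{c'}$. I fix the initial state $p$ inside~$S$; any other choice of starting state differs by a bounded-length navigation prefix, which shifts all counters by a constant absorbed into the bounds below.

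For each counter $c'$, let $S_i^{c'}$ denote its value after $i$ steps of $\sigma$ from $p\bn$. A standard Poisson-equation correction yields a bounded function $h_{c'}:Q\to\mathbb{R}$ such that $M_i^{c'} := S_i^{c'} - \alpha_{c'} i + h_{c'}(P_i)$ is a martingale with increments bounded by a constant depending only on $\A$. Set $k := k_0\, n^{2-\varepsilon'}$ for a sufficiently large constant $k_0$. Applying the Azuma--Hoeffding maximal inequality to $M^{c'}$ with threshold $n/2$ when $\alpha_{c'}=0$, and with threshold $\alpha_{c'} k/3$ when $\alpha_{c'}>0$, both failure events have probability $e^{-\Omega(n^{\varepsilon'})}$. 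A union bound over the finitely many counters shows that, with probability $1-o(1)$: every counter with $\alpha_{c'}=0$ stays in $[n/2, 3n/2]$ throughout the first $k$ steps, while every counter with $\alpha_{c'}>0$ reaches a value $\ge \alpha_{c'} k /2 \ge n^{2-\varepsilon}$ by step $k$ (using $\varepsilon>\varepsilon'$ and $n$ large). Setting $\sigma_n := \sigma$ this yields the stated $\lim_{n\to\infty} \prob^{\sigma}_{p\bn}(\Reach^{\leq k\, n^{2-\varepsilon'}}(\mathit{Target}_n)) = 1$, which (after rescaling $\varepsilon' \mapsto \varepsilon/2$ in Definition~\ref{def-estimates}) also witnesses that $n^2$ is a lower estimate of $\calC[c]$.

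For the expectation bound $\calC_{\eexp}[c] \in \Omega(n^2)$, I rerun the same analysis with $k := c_1 n^2$ for $c_1>0$ small enough. Azuma--Hoeffding now yields only a constant (rather than vanishing) probability that no counter ever leaves $[n/2, 3n/2]$ during $[0,k]$, but this constant is strictly positive and on that event $S_k^c \ge \alpha_c c_1 n^2 /2$. Therefore $\Exp^{\sigma}_{p\bn}[\calC[c]] \in \Omega(n^2)$.

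The main obstacle lies in handling counters $c'$ with $\by(c')>0$ but zero drift $\alpha_{c'}=0$: these perform a mean-zero random walk starting at $n$ with bounded increments, and we need them to remain strictly positive throughout the first $n^{2-\varepsilon'}$ steps, otherwise the run terminates. Azuma--Hoeffding is just tight enough to give probability $1-o(1)$ because $n^{2-\varepsilon'}$ is strictly below the natural recurrence scale $n^2$, leaving the polynomial slack $n^{\varepsilon'}$ in the exponent. Secondary technical points---the mixing transient before the per-step drift equals $\alpha_{c'}$ and the treatment of transient states outside~$S$---are absorbed into the bounded Poisson correction $h_{c'}$ and the bounded-length navigation prefix.
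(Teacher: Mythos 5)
Your construction has a flaw at its very foundation: the assertion that ``the flow constraints of~(I) and strong connectedness of $\A$ make $\A_\sigma$ irreducible on the support~$S$ of~$\bx$'' is false in general, and the paper actually proves the opposite. The support of $\bx$ decomposes into possibly \emph{several} disjoint strongly connected pieces (the unnumbered lemma in Appendix~\ref{app-lem:-quadratic-lower-bound-proof} shows that $\A_\bx$ splits into MECs $B_1,\dots,B_k$ with no transitions between them). When you apply a single Markovian strategy $\sigma$ extracted from~$\bx$, the induced Markov chain will have each $B_i$ as a separate BSCC; a computation started in one of them remains there forever, so the long-run per-step drift equals the drift $\bj_i$ of the particular $B_i$ you landed in, \emph{not} the aggregate $\alpha_{c'} = \sum_t \bx(t)\bu_t(c')/\sum_t\bx(t)$. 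Individual $\bj_i$ can be negative in some coordinates even though the weighted sum $\sum a_i\bj_i$ is non-negative, so the martingale $M^{c'}_i := S^{c'}_i - \alpha_{c'}i + h_{c'}(P_i)$ you construct is not a martingale for the actual dynamics, and the Azuma--Hoeffding concentration you invoke controls the wrong process. Everything downstream — the claim that ``every counter with $\alpha_{c'}>0$ reaches $\ge\alpha_{c'}k/2$ by step $k$'' and the expectation bound — inherits this error.

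This is exactly the difficulty the paper's proof confronts. It decomposes $\bx$ into $\bx_1,\dots,\bx_k$ supported on the separate MECs $B_1,\dots,B_k$, derives Markovian strategies $\sigma_i$ and invariant-distribution drifts $\bj_i$ for each (its Lemma~\ref{app-lemma-solution-I-startegy}), and observes that only the weighted combination $\sum a_i\bj_i$ is guaranteed non-negative with strict positivity on the $\by(c)=0$ coordinates. It then invokes the result from \cite{BCKNV:probVASS-linear-termination} about interleaving such Markov chains in the right proportions, navigating between them, so that the \emph{effective} drift matches the aggregate; this is precisely why a \emph{sequence} of strategies $\sigma_1,\sigma_2,\dots$ depending on~$n$ appears in the statement, rather than a single Markovian $\sigma$. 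To repair your argument you would need to either (a) establish that the support of $\bx$ is always a single MEC (false in general), or (b) construct a navigation/interleaving scheme that spends roughly $a_i$-proportional time in each $B_i$ while bounding the cost of the navigation transitions and re-proving the concentration for this non-homogeneous process — which is, in effect, rebuilding the cited external result.
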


Let \(\A_\bx \) be the VASS MDP induced by transitions \(t \) with \(\bx(t)>0 \).

\begin{lemma}
	In \(\A_\bx \), Each pair of states \( p,q\in Q \) is either a part of the same MEC of \(\A_\bx\), or \(p\) is not reachable from \(q\) and vice-versa, in \(\A_\bx \).
\end{lemma}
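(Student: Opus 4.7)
The plan is to exploit the flow-conservation structure built into constraint system~(I). Recall that any solution $\bx$ satisfies $\sum_{t \in \tout(p)} \bx(t) = \sum_{t \in \tin(p)} \bx(t)$ for every state $p$, and moreover, for probabilistic $p \in Q_p$, $\bx(t) = P(t) \cdot \sum_{t' \in \tout(p)} \bx(t')$ for all $t \in \tout(p)$. First, I would lift the per-state flow conservation to arbitrary cuts: for any subset $S \subseteq Q$, summing the equalities over $p \in S$ and cancelling flows internal to $S$ yields
\[
\sum_{t \in \A_\bx \text{ leaving } S} \bx(t) \;=\; \sum_{t \in \A_\bx \text{ entering } S} \bx(t).
\]
Since $\bx \geq 0$, if no transition of $\A_\bx$ leaves $S$, then no transition of $\A_\bx$ enters $S$ either.

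Next, I would apply this to the SCC decomposition of $\A_\bx$ to show that its SCC-DAG has no edges at all. Take any SCC $C$ of $\A_\bx$, and let $D(C)$ be the downward closure of $C$ in the SCC-DAG (the union of all SCCs reachable from $C$). By construction no transition of $\A_\bx$ leaves $D(C)$, hence by cut conservation no transition of $\A_\bx$ enters $D(C)$ either. Now suppose, for contradiction, that $C$ has a proper successor $C' \neq C$ in the SCC-DAG, witnessed by some $t = (p,\bu,q) \in \A_\bx$ with $p \in C$, $q \in C'$. Then $D(C') \subsetneq D(C)$, $C \not\subseteq D(C')$, and $t$ enters $D(C')$ with $\bx(t) > 0$, contradicting cut conservation applied to $D(C')$. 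Hence every SCC of $\A_\bx$ is isolated: no transition of $\A_\bx$ connects distinct SCCs.

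Finally, I would check that each SCC $C$ of $\A_\bx$ is actually a MEC of $\A_\bx$, so the dichotomy ``same MEC vs.\ mutually unreachable'' follows. Strong connectivity is immediate. For $p \in C \cap Q_n$, flow conservation at $p$ together with the existence of some transition of $\A_\bx$ incident to $p$ forces at least one outgoing transition to have positive $\bx$-value, hence to lie in $\A_\bx$. For $p \in C \cap Q_p$, the same argument provides one outgoing transition $t$ with $\bx(t) > 0$, whence $\sum_{t' \in \tout(p)} \bx(t') > 0$, and the proportionality constraint of~(I) together with $P(t') > 0$ yields $\bx(t') > 0$ for every $t' \in \tout(p)$; by the no-edges-between-SCCs step, the targets of all these transitions lie in $C$. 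States not incident to any transition of $\A_\bx$ are trivially unreachable from (and cannot reach) any other state, so they cause no issue.

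The main technical point is the cut-conservation observation and the downward-closure argument it enables; once these are in place, the rest is bookkeeping. The only subtlety worth being careful about is the probabilistic-state condition in the MEC definition, which requires that \emph{all} outgoing transitions of $p \in C \cap Q_p$ lie in~$L$; this is precisely where the proportionality clause of~(I), rather than mere flow conservation, is used.
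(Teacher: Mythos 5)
Your proof is correct and uses the same underlying idea as the paper, which simply remarks that the claim ``follows directly from $\bx$ satisfying Kirchhoff laws.'' You have spelled out exactly what that means: aggregating the per-state flow-conservation constraints of~(I) into a cut-conservation identity, applying it to the downward closure of any SCC in the SCC-DAG of $\A_\bx$ to conclude that no positive-$\bx$ transition crosses between distinct SCCs, and then checking that each non-trivial SCC satisfies the end-component conditions (using the proportionality clause of~(I) for probabilistic states). This is a faithful and complete expansion of the paper's one-line argument.
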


\begin{proof}
	This follows directly from \(\bx \) satisfying kirhoff laws. 	
\end{proof}

Therefore \(\A_\bx \) can be decomposed into multiple MECs, and there are no transitions in the MEC decomposition of \(\A_\bx \). Let these MECs be \(B_1,\dots,B_k \), and let \(\bx_1,\dots,\bx_k \) be the restriction of \(\bx \) to the transitions of \(B_1,\dots, B_k \). (i.e. \(\bx_i(t)=\bx(t)\) if \(B_i \) contains \(t \), and otherwise \(\bx_i(t)=0\)). 

For each \(1\leq i\leq k \), let  \(\overline{\bx}_i=\frac{\bx_i}{\sum_{t\in T} \bx_i(t)} \) be the normalized vector of \(\bx_i \), and let \(\sigma_i \) be a Markovian strategy for \(B_i \) such that \(\sigma_i(p)(t)=\frac{\overline{\bx}_i(t)}{\sum_{t\in \tout(p)} \overline{\bx}_i(t)} \) for \(t\) such that \(\sum_{t\in \tout(p)} \overline{\bx}_i(t)>0 \), and undefined otherwise. We will use \(M_i \) to represent the Markov chain obtained by applying \(\sigma_i \) to \( B_i\).



\begin{lemma}\label{app-lemma-solution-I-startegy}
	Let \(m_i\in \mathbb{Z}^Q \) be such that \(m_i(p)=\sum_{t\in \tout(p)} \overline{\bx}_i(t) \). Then \(m_i \) is an invariant distribution on \(M_i \). Also the expected effect of a single computational step in \(M_i \) taken from distribution \(m_i \) is equal to \(\sum_{(p,\bu,q)\in T} \overline{\bx}_i((p,\bu,q))\bu \). 
\end{lemma}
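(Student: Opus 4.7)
The plan is to verify both claims by direct computation, using the flow-conservation identities provided by constraint system (I) of Figure~\ref{fig-systems}. The key observation is that the definition of $\sigma_i$, combined with the probability constraint $\bx(t) = P(t) \cdot \sum_{t' \in \tout(p)} \bx(t')$ for $p \in Q_p$, gives a \emph{uniform} formula for the one-step transition kernel of $M_i$: for every $p \in Q$ with $m_i(p) > 0$ and every $t=(p,\bu,q) \in \tout(p)$, the probability of firing $t$ in $M_i$ equals $\overline{\bx}_i(t)/m_i(p)$. Indeed, for $p \in Q_n$ this is just the definition of $\sigma_i$, and for $p \in Q_p$ it follows from $\overline{\bx}_i(t) = P(t) \cdot m_i(p)$, which is the normalized version of the probability constraint in~(I).

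First I would check that $m_i$ is a probability distribution on $Q$: summing $m_i(p) = \sum_{t \in \tout(p)} \overline{\bx}_i(t)$ over $p \in Q$ gives $\sum_{t \in T} \overline{\bx}_i(t) = 1$ by definition of $\overline{\bx}_i$, and the support is contained in the states of $B_i$ since $\bx_i$ is supported on transitions of $B_i$.

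Next I would verify invariance. For every $q \in Q$, the $m_i$-weighted inflow into $q$ in $M_i$ is
\[
\sum_{p \in Q} m_i(p) \sum_{t = (p,\bu,q) \in \tout(p)} \frac{\overline{\bx}_i(t)}{m_i(p)} \;=\; \sum_{t \in \tin(q)} \overline{\bx}_i(t),
\]
which by the Kirchhoff identity of~(I) (scaled by $1/\sum_{t\in T}\bx_i(t)$) equals $\sum_{t \in \tout(q)} \overline{\bx}_i(t) = m_i(q)$. Hence $m_i P_{M_i} = m_i$, i.e.\ $m_i$ is an invariant distribution of $M_i$.

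Finally, for the expected one-step effect starting from $m_i$, I would just expand the expectation:
\[
\sum_{p \in Q} m_i(p) \sum_{t=(p,\bu,q)\in \tout(p)} \frac{\overline{\bx}_i(t)}{m_i(p)} \cdot \bu \;=\; \sum_{(p,\bu,q) \in T} \overline{\bx}_i((p,\bu,q)) \, \bu,
\]
which is exactly the claimed quantity. The only subtlety — and there is no real obstacle here — is making sure that both nondeterministic and probabilistic states are handled by the same uniform formula $\overline{\bx}_i(t)/m_i(p)$; this is precisely what the probability constraint in~(I) was designed to guarantee, so the two cases can be treated jointly without case analysis.
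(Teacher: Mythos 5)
Your proposal is correct and takes essentially the same route as the paper: both verify that under a single step from $m_i$, the induced distribution on transitions equals $\overline{\bx}_i$ (you phrase this as the uniform kernel $\overline{\bx}_i(t)/m_i(p)$, the paper does the same by case-splitting on $Q_n$ versus $Q_p$), and both then use the Kirchhoff/flow-conservation constraint of system (I) to conclude invariance, with the expected-effect formula falling out of the same identity. The only cosmetic difference is that you make the unification of the two cases explicit before summing, whereas the paper keeps the two cases visible in the calculation.
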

\begin{proof}
	Let us consider a single computation step in \(M_i \) taken from distribution \( m_i\), and let \(X \) be resulting distribution on transitions during this step. Then for each transition \(t=(p,\bu,q)\in T \) it holds:	 
	\begin{itemize}
		\item if \(p\in Q_p \) then \( X(t)=P(t)\cdot m_i(p) =P(t)\cdot \sum_{t\in \tout(p)} \overline{\bx}_i(t)=\overline{\bx}_i(t) \).
		\item if \(p\in Q_n \) then \( X(t)=
		\sigma_i(p)(t)\cdot m_i(p)
		=
		\frac{\overline{\bx}_i(t)}{\sum_{t\in \tout(p)} \overline{\bx}_i(t)}\cdot  \sum_{t\in \tout(p)} \overline{\bx}_i(t) 
		=
		\overline{\bx}_i(t)
		\).
		
	\end{itemize} 
	And as the next distribution \(m_i' \) on states can be expressed as \(m_i'(p)=\sum_{(q,\bu,p)\in T} X((q,\bu,p)) = \sum_{(q,\bu,p)\in T} \overline{\bx}_i((q,\bu,p))=\sum_{t\in \tout(p)}\overline{\bx}_i(t)=m_i(p) \), \(m_i \) is an invariant distribution on \( M_i\).
	
\end{proof}

Let \(\bj_1,\dots, \bj_k \) be the expected update vectors per single computational step generated by the invariants in \(M_1,\dots, M_k \). Then from \(x \) being a solution to (I), we get that for \( a_i= \sum_{t\in T} x_i(t)  \) it holds \[\sum_{i=1}^{k} a_i\cdot \bi_i \geq \vec{0} \] as well as \[(\sum_{i=1}^{k} a_i\cdot \bi_i)(c) > 0 \] for \(c \) with \(\by(c)=0 \).

Therefore we can use the results of \cite{BCKNV:probVASS-linear-termination}, which states that if there exists a sequence of Markov chains \(M_1,\dots,M_k \) with their respective increments \(\bj_1,\dots,\bj_k \),  and positive integer coefficients \(a_1,\dots,a_k \) such that \(\sum_{i=1}^k a_i\bj_i\geq \vec{0} \), then there exists a function \(L(n)\in \Theta(n) \), a state \(p\in Q\), and sequence of strategies \(\sigma_1,\sigma_2,\dots \) such that the probability \(X_n \) of the computation from \(p\bn \) under \(\sigma_n \) never decreasing at each of the first \(L^{2-\epsilon'}(n) \) steps any counter below  \(b_1n\Exp_{p\vec{n}}^{\sigma_n}(C_i^n(c))-b_2n \), where \(b_1,b_2\) are some constants and \(C_i^n\) is the random variable representing the counter vector after \(i\) steps when computing form \(p\bn\) under \(\sigma_n\), satisfies \(\lim_{n\rightarrow \infty} X_n = 1\). And furthermore,  for each counter \(c\) with \((\sum_{i=1}^k a_i\bj_i)(c) > 0 \) it holds that \(\Exp_{p\vec{n}}^{\sigma_n}(C_i^n(c))\in \Omega(i)\). 

Therefore, with probability at least \(X_n\) we reach a configuration \(q\bv\) with each counter \(c\) such that \(\by(c)=0 \) having \(\bv(c)\geq n^{2-\epsilon} \) within  \(L^{2-\epsilon'}(n)\leq kn^{2-\epsilon'}\) steps, and it holds \(\lim_{n\rightarrow\infty } X_n=1 \).

\subsection{VASS MDP with DAG-like MEC Decomposition}
\label{app-DAG-like}


We formalize and prove the idea sketched at the end of Section~\ref{section-linquad}. 

\begin{lemma}\label{lemma-linquad-for-DAG}
Let $\A$ be a DAG-like VASS MDP with $d$ counters and a DAG-like MEC decomposition, and \( \beta=M_1,\dots, M_k \) be it's type. Let \(\bw_0,\bw_1,\dots,\bw_k\in \{n,\infty \}^d \), and let \(M_i^{\bw_{i-1}} \) be the MEC obtained by taking \(M_i\) and changing the effect \(\bu \) of every transition to \(\bu' \) such that for each \(c\in\{1,\dots,d \} \), \(\bu'(c)=\bu(c) \) if \(\bw_{i-1}(c)=n \), and \(\bu'(c)=0 \) if \(\bw_{i-1}(c)=\infty \). Furthermore, let the following hold for each counter \(c\in\{1,\dots,d \}\) and \(1\leq i\leq k \)
\begin{itemize}
	\item \(\bw_0(c)=n \),
	\item \(\bw_i(c)=n \) if both \(\bw_{i-1}(c)=n \) and \(n \) is a tight estimate of \(c\) in \(M_i^{\bw_{i-1}} \),
	\item \(\bw_i(c)=\infty \) if either \(\bw_{i-1}(c)=\infty \) or \(n^2 \) is a lower estimate of \(c\) in \(M_i^{\bw_{i-1}} \).
\end{itemize}
Then for each \(\epsilon>0 \), there exists a sequence of strategies \(\sigma_1^1,\sigma_1^2,\dots,\sigma_1^k,\sigma_2^1,\dots,\sigma_2^k,\sigma_3^1,\dots \), such that for each \(1\leq i\leq k \), and each \(n\), the computation under \(\sigma_n^i \) initiated in some state of \(M_i \) with initial counter vector \(\bv \) such that for each \(c\in\{1,\dots,d \}\) it holds
 \begin{itemize}
	\item \(\bv(c)\geq \lfloor\frac{n}{2^{i}}\rfloor \) if \(\bw_{i-1}(c)=n \),
	\item \(\bv(c)\geq \lfloor(\frac{n}{2^{i}})^{2-\epsilon_{i-1}}/2^{i}\rfloor \) if \(\bw_{i-1}(c)=\infty \),
\end{itemize}
   reaches with probability \(X_n\) a configuration of \(M_i\) with counter vector \(\bu \) such that for each \(c\in\{1,\dots,d \}\) it holds
  \begin{itemize}
  	\item \(\bu(c)\geq \lfloor\frac{n}{2^{i+1}}\rfloor \) if \(\bw_{i}(c)=n \),
  \item \(\bu(c)\geq \lfloor(\frac{n}{2^{i+1}})^{2-\epsilon_{i}}/2^{i+1}\rfloor \) if \(\bw_{i}(c)=\infty \),
  \end{itemize} 
where \(\epsilon_i=\frac{i\epsilon}{k}\), and it holds \(\lim_{n\rightarrow\infty} X_n=1 \). 
Furthermore, for each counter \(c\in\{1,\dots,d \}\), if \(\bw_k(c)=n \) then \(n \) is a tight estimate of \(\calC[c] \) for type \(\beta \), and if \(\bw_k(c)=\infty \) then \(n^2 \) is a lower estimate of \(\calC[c] \) for type \(\beta \). 
	\end{lemma}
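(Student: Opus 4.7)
The plan is to argue by induction on $i$, using the construction underlying Lemma~\ref{lemma-quadratic-lower-bound} applied to the strongly connected modified MEC $M_i^{\bw_{i-1}}$ to define $\sigma_n^i$. At each step I would classify the counters into three groups: Type~A are the ``old quadratic'' counters ($\bw_{i-1}(c) = \infty$); Type~B are counters that remain linear in $M_i$ ($\bw_{i-1}(c) = \bw_i(c) = n$); and Type~C are the newly pumped counters ($\bw_{i-1}(c) = n$, $\bw_i(c) = \infty$). In $M_i^{\bw_{i-1}}$, Type~B counters have tight estimate $n$ and Type~C counters admit $n^2$ as a lower estimate, so Theorem~\ref{lemma-C-limPlinear-if-no-+} and Lemma~\ref{lemma-quadratic-lower-bound} are directly applicable inside the strongly connected $M_i^{\bw_{i-1}}$.

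The heart of the construction is the \emph{halving trick} sketched after Theorem~\ref{thm-dichotomy}. Setting $m := \lfloor n/2^{i+1}\rfloor$, I would obtain from Lemma~\ref{lemma-quadratic-lower-bound} (applied to $M_i^{\bw_{i-1}}$ with gap parameter $\varepsilon_i$ and a ``pretended'' initial counter vector $m\cdot\mathbf{1}$) a strategy $\tau_m$ of length $O(m^{2-\varepsilon'})$ that drives every Type~C counter to at least $m^{2-\varepsilon_i}$ with probability $X_n \to 1$, where I fix $\varepsilon' \in (\varepsilon_i - \varepsilon/(2k),\,\varepsilon_i)$. Defining $\sigma_n^i := \tau_m$ and executing it in the actual $M_i$ starting from $\bv$, the pretended computation (in $M_i^{\bw_{i-1}}$) cannot terminate before reaching the target, so its Type~B and Type~C counters stay non-negative throughout. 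Because $\sigma_n^i$ chooses transitions from states only, and the actual initial values of Type~B and Type~C counters exceed the pretended ones by at least $\lfloor n/2^i\rfloor - m \geq \lfloor n/2^{i+1}\rfloor$, the actual Type~B counters stay $\geq\lfloor n/2^{i+1}\rfloor$ throughout, while actual Type~C counters reach $\geq m^{2-\varepsilon_i} \geq \lfloor(n/2^{i+1})^{2-\varepsilon_i}/2^{i+1}\rfloor$.

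The delicate point is that Type~A counters have zero updates in $M_i^{\bw_{i-1}}$ but nonzero updates in the actual $M_i$, so $\tau_m$ is oblivious to them. Over $O(n^{2-\varepsilon'})$ actual steps a Type~A counter can drop by at most $O(n^{2-\varepsilon_i + \varepsilon/(2k)})$; by contrast its initial value is $\Omega(n^{2-\varepsilon_{i-1}}) = \Omega(n^{2-\varepsilon_i + \varepsilon/k})$, which dominates both the drop and the required threshold $\lfloor(n/2^{i+1})^{2-\varepsilon_i}/2^{i+1}\rfloor$. Hence Type~A counters stay strictly positive (so the actual computation cannot terminate earlier than the pretended one, and the success probability $X_n\to 1$ transfers) and end above their required bound. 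The telescoping $\varepsilon_i = i\varepsilon/k$ is calibrated precisely to leave this much room at every step.

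For the final asymptotic statement I would chain the $\sigma_n^i$'s into a global strategy enforcing $\beta$, inserting between $M_i$ and $M_{i+1}$ the bounded-length path witnessing the MEC-to-MEC reachability, whose $O(1)$ perturbation is absorbed into constants. For $c$ with $\bw_k(c) = \infty$ the invariant directly yields the lower estimate $n^2$. For $c$ with $\bw_k(c) = n$ the lower bound $\Omega(n)$ is trivial, and the matching upper bound $O(n^{1+\varepsilon})$ comes from repeating the optional-stopping argument of Lemma~\ref{lemma-linear-upper-bound} inside each $M_i^{\bw_{i-1}}$ using a solution $\by,\bz$ of system~(II) that is chosen with $\by(c) > 0$ for the relevant Type~B counter and $\by(c')=0$ for every Type~A counter $c'$. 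This is a valid (though not necessarily objective-maximal) solution because Type~A counters have zero updates in $M_i^{\bw_{i-1}}$ and therefore contribute trivially to every constraint; the resulting rank is thus $O(n)$ at entry to $M_i$ despite the huge Type~A values, and a union bound over the finitely many MECs gives $\prob[\calC[c] \geq n^{1+\varepsilon}] = O(n^{-\varepsilon})$. I expect the hardest part to be threading the parameter $\varepsilon'$ and the Type~A bookkeeping rigorously through all $k$ inductive steps, and verifying that restricting~(II) to make $\by$ vanish on Type~A coordinates still yields a supermartingale on the actual (unmodified) $M_i$.
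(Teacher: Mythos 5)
Your proposal follows essentially the same route as the paper's proof: induction over the prefix of $\beta$, applying Lemma~\ref{lemma-quadratic-lower-bound} to the modified MEC $M_i^{\bw_{i-1}}$, the halving trick to protect the linear counters, and a bound of the form $\Theta(n^{2-\varepsilon'})$ with $\varepsilon'$ strictly between $\varepsilon_{i-1}$ and $\varepsilon_i$ on the number of pumping steps so that the old quadratic counters cannot be depleted. The only noteworthy difference is in how the upper bound for Type~B counters is justified. The paper applies Lemma~\ref{lemma-linear-upper-bound} to $M_i^{\bw_{i-1}}$ from a nominal configuration with all counters set to $n^{1+\epsilon'}$, implicitly relying on the fact that, in $M_i^{\bw_{i-1}}$, the zero-update Type~A counters cannot influence the law of $\calC_{M_i^{\bw_{i-1}}}[c]$ (strategies observe paths, not counter values) and that $\calC_{M_i}[c] \leq \calC_{M_i^{\bw_{i-1}}}[c]$ pathwise. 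You instead switch to a non-maximal solution of system~(II) with $\by$ vanishing on the Type~A coordinates, which makes the rank $O(n^{1+\epsilon'})$ at entry despite the huge Type~A values; as you suspected, this is still a supermartingale on the unmodified $M_i$ because the Type~A contributions to every constraint of~(II) are annihilated by $\by(c')=0$. Both devices work and resolve the same technical point, which the paper leaves implicit; your version has the small extra benefit of not needing the pathwise comparison $\calC_{M_i}[c] \leq \calC_{M_i^{\bw_{i-1}}}[c]$. One small imprecision shared with the paper: to control the initial rank inside $M_i$ one needs a union bound over \emph{all} Type~B counters (not just $c$) when invoking the inductive hypothesis at entry; this is routine since there are only $d$ counters.
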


\begin{proof}
	Proof by induction on \(k\). 
	Base case of \(k=1\) holds from Lemma~\ref{lemma-quadratic-lower-bound}, and the second part holds from Lemma~\ref{lemma-C-limPlinear-if-no-+}.
	 Assume now the Lemma holds for the type \(M_1,\dots, M_{i-1} \). Let \(\sigma_1,\sigma_2,\dots \) and \(p\in Q\) be from the Lemma~\ref{lemma-quadratic-lower-bound} for \(M_i^{\bw_{i-1}} \) and for \(\epsilon_i \). Then from induction assumption, there are strategies such that when the computation reaches \(M_i \) the counters vector is \(\bv \) with probability \(Y_i\) such that \(\lim_{n\rightarrow\infty} Y_n=1 \) and 
	  \begin{itemize}
	 	\item \(\bv(c)\geq \lfloor\frac{n}{2^{i}}\rfloor \) if \(\bw_{i-1}(c)=n \),
	 	\item \(\bv(c)\geq \lfloor(\frac{n}{2^{i}})^{2-\epsilon_{i-1}}/2^{i}\rfloor \) if \(\bw_{i-1}(c)=\infty \).
	 \end{itemize} 
	Now let us consider the following: upon reaching \(M_i\), we divide the counters vector \(\bv \) into two halves, each of size \(\lfloor\frac{\bv}{2}\rfloor \), and then we perform the computation of \(\sigma_{\lfloor\frac{n}{2^{i+1}}\rfloor} \) on the first half for \(ln^{2-\frac{(i+0.5)\epsilon}{k}}\) steps. (i.e., if the effect on any counter \(c\) is less then \(-\lfloor\frac{\bv}{2}\rfloor(c) \), then the computation stops). Then from Lemma~\ref{lemma-quadratic-lower-bound} we will with probability \(X_n\) reach a configuration \(\bu\) with all counters  \(c\) such that \(\bw_{i-1}(c)=n \) and \(\bw_{i}(c)=\infty \) being at least \(\bu(c)\geq (\lfloor\frac{n}{2^{i+1}}\rfloor)^{2-\epsilon_i} \), such that \(\lim_{n\rightarrow \infty} X_n=1 \). As the length of this computation is only \(ln^{2-\frac{(i+0.5)\epsilon}{k}}\) we cannot decrease any "deleted" counter \(c\) with \(\bw_{i-1}(c)=\infty \) by more then \(an^{2-\frac{(i+0.5)\epsilon}{k}} \) for some constant \(a\). Therefore for all sufficiently large \(n\), the computation cannot terminate due to such counter being depleted.  And since the second half of \(\bv \) is untouched, we still have for each counter \(c\) with \(\bw_{i-1}(c)=\infty\) at least \(\lfloor(\frac{n}{2^{i}})^{2-\epsilon_{i-1}}/2^{i+1}\rfloor \) and for each counter \(c \) with \(\bw_{i-1}(c)=n\) at least \(\lfloor\frac{n}{2^{i+1}}\rfloor \). 

Therefore with probability at least \(X_nY_n \) the computation ends in configuration \(q\bu\) of \(M_i \) such that for each counter \(c\)
  \begin{itemize}
	\item \(\bu(c)\geq \lfloor\frac{n}{2^{i+1}}\rfloor \) if \(\bw_{i}(c)=n \),
	\item \(\bu(c)\geq \lfloor(\frac{n}{2^{i+1}})^{2-\epsilon_{i}}/2^{i+1}\rfloor \) if \(\bw_{i}(c)=\infty \)
\end{itemize} 
And it holds \(\lim_{n\rightarrow \infty} X_nY_n=1 \). Thus \(n^2 \) is a lower estimate of \(\calC[c] \) for type \(M_1,\dots,M_i \) for each \(c\) with \(\bw_i(c)=\infty \).
	
	For the second part of the lemma, let \(\sigma \) be some strategy. Then for every counter \(c\) with \(\bw_i(c)=n \), we have from the induction assumption that the probability, of the strategy \(\sigma \) started in initial configuration \(p\vec{n}\), reaching \(M_i \) along type \(M_1,\dots,M_i \) with \(c\) being at least \(n^{1+\epsilon'} \) for some \(0< \epsilon' \), is at most \(Z_n \), where \(\limsup_{n \to \infty} Z_n=0 \). Let \(\sigma' \)be a strategy, which for an initial state \(q\) such that \(q\) is a state of \(M_1 \), computes as \(\sigma \) after a path from \(p \) to \(q\). Then from Lemma~\ref{lemma-linear-upper-bound} we have for each \(0<\hat\epsilon \)  that  \begin{multline*}
	\prob^{\sigma'}_{q n^{1+\epsilon'}}(\calC_{M_i^{\bw_{i-1}}}[c] \geq n^{1+\hat\epsilon} )
	=
	\prob^{\sigma'}_{q n^{1+\epsilon'}}(\calC_{M_i^{\bw_{i-1}}}[c] \geq (n^{1+\epsilon'})^{\log_{n^{1+\epsilon'}}n^{1+\hat\epsilon}} )
	\leq
	 kn^{1-\log_{n^{1+\epsilon'}}n^{1+\hat\epsilon}}
  \end{multline*}  
 Let \(y=1-\log_{n^{1+\epsilon'}}n^{1+\hat\epsilon} \), note that for \(\epsilon'<\hat\epsilon \) it holds \(y<0 \). Also let \(R_q=\prob^\sigma_{p\vec{n}}(\{\alpha\mid \textit{ the first state of }M_i\textit{ in }\alpha\textit{ is }q \}) \). Then we can write for each \(\epsilon'<\hat\epsilon<\epsilon \) \begin{multline*}
 	\prob^{\sigma}_{p \vec{n}}(\calC[c] \geq n^{1+\epsilon}\mid \mecs{=}M_1,\dots,M_i )
 \leq \\ \leq
 \prob^{\sigma}_{p \vec{n}}(\calC[c] \geq n^{1+\epsilon'}\mid \mecs{=}M_1,\dots,M_{i-1} )+
 \sum_{q\in Q} R_q 
 \prob^{\sigma'}_{q n^{1+\epsilon'}}(\calC_{M_i^{\bw_{i-1}}}[c] \geq n^{1+\hat\epsilon} )
 \leq
  Z_n+kn^{y}
\end{multline*}
 And since \(\lim_{n\rightarrow \infty} Z_n+kn^{y}=0\), and \(\epsilon',\hat\epsilon \) can be arbitrarily small, we can find values for them for arbitrary \(\epsilon>0 \). Thus \(n \) is a tight estimate of \(\calC[c]\) for type \(M_1,\dots,M_i\).
	
\end{proof}

\section{Proofs for Section~\ref{sec-one-dim}}
\label{app-one-dim}

\begin{lemma}\label{decide-+-in-P}
	Given a one-dimesional VASS MDP, deciding existence of an increasing BSCC $\B$ of $\A_\sigma$ for some $\sigma \in \Sigma_{\MD}$ can be done in \(\PTIME \). 
	\end{lemma}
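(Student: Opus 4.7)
The plan is to reduce the existence of an increasing BSCC in some $\A_\sigma$ to a feasibility question for the constraint system (I) of Figure~\ref{fig-systems}, specialized to the single counter $c$. Concretely, I would prove the following equivalence: there exist $\sigma \in \Sigma_{\MD}$ and an increasing BSCC $\B$ of $\A_\sigma$ if and only if the constraint system~(I) has a solution $\bx$ satisfying $\sum_{t \in T} \bx(t)\bu_t(c) > 0$. Since (I) together with this strict inequality is a system of linear constraints over the rationals (writing $>0$ as $\geq 1$ after scaling, which is safe because the feasibility set of (I) is a pointed rational cone and solutions are closed under positive scaling), feasibility can be decided in polynomial time by linear programming.

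For the ``$\Leftarrow$'' direction, I would take such a solution $\bx$ and invoke the decomposition already used in the proof of Lemma~\ref{lemma-quadratic-lower-bound}: the support of $\bx$ induces a sub-VASS $\A_\bx$ whose underlying graph decomposes into pairwise unreachable MECs $B_1,\ldots,B_k$ (this uses Kirchhoff's law for $\bx$), and the normalised restrictions $\overline{\bx}_i$ induce MD strategies $\sigma_i$ on each $B_i$ whose invariant distribution is exactly $m_i(p) = \sum_{t\in \tout(p)}\overline{\bx}_i(t)$, with per-step expected counter effect $\bj_i = \sum_{t}\overline{\bx}_i(t)\bu_t(c)$ (Lemma~\ref{app-lemma-solution-I-startegy}). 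From $\sum_i a_i \bj_i = \sum_t \bx(t)\bu_t(c) > 0$ and $a_i \geq 0$, at least one $\bj_i$ is strictly positive. Extend $\sigma_i$ to an MD strategy $\sigma$ on all of $Q$ arbitrarily. The crucial point here is that $B_i$ is trapping under $\sigma$: for probabilistic states $p \in B_i \cap Q_p$, the constraint $\bx(t) = P(t)\cdot\sum_{t' \in \tout(p)}\bx(t')$ in~(I) forces every outgoing transition of $p$ to stay in $B_i$. Hence $B_i$ is a BSCC of $\A_\sigma$, and by the standard identity $\Exp^\sigma_{p_{B_i}}(\update_{B_i}) = \bj_i \cdot \Exp^\sigma_{p_{B_i}}(\text{return time to }p_{B_i})$ it is increasing.

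For the ``$\Rightarrow$'' direction, I would start from an increasing BSCC $\B$ of $\A_\sigma$, take the invariant distribution $m_{\B}$ of the induced Markov chain on $\B$, and define $\bx(t) = m_\B(p)\cdot \sigma(p)(t)$ for transitions $t = (p,\bu,q)$ with $p \in \B \cap Q_n$ and $\bx(t) = m_\B(p)\cdot P(t)$ for $p \in \B \cap Q_p$, setting $\bx(t) = 0$ outside $\B$. Stationarity of $m_\B$ yields the flow equations of~(I); the definition at probabilistic states yields the $\bx(t) = P(t)\sum_{t'\in\tout(p)}\bx(t')$ constraint; and $\sum_t\bx(t)\bu_t(c)$ equals the per-step expected effect of the invariant distribution, which is positive exactly because $\B$ is increasing (again using the return-time identity). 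Rescaling to clear denominators yields an integer solution.

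The main obstacle I anticipate is purely notational: one must carefully verify that the extension of the per-MEC strategy $\sigma_i$ to a global MD strategy on $Q$ indeed makes $B_i$ a BSCC rather than merely a trapping MEC, and that increasing behaviour of the invariant distribution transfers to the quantity $\Exp^\sigma_{p_{B_i}}(\update_{B_i})$ defined via returns to a distinguished state $p_{B_i}$. Both follow from standard ergodic theory for finite Markov chains, so no new technical machinery is needed beyond what is already used in Appendix~\ref{app-linquad}. Once the equivalence is established, the polynomial-time bound is immediate from linear programming.
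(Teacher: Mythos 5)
Your ``$\Rightarrow$'' direction and the reduction to constraint system~(I) plus linear programming match the paper's argument. The gap is in the ``$\Leftarrow$'' direction, and it is genuine: you write that ``the normalised restrictions $\overline{\bx}_i$ induce MD strategies $\sigma_i$ on each $B_i$'', and then conclude that $B_i$ is a BSCC of $\A_\sigma$ for some $\sigma \in \Sigma_{\MD}$. But $\sigma_i$, defined by $\sigma_i(p)(t) = \overline{\bx}_i(t)/\sum_{t'\in\tout(p)}\overline{\bx}_i(t')$, is a memoryless \emph{randomized} strategy, not a deterministic one. In a nondeterministic state $p \in B_i \cap Q_n$, the solution $\bx$ may place positive mass on several outgoing transitions, and you cannot ``extend $\sigma_i$ to an MD strategy $\sigma$'' without changing the induced chain: any deterministic selection drops transitions, so the resulting chain on $B_i$ is a proper subchain with a different stationary distribution and a different per-step expected effect (and $B_i$ itself may cease to be a single BSCC). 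The return-time identity you invoke then no longer yields $\bj_i$, and the chosen BSCC need not be increasing.

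The paper closes exactly this gap in a different way: from a solution $\bx$ with $\sum_t \bx(t)\bu_t(c)>0$ it only extracts \emph{some} strategy with positive expected per-step counter effect (Lemma~\ref{app-lemma-solution-I-startegy}), and then appeals to the standard fact for finite-state mean-payoff MDPs \cite{Puterman:book} that memoryless deterministic strategies are extremal, so some MD strategy achieves a positive long-run average, which forces one of its BSCCs to be increasing. You need this extremality step (or an equivalent purification argument showing that one can deterministically resolve the choices in $B_i$ without destroying positivity of the long-run average), and your proof as written does not contain it.
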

	\begin{proof}
	If such BSCC exists, then it gives us a solution \(\bx\) with \(\sum_{(p,\bu,q)\in T} \bx((p,\bu,q))\bu(c) > 0 \) for (I). The solution is such that if \(\bw \) is the invariant distribution on states of \(\mathbb{B} \) under \(\sigma\), then for each transition \((p,\bu,q) \) contained in \(\mathbb{B} \), \(\bx((p,\bu,q))=\bw(p) \) if \(p\in Q_n \) is a non-deterministic state, and \(\bx((p,\bu,q))=\bw(p)P((p,\bu,q))  \) if \(p\in Q_p \) is a probabilistic state, while \(\bx(t)=0 \) for each \(t\) that is not contained in \(\mathbb{B} \). And every solution \(\bx\) of (I) such that \(\sum_{(p,\bu,q)\in T} \bx((p,\bu,q))\bu(c) > 0 \) can be used to extract a strategy with expected positive effect on the counter (Appendix~\ref{app-linquad}: Lemma~\ref{app-lemma-solution-I-startegy}). But this is only possible if there exists an increasing BSCC $\B$ of $\A_\sigma$ for some $\sigma \in \Sigma_{\MD}$, as these are the extremal values of any strategy.\footnote{Here we rely on well-known results about finite-state MDPs \cite{Puterman:book}.}
	\end{proof}

	\begin{lemma}\label{lemma-+-exists}
	Given a one-dimensional VASS MDP, if there exists an increasing BSCC $\B$ of $\A_\sigma$ for some $\sigma \in \Sigma_{\MD}$,  then \(\calC[c] \)  and \(\calL \) are unbounded in type \(M\) such that \(\mathbb{B}\subseteq M \). Furthermore, let \(M[t] \) be the MEC containing the transition \(t\). If \(M[t] \) exists and \(\mathbb{B}\subseteq M[t] \), then \(\calT[t] \) is unbounded for type \(M[t] \). Additionally, \(\calT[t] \) is also unbounded for each type \(\beta=M_1,\dots,M_k \) such that there exist \(j\leq i\) such that \(M_i=M[t] \) and \(\mathbb{B}\subseteq M_j \). 
	\end{lemma}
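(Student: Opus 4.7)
The plan is to exploit the strictly positive drift $\mu := \Exp^\sigma_{p_\B}(\update_\B) > 0$ to construct, for each of the four claims, a sequence of strategies under which the relevant quantity grows without bound with probability tending to~$1$. The central ingredient is a \emph{survival estimate} that I would establish first: starting from $p_\B \bn$ and following $\sigma$, let $0 = T_0 < T_1 < T_2 < \cdots$ be the successive return times to $p_\B$; since $\B$ is a BSCC of the finite Markov chain $\A_\sigma$, the $T_k$ are almost surely finite with exponential tails, and the per-cycle counter increments $X_k = \bv_{T_k}(c) - \bv_{T_{k-1}}(c)$ are i.i.d., bounded, with mean $\mu > 0$. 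An exponential-martingale argument applied to $\exp(-\theta \sum_{k \le \ell} X_k)$ for small enough $\theta > 0$ bounds the probability that the walk $n + \sum_{k \le \ell} X_k$ ever hits $0$ by $e^{-\Omega(n)}$; combined with a uniform bound on intra-cycle counter fluctuation, this yields that with probability at least $1 - e^{-\Omega(n)}$ the counter stays positive at every step. On this event the computation never terminates, stays in $\B \subseteq M$ forever (so $\mecs = M$), and the counter tends to $+\infty$ by the strong law of large numbers.

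With the survival estimate in hand, the first two claims are immediate. Taking $p = p_\B$ and the constant sequence $\sigma_n \equiv \sigma$, for every candidate function $f$ and every $\varepsilon > 0$ one obtains $\prob^{\sigma_n}_{p\bn}[\calC[c] \ge f(n)^{1+\varepsilon} \mid \mecs = M] \to 1$ and analogously for $\calL$, which rules out any upper estimate. For $\calT[t]$ unbounded on type $M[t]$, I would use strong connectedness of $M[t]$ to pick a finite path $\tau$ from $p_\B$ back to $p_\B$ that uses $t$, and let $-C$ lower-bound its counter effect. Define the (non-MD) strategy $\sigma'$ that, each time it reaches $p_\B$, runs $K$ return cycles of $\sigma$ and then executes $\tau$, where $K$ is chosen so that $K\mu > 2C$. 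The counter values at $p_\B$ under $\sigma'$ then form a random walk with i.i.d.\ bounded super-increments of mean $K\mu - C > 0$, so the same survival estimate applies; on the surviving event, $t$ is fired infinitely often, so $\calT[t]$ is unbounded for $\mecs = M[t]$.

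For the last claim, type $\beta = M_1,\dots,M_k$ with $M_i = M[t]$ and $\B \subseteq M_j$ for some $j \le i$, I would design $\sigma_n$ in five stages: (i) start at some $p \in M_1$ and deterministically traverse $M_1,\dots,M_j$ until reaching $p_\B$; (ii) run $\sigma$ in $\B$ long enough to pump the counter above a threshold $N(n)$ to be fixed later; (iii) deterministically traverse $M_{j+1},\dots,M_{i-1}$ into $M_i$; (iv) loop through the source of $t$ exactly $\lfloor N(n)/C'\rfloor$ times, where $C'$ upper-bounds the counter cost of one loop; (v) exit through $M_{i+1},\dots,M_k$. Stages (i), (iii), (v) cost only a bounded amount of counter; stage~(ii) succeeds and preserves the threshold with probability $\to 1$ by the survival estimate; stage~(iv) keeps the counter nonnegative by construction. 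Choosing $N(n) \gg f(n)^{1+\varepsilon}$ for any prescribed $f$ and $\varepsilon$ then yields $\prob^{\sigma_n}_{p\bn}[\calT[t] \ge f(n)^{1+\varepsilon} \mid \mecs = \beta] \to 1$, completing the proof. The case $j = i$ is handled identically, simply skipping stage~(iii).

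The main obstacle will be the survival estimate itself: converting the single-cycle drift $\mu > 0$ into a uniform-in-time positivity bound of the form $1 - e^{-\Omega(n)}$ requires both a Cram\'er-style large-deviation bound on the minimum of a positive-drift random walk and a separate argument controlling intra-cycle counter fluctuations, so that positivity at return times upgrades to positivity at every step. The remaining claims are essentially corollaries, but one still has to verify that each interleaved ``super-walk'' in the $\calT[t]$ arguments retains positive drift with bounded increments, and that the composite strategies in the final step genuinely induce the prescribed MEC sequence $\beta$ with positive probability, so that conditioning on $\mecs = \beta$ is meaningful.
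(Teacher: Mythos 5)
Your argument takes essentially the same route as the paper's: the positive-drift survival estimate (which the paper delegates to a citation of \cite{BCKNV:probVASS-linear-termination}, Lemma~6, rather than rederiving via the exponential-martingale bound you sketch), the ``$K$ return cycles of $\sigma$ followed by a detour through $t$'' construction matching the paper's informal ``periodic deviation'' idea for $\B\subseteq M[t]$, and the pump-in-$M_j$-then-loop-in-$M[t]$ staged strategy for the multi-MEC case. The paper's own proof is a short informal sketch of exactly this argument, so your proposal is the same approach with the technical details filled in.
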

	The computation under \(\sigma \) from any state of \(\mathbb{B}\) has a tendency to increase the counter, and as \(n\) goes towards \(\infty \) the probability of the computation terminating goes to \(0\).\footnote{For formal proof see e.g. \cite{BCKNV:probVASS-linear-termination} (Lemma~6)} Therefore both \(\calC[c] \) and \(\calL \) are unbounded for type \(M \) with \(\mathbb{B}\subseteq M \). Furthermore, if \(\mathbb{B}\subseteq M[t] \), then \(t \) can be iterated infinitely often with high probability by  periodically “deviating”  from \(\sigma\) by temporarily switching to some other strategy which never leaves \(M[t] \) and has positive chance of using \(t\).  Clearly this can be done in such a way that the overall strategy still has the tendency to increase the counter. Therefore in such case \(\calT[t] \) is unbounded for type \(M[t] \). The last part of the theorem comes from the fact that we can first pump the counter in \(M_j \) to an arbitrarily large value, before moving to \(M[t] \) where we then can iterate any strategy on \(M[t] \) that has positive chance of using \(t\).
	

	\begin{lemma}\label{lemma-existence-of-0r}
	Given a one-dimensional VASS MDP, if there exists an unbounded-zero BSCC $\B$ of $\A_\sigma$ for some $\sigma \in \Sigma_{\MD}$, then \(\calL_{\eexp}\in \Omega(n^2) \) and \(n^2 \)  is a lower estimate of \(\calL \) for type \(M \) such that \(\mathbb{B}\subseteq M\). 
	 Furthermore, if \(\mathbb{B} \) contains the transition \(t\), then also \(\calT_{\eexp}[t]\in \Omega(n^2) \) and \(n^2 \)  is a lower estimate of \(\calT[t] \) for type \(M \) such that \(\mathbb{B} \subseteq M\).
	\end{lemma}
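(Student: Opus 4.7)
The plan is to take $\sigma_n := \sigma$ for all $n$, where $\sigma$ is the given MD strategy witnessing the unbounded-zero BSCC $\B$, and $p := p_\B$. Because $\sigma$ started from $p_\B$ keeps every non-terminated path inside $\B \subseteq M$, we have $\prob^{\sigma}_{p_\B \bn}[\mecs = M] = 1$ for every $n \geq 1$, so it suffices to prove
\[
\liminf_{n \to \infty} \prob^{\sigma}_{p_\B \bn}\bigl[\calL \geq n^{2-\varepsilon}\bigr] \;=\; 1
\]
for every $\varepsilon > 0$, and separately that $\calL_{\eexp}(n) \in \Omega(n^2)$. The whole analysis reduces to a one-dimensional random walk. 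Let $\tau_1 < \tau_2 < \cdots$ be the successive return times to $p_\B$ in $\A_\sigma$, and let $X_i$ be the net counter change accumulated between the $(i-1)$-st and $i$-th return. By the strong Markov property the $X_i$ are i.i.d.\ copies of $\update_\B$, so the unbounded-zero hypothesis gives $\E X_i = 0$ together with $\prob[X_i \neq 0] > 0$, hence $0 < \mathrm{Var}(X_i) < \infty$. Since $\B$ is a BSCC of a finite Markov chain under $\sigma$ and per-step VASS updates are bounded, $\tau_1$ has geometric tails, and the maximal intra-cycle counter deviation $W_i$ has geometric tails as well.

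The core estimate is a standard maximal inequality (Doob's or Kolmogorov's) applied to the zero-mean square-integrable martingale $S_k := X_1 + \cdots + X_k$:
\[
\prob\Bigl[\max_{j \leq k}|S_j| \geq n/2\Bigr] \;\leq\; \frac{4k\,\mathrm{Var}(X_1)}{n^2}.
\]
For the lower-estimate claim I pick $k = k_n := \lfloor n^{2-\varepsilon/2}\rfloor$, which makes the right-hand side $O(n^{-\varepsilon/2}) \to 0$; a union bound over the $k_n$ cycles keeps $\max_{j \leq k_n+1} W_j$ below $n/4$ with probability tending to $1$; and by the law of large numbers the number of original VASS steps $\tau_1 + \cdots + \tau_{k_n}$ concentrates around $k_n \cdot \E\tau_1$, hence exceeds $n^{2-\varepsilon}$ with probability tending to $1$. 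Intersecting these three events gives $\prob^{\sigma}_{p_\B \bn}[\calL \geq n^{2-\varepsilon}] \to 1$, establishing that $n^2$ is a lower estimate. To obtain $\calL_{\eexp}(n) \in \Omega(n^2)$, I repeat the argument with $k_n := \lfloor \delta n^2 \rfloor$ for a sufficiently small constant $\delta > 0$: the maximal inequality now yields $4\delta\,\mathrm{Var}(X_1) < 1$, so the counter survives $\Theta(n^2)$ original steps on an event of fixed positive probability, providing the required expectation bound.

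For the transition version, suppose $t$ lies in $\B$. Strong connectedness of $\B$ under the MD strategy $\sigma$ implies that the expected number of executions of $t$ within a single cycle is a strictly positive constant $\alpha$. Writing $N_k$ for the number of uses of $t$ in the first $k$ cycles, LLN gives $N_{k_n}/k_n \to \alpha$ a.s., so $N_{k_n} \geq \alpha k_n / 2$ with probability tending to $1$; combined with the survival event above this yields $\calT[t] \geq (\alpha/2) k_n$, supplying both the $n^2$ lower estimate of $\calT[t]$ for $M$ and the $\Omega(n^2)$ bound on $\calT_{\eexp}[t]$.

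The main obstacle I anticipate is not the random walk analysis, which is classical, but rather the careful conversion between the cycle count $k_n$ at the random-walk level and the original VASS step count in which $\calL$ and $\calT[t]$ are measured, while simultaneously controlling the intra-cycle fluctuations $W_i$ that could prematurely drive the counter negative. Assembling the three high-probability events into a single event of probability tending to $1$ is routine given the finiteness and irreducibility of $\B$ together with the boundedness of per-step updates, but it must be carried out with some care.
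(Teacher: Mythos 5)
Your argument is correct, but it takes a different route from the paper: the paper's entire ``proof'' of this lemma is a one-line citation to Section~3.3 of \cite{BCKNV:probVASS-linear-termination}, whereas you give a self-contained derivation from first principles. Your reduction to the i.i.d.\ cycle-effect random walk $S_k = X_1 + \cdots + X_k$ (zero mean, positive finite variance, geometric tails for both cycle length and intra-cycle excursion) together with Kolmogorov's maximal inequality is the standard probabilistic toolkit one would expect behind the cited result, and the three-event intersection (controlled $\max_j|S_j|$, controlled $\max_i W_i$, sufficiently many cycles) is assembled correctly. A couple of minor remarks: you use $\tau_i$ both for the $i$-th return time and for the $i$-th cycle length, which makes the expression $\tau_1+\cdots+\tau_{k_n}$ ambiguous, though the intent is clear; and the LLN step is actually unnecessary since each cycle has length at least one, so surviving $k_n = \lfloor n^{2-\varepsilon/2}\rfloor$ cycles already gives $\calL \geq n^{2-\varepsilon}$ for large $n$. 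Also note that your $\liminf \prob[\calL \geq n^{2-\varepsilon}]=1$ is stated with $n^{2-\varepsilon}$ rather than the definition's $(n^2)^{1-\varepsilon}=n^{2-2\varepsilon}$, but since the statement is quantified over all $\varepsilon>0$ this is only a cosmetic rescaling. The trade-off between the two approaches is that your version is longer but transparent and verifiable within the paper, while the paper's approach defers the work to an external reference; for a conference appendix the citation is arguably preferable, but your argument would be the right choice for a self-contained journal version.
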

	This follows follows directly from the results of \cite{BCKNV:probVASS-linear-termination} (Section~3.3).

	%

	\begin{lemma}\label{lemma-existence-of-0d}
	Given a one-dimensional VASS MDP,  if there exists an bounded-zero BSCC $\B$ of $\A_\sigma$ for some $\sigma \in \Sigma_{\MD}$, then \(\calL\) is unbounded for type \(M\) such that \(\mathbb{B} \subseteq M \). Furthermore, if \(\mathbb{B} \) contains \(t\) then also \(\calT[t]\) is unbounded for type \(M\) such that \(\mathbb{B}\subseteq M \).	
	\end{lemma}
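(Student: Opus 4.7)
The plan is to exploit the rigidity of a bounded-zero BSCC $\B$ to construct an integer-valued potential function $h : \B \to \Z$ that fully determines the counter value along every $\sigma$-play staying in $\B$. The preliminary step is to upgrade the hypothesis $\prob^{\sigma}_{p_\B}[\update_{\B} = 0] = 1$, which a priori constrains only first-return cycles, to the stronger statement that every positive-probability closed walk from $p_\B$ to itself in $\A_\sigma$ has counter effect zero. Indeed, the positive-probability first-return cycles form a countable collection of atoms of the first-return distribution, so since the event $\{\update_\B \neq 0\}$ has probability zero, each such atom must have zero effect; an arbitrary closed walk from $p_\B$ to itself then decomposes into a finite concatenation of first-return cycles and thus also has zero effect. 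Consequently, for every $s \in \B$ the counter effect of any positive-probability path in $\A_\sigma$ from $p_\B$ to $s$ depends only on $s$ (two such paths combined with a return path would otherwise yield closed walks of differing effect), and setting $h(s)$ to this common value gives $h(p_\B) = 0$ and $h(t) - h(s) = \bu$ for every transition $(s,\bu,t)$ of $\B$.

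Let $K = -\min_{s \in \B} h(s)$, a non-negative integer depending only on $\A$ and $\sigma$. Starting from $p_\B \bn$ and playing $\sigma$, the counter upon visiting state $s$ equals exactly $n + h(s) \geq n - K$; hence for every $n \geq K$ the counter stays non-negative forever and so $\calL = \infty$ almost surely. If in addition $t$ is a transition of $\B$, then positive recurrence of the finite-state Markov chain on the BSCC $\B$ ensures that $t$ fires infinitely often a.s.\ under $\sigma$, and combined with the infinite horizon this yields $\calT[t] = \infty$ a.s.\ whenever $n \geq K$.

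To finish, let $M$ be the MEC with $\B \subseteq M$, take the starting state $p = p_\B$, and set $\sigma_n = \sigma$ for every $n \geq 1$. Since $\B$ is a BSCC of $\A_\sigma$, the infinite path almost surely remains in $\B \subseteq M$ forever, so $\prob^{\sigma_n}_{p_\B \bn}[\mecs = M] = 1$ for every $n$. For any $f : \N \to \N$ and any $\varepsilon > 0$, the identity $\calL = \infty$ a.s.\ valid for $n \geq K$ yields $\prob^{\sigma_n}_{p_\B \bn}[\calL \geq f(n)^{1+\varepsilon} \mid \mecs = M] = 1$ for all sufficiently large $n$, so the relevant $\limsup$ equals $1$ and no $f$ is an upper estimate of $\calL$ for $M$; the identical argument with $\calT[t]$ replacing $\calL$ handles the second part of the lemma when $\B$ contains $t$. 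The one genuinely nontrivial step is the potential-function construction in the first paragraph — lifting the bounded-zero hypothesis from first-return cycles to all closed walks — after which the rest reduces to deterministic counter bookkeeping combined with the standard recurrence of finite Markov chains.
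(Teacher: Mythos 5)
Your proof is correct and follows the same core idea as the paper's three-sentence argument: the bounded-zero hypothesis forces every cycle in $\B$ to have zero counter effect, so the counter effect along every $\sigma$-play staying in $\B$ is bounded by a constant, and hence for $n$ large enough the computation from $p_\B\bn$ never terminates. Your potential-function formalization $h$ and the atom argument lifting $\prob^\sigma_{p_\B}[\update_\B = 0]=1$ to all closed walks simply make explicit the details that the paper leaves implicit.
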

	
	\begin{proof}
	Since \(\mathbb{B} \) is bounded-zero, it must hold that there is no non-zero cycle in \(\mathbb{B} \). Therefore the effect of every path of \(\mathbb{B} \) is bounded by some constant. As such, the computation under \(\sigma \) started from any state of \(\mathbb{B} \) can never terminate if the initial counter value is sufficiently large.
	\end{proof}

	\begin{lemma}\label{lemma-0d-decidable-in-P-when-no-+}
	It is decidable in polynomial time if a one-dimensional VASS MDP \(\A\), that contains no increasing BSCC of $\A_\sigma$ for any $\sigma \in \Sigma_{\MD}$, whether \(\A \) contains a bounded-zero BSCC \(\mathbb{B} \) of $\A_\sigma$ for some $\sigma \in \Sigma_{\MD}$. 
	\end{lemma}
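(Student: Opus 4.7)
The plan is to reduce the decision problem to a purely combinatorial question about end components and then solve it by a polynomial-time monotone pruning procedure.

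\paragraph*{Characterisation.} A BSCC \(\B\) of \(\A_\sigma\) is bounded-zero iff every cycle inside \(\B\) has zero counter effect, and, in a one-dimensional setting, this is equivalent to the existence of an integer potential \(\phi\colon C\to\Z\) on the state set \(C\) of \(\B\) with \(\phi(q)-\phi(p)=u\) for every transition \((p,u,q)\) of \(\B\). It therefore suffices, MEC by MEC, to decide whether a given MEC \(M=(V_M,T_M)\) contains a sub-end-component admitting such a potential.

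\paragraph*{Algorithm.} For each MEC \(M\), maintain a candidate set \(C\subseteq V_M\), initially \(C:=V_M\), and repeat the following reductions until \(C\) stabilises.
\begin{itemize}
\item[(R1)] If some \(p\in C\cap Q_p\) has an outgoing transition to \(V_M\setminus C\), remove \(p\) from \(C\): this is sound because every sub-EC must close probabilistic states under all their successors.
\item[(R2)] Form the directed graph \(G_p\) with vertex set \(C\cap Q_p\) and edges the probabilistic transitions both of whose endpoints lie in \(C\cap Q_p\), labelled by their updates. For every SCC \(S\) of \(G_p\), assign \(\phi(r):=0\) to some \(r\in S\), propagate along a spanning tree, and check each non-tree edge. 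If any non-tree edge violates the potential equation, then \(S\) carries a probabilistic cycle of non-zero effect; since any sub-EC of \(C\) meeting \(S\) would, by probabilistic closure, have to include that entire cycle, the whole \(S\) can be removed from \(C\).
\item[(R3)] Replace \(C\) by the union of the maximal sub-ECs of the induced sub-MDP using the standard polynomial-time MEC algorithm.
\end{itemize}
After stabilisation, within every SCC of \(G_p\) restricted to \(C\) the potential is fixed up to one additive constant; probabilistic transitions crossing between SCCs and the consistency demands of non-deterministic states in \(C\) translate into a linear-equation system over these additive constants and the potentials of non-deterministic states. The system is solvable in polynomial time, with integer feasibility equivalent to rational feasibility because all updates are integral. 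If a solution exists for which each non-deterministic \(p\in C\) has at least one outgoing transition \((p,u,q)\in T_M\) with \(q\in C\) and \(\phi(q)-\phi(p)=u\), and the transitions so selected together with all probabilistic transitions leave \(C\) strongly connected, then a bounded-zero BSCC exists; otherwise remove an offending state and re-enter the loop.

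\paragraph*{Complexity and main obstacle.} Each iteration that changes \(C\) removes at least one state, so the outer loop runs at most \(|V_M|\) times per MEC, each round being polynomial. The delicate step is~(R2): one must argue that discarding an entire probabilistic SCC \(S\) does not eliminate any state belonging to some legitimate zero sub-EC. The probabilistic-closure argument sketched above establishes this, and it is here that the hypothesis ``no increasing BSCC under any MD strategy'' becomes useful: under this hypothesis the dichotomy of Lemma~\ref{lemma:dichotomy} forces \(\by(c)>0\) in every MEC, providing a global rank function that tightly constrains the admissible potential structure and ensures that the final linear-system plus strong-connectivity check correctly decides the existence of a sub-EC with a consistent potential.
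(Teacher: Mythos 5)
Your proposal has a genuine gap, and it sits exactly where the hard part of the problem lives. You reduce the question to finding a sub-end-component admitting a consistent potential $\phi$, and you correctly note that without the hypothesis this is NP-hard (it is, by Lemma~\ref{lemma-np-hard-0d-gen}). The trouble is that your algorithm never concretely defuses the source of hardness. Step~(R2) only rules out cycles lying entirely inside $C\cap Q_p$; it says nothing about cycles that pass through non-deterministic states, which is where the combinatorics hides. Your final step then sets up a ``linear-equation system over the additive constants and the potentials of non-deterministic states,'' but for each non-deterministic state you only need \emph{one} of its outgoing transitions to satisfy $\phi(q)-\phi(p)=u$, and which one is a discrete choice; encoding ``at least one constraint out of several holds'' is not a linear feasibility problem. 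You acknowledge in the last paragraph that the hypothesis and the rank function from Lemma~\ref{lemma:dichotomy} should ``tightly constrain the admissible potential structure,'' but this is precisely the lemma you need and never prove. As written, the proposal does not establish that the linear-system-plus-connectivity check is sound, complete, or even well-defined.

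The paper resolves this by using the rank function in a more decisive way. Since no increasing BSCC exists, no solution of constraint system~(I) is positive in the counter coordinate, so Lemma~\ref{lemma:dichotomy} forces $\by(c)>0$ in the maximal solution of~(II). Thus $\mathit{rank}(p\bv)=\bz(p)+\by(c)\bv(c)$ genuinely depends on the counter, every transition from a non-deterministic state has non-positive rank effect, and every probabilistic state has non-positive \emph{expected} rank effect. From this one argues directly that a bounded-zero BSCC can contain only transitions whose rank effect is exactly zero (a strictly negative rank effect at a non-deterministic state, or a non-zero rank effect at a probabilistic state, immediately yields a reachable cycle of negative counter effect). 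Conversely, any BSCC of some $\A_\sigma$ built only from zero-rank-effect transitions is bounded-zero, because $\by(c)>0$ then forces every cycle to have zero counter effect. So the algorithm is simply: compute $\mathit{rank}$ per MEC by LP, delete every transition whose rank effect can be non-zero, and iteratively remove non-deterministic states with no remaining outgoing transition and probabilistic states with a deleted outgoing transition until a fixed point; the answer is ``yes'' iff the fixed point is non-empty. No potential function, no SCC-by-SCC check, and no combinatorial choice remains. You should replace your potential-feasibility machinery with this rank-based pruning; the key observation you are missing is that the hypothesis lets one read off the set of admissible transitions directly from $\by,\bz$, rather than searching for a potential.
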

	
	\begin{proof}
	Since there is no class increasing BSCC of an MD strategy, there can be no solution \(\bx\) to (I) with \(\sum_{(p,\bu,q)\in T} \bx((p,\bu,q))\bu(c) > 0 \) as any such solution can be used to extract a strategy with expected positive effect on the counter (Appendix~\ref{app-linquad}: Lemma~\ref{app-lemma-solution-I-startegy}). Therefore from Lemma~\ref{lemma:dichotomy}, we have that there exists a ranking function \(rank\), defined by a maximal solution of (II) (see Section~\ref{section-linquad}), such that the effect of any transition from a nondeterministic state has non-positive effect on \(rank\), and the expected effect of a single computational step taken from a probabilistic state is non-positive on \(rank \). Furthermore, \(rank\) depends on the counter value. Therefore any BSCC which contains any transition whose effect on \(rank\) can be non-zero cannot be bounded-zero. If such transition were from non-deterministic state, then it could only decrease the rank, and as \(rank\) can never be increased in expectation, this would lead to a positive chance of a cycle with negative effect on \(rank\) and thus also on the counter. And if the transition were from a probabilistic state, then as the expectation is non-positive, there would be a non-zero probability of a transition with negative effect on \(rank \) being chosen. Therefore any bounded-zero BSCC can contain only those transitions that never change \(rank\).
	
	On the other hand, any BSCC \(\mathbb{B}\) of $\A_\sigma$ for some $\sigma \in \Sigma_{\MD}$, which contains only transitions that never change \(rank\) must be bounded-zero, as that means the effect of any cycle in \(\mathbb{B} \) must be \(0\) (as any non-zero cycle would have necessarily changed \(rank\) in at least one of its transitions).
	
	Therefore it is sufficient to decide whether there exists a  BSCC \(\mathbb{B}\) of $\A_\sigma$ for some $\sigma \in \Sigma_{\MD}$, containing only those transitions that do not change \(rank \). We can do this by analyzing each MEC one by one. For each MEC we first compute \(rank\) using the system (II) (see Section~\ref{section-linquad}), then proceed by first removing all transitions that can change \(rank\), and then iteratively removing non-deterministic states that do not have any outgoing transition left, and probabilistic states for which we removed any outgoing transition, until we reach a fixed point. If there exists a bounded-zero BSCC \(\mathbb{B}\) of $\A_\sigma$ for some $\sigma \in \Sigma_{\MD}$, then all transitions of \(\mathbb{B} \) will remain in the fixed point as they can never be removed. On the other hand once we reach the fixed point, it holds that for any state \(p\) that is left there either exists a “safe” outgoing transition if \(p\in Q_n\) or all outgoing transitions are “safe” if \(p\in Q_p \), and these “safe” transitions end in a “safe” state. With state being “safe” if it is left in fixed point, and transitions being “safe” if their effect on \(rank \) is \(0\). Thus we can simply select any MD strategy on the states/transitions that are left and it must have a bounded-zero BSCC. And if the fixed point is empty, then there can be no bounded-zero BSCC \(\mathbb{B}\) of $\A_\sigma$ for some $\sigma \in \Sigma_{\MD}$. Clearly this can be done in polynomial time.
	\end{proof}
	
	\begin{note}
	One might ask whether the restriction on one-dimensional VASS MDPs not containing an increasing BSCC of $\A_\sigma$ for some $\sigma \in \Sigma_{\MD}$, is necessary in Lemma~\ref{lemma-0d-decidable-in-P-when-no-+}. The answer is yes, as Lemma~\ref{lemma-np-hard-0d-gen} shows that deciding existence of a bounded-zero BSCC of $\A_\sigma$ for some $\sigma \in \Sigma_{\MD}$, is \(\NP\)-complete for general one-dimensional VASS MDPs. 
	\end{note}

	\begin{lemma}\label{lemma-no-+-or-0d-effect-on-L}
		Given a one-dimensional VASS MDP \(\A\), if there is no increasing or bounded-zero BSCC of \(A_\sigma \) for any \(\sigma\in \Sigma_{\MD} \), then \(n^2\) is an upper estimate of \(\calL\) for every type. 
	\end{lemma}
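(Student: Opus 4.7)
The plan is to reduce an arbitrary strategy to the finite collection of MD strategies provided by the MD-decomposition of Lemma~\ref{lemma-MD-decompose}. Fix a type $\beta$, an initial state $p$, an $\varepsilon > 0$, and a sequence $\sigma_1,\sigma_2,\dots$ with $\prob^{\sigma_n}_{p\bn}[\mecs=\beta]>0$. Let $\pi_1,\dots,\pi_k$ be the MD strategies given by Lemma~\ref{lemma-MD-decompose}. By hypothesis, no BSCC of any $\A_{\pi_j}$ is increasing or bounded-zero, so every such BSCC is either decreasing or unbounded-zero. Because the hypothesis in particular rules out increasing BSCCs, Lemma~\ref{lemma-no-+-effect-on-C} applies and $n$ is a tight estimate of $\calC[c]$ for every type. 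Consequently, for any fixed $\delta>0$ the event $\mathcal{E}_n$ that the counter never exceeds $n^{1+\delta}$ along the computation has conditional probability tending to $1$ under $\mecs=\beta$. I would fix $\delta = \varepsilon/2$ and concentrate on the event $\mathcal{E}_n$.

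The core of the proof is to show that, conditional on $\mathcal{E}_n$, the termination time $\calL$ exceeds $n^{2(1+\varepsilon)}$ with vanishing probability. Let $X_t$ be the counter value after $t$ steps. The absence of increasing BSCCs makes $X_t$ a bounded-difference supermartingale before termination, and the absence of bounded-zero BSCCs yields a uniform lower bound $\alpha>0$ (depending only on $\A$) on the one-step conditional variance of $X_t$ whenever the computation lies inside some BSCC of some $\A_{\pi_j}$: for decreasing BSCCs this is immediate from the negative drift plus bounded steps, while for unbounded-zero BSCCs it follows from the defining condition $\prob^{\pi_j}_{p_\B}[\update_\B = 0] < 1$ applied at the finitely many candidate BSCCs. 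Transient phases outside BSCCs contribute only a bounded overhead by standard finite-state MDP arguments. Using the MD-decomposition and the extended VASS MDP with pointing strategies, I would then regroup the steps of $\sigma_n$ into ``cycle blocks'' of bounded expected length so that the process $Y_t = X_t^2 + \alpha t$ becomes a submartingale across blocks.

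The optional stopping theorem applied to $Y_t$ at the stopping time $\tau_n = \min(\Term, \lceil n^{2(1+\varepsilon)} \rceil)$ then gives $\alpha\cdot \Exp^{\sigma_n}_{p\bn}[\tau_n \mid \mathcal{E}_n] \leq \Exp^{\sigma_n}_{p\bn}[X_{\tau_n}^2 \mid \mathcal{E}_n] \leq n^{2(1+\delta)} = n^{2+\varepsilon}$. Hence
\[
\prob^{\sigma_n}_{p\bn}\!\bigl[\calL > n^{2(1+\varepsilon)} \,\big|\, \mathcal{E}_n\bigr] \cdot n^{2(1+\varepsilon)} \ \leq\ \alpha^{-1}\,n^{2+\varepsilon},
\]
which forces this conditional probability to decay like $n^{-\varepsilon}$. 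Combining with $\prob[\mathcal{E}_n \mid \mecs=\beta]\to 1$ and conditioning out $\mecs=\beta$ yields the desired $\limsup = 0$, establishing that $n^2$ is an upper estimate of $\calL$ for $\beta$.

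The main obstacle is the variance argument: producing a single constant $\alpha>0$ that lower bounds the squared per-step increment of the counter uniformly over all strategies $\sigma_n$ (not just MD ones) and across all block regroupings. This is precisely what the pointing-strategy framework on the extended VASS MDP is engineered to handle, letting us absorb the scheduling freedom of $\sigma_n$ into controlled interleavings of the $\pi_j$ while preserving the variance contribution from whichever unbounded-zero BSCC is currently being explored. Once this uniform variance bound is in hand, the submartingale and optional stopping computation above is routine.
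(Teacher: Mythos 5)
Your plan correctly identifies the main ingredients of the paper's proof---MD-decomposition (Lemma~\ref{lemma-MD-decompose}), the extended VASS MDP with pointing strategies $\sigma_n^\delta$, a cycle-block view of the computation, and an optional-stopping argument on a second-moment process---so the overall strategy matches the paper's. However, there is a genuine gap in the variance/submartingale step.

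You claim the process $Y_t = X_t^2 + \alpha t$ (the sign should be $X_t^2 - \alpha t$ for your stated inequality to come out) is a submartingale across cycle blocks, with the lower bound $\alpha$ on the block variance coming uniformly from \emph{every} BSCC, decreasing or unbounded-zero. This is where the argument breaks. If the block increment $Z$ has expectation $-\beta < 0$ (a decreasing BSCC), then
\[
\Exp\bigl[X_{t+1}^2 - X_t^2 \mid \text{past}\bigr] \;=\; 2X_t\,\Exp[Z\mid\text{past}] + \Exp[Z^2\mid\text{past}] \;=\; -2\beta X_t + \Exp[Z^2\mid\text{past}],
\]
and in the regime $X_t \approx n^{1+\delta}$ that you focus on, the term $-2\beta X_t$ dominates and drives the conditional increment of $X_t^2$ strongly negative, not $\geq \alpha$. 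Thus $X_t^2 - \alpha t$ is not a submartingale over decreasing-cycle blocks and the optional-stopping bound does not go through as written. (Separately, your claim that a decreasing BSCC has positive cycle-level variance ``from negative drift plus bounded steps'' is wrong: a deterministic negative cycle has zero variance.)

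The paper resolves this by treating the two kinds of cycles with different arguments: decreasing-cycle blocks are bounded linearly in $n$ by a pure drift argument (the accumulated negative drift cannot exceed roughly $n$ before termination), while the second-moment/optional-stopping machinery (Lemma~\ref{lemma-randomwalkin-On2}) is applied \emph{only} to the zero-drift unbounded-zero cycles, after the truncation of Lemma~\ref{lemma-finding-m-for-MD} that keeps the drift non-negative and the variance bounded away from zero. You need the same split. Two further cautions: conditioning on the future event $\mathcal{E}_n$ generally destroys the (sub)martingale property, so the paper folds the barrier $n^{1+\epsilon}$ directly into the stopping rule (stop when the effect hits $-n - ln^\delta$ or $n^{1+\epsilon}$ or on a ``die'' move) instead of conditioning; and the cycle-block regrouping under arbitrary, non-MD strategies is not ``routine''---it is exactly what the $\sigma_n^\delta$ construction with precomputation and the ``die'' transition is engineered for, and it introduces an additive perturbation of order $n^\delta$ to the initial counter value that must be tracked quantitatively.
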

	
	\begin{lemma}\label{lemma-no-+-effect-on-C}
		Given a one-dimensional VASS MDP \(\A\), if there is no increasing BSCC of \(A_\sigma \) for any \(\sigma\in \Sigma_{\MD} \), then \(n\) is an upper estimate of \(\calC[c]\) for every type. 
	\end{lemma}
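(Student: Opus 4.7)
My plan is to combine the MD-decomposition of Lemma~\ref{lemma-MD-decompose} with a supermartingale / optional-stopping argument in the spirit of Lemma~\ref{lemma-linear-upper-bound}. Fix the MD strategies $\pi_1,\dots,\pi_k\in\Sigma_{\MD}$ and initial states $p_1,\dots,p_k$ furnished by Lemma~\ref{lemma-MD-decompose}. By hypothesis, none of the Markov chains $\A_{\pi_j}$ admits an increasing BSCC, so every BSCC $\B$ of every $\A_{\pi_j}$ is decreasing, bounded-zero, or unbounded-zero; in all three cases $\Exp^{\pi_j}_{p_{\B}}(\update_{\B})\leq 0$, and by standard finite-state Markov-chain theory the time to return to $p_\B$ has exponential tails, so the counter contribution of one excursion has non-positive mean and bounded moments.

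Next I would fix an arbitrary strategy $\sigma$ and initial configuration $p\bn$, and reinterpret the induced computation via $\Decompose_\A$ as an interleaving of $k$ partial paths in $\A_{\pi_1},\dots,\A_{\pi_k}$ starting at $p_1,\dots,p_k$; the counter value $V_i$ is $n$ plus the sum over $j$ of the partial $\pi_j$-contribution so far. The difficulty is that $\sigma$ may interrupt $\pi_j$ mid-excursion arbitrarily often, so the raw $j$-contribution is not itself a martingale increment. To bypass this, I would work in the extended VASS MDP of the paper, where $k$ tokens run $\pi_1,\dots,\pi_k$ on a shared counter under a pointing scheduler: the strategy $\sigma$ is faithfully simulated by the pointing scheduler induced by $\Decompose_\A$, and I would compare it with the more permissive ``complete-excursion'' scheduler that always lets the currently chosen token finish its excursion back to $p_\B$ before handing control over. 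Under this latter scheduler, the per-token return increments are i.i.d.\ copies of $\update_\B$ within a fixed BSCC, so their partial sums form a supermartingale, and a direct coupling bounds the discrepancy between the two schedulers by a constant per token, namely the maximum counter displacement over an incomplete excursion, which is finite by the exponential-tail bound.

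Once the coupling is in place, the argument of Lemma~\ref{lemma-linear-upper-bound} applies almost verbatim to the counter process itself (no external ranking function is needed, since the counter is already a supermartingale in each non-increasing BSCC). Defining the stopping time $\tau$ as the first moment the counter reaches $n^{1+\varepsilon}$, the computation terminates, or a large deterministic horizon is exhausted, optional stopping yields $n = \Exp[V_0] \geq \Exp[V_\tau] \geq n^{1+\varepsilon}\cdot\prob^\sigma_{p\bn}[\calC[c]\geq n^{1+\varepsilon}] - O(1)$, whence $\prob^\sigma_{p\bn}[\calC[c]\geq n^{1+\varepsilon}]\leq Kn^{-\varepsilon}$ for a constant $K$ depending only on $\A$ and uniform in $\sigma$. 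This establishes $n$ as an upper estimate of $\calC[c]$ when $\A$ consists of a single MEC; since the MEC quotient of every VASS MDP is a DAG, the extension to arbitrary types follows by the MEC-by-MEC induction of Appendix~\ref{app-DAG-like}, where each successive MEC incurs only a constant multiplicative loss. The main obstacle will be making the ``complete-excursion'' coupling precise and uniformly controlling the incomplete-cycle error across all $\sigma$; here the exponential tail of excursion lengths in each finite Markov chain $\A_{\pi_j}$ is exactly what provides the required uniform moment bound.
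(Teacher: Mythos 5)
Your high-level plan — MD-decomposition, the extended VASS MDP with precomputed excursions, a supermartingale plus optional stopping — is indeed the machinery the paper builds, but the paper deploys it primarily to obtain the $n^2$ bound on $\calL$ (Lemma~\ref{lemma-no-+-or-0d-effect-on-L}); for the $\calC[c]$ bound it takes a shorter route that you miss, and the specific shortcut you propose to make your version self-contained is incorrect.

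The first gap is the claim that ``no external ranking function is needed, since the counter is already a supermartingale in each non-increasing BSCC.'' That is false per step. In a bounded-zero or unbounded-zero BSCC the expected \emph{excursion} increment is $\leq 0$, but the single-step increment need not be: a two-state cycle $p \xrightarrow{+10} q \xrightarrow{-10} p$ is bounded-zero, yet from $p$ the counter increases deterministically. Restoring the per-step supermartingale property is exactly what the state offsets $\bz$ in the ranking function $\mathit{rank}(p\bv)=\bz(p)+\by(c)\bv(c)$ of constraint system~(II) are for. If you insist on working with the raw counter, your supermartingale only holds along the subsequence of excursion-boundary times, but your stopping time ``first moment the counter reaches $n^{1+\varepsilon}$'' does not live on that subsequence, and the gap is the within-excursion overshoot. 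That overshoot has exponential tails per excursion, but the number of excursions in a computation conditioned only on a type is not a priori bounded, so ``$O(1)$'' is not a correct estimate of the discrepancy; the paper's pointing strategy $\sigma_n^\delta$ explicitly caps excursion length at $2n^\delta$ and pays a vanishing probability to do so. None of this is addressed in your sketch.

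The second, more structural, gap is that you never connect the hypothesis (``no increasing BSCC of any $\A_\sigma$, $\sigma\in\Sigma_{\MD}$'') to the constraint systems. The paper's argument for $\calC[c]$ is: absence of an increasing BSCC rules out any solution $\bx$ of~(I) with $\sum_t \bx(t)\bu_t(c)>0$ (Lemma~\ref{decide-+-in-P}), so by the duality of Lemma~\ref{lemma:dichotomy} the maximal solution of~(II) has $\by(c)>0$ in every MEC; then Lemma~\ref{lemma-linear-upper-bound}, which is a per-step supermartingale argument on $\mathit{rank}$ valid for \emph{all} strategies (not just MD), gives $\prob^\sigma_{p\bn}[\calC[c]\geq n^{1+\varepsilon}]\leq kn^{-\varepsilon}$ in each MEC, and the inner lemma in the proof of Lemma~\ref{lemma-lim-for-Pdeltane} stitches this along a type $\beta$ by induction on $\mathit{len}(\beta)$. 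No MD-decomposition, no extended MDP, and no excursion coupling are needed for this part; they become essential only once you move to the termination bound, where you must also count \emph{how many} excursions happen and invoke the $\calO(n^2)$ random-walk bound of Lemma~\ref{lemma-randomwalkin-On2}. In short: your framework is the right one for the companion lemma on $\calL$, but for $\calC[c]$ you should first establish $\by(c)>0$ via duality and then apply Lemma~\ref{lemma-linear-upper-bound} directly, rather than trying to replace the ranking function by the raw counter.
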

	
	To prove these two Lemmata, we need to consider a certain overapproximation of \(\A\), which in some sense is in multiple states at the same time. This overapproximation will allow us to view any computation on \(\A\) as if with very high probability, the computation was at each step \emph{choosing} one of finitely many (depending only on \(\A\)) random walks/cycles, whose effects correspond to their corresponding BSCC (increasing, bounded-zero, unbounded-zero, decreasing). That is if there is no increasing or bounded-zero BSCC of \(A_\sigma \) for any \(\sigma\in \Sigma_{\MD} \), then the expected effect of these random walks can only either be negative (decreasing), or \(0\) but with non-zero variance (unbounded-zero). This then allows us to provide some \emph{structure} to the VASS MDP which will then allow us to prove these lemmata. A key concept to defining this overapproximation is that of an MD-decomposition, which roughly states that for each path on a VASS MDP, we can \emph{color} each transition using one of finitely many colors, such that the sub-path corresponding to each color is a path under some MD strategy associated with that color. We then show that we can \emph{color} any path on \(\A\) using a finite set of colors, and that this \emph{coloring} can be made “on-line”, that is a color can be assigned \emph{uniquely} (in some sense) to each transition at the time this transition is taken in the computation.

	\begin{lemma*}[\textbf{\ref{lemma-MD-decompose}}]
		For every VASS MDP \(\A\), there exist \(\pi_1,\dots,\pi_k \in \Sigma_{\MD}\),  \(p_1,\dots,p_k \in Q\), and a function \(\Decompose_\A \) such that the following conditions are satisfied for every finite path $\alpha$:
		\begin{itemize}
			\item \(\Decompose_\A(\alpha) \) returns an MD-decomposition of $\alpha$ under $\pi_1,\dots,\pi_k$ and $p_1,\ldots,p_k$.
			\item $\Decompose_\A(\alpha)=\Decompose_\A(\alpha_{..\length(\alpha)-1})\; \circ\; \gamma^1\circ\cdots\circ\gamma^k$, where exactly one of 
			$\gamma^i$ has positive length (the $i$ is called the \emph{mode} of $\alpha$).
			\item If the last state of \(\alpha_{..\length(\alpha)-1}\) is probabilistic, then the mode of $\alpha$ does not depend on the last transition of $\alpha$. 
		\end{itemize}
		
\end{lemma*}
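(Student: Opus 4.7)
My plan is to take $k = |\Sigma_{\MD}| \cdot |Q|$ and let $(\pi_1,p_1),\ldots,(\pi_k,p_k)$ enumerate every pair $(\pi,s) \in \Sigma_{\MD} \times Q$; this rich pool of tracks guarantees that every state of $Q$ hosts an instance of every MD strategy at the outset. The function $\Decompose_\A(\alpha)$ is then defined online by induction on $\length(\alpha)$. The base case returns the trivial empty decomposition. For the inductive step, assume $\Decompose_\A(\alpha_{..\length(\alpha)-1})$ is in hand together with each track's current pointer $\mathit{ptr}_j$, namely the last state of track $j$'s concatenated subpath so far; I extend by adding a single new round $\gamma^1 \circ \cdots \circ \gamma^k$ of total length one.

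Let $t = (b,\mathbf{u},a)$ be the new transition appended to $\alpha$. The mode $m$ is chosen deterministically from the pointer configuration as follows: take the least index $j$ such that $\mathit{ptr}_j = b$ and, in the case $b \in Q_n$, additionally $\pi_j(b) = t$. Set $\gamma^m := t$ and $\gamma^j$ equal to the length-zero path consisting of the single state $\mathit{ptr}_j$ for every $j \neq m$; then update $\mathit{ptr}_m$ to $a$. The first and second conditions of the lemma are immediate from this recursive construction. For the third condition, observe that when $b \in Q_p$ the constraint ``$\pi_j(b) = t$'' is vacuous, since MD strategies prescribe no choice at probabilistic states; hence the selection rule depends only on the pointer configuration, which is determined entirely by $\alpha_{..\length(\alpha)-1}$ and does not see the specific probabilistic outcome $t$.

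The main obstacle is existence: I must show that the rule never fails, i.e., some track with pointer $b$ and (if needed) strategy compatible with $t$ is always available. I plan to argue this by induction on $\length(\alpha)$, exploiting the abundance of tracks. The key observation is that the fresh pair $(\pi,b)$ with $\pi(b) = t$ (any $\pi$ when $b \in Q_p$) figures as a track still sitting at its initial pointer $b$ the first time such a need arises; subsequent visits to $b$ are covered either because that family of fresh tracks exhausts only after many repetitions, because advanced tracks whose own MD trajectories have returned to $b$ can be reused, or by switching among the several distinct strategies $\pi'$ with $\pi'(b)$ equal to the required transition. Formalising that this combined supply of fresh and returning tracks never drops to zero for any finite $\alpha$ is the technical heart of the proof and requires a careful accounting of how pointers get consumed versus renewed as the decomposition evolves.
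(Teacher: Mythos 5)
The heart of your argument---that the greedy ``least compatible index'' rule never deadlocks---is exactly the claim you leave unproven, and unfortunately it is false for an arbitrary enumeration of $\Sigma_{\MD}\times Q$. Consider a VASS MDP with $Q = Q_n = \{b,c_1,c_2\}$ and transitions $t_1=(b,\cdot,c_1)$, $t_2=(b,\cdot,c_2)$, $t_1'=(c_1,\cdot,b)$, $t_2'=(c_2,\cdot,b)$, so there are exactly two MD strategies $\pi_1$ (taking $t_1$ at $b$) and $\pi_2$ (taking $t_2$ at $b$), and your construction uses $k=6$ tracks. Enumerate them as $1{:}(\pi_2,c_1)$, $2{:}(\pi_2,b)$, $3{:}(\pi_1,b)$, $4{:}(\pi_2,c_2)$, $5{:}(\pi_1,c_1)$, $6{:}(\pi_1,c_2)$, and run your rule on the path $b\xrightarrow{t_1}c_1\xrightarrow{t_1'}b\xrightarrow{t_2}c_2\xrightarrow{t_2'}b\xrightarrow{t_1}c_1$. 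Step~1 forces track~3 (pointer to $c_1$). Step~2 selects the least index at $c_1$, which is track~1, moving it to $b$; this \emph{strands} track~3 at $c_1$. Step~3 needs a $\pi_2$-track at $b$ and takes track~1 again (pointer to $c_2$). Step~4 takes track~1 once more (pointer to $b$). Step~5 now needs a $\pi_1$-track at $b$, but tracks $3,5,6$ sit at $c_1,c_1,c_2$ while the only tracks at $b$ (namely $1$ and $2$) are both $\pi_2$: the rule has no legal choice and fails. So the construction is not merely ``missing a lemma''---as stated, it is incorrect. A careful ordering of the tracks might repair this particular example, but you give no ordering, no invariant certifying availability, and no reason to believe any fixed greedy preference is universally safe.

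The paper's proof sidesteps this entirely by avoiding greediness. It inducts on the number of outgoing nondeterministic transitions. In the base case every nondeterministic state has a unique choice, so a single MD strategy with $|Q|$ starting points suffices. In the inductive step it fixes a nondeterministic state $p$ with two outgoing transitions $t_r,t_g$, splits any path into an alternating ``red/green'' sequence of segments delimited by visits to $p$ and by which of $t_r,t_g$ is taken, and recursively decomposes the red part inside the sub-VASS-MDP without $t_g$ and the green part inside the sub-VASS-MDP without $t_r$, interleaving the results. Because each recursive call operates on a structurally smaller MDP and the red/green split is \emph{forced} by the path (no greedy choice), the online and probabilistic-state conditions fall out automatically, and the existence of a consistent attribution is guaranteed by the induction hypothesis rather than by a counting argument. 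If you wish to salvage a direct construction along your lines, you would have to either prove a per-state, per-transition balance invariant showing that a compatible track is always present under a \emph{specific, carefully chosen} priority order (your current pool of $|\Sigma_{\MD}|\cdot|Q|$ tracks does not automatically provide this), or look ahead along the path before committing tracks, which would destroy the online property the lemma requires.
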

	
	\begin{proof}
	Proof by induction on the number of outgoing transitions from non-deterministic states in \(\A\).
	
	Base case: Every non-deterministic state has exactly one outgoing transition. Then there exists only a single strategy \(\pi\) and it is MD. Therefore let \(k=|Q|\), \(\pi_1=\dots=\pi_{k}=\pi \), and \(p_1,\dots,p_k \) be all the distinct states of \(\A \). Then let \(\Decompose_\A(\epsilon)=\epsilon \), and for a path \(\alpha\) with initial state \(p_i\) let \(\Decompose_\A(\alpha)=\Decompose_\A(\alpha_{..\length(\alpha)-1})\circ\gamma^1\circ\gamma^2\circ\dots\circ\gamma^k\) such that \(\gamma^j=p_j \) for \(j\neq i\)  and \(\gamma^i=q,\bu,r \) where \(\alpha=p_i,\dots,q,\bu,r\).   
	
	Induction step: Assume every VASS MDP \(\A'\) with less then \(i \) outgoing transitions from non-deterministic states satisfies the lemma, and let \( \A\) have exactly \(i\) outgoing transitions from non-deterministic states.
	
	If \(\A\) contains no non-deterministic state \(p\in Q_n\) with  \(|\tout(p)|\geq 2 \) then base case applies. Otherwise, let us fix some state \(p\in Q_n\)  with \(|\tout(p)|\geq 2 \), and let \(t_r,t_g\in \tout(p)\) be such that \(t_r\neq t_g \). Let \(\A_r \) and \( \A_g \) be VASS MDPs obtained from \(\A\) by removing \(t_g \) and \(t_r\), respectively. 
	
	For any path \(\alpha \), we define a \emph{red/green-decomposition} of \(\alpha\) on \(\A\) as $\alpha = g_1\circ r_1\circ g_2\circ r_2\circ\dots\circ g_\ell \circ r_\ell$ (all of positive length except potentially \(g_1\) and \( r_\ell\)) satisfying the following:
	
	\begin{itemize}
		\item for every \(1\leq i<\ell \), the last state of $g_i$ is \(p\); 
		\item if \(\length(r_\ell)>0 \) then the last state of \(g_\ell \) is \(p\);
		\item for every \(1\leq i<\ell  \), the last state of $r_i$ is \(p\); 
		\item for every $1<i \leq \ell$, the first state of $g_i$ is \(p\) and the first transition of $g_i$ is \(t_g\);
		\item for every $1\leq i \leq \ell$, the first state of $r_i$ is \(p\)  and the first transition of $r_i$ is \(t_r\);
		\item $g_\alpha=g_1 \circ \cdots \circ g_\ell$ is a path on \(\A_g\). 
		\item $r_\alpha=r_1 \circ \cdots \circ r_\ell$ is a path on \(\A_r\). 
	\end{itemize} 
	
	Clearly every path on \(\A \) has a unique red/green-decomposition that can be computed online.
	
	Now let \(\Decompose_{\A_g}\) and \(\Decompose_{\A_r}\), \(\pi_1^g,\dots, \pi_{k_g}^g \) and \(\pi_1^r,\dots,\pi_{k_r}^r \), \(p_1^g,\dots, p_{k_g}^g \) and \(p_1^r,\dots,p_{k_r}^r \), \(k_g\) and \(k_r \) be the \(\Decompose \) functions, MD strategies, states and \(k\) values for \(\A_g \) and \( \A_r\), respectively. Note that their existence follows from the induction assumption. Then let \(k=k_g+k_r\), \(\pi_1=\pi_1^g,\pi_2=\pi_2^g,\dots, \pi_{k_g}=\pi_{k_g}^g,\pi_{k_g+1}=\pi_{1}^r,\pi_{k_g+2}=\pi_2^r,\dots,\pi_{k_g+k_r}=\pi_{k_r}^r \), \(p_1=p_1^g,p_2=p_2^g,\dots, p_{k_g}=p_{k_g}^g,p_{k_g+1}=p_{1}^r,p_{k_g+2}=p_2^r,\dots,p_{k_g+k_r}=p_{k_r}^r \).  We now define  \(\Decompose_\A(\epsilon)=\epsilon \) and  \(\Decompose_\A(\alpha)=\Decompose_\A(\alpha_{..\length(\alpha)-1})\circ\gamma^1\circ\gamma^2\circ\dots\circ\gamma^k \) such that if \(t=(q,\bu,r) \) is the last transition of \(\alpha \), and \(\alpha = g_1\circ r_1\circ g_2\circ r_2\dots g_\ell \circ r_\ell \) is the {red/green}-decomposition of \(\alpha\) on \(\A\), it holds:
	\begin{itemize}
		\item if \(\length(r_\ell)=0\), then let \(i\) be the \(\Decompose_{\A_g}\)-mode of $g_\alpha=g_1 \circ \cdots \circ g_\ell$. Then we put \(\gamma^j=p_j \) for each \(j\neq i\), and \(\gamma^i=q,\bu,r \); 
		\item if \(\length(r_\ell)>0\), then let \(i\) be the \(\Decompose_{\A_r}\)-mode of $r_\alpha=r_1 \circ \cdots \circ r_\ell$. Then we put \(\gamma^j=p_j \) for each \(j\neq k_g+i\), and \(\gamma^{k_g+i}=q,\bu,r\); 
	\end{itemize}
	\end{proof}


	From Lemma~\ref{lemma-MD-decompose} we can view any strategy \(\sigma\) on $\A$ as if \(\sigma \) were choosing "which of the $k$ MD strategies to advance" at each computational step. That is, let \(\alpha \) be some path produced by a computation under a strategy \(\sigma \), then the "MD strategy to advance" chosen by \(\sigma\) after \(\alpha \) is the MD strategy \(\pi_i \) where \(i\) is such that 
	\begin{itemize}
		\item if the last state \(p\) of \(\alpha \) is probabilistic, then \(i\) is the \(\Decompose_\A\)-mode of the path \(\alpha,\bu,q\), for any \((p,\bu,q)\in \tout(p) \) (note that \(i\) does not depend on which transition of \(\tout(p) \) is chosen);
		\item if the last state \(p\) of \(\alpha \) is non-deterministic,  then let \(t=(p,\bu,q)\in \tout(p) \) be the transition chosen by \(\sigma\) in \(\alpha\). Then  \(i\) is the \(\Decompose_\A\)-mode of the path \(\alpha,\bu,q\).
	\end{itemize}

	Each of these MD strategies \(\pi_i\) can be expressed using a Markov chain \(\M_i\) which is initialized in state \(p_i \). Whenever an MD strategy gets chosen, then the corresponding Markov chain makes one step.	 Naturally, there are some restrictions on which of the indexes can be chosen at a given time, namely a strategy can only choose an index \(i\) such that \(\M_i\) is currently in the same state as the Markov chain which was selected last. However, for our purposes we will consider \emph{pointing strategies} that are allowed to choose any index, regardless of the current situation. We shall call a VASS MDP where such \emph{pointing strategies} are allowed, while also adding a special ``die'' transition that causes instant termination an \emph{extended VASS MDP}.   
	 
	 Formally speaking, let \(\pi_1,\dots, \pi_k \) and \(p_1,\dots, p_k \) be the MD strategies and states from Lemma~\ref{lemma-MD-decompose} associated with \(\Decompose_\A \). An \emph{extended VASS MDP} associated to the 1-dimensional VASS MDP \(\A \) is the 2-dimensional VASS MDP \(\A'= \ce{Q', (Q_n',Q_p'),T',P'}\) where \( Q'=Q^k\times \{0,1,\dots,k \}\), \(Q_n'=Q^k\times \{0 \}\), \(Q_p=Q^k\times \{1,\dots,k \}\), and 
	 \begin{itemize}
	 	\item \(T'=T_n'\cup T_p'\cup T_{die}'\) where
	 	\begin{itemize}
	 		\item \(T_n'=\{((p_1,\dots,p_k,0)),(0,0),(p_1,\dots,p_k,i)\mid (p_1,\dots,p_k)\in Q^k, i\in \{1,\dots,k\}  \}\);
	 		\item \(T_p'= \{((p_1,\dots,p_k,i),(\bu_i,0),(p_1,\dots,p_{i-1},q_i,p_{i+1},\dots,p_k,0))\mid (p_1,\dots,p_k)\in Q^k, i\in \{1,\dots,k\}, \textit{ and either }\pi_i(p_i)=(p_i,\bu_i,q_i)\textit{ or both of } p_i\in Q_p \textit{ and }(p_i,\bu_i,q_i)\in T  \} \);
	 		\item \(T_{die}'=\{(p,(0,-1),p)\mid p\in Q_n'\} \);
	 	\end{itemize} 
	 	\item \(P'(((p_1,\dots,p_k,i),(\bu_i,0),(p_1,\dots,p_{i-1},q_i,p_{i+1},\dots,p_k,0)))= \begin{cases} 
	 		1 &\mbox{if } p_i\in Q_n \\ 
	 		P((p_i,\bu_i,q_i)) & \mbox{if } p_i\in Q_p 
	 	\end{cases} \)
	 \end{itemize}

	
	We call strategies on the extended VASS MDP \emph{pointing strategies}. Note that each strategy on a VASS MDP has an equivalent pointing strategy. Whenever a pointing strategy \(\sigma \) chooses a transition \(((p_1,\dots,p_k,0),(0,0),(p_1,\dots,p_k,i)) \), then we say \(\sigma \) pointed at the Markov chain \(\M_i \). 
	  Note that in the following we only consider computations on the extended VASS MDP initiated in the initial state \((p_1,\dots,p_k,0) \), and with the second counter being set to \(0\), so to simplify the notation, we will write only \(\prob^\sigma_n \) instead of \(\prob^\sigma_{(p_1,\dots,p_k,0)(n,0)} \).

	Given a sequence of strategies  \(\sigma_1,\sigma_2,\dots\) we will define a sequence of pointing strategies \(\sigma_1^\delta,\sigma_2^\delta,\dots \) such that each \(\sigma_n^\delta \) in some sense “behaves as” \(\sigma_n\), but at the same time it “precomputes” the individual Markov chains. 
Since a formal description of \(\sigma_n^\delta \) would be overly complicated, we will give only a high level description of \(\sigma_n^\delta \). The sequence  \(\sigma_1^\delta,\sigma_2^\delta,\dots \) is parameterized by \(0<\delta<1 \). To help us define the behavior of \(\sigma_n^\delta \), we assume \(\sigma_n^\delta \) “remembers” (it can always compute these from the input) some paths \(\gamma_1,\dots,\gamma_k,\alpha\). At the beginning these are all initialized to \(\gamma_1=\dots=\gamma_k=\alpha=\epsilon\). 

A computation under \(\sigma_n^\delta \) operates as follows: First \(\sigma_n^\delta \) internally selects  \(i\in\{1,\dots,k \}\)  that  \(\sigma_n \) would select after \(\alpha \); that is \(i\) is the \(\Decompose_\A\)-mode of \(\alpha'\), where \(\alpha'\) is such that if the last state \(p\) of \(\alpha\) is probabilistic then \(\alpha' \) is \(\alpha\) extended by a single transition, and if \(p \) is nondeterministic then \(\alpha'=\alpha,\bu,q \) is \(\alpha \) extended by the transition \((p,\bu,q) \) where \((p,\bu,q) \) is the transition chosen by \(\sigma_n\) in \(\alpha \) (i.e. \((p,\bu,q)\) is chosen at random using the probabilistic distribution \(\sigma_n(\alpha) \)). Then \(\sigma_n^\delta \) asks if \(\gamma_i\neq \epsilon \), if yes then it skips to step 2), otherwise it first performs step 1) before moving to step 2):
\begin{enumerate}[label=\arabic*)]
	\item Let \((p_1,\dots,p_k,0) \) be the current state of \(\A'\). Then in each non-deterministic state \(\sigma_n^\delta \) keeps pointing at \(\M_i \) until either, if \(p_i\) is not a state of a BSCC of \(M_i \), it reaches a state \((p_1,\dots,p_{i-1},q_i,p_{i+1},\dots,p_k,0) \) where \(q_i \) is a state of a BSCC of \(M_i \) while, or if \(p_i \) is a state of a BSCC of \(\M_i \) then \(\sigma_n^\delta \) stops pointing at \(\M_i \) with probability \(1/2\) each time the computation returns to \((p_1,\dots,p_k,0) \).
	
	  In both cases, if this takes more then $2n^\delta$ steps  then  $\sigma_n^\delta$  terminates using the “die” transitions (i.e. \(\sigma_n^\delta \) keeps reducing the second counter until termination using the transitions from \(T_{die}' \)). After this ends, \(\sigma_n^\delta \) sets \(\gamma_i \) to the path generated by the probabilistic transitions along this iterating (note that this can be seen as a path on \(\M_i \)).
	\item Let \(\gamma_i=p_1,\bu_1,p_2\dots p_\ell \). Since \(\ell>1 \) we have all the information needed to know which index \(\sigma_n \) would have chosen in it's next step. Let \(\alpha'=\alpha,\bu_1,p_2 \) be \(\alpha \) extended by the transition \((p_1,\bu_1,p_2)\). Then \(\sigma_n^\delta \) replaces \(\alpha \) with \(\alpha' \),  and \(\gamma_i \) with the path \(p_2\dots p_\ell \) obtained by removing the first transition from \(\gamma_i \).
\end{enumerate}

	 At this point this process repeats, until \(\alpha \) is a terminating path for initial counter value \(n\) on \(\A \), at which point \(\sigma^\delta_n \) terminates using the transitions from \(T_{die}' \).

	Let \(l=k\cdot u\), where \(u\) is the maximal possible change of the counter per single transition. When started in the initial state \((p_1,\dots,p_k,0) \), we can view \(\sigma_n^\delta \) as if we were computing as per \(\sigma_n \), but occasionally we made some “extra” (precomputed) steps in some of the Markov chains. These “extra” steps correspond exactly to the paths \(\gamma_1,\dots,\gamma_k \), and since probability of these being longer then \(kn^\delta\) decreases exponentially with \(n\), the probability. that \(\sigma_n^\delta \) started with initial counter values \((n+ln^{\delta},0) \) terminates before \(\sigma_n \) would have for initial counter value \(n \) in the first \(n^{2+\epsilon} \) steps, goes to \(0 \) as  \(n\) goes to \(\infty \), for each \(\epsilon>0\). Therefore, if it were to hold that \(\sigma_n \) can perform more then \(n^{2+\epsilon} \) steps with probability at least \(a>0 \) for some \(\epsilon>0 \) and for infinitely many \(n\),  conditioned that  \(\mecs{=} \beta \) for some \(\beta\) (note that \(\beta \) does not depend on \(n\)), then it holds \(\limsup_{n\rightarrow\infty} \prob_{n+ln^{\delta}}^{\sigma^\delta_n}(\calL\geq n^{2+\epsilon} )\geq \frac{a\cdot \weigth(\beta)}{2}>0\). 
	
	Similarly, the counter value of the first counter \(c\) of \(\A' \), when computing under \(\sigma_n^\delta \) from initial value \(n+ln^\delta \) is at each point at most \(n+ln^\delta\) plus the effect of the paths \( \gamma_1,\dots, \gamma_k\), and \(\alpha\). As the length of all of \(\gamma_1,\dots,\gamma_k \) is at most \(kn^\delta \), their total effect on the counter at each point can be at most \(ln^\delta \). And \(\alpha\) is the path generated by a computation of \(\sigma_n \). Therefore, if it were to hold that \(\sigma_n \) can pump the counter to more then \(n^{1+\epsilon} \) with probability at least \(a>0 \) for some \(\epsilon>0 \) and for infinitely many \(n\), conditioned that  \(\mecs{=} \beta \) for some \(\beta\) (note that \(\beta \) does not depend on \(n\)), then it holds \(\limsup_{n\rightarrow\infty} \prob_{n+ln^{\delta}}^{\sigma^\delta_n}(\calC[c]\geq n^{1+\epsilon}-ln^\delta )\geq \frac{a\cdot \weigth(\beta)}{2}>0 \).
	
	Therefore the following two lemmatta imply Lemmata~\ref{lemma-no-+-or-0d-effect-on-L}~and~\ref{lemma-no-+-effect-on-C}.

	
	
	
	
	

	\begin{lemma}\label{lemma-lim-linear-no-+-extended}
		If \(\A \) is a one-dimensional VASS MDP such that there is no increasing BSCC of \(A_\sigma \) for any \(\sigma\in \Sigma_{\MD} \), then \[\limsup_{n\rightarrow\infty} \prob_{n+ln^{\delta}}^{\sigma^\delta_n}(\calC[c]\geq n^{1+\epsilon}-ln^\delta )=0\] for each \(0<\delta<\epsilon<1 \).
	\end{lemma}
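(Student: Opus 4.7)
The plan is to control the counter value of $\A'$ at the boundaries of the ``chunks'' executed by $\sigma_n^\delta$ in step~1 of its definition. Every such chunk has length at most $2n^\delta$ in $\A'$ (otherwise $\sigma_n^\delta$ transitions to its dying phase), so the change it induces in the first counter is bounded by $2un^\delta$, where $u$ is the maximum single-transition counter update in $\A$. Chunks come in two flavours: at most $k$ \emph{reach-BSCC} chunks, one per Markov chain $\M_i$, whose purpose is to lead from the initial state $p_i$ into a BSCC of $\M_i$; and arbitrarily many \emph{BSCC-cycle} chunks, each consisting of a $\mathrm{Geom}(1/2)$ number of full cycles of $\M_i$ within a BSCC. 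Writing $C_m$ for the value of the first counter of $\A'$ after $m$ chunks, I decompose $C_m = (n+ln^\delta) + R_m + B_m$, where $R_m$ collects the reach-BSCC contributions and $B_m$ collects the BSCC-cycle contributions. Since at most $k$ reach-BSCC chunks ever occur and each contributes at most $2un^\delta$ in absolute value, $|R_m|\leq 2kun^\delta = 2ln^\delta$ throughout.

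Under the hypothesis that no BSCC of any $\M_i$ is increasing, every individual cycle in any BSCC of any $\M_i$ has non-positive expected counter effect. Applying Wald's identity to the geometric number of cycles forming a BSCC-cycle chunk, the conditional expected effect of each such chunk is also non-positive. Hence $B_m$ is a supermartingale (with respect to the natural filtration on chunks) whose increments are bounded by $2un^\delta$ in absolute value. Moreover, the first counter of $\A'$ must remain non-negative while the computation is alive (a negative counter yields a terminal configuration, and the dying phase does not modify the first counter), so $C_m \geq 0$ always, which forces $B_m \geq -(n+ln^\delta)-2ln^\delta$. The shifted process $\tilde B_m := B_m + n + 3ln^\delta$ is therefore a non-negative supermartingale with $\tilde B_0 = n + 3ln^\delta$, and the bound $|R_m|\leq 2ln^\delta$ yields the containment $\{C_m \geq L\} \subseteq \{\tilde B_m \geq L\}$ for $L := n^{1+\epsilon} - ln^\delta$. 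Doob's maximal inequality then gives
\[
    \prob^{\sigma_n^\delta}_{n+ln^\delta}\!\bigl[\,\sup_m C_m \geq L\,\bigr]
    \;\leq\; \frac{n + 3ln^\delta}{n^{1+\epsilon} - ln^\delta}
    \;\in\; O(n^{-\epsilon})\,,
\]
which tends to $0$ as $n \to \infty$ because $\epsilon > \delta > 0$. Since $\calC[c]$ records the supremum at all time-points rather than just at chunk boundaries, and the counter can change by at most $2un^\delta$ within any one chunk, extending the supremum from chunk boundaries to all time steps weakens the threshold $L$ only by an additive $2un^\delta$ and does not affect the asymptotic conclusion.

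The main technical obstacle will be handling the truncation of chunks that exceed the $2n^\delta$-step budget: for such a partial chunk, the conditional expected counter effect is \emph{not} automatically non-positive, which would potentially spoil the supermartingale property of $B_m$. However, because the expected length of a BSCC cycle is bounded by $|Q|$, the probability that any individual chunk requires more than $2n^\delta$ steps is exponentially small in $n^\delta$. The clean way to absorb this is to split the event $\{\calC[c] \geq L\}$ according to whether any chunk in the computation gets truncated: the ``truncated'' case is bounded by the super-polynomially small probability of any truncation occurring within any fixed polynomial horizon on the number of chunks, while in the ``untruncated'' case $B_m$ is a bona fide supermartingale and the Doob argument above applies verbatim. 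Both contributions are $O(n^{-\epsilon})$, completing the proof.
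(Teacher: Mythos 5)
Your proof takes a genuinely different route from the paper's. The paper reduces the claim to a statement about the \emph{original} strategy $\sigma_n$: it introduces the quantity $P_{n,\epsilon}^\delta$ (the probability of the cumulative effect reaching $n^{1+\epsilon}$ before $-n-ln^\delta$ or a die move), shows $\prob_{n+ln^\delta}^{\sigma_n^\delta}(\calC[c]\geq n^{1+\epsilon}-ln^\delta)\leq P_{n,\epsilon'}^\delta$, and proves $P_{n,\epsilon}^\delta\to 0$ by contradiction via Lemma~\ref{lemma-lim-for-Pdeltane} --- the contradiction argument identifies a fixed type $\beta$ along which $\calC[c]$ would grow super-linearly, which is then refuted by a MEC-by-MEC induction using the LP ranking-function supermartingale of Lemma~\ref{lemma-linear-upper-bound}. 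You instead construct a supermartingale entirely inside the extended VASS MDP: you decompose the counter at chunk boundaries as initial value plus a bounded reach-BSCC contribution $R_m$ plus a cycle contribution $B_m$, observe that non-increasingness of every BSCC together with Wald's identity makes $B_m$ a supermartingale, and invoke Doob's maximal inequality on the shifted non-negative process $\tilde B_m$. This avoids the constraint system entirely and is arguably more self-contained for dimension one; the paper's detour through the LP dual pays off because the same machinery is reused in the multi-dimensional analysis of Section~\ref{section-linquad}.

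Two technical points in your writeup deserve care. First, the claim that ``in the untruncated case $B_m$ is a bona fide supermartingale'' is not literally right: conditioning on the absence of truncation skews the chunk distributions, and the (truncated) chunk conditioned on being short need not have non-positive mean effect. The clean fix is a coupling: compare with the hypothetical process in which chunks are never truncated (there $B_m$ is unconditionally a supermartingale), note the two processes agree up to the first truncation, and observe that $\{\calC[c]\geq L\}$ forces the counter to reach $L-2un^\delta$ at some chunk boundary strictly before any truncation --- so the actual event is dominated by the same event in the untruncated process and Doob applies there. This also dispenses with your appeal to a ``fixed polynomial horizon on the number of chunks,'' which you never justify and which is actually not immediate (the expected number of chunks can be quadratic in $n$ and requires the kind of second-moment argument the paper packages in Lemma~\ref{lemma-randomwalkin-On2}). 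Second, when the first counter goes negative mid-chunk the computation terminates and $\tilde B_m$ at that chunk index can be negative, so you should stop or freeze the supermartingale at the last completed chunk before termination; since $\{\calC[c]\geq L\}$ already implies a chunk boundary with $C_m\geq L-2un^\delta$ was reached while the computation was still alive, this freezing does not weaken the conclusion. With these repairs the argument is sound and gives the same $O(n^{\delta-\epsilon})$ bound as the paper.
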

	
	\begin{lemma}\label{lemma-extended-limits}
	If \(\A \) is a one-dimensional VASS MDP such that there is no increasing or bounded-zero BSCC of \(A_\sigma \) for any \(\sigma\in \Sigma_{\MD} \), then \[\limsup_{n\rightarrow\infty} \prob_{n+ln^{\delta}}^{\sigma^\delta_n}(\calL\geq n^{2+\epsilon} )= 0 \] for each \(0<\delta<\epsilon<1 \). 
	\end{lemma}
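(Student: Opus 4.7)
The plan is to bound $\prob_{n+ln^{\delta}}^{\sigma^\delta_n}[\calL\geq n^{2+\epsilon}]$ by identifying a supermartingale with bounded-variance increments on the counter process, induced by the cycle structure of the chains $\M_1,\dots,\M_k$, and then applying a standard first-passage tail estimate for random walks. The fact that $\sigma^\delta_n$ always completes full cycles of the currently pointed chain while inside a BSCC is exactly what makes this cycle-level analysis available.

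Fix a computation under $\sigma^\delta_n$ started from $(p_1,\dots,p_k,0)$ with first counter value $n+ln^\delta$. By construction of $\sigma^\delta_n$, the computation decomposes into \emph{sessions}, each of which is either (a) a transient phase driving some $\M_i$ from its current state into a BSCC of $\M_i$ (there are at most $k$ such phases in the whole computation, contributing an additive $O(1)$ change to the counter), or (b) a BSCC phase executing a $\mathrm{Geometric}(1/2)$-distributed number of full returns to the base state $p_\B$ of some BSCC $\B$ of $\M_i$, capped at $2n^\delta$ underlying steps. Let $T_1<T_2<\cdots$ denote the step indices at which the currently pointed chain returns to its BSCC base state, and set $C_j$ to be the first-counter value at step $T_j$. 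The hypothesis excludes increasing and bounded-zero BSCCs, so every BSCC $\B$ of every $\A_{\pi_i}$ satisfies $\Exp[\update_\B]\leq 0$ and, when the drift is zero, $\Exp[\update_\B^2]\geq v>0$. Altogether, the per-cycle increment $\xi_j=C_{j+1}-C_j$ satisfies, conditionally on the active BSCC,
\[
\Exp[\xi_j\mid\calF_{T_j}]\leq 0,\qquad \Exp[\xi_j^2\mid\calF_{T_j}]\leq V,\qquad |\xi_j|\leq U\log n\ \text{with probability}\ 1-O(n^{-c}),
\]
for constants $V,U,c$ depending only on $\A$ (the tail bound on $|\xi_j|$ follows from the sub-exponential tail of first-return times in finite Markov chains).

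Thus $(C_j)_{j\geq 0}$ is a supermartingale with uniformly bounded conditional second moment. The termination event $\{\calL\leq t\}$ corresponds to the first step at which the first counter becomes negative; since between consecutive cycle boundaries the counter can deviate by at most a bounded constant $D$ from $C_j$, this event is sandwiched by the first cycle index $\tau^\star$ with $C_{\tau^\star}<D$. Standard first-passage estimates for supermartingales with bounded-variance increments yield
\[
\prob[\tau^\star>M]\ =\ O\!\left(\frac{n}{\sqrt{M}}\right),
\]
obtainable either via a reflection-principle argument (in the unbounded-zero, zero-drift case), via Chebyshev on the accumulated drift (in the strictly decreasing case), or uniformly through a Lyapunov argument on $C_j^2$ combined with optional stopping. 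A Chernoff concentration on $T_M=\Theta(M)$ (justified by bounded mean and sub-exponential tails of cycle lengths) then gives
\[
\prob_{n+ln^\delta}^{\sigma^\delta_n}[\calL\geq n^{2+\epsilon}]\ \leq\ \prob[\tau^\star\geq c'\,n^{2+\epsilon}]+o(1)\ =\ O(n^{-\epsilon/2})\ \longrightarrow\ 0.
\]

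The main obstacle I expect is the bookkeeping across mode switches: the BSCC active during cycle $j$ is selected, via the mode chosen by $\sigma_n$, in a history-dependent manner, so the $\xi_j$ are not i.i.d.\ but only conditionally independent given the joint history of mode choices and cycle outcomes. This is accommodated by verifying the supermartingale property and the moment bounds in conditional form with respect to the natural filtration including mode choices, and checking that the bounds hold uniformly over all BSCCs of all $\A_{\pi_i}$, which follows from the finite state space and the hypothesis excluding increasing and bounded-zero BSCCs. A secondary technical point is controlling the rare events that either a session exceeds its $2n^\delta$-step cap (triggering an immediate die transition, which can only aid termination) or that an individual cycle is abnormally long; both are absorbed into lower-order terms via standard sub-exponential concentration.
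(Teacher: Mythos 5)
Your route is genuinely different from the paper's. The paper (proof of Lemma~\ref{lemma-extended-limits} together with Lemmata~\ref{lemma-randomwalkin-On2} and~\ref{lemma-lim-for-Pdeltane}) uses a \emph{two-sided} stopping time $X_{n,\epsilon}^\delta$ (stop when the cumulative effect hits $-n-ln^\delta$ or hits $+n^{1+\epsilon}$ or a ``die'' move occurs), bounds $\Exp[X_{n,\epsilon}^\delta]$ by $O(n^{2+2\epsilon+\delta})$ via the quadratic submartingale $M_i^n=(S_i^n)^2-bi$ and optional stopping on truncated cycle effects, then applies Markov's inequality, and finally shows \emph{separately} that the upper boundary $+n^{1+\epsilon}$ is hit with vanishing probability $P_{n,\epsilon}^\delta\to 0$ by passing back to $\sigma_n$ and invoking the ranking-function bound of Lemma~\ref{lemma-linear-upper-bound}. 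You instead propose a direct \emph{one-sided} first-passage tail bound $\prob[\tau^\star>M]=O(n/\sqrt{M})$ at the cycle level, which would, if proved, fold all three steps into one estimate and would in fact be a slicker argument.

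The gap is precisely that bound. The ``Lyapunov argument on $C_j^2$ combined with optional stopping'' does not give a one-sided first-passage tail estimate as you state it. With $\Exp[\xi_j\mid\calF_{T_j}]\le 0$ and $\Exp[\xi_j^2\mid\calF_{T_j}]\ge v$, the increment of $C_j^2-vj$ is $2C_j\Exp[\xi_j\mid\calF_{T_j}]+\Exp[\xi_j^2\mid\calF_{T_j}]-v$, and the cross-term $2C_j\Exp[\xi_j\mid\calF_{T_j}]$ is non-positive but has no lower bound unless $C_j$ is bounded above. Hence $C_j^2-vj$ is \emph{not} a submartingale on the one-sided stopped process, and optional stopping alone yields nothing: indeed $\Exp[\tau^\star]=\infty$ in the unbounded-zero case (classical gambler's ruin), so no estimate of the form $\Exp[\tau^\star]\le\mathrm{poly}(n)$ is available. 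To salvage the argument you must reintroduce a two-sided stopping time at some level $K$, observe that the supermartingale property alone gives $\prob[\text{hit }K\text{ before }0]\le(n+O(n^\delta))/K$, bound $\Exp[\tau_{0\vee K}]=O(K^2)$ exactly as in Lemma~\ref{lemma-randomwalkin-On2}, apply Markov, and optimize over $K$. This gives a tail of order $O((n^2/M)^{1/3})$ rather than $O(n/\sqrt{M})$, which is still $\to 0$ for $M=n^{2+\epsilon}$, so the lemma survives, but the argument you wrote down does not produce your claimed rate and, more importantly, hides exactly the two-sided structure that the paper makes explicit.

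Two smaller issues. First, ``between consecutive cycle boundaries the counter can deviate by at most a bounded constant $D$ from $C_j$'' is too strong: the cap on a precomputation session is $2n^\delta$ steps, so intra-cycle deviations are $O(n^\delta)$ in the worst case; you can rescue this (since $\delta<\epsilon$ it is lower order), but the claim as written is false and needs to carry the $n^\delta$. Second, your conversion from number of steps to number of cycles via Chernoff on $T_M=\Theta(M)$ is fine for typical cycles, but you should note that the paper's ``die''-on-long-session mechanism already forces $\calL$ to terminate when a cycle exceeds $2n^\delta$ steps, so abnormally long cycles only help and can be conditioned away; as stated, the role of the die transitions (which is essential in the paper's construction of $\sigma_n^\delta$) is entirely absent from your sketch.
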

	
	\begin{proof}
		Let us begin with a proof for Lemma~\ref{lemma-extended-limits}. For simplification, let us assume that if the first counter of \(\A' \) becomes negative while iterating in some Markov chain $\M_j$ before it hits the target state of \(\M_j \), that is while performing step 1) as per the description of \(\sigma_n^\delta \), then the computation does not terminate and instead it continues until this target state is reached at which point the computation terminates if the counter is still negative. Clearly this can only prolong the computation. Therefore, each Markov chain $\M_j$ contributes to computation of $\sigma_n^\delta$ by at most a single path \(\alpha_j \) (to reach a BSCC), and then of cycles over some state of a BSCC of \(\M_j \). Let \(X_i^j \) denote the effect of the $i$-th cycle of $\M_j$ performed under the computation of \(\sigma_n^\delta \). As each BSCC of \(A_\sigma \) for any \(\sigma\in \Sigma_{\MD} \) is either unbounded-zero or decreasing,  it holds that either \(\Exp^{\sigma_n^\delta}(X_i^j)= 0 \) while $Var^{\sigma_n^\delta}(X_i^j)>0$ (unbounded-zero), or $\Exp^{\sigma_n^\delta}(X_i^j)<0$ (decreasing).

	It also holds that $\Exp^{\sigma_n^\delta}(\length(\alpha_j))=b_j $ for some constant \( b_j\), and as the length of each \(\alpha_i\) is bounded by \(n^\delta \), the maximal possible effect of all \(\alpha_1,\dots,\alpha_k \) on the counter is \(ln^\delta \). The maximal length of each cycle is bounded by $n^\delta$, therefore we can upper bound the expected length of all cycles of $\M_j$ as \(n^\delta \) times the expected number of such cycles. Clearly the expected number of cycles corresponding to decreasing BSCCs are at most linear as each such cycle moves expectation closer to \(0\), and there is no way to move the expectation away from \(0\) by more then a constant.
	
	To bound the expected number of cycles corresponding to class unbounded-zero BSCCs, we shall use the following lemma that is proven in appendix~\ref{app-randomwalkin-On2-proof}. 
	\begin{lemma}\label{lemma-randomwalkin-On2}
		Let \(\A\) be a one-dimensional VASS MDP, and let \(X_1,X_2,\dots\) be  random variables s.t. each $X_i$ corresponds to the effect of a path on some unbounded-zero BSCC \(\mathbb{B} \) of \(\A_\sigma \) for some \( \sigma\in \Sigma_{\MD}\),  that starts in some state \(p\) of \(\mathbb{B}\) and terminates with probability \(1/2\) every time \(p\) is reached again. Let \(S_0,S_1,\dots \) be defined as \(S_0=0, S_i=S_{i-1}+X_i \), and \(\tau_n\) be a stopping time such that either \(S_{\tau_n}\leq -n \) or \(S_{\tau_n}\geq n \). Then it holds \(\Exp(\tau_n)\in \calO(n^2) \).
	\end{lemma}
	
	 It says that the expected number of cycles before  their cumulative effect exceeds either \(n^{1+\mu} \) or \(-n^{1+\mu} \) is in \(\calO(n^{2+2\mu}) \), for each \(\mu \). Therefore the expected number of cycles upon either effect of \( -n-2ln^{\delta} \) or \(n^{1+\epsilon} \) is in \(\calO(n^{2+2\epsilon})\) for all \(\epsilon>0 \), as for all sufficiently large \(n\)  it holds \(-n^{1+\epsilon}\leq  -n-2ln^{\delta}  \). Therefore the expected length of whole computation, when started in \(n+ln^{\delta} \), and stopped upon either effect of \(-n-ln^{\delta} \) or \(n^{1+\epsilon} \) is in \(\calO(n^{2+2\epsilon}n^\delta)=\calO(n^{2+2\epsilon+\delta}) \). 

	Let \(X_{n,\epsilon}^\delta \) be the random variable encoding the number of steps the computation under \(\sigma_n^\delta \) takes before the effect on counter is either less than \(-n-ln^\delta\), or at least \(n^{1+\epsilon} \), or until \(\sigma_n^\delta \) performs a “die” move, whichever comes first. The above says that \(\Exp^{\sigma_n^\delta}_{n+kn^\delta}(X_{n,\epsilon}^\delta)\leq an^{2+2\epsilon+\delta} \) for some constant \(a\). Furthermore, let \(P_{n,\epsilon}^\delta \) be the probability that the computation under \(\sigma_n^\delta \) reaches effect on counter at least \(n^{1+\epsilon} \) before either hitting effect less then \(-n-ln^\delta\) or performing a “die” move. Note that for \(0<\epsilon'<\epsilon \), it holds \(\prob_{n+ln^{\delta}}^{\sigma_n^\delta}(X_{n,\epsilon}^\delta\geq X_{n,\epsilon'}^\delta) \leq P_{n,\epsilon'}^\delta  \).  Also note that it holds \(\prob_{n+ln^{\delta}}^{\sigma^\delta_n}(\calC[c]\geq n^{1+\epsilon}-ln^\delta )\leq P_{n,\epsilon'}^\delta  \) for each \(0<\epsilon'<\epsilon \) and for all sufficiently large \(n\), as for sufficiently large \(n\) if the counter reaches \(n^{1+\epsilon}-ln^\delta \) then it had to previously reach \(n^{1+\epsilon'} \), as \(n^{1+\epsilon'} \) grows asymptotically slower then \(n^{1+\epsilon}-ln^\delta \). 
	
	Now we shall use the following Lemma that is proven in the Appendix~\ref{app-lemma-lim-for-Pdeltane-proof}.
	
	\begin{lemma}\label{lemma-lim-for-Pdeltane}
		For each \(0<\delta<\epsilon<1 \), it holds \(\lim_{n\rightarrow\infty} P_{n,\epsilon}^\delta=0 \). 
		\end{lemma}
	
	Note that this already implies Lemma~\ref{lemma-lim-linear-no-+-extended}.
	
	To show also Lemma~\ref{lemma-extended-limits}, let us write
	\[
	\prob_{n+ln^{\delta}}^{\sigma_n^\delta}(\calL\geq n^{2+\epsilon}) \leq \prob_{n+ln^{\delta}}^{\sigma_n^\delta}(X_{n,\epsilon}^\delta \geq n^{2+\epsilon}) +P_{n,\epsilon}^\delta
	\]
	
	and for any \(0<\epsilon'<\epsilon \) \[
	\prob_{n+ln^{\delta}}^{\sigma_n^\delta}(X_{n,\epsilon}^\delta \geq n^{2+\epsilon})
	\leq 
	\prob_{n+ln^{\delta}}^{\sigma_n^\delta}(X_{n,\epsilon'}^\delta \geq n^{2+\epsilon})+\prob_{n+ln^{\delta}}^{\sigma_n^\delta}(X_{n,\epsilon}^\delta\geq X_{n,\epsilon'}^\delta) 
	\leq
	 \prob_{n+ln^{\delta}}^{\sigma_n^\delta}(X_{n,\epsilon'}^\delta \geq n^{2+\epsilon})+P_{n,\epsilon'}^\delta
	\]
	and from Markov inequality we  get \[
	\prob_{n+ln^{\delta}}^{\sigma_n^\delta}(X_{n,\epsilon'}^\delta \geq n^{2+\epsilon})
	\leq 
	\frac{an^{2+2\epsilon'+\delta}}{n^{2+\epsilon}}
	\]
	Which gives us 
	\[
	\prob_{n+ln^{\delta}}^{\sigma_n^\delta}(\calL\geq n^{2+\epsilon}) 
	\leq
	 \prob_{n+ln^{\delta}}^{\sigma_n^\delta}(X_{n,\epsilon'}^\delta \geq n^{2+\epsilon})+P_{n,\epsilon'}^\delta +P_{n,\epsilon}^\delta
	\leq
	\frac{an^{2+2\epsilon'+\delta}}{n^{2+\epsilon}} +P_{n,\epsilon'}^\delta +P_{n,\epsilon}^\delta   
	\]
	
	As this holds for each \(0<\epsilon'<\epsilon \), if we put \(\epsilon'=\frac{\epsilon-\delta}{4} \), then \(2\epsilon'+\delta=\frac{\epsilon-\delta}{2}+\frac{2\delta}{2}=\frac{\epsilon+\delta}{2}<\epsilon \) if \(\delta<\epsilon \), and therefore \(\lim_{n\rightarrow\infty}\frac{an^{2+2\epsilon'+\delta}}{n^{2+\epsilon}}=0  \). Therefore it holds\[
	\lim_{n\rightarrow\infty}\prob_{n+ln^{\delta}}^{\sigma_n^\delta}(\calL\geq n^{2+\epsilon}) 
	\leq
	\lim_{n\rightarrow \infty} (\frac{an^{2+2\epsilon'+\delta}}{n^{2+\epsilon}} +P_{n,\epsilon'}^\delta +P_{n,\epsilon}^\delta )=0 
	\]
	\end{proof}

\subsection{Proof of Lemma~\ref{lemma-randomwalkin-On2}}
\label{app-randomwalkin-On2-proof}

\begin{lemma*}[\textbf{\ref{lemma-randomwalkin-On2}}]
	Let \(\A\) be a one-dimensional VASS MDP, and let \(X_1,X_2,\dots\) be  random variables s.t. each $X_i$ corresponds to the effect of a path on some unbounded-zero BSCC \(\mathbb{B} \) of \(\A_\sigma \) for some \( \sigma\in \Sigma_{\MD}\),  that starts in some state \(p\) of \(\mathbb{B}\) and terminates with probability \(1/2\) every time \(p\) is reached again. Let \(S_0,S_1,\dots \) be defined as \(S_0=0, S_i=S_{i-1}+X_i \), and \(\tau_n\) be a stopping time such that either \(S_{\tau_n}\leq -n \) or \(S_{\tau_n}\geq n \). Then it holds \(\Exp(\tau_n)\in \calO(n^2) \).
\end{lemma*}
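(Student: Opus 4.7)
The plan is to reduce the claim to a standard second-moment estimate for i.i.d.\ centred random walks and then iterate. Three ingredients will be needed.

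First, I would verify that $X_1, X_2, \ldots$ are i.i.d.\ with $\Exp(X_1) = 0$, a positive and finite variance $\sigma^2$, and in fact all moments finite. Independence and identical distribution follow from the strong Markov property at the restart point $p$, since each $X_i$ is obtained by running the finite Markov chain $\A_\sigma$ restricted to $\mathbb{B}$ from $p$ until a geometric stopping. For the mean: under $\sigma$, $\mathbb{B}$ is an irreducible finite Markov chain, and the expected effect of one cycle from state $p$ equals the long-run average effect per step times the mean return time to $p$; the unbounded-zero condition at $p_\mathbb{B}$ forces this long-run average to be $0$, hence the expected one-cycle effect from \emph{any} state of $\mathbb{B}$ is $0$. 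Since $X_1$ is a sum of a $\mathrm{Geom}(1/2)$ number $N$ of such i.i.d.\ cycles, Wald's identity gives $\Exp(X_1) = \Exp(N)\cdot 0 = 0$. Positivity of $\sigma^2$ follows because the unbounded-zero condition produces a cycle through $p_\mathbb{B}$ of nonzero effect that may be composed with paths $p \to p_\mathbb{B} \to p$ to yield an excursion of nonzero effect with positive probability, so $\prob(X_1 \neq 0) > 0$. Finally, $N$ is geometric and each cycle length has exponential tails in a finite Markov chain, so $|X_1|$ has a small exponential moment---in particular $\Exp(X_1^4) < \infty$.

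Second, I would apply Paley--Zygmund to $Y := S_T^2$. The i.i.d.\ structure gives $\Exp(Y) = T\sigma^2$, and a standard fourth-moment expansion (only $X_i^4$ and $X_i^2 X_j^2$ terms survive because $\Exp(X_i) = 0$) yields $\Exp(Y^2) = \Exp(S_T^4) \leq C T^2$ for a constant $C$ depending only on the distribution of $X_1$. With $\theta = 1/4$, Paley--Zygmund then gives
\[
\prob\bigl(S_T^2 > \tfrac{1}{4} T\sigma^2\bigr) \;\geq\; \tfrac{9}{16}\cdot \frac{(T\sigma^2)^2}{\Exp(S_T^4)} \;\geq\; \frac{9\sigma^4}{16 C} \;=:\; c \;>\; 0.
\]
Choosing $T := \lceil 16 n^2 / \sigma^2 \rceil$ converts this into $\prob(|S_T| \geq 2n) \geq c$, with $c$ independent of $n$. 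Moreover, by translation the same bound holds from any starting point $s$ with $|s| < n$: writing $S'_k := S_k - s$, the event $\{|S'_T| \geq 2n\}$ still has probability at least $c$, and on that event $|S_T| \geq 2n - |s| \geq n$, so the walk has exited $(-n, n)$ by time $T$.

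Third, I would iterate over blocks of length $T$. Conditional on $\{\tau_n > (k{-}1)T\}$, the position $S_{(k-1)T}$ lies in $(-n, n)$, and by the strong Markov property together with the uniform-in-starting-position bound just established, the conditional probability of exiting $(-n, n)$ during the $k$th block is at least $c$. Therefore $\prob(\tau_n > kT) \leq (1-c)^k$, and
\[
\Exp(\tau_n) \;\leq\; T\sum_{k \geq 0} \prob(\tau_n > kT) \;\leq\; \frac{T}{c} \;\in\; O(n^2),
\]
as required. The main (mild) obstacle is the first ingredient: transferring the unbounded-zero information from the reference state $p_\mathbb{B}$ to the restart state $p$, and verifying the fourth-moment bound. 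Both reduce to standard facts about finite irreducible Markov chains with bounded transition rewards and are independent of the particular VASS MDP structure.
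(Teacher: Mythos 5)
Your route---Paley--Zygmund on $S_T^2$ followed by a geometric block argument---is genuinely different from the paper's, which truncates the increments asymmetrically, forms a bounded submartingale $M_i^n = (S_i^n)^2 - bi$, and invokes the optional stopping theorem. Both are legitimate strategies, but your first ingredient opens a gap: you take the $X_i$ to be i.i.d., and every later step leans on this. The lemma is more permissive than that. As the paper's own proof makes explicit, the $X_i$ may be drawn from any of finitely many distributions $D_1,\dots,D_x$ (one per unbounded-zero BSCC of each $\A_\sigma$, $\sigma\in\Sigma_{\MD}$), and in the lemma's actual application---bounding the number of cycles produced by the pointing strategy $\sigma_n^\delta$, which interleaves several Markov chains $\M_j$---the choice of which distribution governs $X_i$ depends on the history. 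So the $X_i$ form a martingale difference sequence with conditional variance uniformly bounded above and away from $0$, but they are neither independent nor identically distributed.

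This breaks your second ingredient as written. The clean expansion ``only $X_i^4$ and $X_i^2X_j^2$ survive because $\Exp(X_i)=0$'' uses independence, not merely centring: with an adaptive choice of distribution, a term $\Exp(X_iX_jX_k^2)$ with $i<j<k$ equals $\Exp(X_iX_j V_k)$ where $V_k=\Exp(X_k^2\mid\mathcal{F}_{k-1})$ is history-dependent, and it need not vanish; naive bounds on the $\Theta(T^3)$ such terms are too weak to yield $\Exp(S_T^4)=O(T^2)$. The step can be repaired with a martingale fourth-moment inequality (e.g. Burkholder--Davis--Gundy, giving $\Exp(S_T^4)\le C\,\Exp\bigl((\sum_i X_i^2)^2\bigr)\le CT\sum_i\Exp(X_i^4)\le C'T^2$ uniformly over the finite family and over adaptive choices), after which your Paley--Zygmund and block steps go through with constants independent of the sequencing. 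Your transfer of the unbounded-zero property from $p_{\mathbb{B}}$ to the restart state $p$ (conditional mean $0$, uniformly positive variance, exponential tails) is sound and matches what the paper's Lemma~\ref{lemma-finding-m-for-MD} establishes; you also correctly sidestep the paper's truncation since your argument never needs a bounded process for optional stopping. But without replacing the i.i.d. fourth-moment expansion by a martingale inequality, the proof only covers a special case of what the lemma asserts and uses.
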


Let us begin by showing the following technical result.

\begin{lemma}\label{lemma-finding-m-for-MD}
	Let \(\A \) be a one-dimensional VASS MDP. Let \(\mathbb{B} \) be an unbounded-zero BSCC of \(\A_\sigma \) for a strategy \(\sigma\in\Sigma_{\MD}\), and let \( p\) be a state of \(\mathbb{B} \). Let \(X \) denote the random variable representing the effect of a path under \(\sigma\) initiated in \(p\), that ends with probability \(1/2 \) every time the computation returns to \(p\). Then there exists a function \(m:\mathbb{N}\rightarrow \mathbb{N} \) such that \(m(n)\geq 2n \), \(m\in \calO(n) \), and such that for all sufficiently large \(n\) we get for
	\[X' = \begin{cases}
		m  & X\leq -2n \text{ or } X\geq m(n) \\
		X & \text{else}
	\end{cases}  \]
	that it holds \(\Exp^\sigma_p(X')\geq 0 \) and $Var^\sigma_p(X')\geq a$ for some $a>0$ that does not depend on $n$.

\end{lemma}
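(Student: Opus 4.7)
The plan is to exploit the geometric-sum structure of $X$ and choose a linear truncation threshold large enough that the mass the truncation deposits at the value $m(n)$ on the lower tail compensates for the cost of truncating the upper tail. Since $\mathbb{B}$ is a BSCC of the finite-state Markov chain $\A_\sigma$ and the process stops with probability $1/2$ at each return to $p$, we have $X = \sum_{i=1}^{N} Y_i$, where $N \sim \mathrm{Geom}(1/2)$ is independent of the i.i.d.\ sequence $Y_1, Y_2, \ldots$ of copies of $\update_\mathbb{B}$. Wald's identity gives $\Exp X = \Exp(N)\cdot \Exp Y_1 = 0$; the unbounded-zero hypothesis $\Pr(Y_1 = 0) < 1$ together with $\Exp Y_1 = 0$ forces $\mathrm{Var}(Y_1) > 0$ and ensures that $Y_1$ takes strictly negative values with positive probability, so there exist constants $k_0, p_- > 0$ (depending only on $\A$ and $\sigma$) with $\Pr(Y_1 \leq -k_0) \geq p_-$. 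Blackwell--Girshick yields $\mathrm{Var}(X) = 2\,\mathrm{Var}(Y_1) > 0$. Standard Markov-chain arguments give an exponential tail for the return time, and since each transition changes the counter by at most a fixed $u$, we obtain constants $A, \lambda > 0$ with $\Pr(|X| \geq t) \leq A\, e^{-\lambda t}$ for every $t \geq 0$.

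The critical quantitative step is a matching exponential lower bound on the left tail. Setting $K_n := \lceil 2n/k_0 \rceil$ and forcing $N = K_n$ together with $Y_1, \ldots, Y_{K_n}$ all $\leq -k_0$ gives
\[
\Pr(X \leq -2n) \;\geq\; (1/2)^{K_n}\, p_-^{K_n} \;\geq\; B\, e^{-\mu n}
\]
for suitable $B, \mu > 0$ depending only on $\A$ and $\sigma$. Choose any integer $C \geq 2$ with $\lambda C > \mu$ and put $m(n) := C n$, so $m(n) \geq 2n$ and $m \in \calO(n)$. Using $\Exp X = 0$ we rewrite
\[
\Exp(X') \;=\; -\Exp(X;\, X \leq -2n) - \Exp(X;\, X \geq m(n)) + m(n)\bigl[\Pr(X \leq -2n) + \Pr(X \geq m(n))\bigr],
\]
observe $-\Exp(X;\, X \leq -2n) \geq 2n \Pr(X \leq -2n) \geq 0$, and bound the upper-tail correction via Fubini:
\[
\Exp(X;\, X \geq m(n)) - m(n)\Pr(X \geq m(n)) \;=\; \int_0^\infty \Pr(X \geq m(n) + t)\, dt \;\leq\; \tfrac{A}{\lambda}\, e^{-\lambda C n}.
\]
Combining these estimates,
\[
\Exp(X') \;\geq\; (2n + m(n))\, B\, e^{-\mu n} \;-\; \tfrac{A}{\lambda}\, e^{-\lambda C n} \;\geq\; 0
\]
for all sufficiently large $n$, thanks to $\lambda C > \mu$.

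For the variance, note that $(X')^2 \geq X^2\, \mathbb{1}_{-2n < X < m(n)}$, so monotone convergence yields $\Exp[(X')^2] \to \Exp[X^2] = \mathrm{Var}(X) > 0$ as $n \to \infty$; the exponential tail bounds also show $m(n)^2\bigl[\Pr(X \leq -2n) + \Pr(X \geq m(n))\bigr] \to 0$, and an analogous calculation gives $|\Exp X'| = |\Exp(X' - X)| \to 0$. Hence $\mathrm{Var}(X') \to \mathrm{Var}(X)$, so taking $a := \mathrm{Var}(X)/2$ (independent of $n$) we have $\mathrm{Var}(X') \geq a$ for all sufficiently large $n$. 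The main obstacle is the exponential lower bound on $\Pr(X \leq -2n)$: this is the step that turns the qualitative unbounded-zero hypothesis into a usable quantitative estimate, and it is precisely what distinguishes the present case from the bounded-zero case, where no such bound holds and the conclusion of the lemma would fail.
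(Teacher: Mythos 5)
Your proposal is correct and follows essentially the same strategy as the paper's proof: both hinge on pairing the exponential upper bound $\prob(|X|\geq t)\leq A e^{-\lambda t}$ with an exponential \emph{lower} bound $\prob(X\leq -2n)\geq B e^{-\mu n}$, and then taking $m(n)$ a large enough linear multiple of $n$ (you take $Cn$ with $\lambda C>\mu$; the paper takes $4n\log_b a$) so that the upper-tail correction is dominated by the mass deposited at $m(n)$, and both handle the variance by observing that linear-scale truncation removes only exponentially small mass. The main cosmetic difference is that you make the renewal decomposition $X=\sum_{i=1}^{N}Y_i$ and the Wald/Blackwell--Girshick machinery explicit where the paper states the two tail bounds directly, and you should be slightly careful that the $Y_i$ are excursions from $p$ rather than from $p_{\B}$ (which matters only notationally, since the drift and recurrence properties transfer between states of the same BSCC).
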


\begin{proof}


		Since \(\mathbb{B}\) is unbounded-zero, there exists both a positive as well as a negative cycle on \(\mathbb{B}\). Therefore there exists some \(a>0\) such that $\prob^\sigma_p(X\leq -i)\geq a^i$. Also \(X \) is unbounded both from above as well as from below. As every $|Q|$ steps there is non-zero, bounded from below by a constant, probability that we terminate in at most $|Q|$ steps, it holds for each \(i>0 \) that $\prob^\sigma_p(|X|>i)\leq b^i$ for some \(b<1 \). Therefore also $\prob^\sigma_p(X\geq i)\leq b^i$. We claim the lemma holds for any \(m(n)\geq 4n\log_{b} a \).
	
	

	It holds \begin{align*}
	\sum_{i=m(n)}^{\infty} i\prob_{p}^\sigma(X=i)
	\leq
	\sum_{i=m(n)}^{\infty} i\prob_{p}^\sigma(X\geq i)
	\leq
	\sum_{i=m(n)}^{\infty} ib^i=\frac{b^{m(n)}(-bm(n)+b+m(n))}{(b-1)^2} 
\end{align*}

	And if we put in the value \(m(n)=x4n\log_{b} a  \), for \(x\geq 1\) we obtain 
	
	\begin{align*}
	\frac{b^{m(n)}(-bm(n)+b+m(n))}{(b-1)^2}  
	=&
	\frac{b^{x4n\log_{b} a}(-b(x4n\log_{b} a)+b+(x4n\log_{b} a))}{(b-1)^2} 
	=\\&=
	\frac{a^{x4n}(-bx4n\log_{b} a+b+x4n\log_{b} a)}{(b-1)^2} 
\end{align*}

	And furthermore, \[	m(n)(\prob_{p}^\sigma(X\geq m(n))+\prob_{p}^\sigma(X\leq -2n))
	\geq
	  m(n)\prob_{p}^\sigma(X\leq -2n)
	  \geq 
	   m(n)a^{2n}
	   =
	   a^{2n}x4n\log_{b} a 
	\]

	Also, as \(a<1\), it holds for all sufficiently large \(n\) that
	 \[
	 \frac{a^{x4n}(-bx4n\log_{b} a+b+x4n\log_{b} a)}{(b-1)^2} <
	 a^{2n}x4n\log_{b} a 
	 \]
	 
	  Therefore it holds 
	  
	  \begin{align*}
	  	&\Exp^\sigma_p(X')
	  	=
	  	m(n)(\prob_{p}^\sigma(X\geq m(n))+\prob_{p}^\sigma(X\leq -2n)) +\sum_{i=-2n+1}^{m(n)-1} i\prob_{p}^\sigma(X=i) 	
	  	\geq\\ &\geq 
	  	\sum_{i=m(n)}^{\infty} i\prob_{p}^\sigma(X=i) + \sum_{i=-2n+1}^{m(n)-1} i\prob_{p}^\sigma(X=i) 	  
	  \end{align*}
	 
	 And it also holds\begin{align*}
	  	&0=\Exp^\sigma_p(X)
	  	=
	  	\sum_{i=-\infty}^{\infty} i\prob_{p}^\sigma(X=i)
	  	=
	  	\sum_{i=-\infty}^{-2n} i\prob_{p}^\sigma(X=i) + \sum_{i=-2n+1}^{m(n)-1} i\prob_{p}^\sigma(X=i) + \sum_{i=m(n)}^{\infty} i\prob_{p}^\sigma(X=i) 
	  	\leq \\&\leq 
	  	 	\sum_{i=m(n)}^{\infty} i\prob_{p}^\sigma(X=i) + \sum_{i=-2n+1}^{m(n)-1} i\prob_{p}^\sigma(X=i) 
	  \end{align*}	
  
  And therefore \[\Exp^\sigma_p(X')\geq \sum_{i=m(n)}^{\infty} i\prob_{p}^\sigma(X=i) + \sum_{i=-2n+1}^{m(n)-1} i\prob_{p}^\sigma(X=i)\geq 0 \]

	For the part about $Var(X')$.	Since \(\mathbb{B} \) is unbounded-zero, it holds that $\Exp^\sigma_p(X)^2=Var^\sigma_p(X)\geq y>0$ for some $y$. Therefore it holds for each \(n\) that	\[0<y\leq \Exp^\sigma_p(X^2 )= \sum_{i=1}^{\infty} i^2\prob_p^\sigma(|X|=i) =   \sum_{i=1}^{m(n)} i^2\prob_p^\sigma(|X|=i) +\sum_{i=m(n)}^{\infty} i^2\prob_p^\sigma(|X|=i)\leq  \]
	\[\leq
	\sum_{i=1}^{m(n)} i^2\prob_p^\sigma(|X|=i) +\sum_{i=m(n)}^{\infty} i^2 \prob_p^\sigma(|X|\geq i) 
	\leq 
	\sum_{i=1}^{m(n)} i^2\prob_p^\sigma(|X|=i)  +\sum_{i=m(n)}^{\infty} i^2b^i 
	\]
	And \[
	\sum_{i=m(n)}^{\infty} i^2b^i =
	\frac{b^{m(n)} (m^2(n) (-b^2) + 2 m^2(n) b - m^2(n) + 2 m(n) b^2 - 2 m(n) b - b^2 - b)}{(b - 1)^3}
	\]
	
	But this fraction is dominated by \(b^{m(n)} \) which decreases exponentially in \(n\) (as \(b<1\)). Therefore  for all sufficiently large \(n\) it holds \(y/2 \leq \sum_{i=1}^{m(n)} i^2\prob_p^\sigma(|X|=i) \). But this gives us $\Exp^\sigma_p((X')^2)\geq \sum_{i=1}^{m(n)} i^2\prob_p^\sigma(|X|=i)\geq b/2 $ for each \(m(n)\geq n \) and all sufficiently large \(n\).
\end{proof}


Let us now restate the Lemma~\ref{lemma-randomwalkin-On2}.

\begin{lemma*}[\textbf{\ref{lemma-randomwalkin-On2}}]
Let \(\A\) be a one-dimensional VASS MDP, and let \(X_1,X_2,\dots\) be  random variables s.t. each $X_i$ corresponds to the effect of a path on some unbounded-zero BSCC \(\mathbb{B} \) of \(\A_\sigma \) for some \( \sigma\in \Sigma_{\MD}\),  that starts in some state \(p\) of \(\mathbb{B}\) and terminates with probability \(1/2\) every time \(p\) is reached again. Let \(S_0,S_1,\dots \) be defined as \(S_0=0, S_i=S_{i-1}+X_i \), and \(\tau_n\) be a stopping time such that either \(S_{\tau_n}\leq -n \) or \(S_{\tau_n}\geq n \). Then it holds \(\Exp(\tau_n)\in \calO(n^2) \).
	\end{lemma*}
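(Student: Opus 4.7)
The plan is to apply the optional stopping theorem to the variance--compensated martingale of $(S_i)$, exploiting uniformity of moments that follows from the finiteness of $\A$.

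First, I would extract uniform bounds on the increments. Only finitely many triples $(\sigma,\mathbb{B},p)$ with $\sigma\in\Sigma_{\MD}$, $\mathbb{B}$ an unbounded-zero BSCC of $\A_\sigma$, and $p$ a state of $\mathbb{B}$ can generate an $X_i$. For each such triple, $X_i$ equals the sum of a $\mathrm{Geom}(\tfrac12)$ number $G$ of i.i.d.\ copies of $\update_\mathbb{B}$ (the first-return-to-$p$ counter change), so by Wald's identities $\Exp[X_i\mid\mathcal{F}_{i-1}]=\Exp[G]\cdot 0=0$ and $\mathrm{Var}(X_i\mid\mathcal{F}_{i-1})=\Exp[G]\,\mathrm{Var}(\update_\mathbb{B})=2\,\mathrm{Var}(\update_\mathbb{B})>0$ (the positivity because $\prob[\update_\mathbb{B}{=}0]<1$ together with $\Exp[\update_\mathbb{B}]=0$ forces $\update_\mathbb{B}$ to assume both signs); furthermore $\prob(|X_i|\geq k\mid\mathcal{F}_{i-1})\leq C\rho^k$ for some $C<\infty$ and $\rho\in(0,1)$, since return times on a finite BSCC have geometric tails, the number of returns before termination is $\mathrm{Geom}(\tfrac12)$, and per-step counter updates are bounded. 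Taking worst constants over the finite family yields uniform $a>0$, $C<\infty$, $\rho\in(0,1)$ with $\Exp[X_i\mid\mathcal{F}_{i-1}]=0$, $\Exp[X_i^2\mid\mathcal{F}_{i-1}]\geq a$, and $\prob(|X_i|\geq k\mid\mathcal{F}_{i-1})\leq C\rho^k$. Consequently $(S_i)$ is a zero-mean martingale and $M_i:=S_i^2-V_i$, where $V_i:=\sum_{j\leq i}\Exp[X_j^2\mid\mathcal{F}_{j-1}]\geq ai$, is also a martingale.

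Next, I would apply the optional stopping theorem to $M$ at the bounded time $\tau_n\wedge m$ to get $\Exp[S_{\tau_n\wedge m}^2]=\Exp[V_{\tau_n\wedge m}]\geq a\,\Exp[\tau_n\wedge m]$, and bound the left side by $\calO(n^2)$. Since $|S_i|<n$ for $i<\tau_n$, we have $S_{\tau_n\wedge m}^2\leq 2n^2+2X_{\tau_n}^2\,\mathbf{1}(\tau_n\leq m)$. Choose $L:=\lceil c\log n\rceil$ with $c$ large enough that $C'L^2\rho^L\leq a/2$ for all $n$ sufficiently large, where $C'$ is the constant arising in the tail-moment estimate below. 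On $\{|X_{\tau_n}|\leq L,\tau_n\leq m\}$ we have $X_{\tau_n}^2\leq L^2=\calO(\log^2 n)$; on its complement the tower property and the tail bound give
\[
\Exp\!\bigl[X_{\tau_n}^2\mathbf{1}(\tau_n\leq m)\mathbf{1}(|X_{\tau_n}|>L)\bigr]\leq\sum_{i=1}^m\Exp\!\bigl[\mathbf{1}(\tau_n\geq i)\,\Exp[X_i^2\mathbf{1}(|X_i|>L)\mid\mathcal{F}_{i-1}]\bigr],
\]
with the inner conditional expectation at most $\sum_{j>L}j^2 C\rho^j\leq C'L^2\rho^L$. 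Since $\sum_{i=1}^m\prob(\tau_n\geq i)=\Exp[\tau_n\wedge m]$, the whole expression is at most $(a/2)\Exp[\tau_n\wedge m]$. Plugging back yields $a\Exp[\tau_n\wedge m]\leq 2n^2+\calO(\log^2 n)+\tfrac{a}{2}\Exp[\tau_n\wedge m]$, hence $\Exp[\tau_n\wedge m]\leq (4n^2+\calO(\log^2 n))/a$ uniformly in $m$. Monotone convergence $\tau_n\wedge m\nearrow\tau_n$ gives $\Exp[\tau_n]\leq(4n^2+\calO(\log^2 n))/a\in\calO(n^2)$.

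The main obstacle is the unboundedness of $|X_i|$, which rules out any deterministic ``$|S_{\tau_n}|\leq n+L_{\max}$'' bound on the overshoot. The logarithmic truncation $L=\Theta(\log n)$ resolves this: by the exponential tail, jumps exceeding $L$ are so rare that their contribution to the second moment is a vanishing multiple of $\Exp[\tau_n\wedge m]$ which can be absorbed into the left-hand side, while on the typical event the overshoot is only $\calO(\log n)$, negligible compared to $n$.
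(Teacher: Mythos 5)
Your proof is correct and follows the same high-level strategy as the paper's: exploit the finite set of MD-strategy/BSCC pairs to get uniform second-moment bounds on the increments, form the variance-compensated martingale $S_i^2 - V_i$, and apply optional stopping to extract $\Exp[\tau_n] = \calO(n^2)$. Where you differ is in handling the overshoot and unboundedness of the $X_i$. The paper proves a technical lemma (its Lemma~\ref{lemma-finding-m-for-MD}) that truncates each $X$ at $\calO(n)$ in a way carefully chosen so that the truncated step still has nonnegative mean and variance bounded away from zero, which makes $|S_{\tau_n}|$ deterministically $\calO(n)$ and lets OST go through directly (after separately noting $\Exp[\tau_n] < \infty$). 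You instead leave the increments untruncated, apply OST at the bounded time $\tau_n \wedge m$, and control the overshoot $X_{\tau_n}^2$ by splitting at a $\Theta(\log n)$ threshold: the small-jump contribution is $\calO(\log^2 n)$ and the large-jump contribution is, via the exponential tail and $\sum_{i\le m}\prob(\tau_n\ge i)=\Exp[\tau_n\wedge m]$, an absorbable fraction of $\Exp[\tau_n\wedge m]$. Your route avoids the paper's ad hoc truncation lemma, is closer to standard stochastic-process technique, and gets finiteness of $\Exp[\tau_n]$ for free from monotone convergence rather than needing it as a hypothesis for OST. One arithmetic slip: since you used $S_{\tau_n\wedge m}^2 \le 2n^2 + 2X_{\tau_n}^2\mathbf{1}(\tau_n\le m)$, the factor $2$ means you should choose $L$ so that $C'L^2\rho^L \le a/4$ (not $a/2$) to land on $(a/2)\Exp[\tau_n\wedge m]$ after multiplying by $2$; this is cosmetic and does not affect the argument.
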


\begin{proof}
	As there are only finitely many BSCCs of \(\A_\sigma \) for \(\sigma\in\Sigma_{\MD}\), and each of them has only finitely many states, there are only finitely many distributions \(D_1,\dots,D_x \) such that each \(X_i\approx D_y \) for some \(1\leq y \leq x \).
	
	Let \(X_1^n,X_2^n,\dots\) be random variables such that

	\[X_i^n = \begin{cases}
		m(n)  & X_i\leq -2n \text{ or } X_i\geq m(n) \\
		X_i & \text{else}
	\end{cases}  \]
	
	where \(m(n)=an \) is the maximal value of \(m(n) \) obtained from Lemma~\ref{lemma-finding-m-for-MD} for any unbounded-zero BSCC of any \(\A_\sigma \) for any \(\sigma\in \Sigma_\MD \), and \( a\) is some constant. Then it holds that \(\Exp(X_i^n)\geq 0 \), and there exists \(b>0 \) that does not depend on \(n\) such that \(\Exp((X_i^n)^2)\geq b \) for each \( i\). 

	Let \(S_0^n,S_1^n,\dots \) be a random walk defined as \(S_0^n=2n, S_i^n=S_{i-1}^n+X_i^n \), and let \(\tau_n'\) be a stopping time such that either \(S_{\tau_n'}^n\leq 2n-n \) or \(S^n_{\tau_n'}\geq 2n+n \). Clearly it holds that \(\tau_n'=\tau_n \), therefore it is enough to show that \(\Exp(\tau_n')\in \calO(n^2) \). 
	
	Let us proceed by showing the following. 
	
	
	%
	%
		%
		%

	\begin{lemma}
		Let \(M_i^n=(S_i^n)^2 - bi \). Then  \(M_0^n,M_1^n,\dots \) is a submartingale.
	\end{lemma}
	
	\begin{proof}
	
		\begin{align*}
			&\Exp(M^n_{i+1} \mid X_{i}^n,\dots,X_1^n)
			=
			\Exp((S_{i+1}^n)^2 -b(i+1) \mid X_{i}^n,\dots,X_1^n)
			\\&=
			\Exp((S_{i}^n+X_{i+1}^n)^2 -b(i+1) \mid X_{i}^n,\dots,X_1^n)
			\\&=
			\Exp((S_{i}^n)^2+2S_{i}^nX_{i+1}^n + (X_{i+1}^n)^2 -b(i+1) \mid X_{i}^n,\dots,X_1^n)
			\\&=
			(S_{i}^n)^2+2S_{i}^n\Exp(X_{i+1}^n\mid X_{i}^n,\dots,X_1^n))  + \Exp((X_{i+1}^n)^2\mid X_{i}^n,\dots,X_1^n) -b(i+1) 
			\\ &\geq
			(S_{i}^n)^2+0  + b -b(i+1) 		
			=
			(S_{i}^n)^2 +b - bi -b
			= 
			(S_{i}^n)^2 - bi
			=
			M_i
		\end{align*}
	\end{proof}

	
	As it holds that \(\Exp(\tau_n')<\infty \), from the optional stopping theorem we obtain 
	\(\Exp(M_0^n)\leq \Exp(M^n_{\tau_n'}) \) which can be rewritten as \((2n)^2 \leq \Exp((S^n_{\tau_n'})^2 -b\tau_n')  =\Exp((S_{\tau_n'}^n)^2) -b\Exp(\tau_n') \).
	As it holds \((S^n_{\tau_n'})^2\leq (3n+m(n))^2= (3n+an)^2= (9+6a+a^2)n^2 \) this gives us \( 4n^2+b\Exp(\tau_n')\leq \Exp((S^n_{\tau'})^2)\leq (9+6a+a^2)n^2  \) and  so \(\Exp(\tau_n')\leq \frac{(5+6a+a^2)n^2}{b}\in \calO(n^2) \).

\end{proof}

%
%
%
%
%
%
%
%

\subsection{Proof of Lemma~\ref{lemma-lim-for-Pdeltane}}
\label{app-lemma-lim-for-Pdeltane-proof}

\begin{lemma*}[\textbf{\ref{lemma-lim-for-Pdeltane}}]
	For each \(0<\delta<\epsilon<1 \), it holds \(\lim_{n\rightarrow\infty} P_{n,\epsilon}^\delta=0 \). 
\end{lemma*}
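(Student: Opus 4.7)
The plan is to express the counter effect as a supermartingale, up to lower-order corrections, and then bound the upper-hit probability via optional stopping.

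First, I would decompose the computation under $\sigma_n^\delta$ into \emph{completed units}: at most $k$ \emph{initial paths} $\alpha_j$ (each taking some chain $\M_j$ to a BSCC state) and a sequence of \emph{cycles} of the BSCCs being iterated. By the construction of $\sigma_n^\delta$, each unit is pre-computed in step~1) within at most $2n^\delta$ steps (else the die transition fires), so its effect on the counter is bounded by $2un^\delta = O(n^\delta)$ in absolute value. Moreover, since no $\A_\sigma$ with $\sigma \in \Sigma_\MD$ has an increasing BSCC, the $i$-th cycle effect $Y_i$ satisfies $\Exp(Y_i \mid \mathcal{F}_{i-1}) \leq 0$ (strictly negative for decreasing BSCCs, and identically zero for bounded- or unbounded-zero BSCCs).

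Next, I would set $W_m := \sum_{i=1}^m Y_i$; this is a supermartingale with bounded increments $|W_{m+1}-W_m| \leq 2un^\delta$. Since the contribution of all $\alpha_j$ plus any partial in-flight cycle amounts to at most $(k{+}1)\cdot 2un^\delta = O(n^\delta)$ in absolute value, the event that the counter effect reaches $n^{1+\epsilon}$ is contained in $\{\exists m : W_m \geq n^{1+\epsilon} - Dn^\delta\}$ for a suitable constant $D$. Letting $\rho$ be the first index $m$ with $W_m \geq n^{1+\epsilon} - Dn^\delta$ or $W_m \leq -n - ln^\delta - Dn^\delta$, the stopped process $(W_{\rho \wedge m})_m$ is a supermartingale uniformly bounded by $n^{1+\epsilon} + O(n^\delta)$ in absolute value (the overshoot in one step is at most $2un^\delta$). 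By bounded convergence, $W_{\rho \wedge m} \to W_{\rho^*}$ almost surely and in $L^1$, giving $\Exp(W_{\rho^*}) \leq 0$.

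Writing $P = \prob(W_{\rho^*} \geq n^{1+\epsilon} - Dn^\delta)$, we have $P_{n,\epsilon}^\delta \leq P$. Decomposing the expectation by whether we hit the upper or the lower region (and using $W_{\rho^*} \in (-n-ln^\delta, n^{1+\epsilon})$ on $\{\rho = \infty\}$) gives
\begin{align*}
0 \,\geq\, \Exp(W_{\rho^*}) \,\geq\, P\,(n^{1+\epsilon} - Dn^\delta) \,-\, (1-P)\,(n+ln^\delta + Dn^\delta),
\end{align*}
so $P \leq \tfrac{n + O(n^\delta)}{n^{1+\epsilon}} = O(n^{-\epsilon}) \to 0$ as $n \to \infty$, since $\delta < \epsilon$ implies in particular $\delta < 1+\epsilon$. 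This gives the desired limit.

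The main obstacle is checking that $(W_m)$ really is a supermartingale despite the adaptive interleaving of chains by $\sigma_n^\delta$: once the strategy has, in its internal pre-computation in step~1, committed to producing the next cycle in some $\M_j$, the conditional expected effect of that cycle equals $\Exp^{\pi_j}_{p_\B}(\update_\B) \leq 0$ for the BSCC $\B$ currently housing $\M_j$, independent of how $\sigma_n^\delta$ selected $j$ or interleaved the chains. Once this is established, all remaining technicalities --- overshoot at the thresholds, the contribution of the $\alpha_j$, and the case $\rho = \infty$ (which happens only when bounded-zero BSCCs trap every non-zero drift, forcing $W_m \equiv 0$ and $P_{n,\epsilon}^\delta = 0$ trivially) --- are uniform $O(n^\delta)$ corrections absorbed into the error term.
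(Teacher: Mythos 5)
Your proof takes a genuinely different route from the paper's. The paper argues by contradiction: it assumes $\limsup_{n\to\infty} P_{n,\epsilon}^\delta = a > 0$, observes that the effect of the internal path $\alpha$ in $\sigma_n^\delta$ (which is a path under $\sigma_n$ on the original VASS MDP $\A$) must then reach $n^{1+\epsilon}-ln^\delta$ with probability $\geq a/2$, extracts a fixed type $\beta$ for infinitely many $n$, and derives a contradiction with the claim that $n$ is an upper estimate of $\calC[c]$ for $\beta$ --- a claim it proves separately by induction over the MECs of $\beta$ using Lemma~\ref{lemma-linear-upper-bound} (the $\mathit{rank}$-based supermartingale, which decreases in expectation at \emph{every step}). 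You instead stay in the extended VASS MDP and build a supermartingale directly from the pre-computed cycle effects. This is a cleaner and more self-contained idea when it works, and it avoids the back-and-forth between $\A$ and $\A'$; the paper's route avoids a truncation issue that your route does not.

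The gap is in the pair of claims ``$\Exp(Y_i\mid\mathcal{F}_{i-1})\le 0$'' and ``$|W_{m+1}-W_m|\le 2un^\delta$''. These hold for two incompatible definitions of $Y_i$. If $Y_i$ is the \emph{full} counterfactual effect of the $i$-th pre-computed meta-cycle (the geometrically-stopped excursion around a BSCC state), then indeed $\Exp(Y_i\mid\mathcal{F}_{i-1})\le 0$ by Wald and the absence of increasing BSCCs, but $Y_i$ is unbounded, so the bounded-increments claim and the $O(n^\delta)$ overshoot control both fail. If instead $Y_i$ is the effect \emph{truncated} at the point where $\sigma_n^\delta$ would fire the die (i.e.\ the first $2n^\delta$ steps), then $|Y_i|\le 2un^\delta$, but the truncation can push the conditional expectation strictly positive --- this is exactly the delicate case of unbounded-zero BSCCs where $\Exp[\update_\B]=0$ exactly and the tail you discard is not symmetric. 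So $W_m$ is not a supermartingale under the second definition, and the optional-stopping inequality $\Exp(W_{\rho^*})\le 0$ that the whole argument rests on is unjustified. The truncation error per cycle is exponentially small in $n^\delta$, so the argument is probably repairable, but doing so requires bounding the number of pre-computed cycles before the stopping time (roughly, importing something like Lemma~\ref{lemma-randomwalkin-On2}) and tracking the accumulated drift --- neither of which appears in your write-up. A minor secondary point: the hypothesis here is only the absence of increasing BSCCs (bounded-zero BSCCs may exist), so the claim in your last paragraph that $\rho=\infty$ forces $W_m\equiv 0$ is not correct; fortunately those runs contribute nothing to $P_{n,\epsilon}^\delta$ and the case is harmless for a different reason.
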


Assume there exist some \(0<\delta<\epsilon \) such that \(\limsup_{n\rightarrow\infty } P_{n,\epsilon}^\delta=a>0 \). Then for each \(n_0\), there exists \(n>n_0 \) such that \(P_{n,\epsilon}^\delta>a/2 \). Most notably, this means that the effect of the path \(\alpha \) in \(\sigma_n^\delta \) (see definition of \(\sigma_n^\delta \)) is at least \(n^{1+\epsilon}-ln^\delta \) with probability at least \(a/2 \). But as \(\alpha\) can be equally seen as a path under \(\sigma_n \), this means that also the strategy \(\sigma_n \) reaches effect \(n^{1+\epsilon}-ln^\delta \) before the effect \(-n \) with probability \(R_{n,\epsilon}^\delta \geq a/2 \), for infinitely many \(n\). 
Let type \(\beta_n \) be some type with the largest \(\weigth(\beta) \) among all types \(\beta \), such that with probability at least \(a/2\) \(\sigma_n \) reaches the effect at least \(n^{1+\epsilon}-ln^\delta\) before the effect \(-n\) conditioned the computation follows \(\beta\). If the length of \(\beta_n \) were dependent on \(n\) then as probability of all long types decreases exponentially fast with their length, it could not hold that \(R_{n,\epsilon}^\delta>a/2 \) for arbitrarily large \(n\). Therefore there must exist infinitely many values \(n_1,n_2,\dots\) such that \(\beta_{n_1}=\beta_{n_2}=\dots \), let us denote this type by \(\beta=M_1,\dots,M_x\) (i.e., \(\beta=\beta_{n_1} \)).

This means that \(n \) is not an upper estimate of \(\calC[c] \) for type \(\beta\). But in the next Lemma we are going to show that \(n \) is an upper estimate of \(\calC[c] \) for type \(\beta\), thus showing a contradiction.

\begin{lemma}
	For each \(0<\epsilon_1 \) there exists \(0<\epsilon_2 \) and \(0<b\) such that \[\prob^{\sigma_n}_{p \bn}(\calC[c] \geq n^{1+\epsilon_1}\mid \mecs{=} \beta  )\leq bn^{-\epsilon_2}\] for each state \(p\) of \(M_1\).
	\end{lemma}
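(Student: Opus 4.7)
My plan is to proceed by induction on the length $x$ of the type $\beta = M_1, \ldots, M_x$, using the MD-decomposition (Lemma~\ref{lemma-MD-decompose}) and the extended-VASS-MDP/pointing-strategy machinery developed for Lemmata~\ref{lemma-no-+-effect-on-C} and \ref{lemma-extended-limits}.

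For the base case $x=1$, I would translate $\sigma_n$ into the pointing strategy $\sigma_n^\delta$ on the extended VASS MDP with $\delta$ chosen small (say $\delta<\epsilon_1/2$). By hypothesis there is no increasing BSCC of $\A_\sigma$ for any $\sigma\in\Sigma_{\MD}$, so every BSCC encountered by the MD strategies $\pi_1,\ldots,\pi_k$ in the decomposition is either decreasing or unbounded-zero. Consequently the sequence of cycle-effects $X_1,X_2,\ldots$ is a super-martingale increment sequence with $\Exp[X_i]\le 0$ and, for the unbounded-zero component, a variance bounded below by a positive constant (as in Lemma~\ref{lemma-finding-m-for-MD}). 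Because the jump distribution has exponential tails (see the estimate $\prob(|X|>i)\le b^i$ in the proof of Lemma~\ref{lemma-finding-m-for-MD}), optional stopping applied to $S_i = X_1+\cdots+X_i$ with stopping time ``first hit of $+n^{1+\epsilon_1}$ or $-n$'' yields that the probability $p$ of hitting the upper barrier satisfies
\[
    p \,(n^{1+\epsilon_1}+O(n)) \;\le\; n+O(n),
\]
where the $O(n)$ overshoot is controlled by the truncation function $m(n)\in O(n)$ of Lemma~\ref{lemma-finding-m-for-MD}. Hence $p=O(n^{-\epsilon_1})$. The additive $ln^\delta$ correction from the precomputed paths $\gamma_1,\ldots,\gamma_k$ in the construction of $\sigma_n^\delta$ merely shifts both barriers by a lower-order term and preserves the polynomial rate, yielding the required bound $bn^{-\epsilon_2}$ with $\epsilon_2=\epsilon_1$.

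For the inductive step $x>1$, I would split any computation with $\mecs=\beta$ into phases $\pi_1,\ldots,\pi_x$, where $\pi_i$ consists of the visits to $M_i$ plus the transit transitions leading out of $M_i$. Write $V_i$ for the counter value on first entry into $M_i$ and $\calC_i$ for the maximum counter value reached while inside $M_i$. Applying the inductive hypothesis to the shorter type $M_1,\ldots,M_{x-1}$ with exponent $\epsilon_1/3$ in place of $\epsilon_1$ gives constants $\epsilon_2',b'$ with
\[
    \prob\bigl(\max_{i<x}\calC_i \ge n^{1+\epsilon_1/3}\,\big|\,\mecs=M_1,\ldots,M_{x{-}1}\bigr)\;\le\; b' n^{-\epsilon_2'},
\]
so in particular $V_x\le n^{1+\epsilon_1/3}$ with matching probability, since transit transitions add only $O(1)$ to the counter. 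On the complementary event, after conditioning on the entry state $q\in M_x$, the residual behaviour inside $M_x$ is governed by a strategy to which the base case applies with initial value $n' = n^{1+\epsilon_1/3}$; the probability of reaching $n^{1+\epsilon_1} = (n')^{(1+\epsilon_1)/(1+\epsilon_1/3)}$ is $O((n')^{-\eta})$ for some $\eta>0$, which in terms of $n$ is polynomially decaying. Combining the two contributions via a union bound gives the desired estimate.

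The main obstacle will be the conditioning on $\mecs=\beta$, since the optional-stopping argument in the base case most naturally produces unconditional hitting-probability bounds, and conditioning on a future event can in general destroy the martingale property. I plan to handle this by bounding the joint probability $\prob(\calC[c]\ge n^{1+\epsilon_1}\cap\mecs=\beta)$ from above (using that the super-martingale drift is unaffected by which MEC-exit transition is eventually taken, because $\sigma_n$'s choices are fixed in advance rather than conditioned) and noting that the inter-MEC transit probabilities $P(M_i,M_{i+1})$ are either bounded below by a constant or can be absorbed into the polynomial overhead. A secondary delicate point is that within $M_i$ we may leave and re-enter transient non-EC states before the phase ends; these contribute only $O(1)$ to the counter in a one-dimensional VASS MDP and therefore do not affect the exponents.
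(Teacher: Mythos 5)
Your inductive step mirrors the paper's: condition on the prefix $M_1,\ldots,M_{x-1}$ bound, apply the single-MEC estimate to $M_x$ with the inflated initial value, and combine. But your base case takes a genuinely different route, and it is where the gap lies. The paper's base case is a one-liner appeal to Lemma~\ref{lemma-linear-upper-bound}, i.e.\ the rank-function supermartingale/OST argument carried out \emph{directly on $\A$}. You instead propose to translate $\sigma_n$ into the pointing strategy $\sigma_n^\delta$ and run OST on the cycle walk $S_i$. This is precisely the machinery the surrounding proof is built on, and the reason the paper does \emph{not} do this here is that the translation lemma only runs in one direction and only as a $\limsup$ statement with a lossy factor $a\cdot\weigth(\beta)/2$: from ``$\sigma_n$ achieves $\calC[c]\ge n^{1+\epsilon}$ with conditional probability $\ge a$ for infinitely many $n$'' to ``$\limsup_n\prob^{\sigma_n^\delta}(\cdots)>0$''. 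Inverting this to obtain a per-$n$ polynomial decay bound on $\prob^{\sigma_n}_{p\bn}(\calC[c]\ge n^{1+\epsilon_1}\mid\mecs=\beta)$ requires dividing by $\prob^{\sigma_n}_{p\bn}(\mecs=\beta)$, and your proposed fix (``inter-MEC transit probabilities $P(M_i,M_{i+1})$ are either bounded below by a constant or can be absorbed'') does not work: $P(M_i,M_{i+1})$ is the \emph{maximal} transit probability over all strategies, whereas $\prob^{\sigma_n}(\mecs=\beta)$ depends on $\sigma_n$'s actual choices and can be made arbitrarily small (even for the base case $\beta=M_1$, where there are no transit steps at all but the probability of never leaving $M_1$ can be tiny). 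The paper sidesteps this by never passing through the extended VASS MDP at this point; the rank-function supermartingale already lives on $\A$ and gives the bound without the back-translation.

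Two secondary points. First, were your base case to go through, you would have produced a direct quantitative bound on $P_{n,\epsilon}^\delta$, which would make the entire contradiction scaffolding of Lemma~\ref{lemma-lim-for-Pdeltane} superfluous — a structural warning sign that something in the transfer is being glossed over. Second, you invoke the truncation $m(n)\in O(n)$ of Lemma~\ref{lemma-finding-m-for-MD} to control overshoot, but that truncation is designed to make the modified increment have $\Exp[X']\ge 0$ (so that $(S'_i)^2-bi$ is a \emph{sub}martingale for the $\calO(n^2)$ length bound); it is not what bounds overshoot for a hitting-probability OST. The relevant overshoot control in $\sigma_n^\delta$ comes from the $2n^\delta$-step cap on precomputed cycles (via the ``die'' transitions), which is a different, $O(n^\delta)$, bound.
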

\begin{proof}
We are going to do an induction over \(1\leq i\leq x \).

Base case: \( i=1\), then from Lemma~\ref{lemma-linear-upper-bound} we   have \[\prob^{\sigma_n}_{p \bn}(\calC[c] \geq n^{1+\epsilon_1} \mid\mecs{=}M_1)\leq bn^{-\epsilon_1}  \] for some constant \(b\) and for each \(0<\epsilon_1 \). 

Induction step: Assume this holds for \(i<x \), let us now show it holds for \(i+1\) as well.
 From induction assumption we have that for each \(0<\epsilon_1' \) there exists \(0<\epsilon_2' \) and \(0<b'\) such that \(\prob^{\sigma_n}_{p \bn}(\calC[c] \geq n^{1+\epsilon_1'}\mid \mecs{=} M_1,\dots,M_i  )\leq b'n^{-\epsilon_2'} \) . Therefore, when the computation reaches \(M_{i+1} \), the counter is larger than \( n^{1+\epsilon_1'} \) with probability at most \(b'n^{-\epsilon_2'} \). As such we can express for each \(0<\epsilon_1'<\epsilon \) 
 \begin{align*}
 	&\prob^{\sigma_n}_{p \bn}(\calC[c] \geq n^{1+\epsilon}\mid \mecs{=} M_1,\dots,M_{i+1}  )
 	\leq \\ &\leq
 	\prob^{\sigma_n}_{p \bn}(\calC[c] \geq n^{1+\epsilon_1'}\mid \mecs{=} M_1,\dots,M_i  )+
 	\sum_{r\in M_{i+1}}
 	P_q
 	\prob^{\sigma_n^r}_{r n^{1+\epsilon_1'}}(\calC[c] \geq n^{1+\epsilon}\mid \mecs{=} M_{i+1})
 	=\\&=
 	\prob^{\sigma_n}_{p \bn}(\calC[c] \geq n^{1+\epsilon_1'}\mid \mecs{=} M_1,\dots,M_i  )+
 	\prob^{\sigma_n^{q}}_{q n^{1+\epsilon_1'}}(\calC[c] \geq n^{1+\epsilon}\mid \mecs{=} M_{i+1})
 \end{align*}

 where \(\sigma_n^r \) is the strategy which computes as if \(\sigma_n \) after the path from \(p\) to \(r\) for each \(r\) being a state of \(M_{i+1} \), \(P_r=\prob^{\sigma_n}_{p\bn}(\{\alpha\mid \textit{the first state of }M_{i+1}\textit{ in }\alpha\textit{ is }r\}) \), and \(q \) is the state of \(M_{i+1} \) such that for each state \(r\) of \(M_{i+1} \) it holds \[\prob^{\sigma_n^{r}}_{r n^{1+\epsilon_1'}}(\calC[c] \geq n^{1+\epsilon}\mid \mecs{=} M_{i+1})\leq \prob^{\sigma_n^{q}}_{q n^{1+\epsilon_1'}}(\calC[c] \geq n^{1+\epsilon}\mid \mecs{=} M_{i+1})\]
 
  But from Lemma~\ref{lemma-linear-upper-bound} we have that \begin{align*}
 	&\prob^{\sigma_n^q}_{q n^{1+\epsilon_1'}}(\calC[c] \geq n^{1+\epsilon}\mid \mecs{=} M_{i+1})
 =\\&=
 \prob^{\sigma_n^q}_{q n^{1+\epsilon_1'}}(\calC[c] \geq (n^{1+\epsilon_1'})^{\log_{n^{1+\epsilon_1'}}n^{1+\epsilon}}\mid \mecs{=} M_{i+1})
 \leq 
b(n^{1+\epsilon_1'})^{1-\log_{n^{1+\epsilon_1'}}n^{1+\epsilon}}
\end{align*}
 
 Let \(y=1-\log_{n^{1+\epsilon_1'}}n^{1+\epsilon} \), note that \(y<0 \) since \(\epsilon_1'<\epsilon \). Then we can write \[
 \prob^{\sigma_n}_{p \bn}(\calC[c] \geq n^{1+\epsilon}\mid \mecs{=} M_1,\dots,M_{i+1}  )
 \leq 
 b'n^{-\epsilon_2'}+
 bn^{y}
 \]
 
 which for each \(\epsilon_1>\epsilon\) gives \begin{align*}
 	&\prob^{\sigma_n}_{p \bn}(\calC[c] \geq n^{1+\epsilon_1}\mid \mecs{=} M_1,\dots,M_{i+1}  )
 \leq\\& \leq
 \prob^{\sigma_n}_{p \bn}(\calC[c] \geq n^{1+\epsilon}\mid \mecs{=} M_1,\dots,M_{i+1}  )
 \leq 
 b'n^{-\epsilon_2'}+
 bn^{y}
\end{align*}

 And for \(\epsilon_2=\min (\epsilon_2',-y) \) and \(\hat{b}=\max(2b',2b)\) this gives use \[
 \prob^{\sigma_n}_{p \bn}(\calC[c] \geq n^{1+\epsilon_1}\mid \mecs{=} M_1,\dots,M_{i+1}  )
 \leq
 b'n^{-\epsilon_1'}+
 bn^{y}
 \leq 
 \hat{b}n^{-\epsilon_2}
 \]
 thus the induction step holds.
 \end{proof}
%
%
%
%
%
%
%
%
%
%
%
%
%
\subsection{Proof of Lemma~\ref{lem-energy}}
\label{app-lem-energy-proof}

\begin{lemma*}[\textbf{\ref{lem-energy}}]
	An energy MDP has a safe configuration iff there exists a non-decreasing BSCC $\B$ of $\A_\sigma$ for some $\sigma \in \Sigma_{\MD}$. 
\end{lemma*}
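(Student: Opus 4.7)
For the ``$\Leftarrow$'' direction, I would construct a safe configuration explicitly. Fix the MD strategy $\sigma$ and the non-decreasing SCC $\B$ of $\A_\sigma$, pick any $p\in\B$, and let $K$ be the maximum absolute effect of any simple path inside $\B$; this constant is well-defined because $\B$ is finite. Since $\B$ contains no negative cycle, any path in $\A_\sigma$ starting at $p$ and remaining in $\B$ decomposes into a simple path plus cycles of non-negative effect, and therefore has effect at least $-K$. Consequently, from the configuration $p\bv$ with $\bv(c)\geq K$, the strategy $\sigma$ (which by strong connectivity picks internal transitions at the non-deterministic states of $\B$) keeps the counter non-negative throughout, witnessing safety.

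For the ``$\Rightarrow$'' direction, the plan is to argue by contraposition and exploit Lemma~\ref{lemma-MD-decompose}. Assume that every SCC of every $\A_\sigma$ with $\sigma\in\Sigma_\MD$ contains a negative cycle; I will show that no configuration is safe. Let $\pi_1,\dots,\pi_k$, $p_1,\dots,p_k$, and $\Decompose_\A$ be as provided by Lemma~\ref{lemma-MD-decompose}, and fix an arbitrary strategy $\tau$ together with a configuration $p_0\bv_0$. Running $\Decompose_\A$ on-line along the play generated by $\tau$ produces, for each mode $j\leq k$, a sub-path $\alpha_j$ compatible with $\pi_j$. I would then design an adversarial resolution of the probabilistic choices as follows: at each probabilistic state, the adversary reads off the current mode $j$ (which, by the last clause of Lemma~\ref{lemma-MD-decompose}, depends only on the history preceding that state) and picks a positive-probability outgoing transition that steers $\alpha_j$ towards a pre-fixed negative cycle inside the SCC of $\A_{\pi_j}$ in which $\alpha_j$ eventually settles. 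By pigeonhole at least one $\alpha_{j^\ast}$ is infinite, and the adversary's scheduling forces its designated negative cycle to be traversed infinitely often, driving the effect of $\alpha_{j^\ast}$ to $-\infty$. Applied uniformly to every infinite $\alpha_j$, this makes the total effect along the play, which equals $\bv_0(c)$ plus the sum of the effects of all sub-paths, diverge to $-\infty$, contradicting safety of $p_0\bv_0$.

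The main obstacle will be making the steering step rigorous. The adversary chooses only one transition at each probabilistic state, but this choice is attributed to whichever mode happens to be active there, and we need the negative cycles of all infinite $\alpha_j$'s to be executed infinitely often in parallel. The MD decomposition resolves the coordination issue because the modes partition the probabilistic steps and Lemma~\ref{lemma-MD-decompose} guarantees that the active mode is determined before the adversary commits to a transition. What still requires care is (i) bounding the transient phase before each $\alpha_j$ settles in a BSCC of $\A_{\pi_j}$, and (ii) showing that the accumulated negative contribution of the forced cycles uniformly dominates the bounded positive slack of simple paths inside each SCC; both can be handled by bounds depending only on $|Q|$ and the largest absolute transition weight of $\A$.
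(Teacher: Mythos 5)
Your proof takes essentially the same approach as the paper: the paper declares the $\Leftarrow$ direction immediate (your explicit construction fills in the details), and the $\Rightarrow$ direction is proven via the MD decomposition of Lemma~\ref{lemma-MD-decompose} together with the observation that under the contrapositive hypothesis every reachable BSCC of every $\A_{\pi_j}$ contains a negative cycle, so the counter cannot stay bounded below. Two points you should sharpen. First, say BSCC rather than SCC throughout: your $\Leftarrow$ construction silently requires $\B$ to be closed under all outgoing probabilistic transitions (i.e.\ a bottom SCC of $\A_\sigma$, equivalently an end component of $\A$); for a non-bottom SCC the play leaves $\B$ with probability one regardless of $\sigma$, and the construction fails. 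Second, the final phrase ``contradicting safety'' needs one more line of justification: because your adversarially-scheduled play uses only positive-probability transitions, every finite prefix of it has positive probability under the fixed strategy $\tau$, and since the counter drops below zero after finitely many steps, that prefix witnesses a positive-probability violation --- which is exactly what is required to refute the almost-sure formulation of safety, not just the sure one. This last remark is also the reason the adversarial-scheduler device is legitimate at all, and it makes explicit what the paper's terse ``after every at most finite number of steps the counter has non-zero probability of decreasing'' is gesturing at.
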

\begin{proof}
	The \(\Leftarrow \) direction is trivial. For the \(\Rightarrow\) direction assume the opposite. Then there exists a safe configuration, and a strategy  such that the counter never decreases below some bound. But then from Lemma~\ref{lemma-MD-decompose} we can view the strategy as if choosing which of the finitely many Markov chains is to advance. And since there is no non-decreasing BSCC $\B$ of $\A_\sigma$ for any $\sigma \in \Sigma_{\MD}$, each of these Markov chains contains a negative cycle. Therefore after every at most finite number of steps the counter has non-zero probability of decreasing, thus it cannot be bounded from below for the entire computation. 
\end{proof}

\subsection{Proof of Lemma~\ref{lemma-type-NONNEGATIVE-NP-complete}}
\label{app-lemma-type-NONNEGATIVE-NP-complete-proof}

\begin{lemma*}[\textbf{\ref{lemma-type-NONNEGATIVE-NP-complete}}]
	 The problem whether there exists a non-decreasing BSCC $\B$ of $\A_{\sigma}$ for some $\sigma \in \Sigma_{\MD}$ such that $\B$ contains a given state $p \in Q$ is $\NP$-complete.
	\end{lemma*}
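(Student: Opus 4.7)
A verifier guesses the MD strategy $\sigma$, which is polynomial-size (one outgoing transition per non-deterministic state). It builds $\A_\sigma$ in polynomial time, computes its SCCs via Tarjan's algorithm, locates the SCC $\B$ containing $p$, verifies that $\B$ is closed under the transitions of $\A_\sigma$ (so that the ``SCC containing $p$'' is the genuine end component at which the question becomes non-trivial, consistently with the energy-MDP interpretation of the paper), and runs Bellman-Ford on $\B$ to detect negative cycles. The verifier accepts iff no negative cycle is found; all steps run in polynomial time in $|\A|$.

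\textbf{NP-hardness by reduction from 3-SAT.} Given $\phi = C_1 \wedge \dots \wedge C_m$ over variables $x_1,\dots,x_n$, I construct a one-dimensional VASS MDP $\A_\phi$ with designated state $p$ such that $\phi$ is satisfiable iff some $\sigma \in \Sigma_{\MD}$ makes the SCC of $p$ in $\A_\sigma$ non-decreasing. The state $p$ is probabilistic with $m$ uniformly weighted outgoing transitions (update $0$) to clause states $c_1,\dots,c_m$; since $p$ is probabilistic, all $c_j$ necessarily lie in the SCC of $p$ under every strategy. Each $c_j$ is non-deterministic with three outgoing transitions (update $0$) to literal states $L_{j,1},L_{j,2},L_{j,3}$, and $\sigma$'s choice at $c_j$ corresponds to selecting a \emph{witness literal} for $C_j$. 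For each variable $x_i$ there is a non-deterministic variable state $v_i$ with two outgoing transitions (to $t_i$ and $f_i$) encoding the truth value of $x_i$. Each literal state $L_{j,k}$ routes deterministically back to $p$ via a small gadget passing through $v_i$ (where $x_i$ is the variable of $\ell_{j,k}$), and the counter updates are chosen so that the \emph{witness cycle} $p \to c_j \to L_{j,k} \to \dots \to p$ is non-negative iff $\sigma$'s choice at $v_i$ makes $\ell_{j,k}$ true.

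\textbf{Correctness.} If $\phi$ is satisfiable by an assignment $\alpha$, define $\sigma$ to pick $t_i$ (resp.\ $f_i$) when $\alpha(x_i)$ is true (resp.\ false), and at each $c_j$ to pick as witness some literal satisfied by $\alpha$; then every witness cycle is non-negative, while the remaining cycles through $v_i$'s gadgets alone are engineered (with balanced updates) to be non-negative for any strategy, so the SCC of $p$ is non-decreasing. Conversely, from any $\sigma$ inducing a non-decreasing SCC containing $p$ one extracts an assignment from its $v_i$-choices; the witness cycle for each $c_j$ must be non-negative, forcing the selected witness literal to be satisfied under this assignment, so every clause is satisfied and $\phi$ is satisfiable.

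\textbf{Main obstacle.} The crux is engineering the variable gadget so that a single strategic choice at $v_i$ simultaneously controls the sign of the witness cycle for \emph{both} positive and negative occurrences of $x_i$. A direct additive encoding on shared edges of $v_i$ yields four simultaneous sign constraints that collapse to an XOR truth-table and are mutually unsatisfiable by any fixed weights. I resolve this by introducing two polarity-indexed copies $v_i^+, v_i^-$ of the variable state, routing positive and negative literal checks through $v_i^+$ and $v_i^-$ respectively, together with a consistency subgadget whose cycles are negative iff $\sigma$ makes disagreeing choices at $v_i^+$ and $v_i^-$. With sufficiently large base weights, the only strategy-sensitive cycles are the witness cycles (whose sign is dominated by the literal-satisfaction term) and the consistency cycles (whose sign is dominated by the agreement term); a non-decreasing SCC containing $p$ therefore exists exactly when $\sigma$ encodes a satisfying assignment, which, together with membership, yields $\NP$-completeness.
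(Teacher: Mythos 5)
Your NP-membership argument is essentially the paper's (guess the MD strategy $\sigma$, build $\A_\sigma$, locate the BSCC of $p$, run Bellman--Ford), and it is correct. The hardness direction takes a different route from the paper's and contains a genuine gap. The paper reduces from Hamiltonian cycle: given $G=(V,E)$ and $p\in V$, it builds a one-dimensional VASS MDP on state set $V$ with \emph{all} states nondeterministic, puts update $+1$ on both directions of every edge not incident to $p$, and for each edge $\{p,q\}$ adds $(q,+1,p)$ and $(p,-|V|+1,q)$. With every state nondeterministic, $\A_\sigma$ is a functional graph, so any BSCC containing $p$ is a single simple cycle through $p$; its weight equals its length minus $|V|$, which is non-negative iff the cycle is Hamiltonian. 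The essential structural trick the paper exploits -- and your construction forfeits by making $p$ probabilistic -- is that the object to be constrained is one simple cycle, and the constraint is a single one-sided inequality.

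Your 3-SAT gadget fails on exactly the XOR obstruction you identify, and the proposed fix does not resolve it. Any fixed cycle of the consistency subgadget that passes through both $v_i^+$ and $v_i^-$ has weight $a^+ + a^- + b$, where $a^+\in\{a^+_t,a^+_f\}$ is contributed by $v_i^+$'s chosen transition, $a^-\in\{a^-_t,a^-_f\}$ by $v_i^-$'s, and $b$ by the remaining (strategy-independent) edges. Demanding this quantity be non-negative on the two agreeing pairs and negative on the two disagreeing pairs yields, by summing each pair of inequalities, both $a^+_t + a^+_f + a^-_t + a^-_f + 2b \geq 0$ and $a^+_t + a^+_f + a^-_t + a^-_f + 2b < 0$, a contradiction. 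This infeasibility is scale-invariant, so ``sufficiently large base weights'' cannot repair it. The only conceivable escape is to make the offending negative cycles \emph{exist} in $\A_\sigma$ only under disagreeing choices, rather than have a fixed cycle change sign -- but that is a structural routing constraint which, since each of $v_i^\pm$ has a single outgoing transition in $\A_\sigma$ and thus every path through $v_i^+$ continues identically, entangles the consistency routing with the literal-check routing in a way your sketch does not work out. As written, the hardness reduction is not established.
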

\begin{proof}
	This problem being in \(\NP\) is easy as we simply have to guess a BSCC of some MD strategy, and then verify that it contains no negative cycle while containing \(p\). For the \(\NP\)-hardness let us show a reduction from the \(\NP\)-complete  problem of deciding whether a given graph \(G\) contains a Hamiltonian cycle.
	
	Let \(G=(V,E)\) be the graph for which we want to decide existence of a Hamiltonian cycle, and let \(p\in V\) be one of it's vertices. 
	
	Let \(\A \) be a 1-dimensional VASS MDP, whose set of states is \(V\), all states are nondeterministic, and the set of transitions is \(T\) such that whenever there is an edge \(\{q,r\}\in E, q\neq p\neq r \), then \(\A\) contains the transitions \((q,+1,r),(r,+1,q) \), and for each edge \(\{p,q\}\in E \), \(\A\) contains the transitions \((q,+1,p),(p,-|V|+1,q) \)
	
	We now claim that \(G\) contains a Hamiltonian path iff there exists a non-decreasing  BSCC \(\mathbb{B}\) of \(\A_\sigma \) for some \(\sigma\in\Sigma_{\MD} \) such that \(\mathbb{B}\) contains \(p\).
	
	First let a Hamiltonian cycle \(\alpha=p_1,t_1,p_2,\dots,p_l,t_l,p_1\) exist. Then for the MD strategy \(\sigma(p_j)=t_j \), \(\A_\sigma \) surely contains exactly one BSCC that contains \(p\), and it contains exactly one cycle whose effect is \(0\). Thus it is non-decreasing.
	
	Now let there exists a non-decreasing BSCC \(\mathbb{B}\) of \(\A_\sigma \) for some \(\sigma\in\Sigma_{\MD} \) such that \(\mathbb{B}\) contains \(p\). Then since the effect of every outgoing transition of \(p\) is \(-|V|+1 \), the effect of every other transition is \(+1\), and \(\mathbb{B} \) contains no negative cycles, there must be at least \(|V| \) transitions in \(\mathbb{B} \). But as \(\sigma\) is an MD strategy, there can be at most one transition per state, and so \(\mathbb{B} \) must contain every single state of \(G\). But this means that the computation under \(\pi\) follows a Hamiltonian cycle.		
\end{proof}

\begin{lemma}\label{lemma-np-hard-0d-gen}
	The problem whether there exists a bounded-zero BSCC $\B$ of $\A_{\sigma}$ for some $\sigma \in \Sigma_{\MD}$ is $\NP$-complete for general one-dimensional VASS MDPs.
	\end{lemma}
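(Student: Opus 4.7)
The plan is to prove both NP membership and NP-hardness. For NP membership, I would nondeterministically guess a Markovian deterministic strategy $\sigma$ together with a candidate set of states $C \subseteq Q$ and its induced transitions. I would then verify in polynomial time that $C$ is a BSCC of the finite-state Markov chain $\A_\sigma$, and that every directed cycle inside $C$ has zero counter effect. The latter check reduces to a simple potential-propagation: fix $p_0 \in C$, set $f(p_0) = 0$, and propagate $f(q) := f(p) + u$ along each transition $(p, u, q)$ of $C$ using BFS, rejecting if any inconsistency is found. Since $\B$ being bounded-zero means $\prob^\sigma_{p_\B}[\update_{\B} = 0] = 1$, and in a strongly connected component this is equivalent to every cycle having zero counter effect, this check is both sound and complete.

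For NP-hardness I would reuse the reduction idea from the proof of Lemma~\ref{lemma-type-NONNEGATIVE-NP-complete}, sharpened to turn the gadget's ``no negative cycle'' property into ``only zero cycles''. Specifically, given an undirected graph $G = (V, E)$ together with a distinguished vertex $p \in V$, I construct a one-dimensional VASS MDP $\A$ whose states are $V$ (all nondeterministic), where each edge $\{q, r\} \in E$ with $q, r \neq p$ contributes two transitions $(q, +1, r)$ and $(r, +1, q)$, and each edge $\{p, q\} \in E$ contributes $(p, -|V|+1, q)$ and $(q, +1, p)$. The claim to establish is that $\A$ admits a bounded-zero BSCC of $\A_\sigma$ for some $\sigma \in \Sigma_{\MD}$ if and only if $G$ has a Hamiltonian cycle.

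The correctness argument hinges on the observation that, since all states are nondeterministic, every $\A_\sigma$ for $\sigma \in \Sigma_{\MD}$ is a functional graph, so its BSCCs are simple directed cycles. A Hamiltonian cycle of $G$ (which necessarily passes through $p$) translates directly into an MD strategy whose BSCC is a single simple cycle of length $|V|$ with total counter effect $-|V|+1 + (|V|-1) = 0$. Conversely, any bounded-zero BSCC must have all its cycles of effect zero; a BSCC avoiding $p$ consists solely of $+1$-effect transitions and is thus strictly increasing, while a BSCC containing $p$ of length $k$ has effect $-|V|+1 + (k-1)$, which vanishes iff $k = |V|$, i.e., iff the underlying simple cycle is Hamiltonian in $G$.

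The main obstacle is making sure that the polynomial-time potential-function check really captures the probability-one zero-return condition, and that the reduction is robust despite the absence of a prescribed starting state in the statement (the previous lemma fixed such a state); both issues resolve cleanly because every bounded-zero BSCC produced by the gadget is forced to contain $p$. Beyond this, the proof is a self-contained adaptation of the Hamiltonian-cycle construction already deployed for Lemma~\ref{lemma-type-NONNEGATIVE-NP-complete}, so no new machinery is required.
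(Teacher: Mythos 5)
Your proposal is correct and takes essentially the same route as the paper: both reuse the Hamiltonian-cycle gadget of Lemma~\ref{lemma-type-NONNEGATIVE-NP-complete} verbatim and observe that (i) any BSCC avoiding $p$ has only $+1$ transitions and is therefore increasing rather than bounded-zero, so the unconstrained problem collapses to the $p$-constrained one, and (ii) the BSCC induced by a Hamiltonian cycle is a single simple cycle of total effect zero, hence bounded-zero. The paper states this in two lines by pointing back to the previous proof; your write-up additionally spells out NP membership via the potential-propagation check, which is a correct and standard way to verify that a strongly connected component has only zero-effect cycles, equivalent to the probability-one zero-return condition defining bounded-zero.
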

This follows from the proof above of the previous Lemma, as any bounded-zero BSCC  of $\A_{\sigma}$ for some $\sigma \in \Sigma_{\MD}$ in the VASS MDP \(\A \) constructed for the graph \(G\) must contain \(p \), while the BSCC associated to the strategy obtained from a hamiltonian cycle is bounded-zero.

\end{document}